\documentclass[11pt]{article}

\usepackage{tikz,tikz-cd}
\usepackage{makecell}

\usepackage{amsbsy}

\usepackage{amsmath,amsthm}
\textwidth=14.5cm \oddsidemargin=1cm\evensidemargin=1cm\setlength{\parskip}{10pt}\setlength{\headsep}{20pt}
\usepackage{mathrsfs}
\usepackage{amscd,amssymb,amsfonts,verbatim,enumerate}
\usepackage[mathcal]{eucal}
\usepackage{dsfont}
\usepackage{hyperref}

\usepackage{mathpazo}
\usepackage{color,slashed}


\usepackage{authblk} 
\usepackage{booktabs} 
\usepackage{simpler-wick} 
\usepackage[mode=buildnew]{standalone} 
\usepackage{stmaryrd} 
\usepackage{subfig} 
\usepackage[compat=1.1.0]{tikz-feynman} 
\tikzfeynmanset{warn luatex=false}

 \DeclareMathOperator{\End}{End}

\DeclareMathOperator{\Sym}{Sym}

\DeclareMathOperator{\Tr}{Tr}

\newtheorem{conjecture}{Conjecture}
\newtheorem{proposition}{Proposition}
\newtheorem{lemma}{Lemma}

\newcommand{{\vac}}{\left|\emptyset\right\rangle}

\newcommand{\RP}{\mathbb{RP}}
\newcommand{\Dirac}{\slashed{\partial}}

\newcommand{\wbar}{\br{w}} 

\renewcommand{\sl}{\mathfrak{sl}}

\newcommand{\zbar}{\br{z}}
\newcommand{\so}{\mathfrak{so}}

\newcommand{\PV}{\op{PV}}

\newcommand{\dpa}[1]{\frac{\partial}{\partial #1}}

\newcommand{\eps}{\epsilon}
\newcommand{\g}{\mathfrak{g}}

\newcommand{\what}{\widehat}
\newcommand{\tr}{\triangle}

\newcommand{\til}{\widetilde}
\newcommand{\mscr}{\mathscr}

\newcommand{\br}{\overline}

\newcommand{\C}{\mathbb C}

\newcommand{\norm}[1]{\left\| #1 \right\|}
\newcommand{\Oo}{\mscr O}
\newcommand{\Z}{\mathbb Z}

\newcommand{\op}{\operatorname}
\newcommand{\mbf}{\mathbf}
\newcommand{\mbb}{\mathbb}
\newcommand{\mf}{\mathfrak}
\newcommand{\mc}{\mathcal}

\newcommand{\ip}[1]{\left\langle #1 \right\rangle}
\newcommand{\abs}[1]{\left| #1 \right|}

\newcommand{\R}{\mbb R}
\renewcommand{\d}{\mathrm{d}}

\newcommand{\dbar}{\br{\partial}}

\newcommand{\PT}{\mbb{PT}}
\newcommand{\CP}{\mbb{CP}}
\newcommand{\half}{\tfrac{1}{2}}

\newcommand{\Sp}{\op{Sp}}

\renewcommand{\i}{\mathrm{i}}
\renewcommand{\c}{\mathrm{c}}
\renewcommand{\t}{\mathrm{t}}
\renewcommand{\b}{\mathrm{b}}


\newcommand{\p}{\partial}


\newcommand{\be}{\begin{equation}}
\newcommand{\ee}{\end{equation}}

\newcommand{\bea}{\begin{equation}\begin{aligned}}
\newcommand{\eea}{\end{aligned}\end{equation}}


\newcommand{\normord}[1]{:#1:}

\newcommand{\la}{\langle}
\newcommand{\ra}{\rangle}


\newcommand{\cO}{\mathcal{O}}
\newcommand{\cN}{\mathcal{N}}


\newcommand{\J}{\mathds{J}}
\newcommand{\M}{\mathds{M}}

\newcommand{\Jt}{\til{\mathds{J}}}
\newcommand{\Mt}{\til{\mathds{M}}}


\newcommand{\fsp}{\mathfrak{sp}}


\newcommand{\mrm}[1]{\mathrm{#1}}

\newcommand{\Ad}{\mathrm{Ad}}


\newcommand{\done}{{\dot1}}
\newcommand{\dtwo}{{\dot2}}

\newcommand{\da}{{\dot\alpha}}
\newcommand{\db}{{\dot\beta}}
\newcommand{\dc}{{\dot\gamma}}

\newcommand{\lt}{{\widetilde\lambda}}


\newcommand{\sfa}{\mathsf{a}}
\newcommand{\sfb}{\mathsf{b}}
\newcommand{\sfc}{\mathsf{c}}
\newcommand{\sfd}{\mathsf{d}}
\newcommand{\sfe}{\mathsf{e}}


\title{Self-Dual Gauge Theory from the Top Down}

\begin{document}

\author[a]{Roland Bittleston}
\author[a]{Kevin Costello}
\author[a,b]{Keyou Zeng}

\affil[a]{Perimeter Institute for Theoretical Physics, 31 Caroline Street, Waterloo, Ontario, Canada}

\affil[b]{Center of Mathematical Sciences and Applications, Harvard University, Massachusetts 02138,
USA}

\maketitle

\begin{abstract}

{We introduce a family of dualities between certain non-supersymmetric self-dual gauge theories on a large class of $4d$ self-dual asymptotically flat backgrounds, and the large $N$ limit of an independently defined $2d$ chiral defect CFT. Our construction goes via twisted holography for the type I topological string on a Calabi-Yau five-fold which fibres over twistor space. In particular, we show that single-trace operators of the $2d$ defect CFT are in bijection with states of the celestial chiral algebra. We match the operator products of these states with the collinear splitting amplitudes of the self-dual gauge theory up to one-loop. Assigning vacuum expectations to central operators in the boundary theory computes bulk amplitudes on self-dual backgrounds. We are able to extract form factors from these amplitudes, which we use to give a simple closed formula for certain $n$-point two-loop all $+$ amplitudes in $\mathrm{SU}(K) \times \mathrm{SU}(R)$ gauge theory coupled to bifundamental massless fermions.}
		
\end{abstract}

\tableofcontents

\flushbottom


\section{Introduction}

Celestial holography proposes an equivalence between scattering amplitudes in asymptotically flat space-times and two-dimensional CFTs living on the celestial sphere \cite{Strominger:2017zoo}.  In self-dual theories, the celestial CFT is chiral, and scattering amplitudes of self-dual theories can (in good cases) be written as correlators of a celestial chiral algebra. 

In general, celestial CFTs can be non-local.  The great advantage of working with self-dual theories is that, when a certain twistor space anomaly vanishes, the dual CFT is \emph{local}.  This allows one to bootstrap amplitudes and form factors using the classical bootstrap familiar from two-dimensional CFT.  This provides real computational advantage: in \cite{Costello:2023vyy} new two-loop $n$-point amplitudes for massless QCD were computed using this method.  (In \cite{Dixon:2024mzh}, the four-point amplitude computed by chiral algebra methods was checked against the amplitude as computed using standard techniques, and perfect agreement was found when using a mass regulator).

In standard AdS/CFT, both sides of the holographic  duality have an independent definition: a large $N$ CFT is compared with gravity on AdS. A duality like this, where the duality between the two theories can be derived from string theory, is often referred to as a \emph{top-down} duality.  A top-down celestial duality was derived in \cite{Costello:2022jpg,Costello:2023hmi}.  This is a duality between a rather simple $2d$ CFT, and a self-dual theory on a  four-dimensional manifold known as Burns space.

In this paper, we propose a large class of top-down dualities of this form, all of which relate an explicit two-dimensional system, at large $N$, to self-dual QCD on a non-trivial self-dual background.  This class of dualities leads to a number of new explicit computations in gauge theory, including closed-form expressions for tree-level amplitudes in various strong-field self-dual backgrounds, and certain new two-loop amplitude computations.

The chiral algebra on one side of our duality has a nice interpretation in string theory: it is the algebra of BPS operators in a transverse intersection of $N$ $D5$ branes and $K$ $D5'$ branes in the type I string.  The system on the other side of our duality is a subsector of massless QCD, with certain unusual matter content but without any supersymmetry. Our duality therefore connects BPS quantities in ordinary string theory with non-supersymmetric physics in four dimensions. 


\subsection{Self-Dual QCD} 
 
We will start by recalling some background on self-dual gauge theory. Self-dual QCD (SDQCD) has Lagrangian
\be	
    \int_{\R^4}\op{tr}( B \wedge  F(A)_- )  + \int_{\R^4}\la\br{\psi},\Dirac_A\psi\ra_R\,,
\ee
Here $B\in \Omega^2_-(\R^4,\mf{g})$ is a Lagrange multiplier field imposing the self-duality equation $F(A)_- = 0$. The coupling to matter is the standard one. 

We can deform SDQCD by adding the term
\be
	- \frac{1}{2}g_\mrm{YM}^2\op{tr}(B^2) 	 
\ee
Then it is equivalent to ordinary QCD in perturbation theory.  The field $B$ becomes the field strength $F(A)_-$ and corresponds to a negative helicity gluon, whereas $A$ corresponds to the positive helicity gluon.  

SDQCD has $++-$ vertex and $-+$ propagator. The only non-trivial amplitudes in SDQCD are the well-know one-loop all $+$ amplitudes.   One gets more interesting quantities by considering form factors, which are scattering amplitudes in the presence of a local operator insertion. The form factor of the operator $\op{tr}(B^2)$ in SDQCD computes certain QCD amplitudes,  without the momentum-conserving $\delta$-function. This is simply because QCD is obtained by adding this operator to the Lagrangian.  The QCD amplitudes obtained in this way are those which, when written in terms of the SDQCD Lagrangian with the addition of $\op{tr}(B^2)$, involve only one $\op{tr}(B^2)$ vertex.   Simple combinatorics tells us that these terms are:   
\begin{enumerate} 
\item The tree-level MHV (i.e. two $-$) amplitudes.
\item The one-loop one $-$ amplitude.
\item The two-loop all $+$ amplitude. 
\end{enumerate} 

More generally, form factors of the operator $\op{tr}(B^n)$ in SDQCD match form factors of the operator $\op{tr}(F_-^n)$ in QCD up to $n$ loops, where at $l$ loops we have $n-l$ negative helicity external gluons.

The thesis of \cite{Costello:2021bah} is that, when a certain twistor space anomaly vanishes, self-dual QCD furnishes an integrable subsector of full QCD.


\subsection{Twistor Uplifts of SDQCD}

Recall that twistor space is the manifold
\be 
\PT = \Oo(1)^2 \to \CP^1\,.
\ee
As a real manifold, it is diffeomorphic to $\R^4 \times \CP^1$.  Holomorphic theories on $\PT$ give rise to field theories on $\R^4$ by Kaluza-Klein compactification along the $\CP^1$. 

An off-shell version of the Penrose-Ward transformation \cite{Ward:1977ta,Mason:2005zm} tells us that, at tree-level, self-dual gauge theory lifts to the theory on twistor space with fields 
\be
b\in \Omega^{3,1}(\PT,\mf{g} )\,,\qquad a\in\Omega^{0,1}(\PT, \mf{g} )  
\ee
and action
\be
\frac{\i}{2\pi}\int_\PT\op{tr}(b\wedge F^{0,2}(a))\,.
\ee
This is known as holomorphic BF theory.  If $R$ is a real representation of the Lie algebra $\g$, matter is included by adjoining fields
\be
\alpha \in \Pi \Omega^{0,1}(\PT, R \otimes \Oo(-1))\,,\qquad\beta \in \Pi \Omega^{0,1}(\PT, R \otimes \Oo(-3) ) 
\ee
with Lagrangian $\la\beta,\dbar_a\alpha\ra_R$. Here $\Pi$ denotes parity shift.


\subsection{Integrability of SDQCD and the Twistorial Anomaly}

At tree-level, SDQCD is obtained by dimensional reduction from a local holomorphic field theory on twistor space.  As argued in \cite{Costello:2021bah}, any four-dimensional theory which lifts to twistor space in this way is integrable.  Integrability arises in roughly the same way that two-dimensional models which arise from four-dimensional Chern-Simons are integrable \cite{Costello:2017dso,Costello:2018gyb,Costello:2019tri}.  

Holomorphic theories in three complex dimensions can have anomalies \cite{williams2020ren}, and generally, the twistor uplift of SDQCD is anomalous. The anomaly vanishes when the matter representation $R$ and the gauge Lie algebra $\mf{g}$ satisfy
\be
\op{tr}_{\g}(X^4) = \op{tr}_R(X^4)\,. \label{eqn:anomaly} 
\ee
From the four-dimensional perspective, this is an anomaly to integrability, and not a gauge anomaly.  This is because the anomaly can be cancelled by a counter-term which is non-local on twistor space but local on space-time.

This trace identity is enough to guarantee that \emph{all} amplitudes of SDQCD vanish. This is true for any theory which arises from a local holomorphic field theory on the twistor space $\PT = \Oo(1)^2 \to \CP^1$ of flat space.  The point is that, by the half-Fourier transform, four-dimensional states are represented as states which are localized to fibres of the twistor fibration $\PT \to \CP^1$, and locality of the theory on twistor space means that these states do not interact.  

From the four-dimensional perspective, vanishing of these amplitudes is not at all obvious. However, Dixon and Morales \cite{Dixon:2024mzh,Dixon:2024tsb} have shown that vanishing of the all $+$ amplitudes in this situation is a consequence of an infinite number of kinematic identities that were previously studied in \cite{Bjerrum-Bohr:2011jrh}.

Bardeen suggested \cite{Bardeen:1995gk} that the one-loop all $+$ amplitude is an obstruction to integrability of self-dual Yang-Mills (SDYM).  The relation between this amplitude and the twistor space anomaly makes this connection more transparent. 


\subsection{Review of Celestial Chiral Algebras} \label{subsec:ccas}

Given a holomorphic theory on twistor space, one can build a chiral algebra \cite{Costello:2022wso} controlling the form factors of the corresponding four-dimensional self-dual theory.  In this section, we will briefly review the chiral algebra for SDQCD with vanishing twistor space anomaly.

The chiral algebra is described in spinor helicity notation: null momenta $p_{\alpha\da}$ are written in terms of a pair of spinors
\be
p_{\alpha\da} = \lambda_\alpha \lt_\da
\ee
and we set
\be
\lambda_\alpha = (1,z)\,,\qquad z \in \CP^1\,.
\ee
The chiral algebra lives on the $z$ plane.

A key feature of the chiral algebra is that single particle operators placed at $z$ are in bijection with single-particle gauge theory states with momentum $\lambda = (1,z)$, $\lt$ arbitrary. 

\begin{table}
    \centering
    \begin{tabular}{ c c }
    \toprule 
    SDQCD & Chiral algebra \\
    \midrule
    States with momentum $(\lt, \lambda = (1,z))$ & Operators at $z$ \\
    Local operators & Conformal blocks \\
    Form factors & Correlators \\
    Collinear singularities in form factors & OPEs \\
    \bottomrule
    \end{tabular}
    \caption{SDQCD $\leftrightarrow$ chiral algebra dictionary} \label{tab:chiralalgebra}
\end{table}

This means that, for each $\lt$, the chiral algebra has operators 
\be \J_\sfa(\lt;z)\,,\qquad \Jt_\sfa(\lt;z) \ee
corresponding to gluons of positive/negative helicity.  These can be expanded in series
\be
\J_\sfa(\lt;z) = \sum_{m,n\geq0}\frac{\lt_\done^m \lt_\dtwo^n}{m!n!}\J_\sfa[m,n](z) \,.
\ee
Similarly for matter fields and $\Jt$, giving 
\be \til \J_\sfa[m,n](z)\,,\qquad \M_i[m,n](z)\,,\qquad \til \M_i[m,n](z)\,. \ee 
A basis of operators in chiral algebra is given by normally ordered products of derivatives of these states, e.g. 
\be
: \til \J_{\sfa_1} [m_1,n_1] \p_z \J_{\sfa_2}[m_2,n_2] \p_z^3 \M_i[m_3,n_3] \dots  : (z) \,.
\ee

The dictionary between the chiral algebra and SDQCD is illustrated in Table \ref{tab:chiralalgebra}.  Let us explain what we mean by conformal block as it appears in this table. The chiral algebra is very much non-unitary, and most of its states are of negative conformal weight.  This means that the correlators of the chiral algebra are ambiguous. For us, a \emph{conformal block} is a consistent way of defining correlation functions of the chiral algebra, compatible with OPEs and with order of pole or zero at $z = \infty$ dictated by the conformal weight.

\begin{figure}[ht]
\centering
	\subfloat{\includegraphics{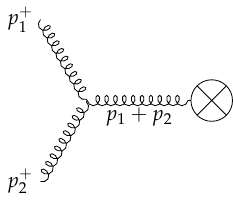}}\hfil
	\subfloat{\includegraphics{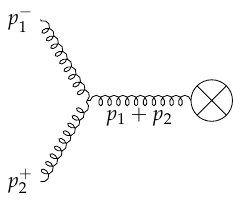}}
\caption{Tree gluon diagrams leading to collinear splitting in form factors}
\label{fig:treesplitting}
\end{figure}

OPEs are given by \emph{collinear singularities} in SDQCD form factors. At tree-level we have
\bea
&\J_\sfa(\lt_1;z_1) \J_\sfb(\lt_2;z_2) \sim \frac{f_{\sfa\sfb}^{~\,~\sfc}}{\ip{12}} \J_\sfc(\lt_1 + \lt_2;z_2)\,, \\
&\Jt_\sfa(\lt_1;z_1) \J_\sfb(\lt_2;z_2) \sim \frac{f_{\sfa\sfb}^{~\,~\sfc}}{\ip{12}} \Jt_\sfc(\lt_1 + \lt_2;z_2)\,.
\eea
where $\ip{12} = z_1 - z_2$. These come from the Feynman diagram in self-dual gauge theory illustrated in Fig. \ref{fig:treesplitting}. There are similar OPEs between $\J$ and $\M$. These OPEs are subject to loop-level corrections \cite{Costello:2022upu}, which we will discuss extensively later.


\subsection{SDQCD from String Theory} \label{subsec:SDQCD-string}

 When the twistor space anomaly vanishes, there is a celestial chiral algebra associated to SDQCD \cite{Costello:2022wso}, whose correlation functions compute form factors.  The goal of this paper is to write this chiral algebra, and deformations of it coming from self-dual backgrounds, as the large $N$ limit of the chiral algebra of a two-dimensional system. Our argument will go via an embedding of SDQCD in string theory.  This  may be of independent interest as it provides a method to connect supersymmetric quantities in string theory with non-supersymmetric amplitudes. 

The twistor uplift of SDQCD is a holomorphic BF theory with matter fields.  It is known \cite{Elliott:2020ecf} that holomorphic BF theory in complex dimension three is the holomorphic twist of $(0,1)$ supersymmetric gauge theory.   Twist here means that we change the spin of the fields in such a way that the theory has a global supercharge $Q$ of square zero, which we then add to the BRST operator.  Holomorphically twisted theories \cite{Johansen:1994aw,Budzik:2023xbr} have the feature that the anti-holomorphic vector fields $\p_{\zbar_i}$ are $Q$-exact; this means that in the twisted theory correlation functions are holomorphic functions of position.  With $(0,1)$ supersymmetry in dimension $6$, holomorphic twists are the only kind that exist \cite{Elliott:2020ecf}.
 
The twistor space anomaly \eqref{eqn:anomaly} is exactly the standard box-diagram anomaly \cite{Green:1984bx, Erler:1993zy} that is extensively studied in supersymmetric gauge theory.  In general, this means that we can engineer an anomaly-free holomorphic theory on twistor space from any anomaly-free theory with $(0,1)$ supersymmetry.  Many such theories involve $(0,1)$ tensor multiplets, introduced to cancel the anomaly by a Green-Schwarz mechanism.   In order to engineer SDQCD using this method, we will focus on $(0,1)$ theories with no tensor multiplets.   

We need to embed such systems in string theory.  There is a standard way to do this: the theory on a $D5$ brane in either type IIB or type I superstring theory has (at least) $(0,1)$ supersymmetry.  The part of the twisting procedure which changes the spins of the fields on the $D$ brane corresponds geometrically to placing the $D5$ brane in a geometry with non-trivial normal bundle.  Supersymmetry requires that the $10d$ geometry is a Calabi-Yau manifold, and the worldvolume of the $D5$ is a complex submanifold of complex dimension $3$.  

In our setting we require the worldvolume of the $D5$ brane to be twistor space $\PT = \Oo(1)^2 \to \CP^1$.  A simple way to build a CY5 containing $\PT$ as a subvariety is to take a rank two holomorphic vector bundle $V$ over $\PT$.  In order for the total space to be a CY5, we need the determinant of $V$ to be the canonical bundle of twistor space, which is $\Oo(-4)$.  It is natural to take $V$ to be
\be
V = \Oo(-1)\oplus\Oo(-3)\,.
\ee
Let us briefly analyze the theory on a $D5$ brane living on the zero section of $V$ in the holomorphic twist of type IIB.  The $(0,1)$ vector multiplet, after twisting, contributes holomorphic BF theory \cite{Elliott:2020ecf}.  The scalar fields living in the normal bundle $V$ contribute adjoint-valued fields in $V$. Since $V = \Oo(-1) \oplus \Oo(-3)$, these become the twistor uplift of ordinary fermions, valued in the adjoint representation.

Therefore, with this choice of $V$, the theory on a $D5$ brane becomes, after twisting and applying the Penrose transform, $\cN=1$ SDSQCD.\footnote{If we made a different choice for $V$, e.g. $V = \Oo(-2) \oplus \Oo(-2)$, we would find anti-commuting bosons instead of fermions.  This is clearly undesirable, and dictates our choice that $V = \Oo(-1) \oplus \Oo(-3)$.}

In general we prefer to avoid supersymmetric models, because many of the amplitudes and form factors we would like to compute vanish in the supersymmetric case.  For this reason we prefer to study the theory on a stack of $D5$ branes in the type I  string on the same Calabi-Yau geometry.  In this case, we have space-filling $D9$ branes with an $\mrm{SO}(32)$ gauge group. The geometry can be equipped with a non-trivial $\mrm{SO}(32)$ bundle, which should be a holomorphic bundle to be compatible with supersymmetry.  There is some freedom to choose which holomorphic $\mrm{SO}(32)$ bundle we use.  We will take it to be
\be
    \Oo(1)^{16} \oplus \Oo(-1)^{16}\,,
\ee
where $\Oo(1)^{16}$ and $\Oo(-1)^{16}$ are both isotropic subbundles.  This will break the $\mrm{SO}(32)$ flavour symmetry of the theory on the $D5$ branes to $\mrm{SL}(16)$.  If we do this, and then keep $V=\Oo(-1) \oplus \Oo(-3)$, we find that the $(0,1)$ theory on the $D5$ brane consists of:
\begin{enumerate} 
\item The $(0,1)$ vector multiplet for gauge group $\Sp(K)$, which after twisting becomes holomorphic BF theory for $\Sp(K)$ and so SDYM for $\Sp(K)$ on space-time.
\item A $(0,1)$  hyper multiplet in $(\Oo(-1)  \oplus \Oo(-3) ) \otimes \wedge^2_0 \mrm{F}$, where $\wedge^2_0 \mrm{F}$ is the exterior square of the fundamental of $\mf{sp}(K)$.\footnote{Here the notation $\wedge^2_{0}\mrm{F}$ means the trace-free exterior square: $\Omega_{IJ}X^{IJ} = 0$, where $\Omega_{IJ}$ is the symplectic form of $\mathfrak{sp}(K)$. In fact, the center degrees of freedom decouple and can be consistently removed.}   After twisting and applying the Penrose transform, this gives us fermions in $\wedge^2_0 \mrm{F}$.
\item A $(0,1)$ hyper multiplet in $(\Oo(-1) \oplus \Oo(-3) ) \otimes \mrm{F} \otimes \C^{16}$, from the $D5-D9$ strings.   On $\R^4$ this gives us fermions in $\mrm{F} \otimes \C^{16}$.  
\end{enumerate}
This particular theory of SDQCD is what we will study in this paper.  We will refer to this theory as the \emph{integrable self-dual $\Sp(K)$ gauge theory}, or simply SDQCD, to avoid spelling out the matter content each time. 

For $K=1$, this theory is quite simple: the  gauge group is $\mrm{SL}(2)$, and the matter content becomes $N_f = 8$. 

Our goal is to build, holographically, the chiral algebra of this SDQCD model, and use it to compute amplitudes and form factors.

To do this, we introduce another set of $N$ $D5$ brane wrapping the locus $\Oo(-1) \oplus \Oo(-3) \to \CP^1$ inside the Calabi-Yau $5$-fold $\Oo(-1)\oplus \Oo(-3) \oplus \Oo(1)^2 \to \CP^1$.   The $K$ $D5'$ and $N$ $D5$ intersect along a copy of $\CP^1$.  The intersection gives a defect in the worldvolume theory of the $N$ $D5$ branes, which has two-dimensional $(0,2)$ supersymmetry. 

As is standard in holography, we expect a correspondence between:
\begin{enumerate}
    \item Large $N$ operators (and their OPEs and correlation functions) on the stack of $N$ $D5$ branes, equipped with the defect from the $D5-D5'$ intersection.
    \item States (and their scattering amplitudes) in the holographic dual theory, consisting of the gravitational theory and the theory on the $K$ $D5'$ branes.
\end{enumerate}
We are interested in the states and amplitudes of the theory on the $K$ $D5'$ branes.  This correspondence tells us that we should be able to understand these quantities from the theory on the $N$ $D5$ branes with its defect.

As in any two-dimensional system with $(0,2)$ supersymmetry, there is a twist of the defect system which leaves a chiral algebra of BPS operators.  We denote this chiral algebra by $\mc{A}^{N,K}_\mrm{Defect}$. It has an explicit description in terms of the $N-K$ bifundamental fermions and the fields on the $N$ $D5$ branes.  It has, in addition, a subalgebra
\be
\mc{A}^N_{D5} \subset \mc{A}^{N,K}_\mrm{Defect} 
\ee
of operators which come from  BPS operators in the system of $N$ $D5$ branes.  Operators in $\mc{A}^N_{D5}$ must have trivial OPE with operators in $\mc{A}^{N,K}_\mrm{Defect}$.  This is because operator products are holomorphic functions of position, and operators in $\mc{A}^N_{D5}$ can be placed at a point in a $3$ complex dimensional space. Hartogs theorem \cite{Hartogs} tells us holomorphic functions can only have singularities in complex codimension $1$.    
 
Holography should imply that
\begin{equation*} 
\lim_{N \to \infty} A^{N}_{D5} \iff \text{ BPS supergravity states in a ten-dimensional geometry.}  
\end{equation*}
Similarly,
\begin{equation*} 
\lim_{N \to \infty} A^{N,K}_\mrm{Defect} \iff \text{ BPS states in the ten-dimensional system with the } K\ D5 \text{ branes} \,.
\end{equation*}

We would like to study the theory on the $D5$ branes decoupled from the bulk string theory. To do this, we need to remove from the algebra $\mc{A}^{N,K}_\mrm{Defect}$ any states which come from $\mc{A}^{N}_{D5}$.  

One can check  that the commutative algebra $\mc{A}^N_{D5}$ admits a unique homomorphism to the complex field $\C$ which preserves all natural symmetries. (This homomorphism sets all states charged under any symmetry to zero).  We let $I_{{\vac}}$ be the kernel of this homomorphism, spanned by charged states.   Then, we can define the algebra which will be related to the celestial chiral algebra by 
\be
    \mc{A}^{N,K} = \mc{A}^{N,K}_\mrm{Defect} / I_{\vac}
\ee
obtained by setting all operators coming from the bulk of the $D5$ worldvolume (except the identity) to zero.  Performing such an operation is only possible because the operators in the ideal $I_{{\vac}}$ have trivial OPE with all other operators.   

This procedure isolates the algebra which, at large $N$, will be holographically dual to the theory on the stack of $D5'$ branes, decoupled from the bulk type I topological string.  This decoupling of the bulk fields may seem rather surprising from the $10d$ perspective, but has been utilized in the physical string where it was termed `rigid holography' \cite{Aharony:2015zea}, and also in the context of twisted holography \cite{Ishtiaque:2018str}.  From the dual perspective, centrality of bulk states ensures that they cannot appear at intermediate steps in a chiral algebra computation, so that bulk exchanges are not contributing to amplitudes.

We arrive at the following conjecture:
\begin{conjecture} \label{conj:largeN}
The large $N$ limit of the open string algebra $\mc{A}^{N,K}$ is the celestial chiral algebra for the self-dual integrable $\Sp(K)$ theory, deformed by the $D5$ brane backreaction. 
\end{conjecture}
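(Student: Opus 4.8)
The plan is to reduce Conjecture~\ref{conj:largeN} to three matching statements and to establish each from an explicit presentation of the defect algebra together with the twisted-holographic (Koszul-dual) dictionary for the type I topological string. First one makes $\mc{A}^{N,K}_\mrm{Defect}$ concrete: its holomorphic twist is a $2d$ chiral algebra built from free bifundamental $bc$-fermions on the intersection $\CP^1$, coupled to the holomorphic fields of the $D5$ worldvolume theory on $\Oo(-1)\oplus\Oo(-3)\to\CP^1$ (holomorphic BF together with the normal-bundle fields), with the BRST differential of the $D5$ gauge symmetry adjoined. Passing to large $N$ with the 't~Hooft coupling fixed, the surviving operators are the single-trace words in these fields; since the uncharged bulk-$D5$ operators spanning $I_{\vac}$ have trivial OPE with everything (Hartogs), the quotient is compatible with the planar limit and $\mc{A}^{N,K}_\infty\defeq\lim_{N\to\infty}\mc{A}^{N,K}$ is a well-defined chiral algebra. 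The task is then to identify it with the celestial chiral algebra $\mc{A}_{\mrm{SDQCD}}$ of the integrable self-dual $\Sp(K)$ theory, deformed by the $D5$ source.

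Second, I would match the two algebras as graded vector spaces and at tree level. Applying the half-Fourier transform in the $\CP^1$ directions of the $D5$ worldvolume, the single-trace operators organise into the towers $\J_\sfa[m,n]$, $\Jt_\sfa[m,n]$, $\M_i[m,n]$, $\til\M_i[m,n]$, their $\p_z$-derivatives and normally ordered products; this is the bijection of states, and it fixes the conformal weights and the $\Sp(K)$ and $\mrm{SL}(16)$ representation labels. For the OPEs one uses that the $K$ $D5'$ branes wrap $\PT\subset$~CY5 carrying exactly holomorphic BF theory with the matter dictated by $V=\Oo(-1)\oplus\Oo(-3)$ and the $\mrm{SO}(32)\to\mrm{SL}(16)$ bundle, i.e.\ the twistor uplift of integrable self-dual $\Sp(K)$ QCD. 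Twisted holography predicts that the planar product on $\mc{A}^{N,K}_\infty$ is dual to the algebra of scattering states of this $D5'$ theory; concretely, the $\dbar$-equation of motion of holomorphic BF plus matter on $\PT$ reproduces the tree-level collinear OPEs of \S\ref{subsec:ccas}, and these can be verified term by term from the explicit defect-algebra presentation. The loop corrections of \cite{Costello:2022upu} come from one-loop exactness of the same structure; their associativity is guaranteed because the chosen matter content makes the twistor anomaly \eqref{eqn:anomaly} vanish.

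Third --- and this is the new ingredient --- one must account for the $D5$ backreaction. The $N$ $D5$ branes source the closed-string (Kodaira--Spencer/BCOV) sector on $\Oo(-1)\oplus\Oo(-3)\oplus\Oo(1)^2\to\CP^1$; solving the equations of motion order by order in the 't~Hooft coupling deforms the complex structure of the CY5, and restricting to the $\Oo(1)^2\to\CP^1$ directions deforms $\PT$ to the twistor space of a new self-dual asymptotically flat background. One then shows that the deformation of $\mc{A}^{N,K}_\infty$ produced by turning on the $D5$ source matches, through the Koszul-dual dictionary, the deformation of $\mc{A}_{\mrm{SDQCD}}$ induced by this background: at linearised order this is a Dolbeault cohomology computation on $\PT$ valued in the bulk fields, and the full statement follows by resumming the planar source expansion. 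Combining these three points yields the conjectured identification, and assigning VEVs to the central operators carrying the background data then reproduces the amplitude computations.

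I expect the main obstacle to be twofold. Analytically, one must control the large-$N$ limit --- that only planar single-trace data survives, that the quotient by $I_{\vac}$ is compatible with $N\to\infty$, and that the 't~Hooft resummation defining the backreaction is at worst an unambiguous formal series. Conceptually, the harder point is that the celestial chiral algebra on a \emph{general} self-dual background is not presently defined beyond low loop order, so part of the conjecture is really a definition; upgrading it to a theorem needs either an a priori construction of $\mc{A}_{\mrm{SDQCD}}$ on the backreacted geometry (e.g.\ from its own twistor uplift) or a rigid enough intrinsic characterisation --- by OPE associativity, flavour symmetry, and the prescribed order of pole or zero at $z=\infty$ --- to pin down both sides. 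Matching the loop-corrected OPEs and the backreaction to all orders, rather than to one loop, is where the real work lies.
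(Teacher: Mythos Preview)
The statement is a \emph{conjecture}, and the paper does not prove it. What the paper provides is a collection of explicit checks: it computes the large-$N$ BRST cohomology of the defect system directly (via relative Lie algebra cohomology and a Loday--Quillen--Tsygan / Tor computation in Appendix~\ref{app:tree-cohomology}) and matches the resulting single-string spectrum with the generators of the celestial chiral algebra; it then computes planar and leading non-planar OPEs by explicit Wick contraction in the defect theory (Sections~\ref{sec:simpleOPEs} and \ref{sec:nonfactorizing}) and matches them with known tree and one-loop collinear splitting amplitudes; and it identifies the backreaction explicitly (Section~3). Your last paragraph correctly flags that a full proof would require an independent all-orders definition of the celestial chiral algebra on the backreacted background, and the paper makes no claim to have one. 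So your ``proof strategy'' should be read as a programme for accumulating evidence, which is exactly what the paper does --- but you should not present it as yielding the identification.

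Two points where your outline diverges from the paper and would lead you astray. First, the backreaction does \emph{not} deform the complex structure of the CY5: the field sourced by the $N$ $D5$ branes is the $(2,1)$-form $\eta$ satisfying $\dbar\eta = 2N\delta_{D5}$, and the Beltrami differential $\mu$ is untouched. Tracing this through to twistor space and then to $\R^4$ gives not a curved metric but the varying $\theta$-angle $-\tfrac{N}{4\pi^2}\log\|x\|\,\op{tr}(F\wedge F)$, i.e.\ a background axion. Your picture of ``deforming $\PT$ to the twistor space of a new self-dual asymptotically flat background'' via a Kodaira--Spencer deformation is therefore wrong for the $D5$ backreaction itself (though it is the right picture for the \emph{other} closed-string VEVs discussed later, such as the one giving Burns space). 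Second, the spectrum matching is not done by a half-Fourier transform on the $D5$ worldvolume; it is a direct homological-algebra computation of relative Lie algebra cohomology at infinite $N$, which reduces to certain $\op{Tor}$ groups over $\C\llbracket z,w_i,\eps_\da\rrbracket$. The states are open-string words in $\psi$, $\phi$, $\gamma$, $\til\gamma$, $\p_{w_i}\c$, and the Koszul-type resolution produces exactly the towers $\J[m,n]$, $\Jt[m,n]$, $\M[m,n]$, $\til\M[m,n]$.
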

Note that a similar holographic duality for intersecting branes in a topological string was proven in \cite{Ishtiaque:2018str}. The details of the brane configuration in that work are rather different: the branes intersect along a line instead of a plane, and the large $N$ algebra was found to be the Yangian of $\mf{gl}(K)$.

Configurations of intersecting $D5$ branes, termed $I$ branes, and their corresponding supergravity backgrounds have been considered elsewhere in the literature \cite{Itzhaki:2005tu,Nunez:2023nnl}. The relationship to our construction is not clear to us and would be interesting to clarify.

In our setting, the backreaction is very simple when we pass to the four-dimensional self-dual QCD. It amounts to introducing a term in the action 
\be
- \frac{N}{4\pi^2}\int_{\R^4}\log\norm{x}\op{tr}(F\wedge F)\,.
\ee
This is a particular varying $\theta$-angle.  We can also access the theory without backreaction, by the simple expedient of using $\mf{osp}(2N|N)$ as the gauge algebra, rather than $\mf{sp}(N)$.  This corresponds to having the same number of $D5$s and anti-$D5$s \cite{Mikhaylov:2014aoa}, and so no source.  

We are able to perform many checks of this conjecture. In particular, single-trace operators in $\mc{A}^{N,K}$ at large $N$ reproduce the spectrum of single-particle states of the self-dual $\Sp(N)$ gauge theory we consider (these are the generators of the celestial chiral algebra).  We are also able to perform many checks of the large $N$ OPEs, matching them with the collinear singularities in self-dual gauge theory.   


\subsection{Turning on Background Fields} \label{subsec:backgrounds}

The above conjecture has an obvious generalization.  We can replace the ideal $I_{{\vac}}$ of bulk states we are annihilating by some other ideal $I_\Psi$.  The choice of $\Psi$ is given by ways of giving a vacuum expectation value (VEV) to bulk operators.  Different choices of $\Psi$ will give rise to different chiral algebras
\be
\mc{A}^{N,K}_\Psi  = \mc{A}^{N,K}_\mrm{Defect} / I_\Psi
\ee
It turns out that these chiral algebras, at large $N$, control collinear singularities of SDQCD on various different self-dual backgrounds, dictated by the choice of $\Psi$.  Possible backgrounds one can engineer are:
\begin{enumerate}
\item A background field $A_0$ for the $\mf{sl}(16)$ flavour symmetry, where $F(A_0)_-$ has a $\delta$-function source at the origin.
\item The Burns space metric \cite{burns1986twistors}.
\item The double cover of Eguchi-Hanson \cite{Eguchi:1978xp}. This solves the self-dual vacuum Einstein equations almost everywhere; with a derivative $\delta$-function source for the anti-self-dual part of the Weyl curvature tensor
\be
W_{\alpha\beta\gamma\eta} = \frac{1}{6}F_{\alpha\beta\gamma\eta}\triangle\delta_{x = 0}\,.
\ee
Here $H$ a tensor symmetric in four spinor indices.
\end{enumerate}
This construction yields many other self-dual backgrounds, since the VEVs can be linearly superimposed. We anticipate that by doing so it will be possible to engineer flavour backgrounds with an arbitrary compact source for $F(A_0)_-$, multi-centred Eguchi-Hanson, Burns space with sources for a self-dual gauge field, etc. A consequence of our analysis is that scattering amplitudes in all such backgrounds are rational, and are given by the correlators of a chiral algebra. 

We can use the chiral algebra to perform many computations of scattering amplitudes in these backgrounds. Some sample formulae will be presented later in the introduction, with detailed computations in the body of the paper.


\subsection{From Physical to Topological Strings}

One issue with the above discussion is that the theory of $D5$ branes in the type  I string is not known to have a clean holographic dual, although the $D1-D5$ system has been extensively studied \cite{Douglas:1996uz}. For our purposes, this turns not to be a major obstacle, because we will work with the \emph{twist} of the type I string, which encodes BPS quantities. In the twisted string theory, backreaction is much more straightforward. 

The familiar procedure of twisting supersymmetric gauge theory can also be applied to supergravity and string theory \cite{Costello:2016mgj,Raghavendran:2021qbh}. Here, the procedure is a little different. In supergravity, supersymmetry is gauged. Gauging supersymmetry introduces ghost fields of opposite parity to the generators of the supersymmetry algebra. In particular, there is a bosonic ghost field $\c_{\alpha }$ corresponding to each supersymmetry $Q_\alpha$. To twist supergravity we give the component of the bosonic ghost field corresponding to our chosen supercharge a VEV.

Twisting string theory is similar: one gives a VEV to the ghost for the space-time supersymmetry. In the RNS formalism, these lie in the Ramond sector.  Giving a VEV to Ramond sector fields is challenging, so twisted string theory is easier to study in the supergravity approximation.

In \cite{Costello:2016mgj}, it was conjectured that the holomorphic twist of type IIB on a Calabi-Yau $5$-fold is the topological $B$-model \cite{Witten:1988ze,Bershadsky:1993cx}.  For open strings, this is a consequence of \cite{Baulieu:2010ch}.  For closed strings, this conjecture was verified in the free limit in \cite{Saberi:2021weg}, using the pure spinor formalism.

A similar conjecture was presented in \cite{Costello:2019jsy} for the holomorphic twist of the type I string.  The twist of the type I string is conjectured to be the type I topological string, which is an orientifold of the familiar topological $B$ model.  The type I topological string on a Calabi-Yau $5$-fold contains $\mrm{SO}(32)$ holomorphic Chern-Simons theory in the open string sector, and in the closed string sector contains half of the fields of Kodaira-Spencer theory.  There is a similar Green-Schwarz mechanism as in the physical theory.  There is also a close relationship \cite{Costello:2021kiv} to the chiral part of the heterotic string.

We discussed how our $\Sp(K)$ self-dual QCD model lifts to holomorphic BF theory with matter on twistor space.  This is precisely the theory on a $D5$ brane in the twisted type I string theory.

Working in the setting of twisted string theory, rather than the physical theory, has several advantages. Firstly, the connection with twistor theory is more direct.  In the Penrose transform one needs to work with holomorphic field theories, and these are what appear in twisted string theory.  

Secondly, the backreaction is rather easy to implement in the twisted setting. Indeed, the field sourced by the $D5$ brane does not involve the Beltrami differential field, and so does not change the complex structure of the Calabi-Yau $5$-fold geometry. All it does is introduce a flux, which we can trace by the Penrose transform to the background axion field mentioned earlier. 


\subsection{Connections with the Witten-Berkovits Twistor String}

Let us briefly mention a version of our story which connects to the Witten-Berkovits twistor string.  In \cite{Witten:2003nn,Berkovits:2004hg} Witten and Berkovits studied a twistor realization of self-dual $\mc{N}=4$ super Yang-Mills theory.  That was obtained as holomorphic Chern-Simons theory on the supermanifold $\PT^{3|4}$, which is the total space of the odd vector bundle $\Pi\Oo(1)^4$ over $\PT$.  It was then argued that $\mc{N}=4$ super Yang-Mills is part of a topological string on twistor space, whose closed string sector gives conformal supergravity.

One can ask how this construction relates to the twisted string theories discussed above.  The relationship is quite simple.  As is very well known in the mathematics community \cite{Coates:2001ewh, beilinson1988koszul},  the topological string on the total space of an odd vector bundle $\Pi V$ on a manifold $Y$ is the same as the topological string on the total space of the even vector bundle $V^\ast$. 

Let us derive this correspondence for the topological $B$-model when the target is an odd vector space $\Pi V$. The category of branes is the category of modules over the exterior algebra $\wedge^\ast V$.  Koszul duality \cite{beilinson1988koszul} gives an equivalence\footnote{There are some minor technicalities in this statement. To be precise, one should use graded modules, or else the completed symmetric algebra.} of derived categories
\be
	\wedge^\ast V-\op{mod} \simeq \Sym^\ast V^\ast -\op{mod}\,.
\ee
On the left hand side we have branes on the total space of $\Pi V$, and on the right hand side we have branes on $V^\ast$. 

This equivalence has the flavour of a $T$-duality.  It takes the space $\wedge^\ast V$, viewed as a module over itself, to the rank one module $\C$ over the Koszul dual algebra  $\Sym^\ast V^\ast$.   In other words, it takes a space filling brane on $\Pi V^\ast$ to a brane supported at the origin in $V$.  

What we have learned is that the Witten-Berkovits twistor string is equivalent to a topological $B$-model on the Calabi-Yau $7$-fold
\be \Oo(-1)^4 \to \PT\,. \ee
The space-filling brane on $\PT^{3 \mid 4}$ supporting the twistor uplift of self-dual $\mc{N}=4$ super Yang-Mills becomes the brane wrapping the zero section
\be \PT\subset\Oo(-1)^4\,. \ee
It is immediate that the fields of this zero section brane match the fields of the space-filling brane on $\PT^{3 \mid 4}$.  In general, in the $B$-model, the space of fields on a brane living on $Y \subset X$ is given by 
\be \op{Ext}^\ast_{\Oo_X}(\Oo_Y,\Oo_Y)\,. \ee
It is a standard (and very old) result in math that if $X$ is the total space of a vector bundle $V \to Y$, then this Ext group quasi-isomorphic to the Dolbeault complex of $Y$ with values in the exterior algebra of $V$.  If $V = \Oo(-1)^4$ over $\PT$, this recovers the field content of the space-filling brane on $\PT^{3 \mid 4}$.

Witten and Berkovits also consider closed string states. The closed string states they consider are only a small part of the full spectrum of the topological string that appears in our analysis.\footnote{The equivalence between the $B$-model on the total space of $V$ and $\Pi V^\ast$ holds both for open and closed strings.  This is a formal consequence of the open string argument, because closed string states are the Hochschild cohomology of the open string algebra.  It can also be seen directly, using the description of closed string states as polyvector fields on the target space.

Polyvector fields on $\Oo(-1)^4 \to \PT$ is a very large space, because all fields have functional dependence on the four extra bosonic directions $\Oo(-1)^4$.  When the Penrose transform is applied, the field components obtained by Taylor expanding in these bosonic directions become an infinite tower of fields of increasing spin on space-time.  From the point of view of $\PT^{3 \mid 4}$, these bosonic directions correspond to the symmetric powers of the odd part of the tangent bundle of $\PT^{3 \mid 4}$ that appear in polyvector fields.

Witten and Berkovits did not find such infinite towers of higher spin fields. The only part of the spectrum they considered correspond to vector fields on $\PT^{3\mid 4}$, whereas polyvector fields necessarily include such towers.}  We will leave a detailed analysis of this discrepancy to future work. 


\subsection{Amplitude Computations} \label{subsec:amplitude-comps}

Before we embark on a detailed analysis of our proposed duality, let us give a summary of the results we are able to obtain using these techniques. These fall into two classes.

First, as explained in Sect. \ref{subsec:backgrounds}, amplitudes in self-dual backgrounds coincide with correlators of the large $N$ chiral algebra in which bulk states are given a vacuum expectation. We have verified that at two-points and to first order in the VEV there is a match between the space-time and dual computations for the bulk states which transform trivially under right-handed rotations. (The two-point amplitudes for states transforming non-trivially can then easily be deduced.)

We then focus on the case of a flavour symmetry background whose anti-self-dual curvature is sourced by
\be F(A_0)_{\alpha\beta} = M_{\alpha\beta}\delta_{x=0}. \ee
Here $M_{\alpha\beta}\in\mrm{End}(\C^{16})$ is a triple of traceless matrices. In general, correlations functions of the chiral algebra for the corresponding bulk VEV are ambiguous. These are eliminated if we take $M_{\alpha\beta} = D\xi_\alpha\xi_\beta$ for $D\in\End(\C^{16})$ traceless and $\xi$ a left-handed reference spinor. In this case, we show that the two-point fermion amplitude is
\bea
& \mc{A}^\text{tree}_\text{flavour}(1^+,2^-) = \frac{1}{\la12\ra}\sum_{m=0}^\infty\frac{(-)^{m+1}}{(m+1)!(m!)^2}\la\xi1\ra^m\la\xi2\ra^{m+2}[12]^mD^{m+1} \\
&= - \frac{\la\xi2\ra^2}{\la12\ra} D\,{}_0F_2\big(1,2;-\la\xi1\ra\la\xi2\ra[12]D\big)
\eea
where the positive helicity fermion has momentum $p_1 = \lt_1\lambda_1$ and the negative helicity fermion has momentum $p_2 = \lt_2\lambda_2$. The two fermions, one gluon amplitude is
\bea
&\mc{A}^\text{tree}_\text{flavour}(1^+,2^+,3^-) \\
&= \frac{1}{\la12\ra\la23\ra}\sum_{a,b,c \ge 0}\frac{(-)^{a+b+c}\la\xi1\ra^{a+b} \la\xi2\ra^{a+c}\la\xi3\ra^{b+c+2}}{a!(a+b)!(b+c)!c!(a + b + c + 1)!} [12]^a[13]^b[23]^cD^{a + b + c + 1}
\eea
here the (positive helicity) gluon has momentum $p_2 = \lt_2\lambda_2$ and the positive and negative helicity fermions have momenta $p_1=\lt_1\lambda_1$ and $p_3 = \lt_3\lambda_3$ receptively. We further show that the two fermions, $n$ gluons amplitude is determined recursively by the background deformed celestial operator product.

We also consider the VEV corresponding to Burns space. Again, we find that the conformal block in this background is ambiguous, though this is resolved by taking the space-time source to be
\be W_{\alpha\beta\gamma\eta} = \frac{1}{6}\xi_\alpha\xi_\beta\xi_\gamma\xi_\eta\delta_{x=0} \ee
for a reference spinor $\xi$. In this case, we are able to recover the two-point gluon amplitude on the curved background, matching the analogous result from \cite{Costello:2022jpg,Costello:2023hmi} adapted to our context. We also obtain novel recurrence relations for $n$-point tree and one-loop gluon amplitudes.

Second, were able to extract certain flat space form factors from these amplitudes in self-dual backgrounds. That is, helicity amplitudes of self-dual gauge theory in the presence of local operators (or integrals thereof). Form factors can be computed using the chiral algebra OPE \cite{Costello:2023vyy}, which we also reproduce in the large $N$ chiral algebra. However, we take a completely different approach in this paper, extracting them from amplitudes in self-dual backgrounds.

From the flavour background we're able to recover a restricted class of form factors in gauge theory with bifundamental matter.  We do this by treating the background flavour symmetry field, which lives in some $\mf{sl}(R) \subset \mf{sl}(16)$, as dynamical, and inserting operators related to those which source the background described above.    If $B_0$ denotes the Lagrange multiplier field of the $\mf{sl}(R)$ gauge theory, we compute the form factor of the operator  $\op{tr}(B_0^2)$. 

Since the operator $\half\op{tr}(B_0^2)$ is the first order deformation from self-dual to full QCD, we can exploit this to find a simple explicit formula for a two-loop partial amplitude for QCD in $\fsp(K)\oplus\mf{sl}(R)$  gauge theory coupled to bifundamental massless fermions (Fig. \ref{fig:twoloop-intro}).  Our computation also holds for $\mf{sl}(K) \oplus \mf{sl}(R)$ gauge theory with bifundamental fermions, and since it is more standard to study $\mrm{SU}(K)$ gauge theories we will describe the result in this context.

We compute the $n$-point two-loop all-plus amplitude where all external gluons are in the $\mf{sl}(K)$ factor and with an internal $\mf{sl}(R)$ gluon exchange.  The trace ordered amplitude at $n$-points is
\bea
&\ip{\frac{1}{2}\int_{\R^4}\op{tr}(B_0^2)\mid \J(\lt_1;z_1) \dots \J(\lt_n;z_n)} \\
&= - \frac{R^2-1}{2R(4\pi)^4}\sum_{1 \le i < j < k < l \le n}\frac{\la ij\ra[jk]\la kl\ra[li] + [ij]\la jk\ra[kl]\la li\ra}{\la12\ra\dots\la n1\ra}\,.
\eea
matching the result appearing in the amplitudes literature at four-points \cite{Bern:2002zk,Dixon:2024mzh}.\footnote{We are very grateful to Anthony Morales for pointing out an error in the original version of this formula.}

\begin{figure}[ht]
    \centering
        \includegraphics{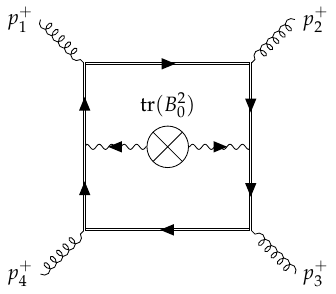}
    \caption{This figure depicts a two-loop diagram with external gluons in $\mf{sp}(K)$ and a gluon exchange in $\mf{sl}(R)$. The $\mf{sl}(R)$ gluons are represented by photon propagators with arrows indicating the flow of helicity. \label{fig:twoloop-intro}}
\end{figure}

From the Burns space amplitudes we're able to recover tree and one-loop form factors for the ANE (average null energy operator) in self-dual gauge theory.  For a null vector $v$, the average null energy operator is the integral 
\be
\int_{t \in \R} T_{vv}(t v)
\ee
of the $vv$ component of the stress tensor along the null ray $v$.  The insertion of the ANE operator amounts to a first order deformation of the flat Lorentzian metric which, in lightcone coordinates $x^{\pm},\ x_i$, take the form $\d x_+^2 \delta_{x_i = 0} \delta_{x_+ = 0}$.  

Fixing some null vector $v$
\bea
&\ip{\int_{t\in\R}\d t\,T_{vv}(tv) \mid \Jt(\lt_1;z_1) \dots \J(\lt_n;z_n)}_\text{tree} = - \frac{2}{\ip{12} \dots \ip{n1}}\Big(\sum_{k=1}^n\ip{1k}[k|v|1\rangle\Big)^2 \\
&\ip{\int_{t\in\R}\d t\,T_{vv}(tv) \mid \J(\lt_1;z_1) \dots \J(\lt_n;z_n)}_\text{one-loop} \\
&= - \frac{1}{\pi^2\ip{12}\dots\ip{n1}}\sum_{1\le i < j \le n}\frac{[ij]}{\ip{ij}}\Big(\sum_{k=1}^n\ip{ik}[k|v|j\ra\Big)^2\,.
\eea

By averaging over the location where we insert the stress tensor, we can reinterpret this result as the scattering of gluons off a single partially off-shell graviton.  In light cone coordinates, the graviton is a first order deformation of the flat metric to a metric of the form $H(x_+,x_i) \d x_+^2$ for an arbitrary function $H$ independent of $x_-$.  A metric like this is called a pp-wave.  The tree and one-loop scattering amplitudes off a graviton of this form are
\bea
&\mc{A}^{\text{tree}} ( 1^-, 2^+, \dots,n^+ \mid H(x) (\d x_+)^2 ) \\
&= - \frac{2}{\ip{12} \dots \ip{n1}}\Big(\sum_{k=1}^n\ip{1k}[k|\d x_+ |1\rangle\Big)^2 \delta_{P\cdot\d x_+ = 0}  \what{H} (P)\,, \\
&\mc{A}^{\text{one-loop}} ( 1^+, 2^+,\dots,n^+ \mid H(x) (\d x_+)^2 ) \\
&= - \frac{1}{\pi^2\ip{12}\dots\ip{n1}}\sum_{1\le i < j \le n}\frac{[ij]}{\ip{ij}}\Big(\sum_{k=1}^n\ip{i k }[k|\d x_+|j\ra\Big)^2 \delta_{P \cdot \d x_+ = 0} \what{H}(P)
\eea
where $\what{H}$ is the Fourier transform of $H$ and $P$ is the total momentum of the gluons. 

While the tree-level identity is independent of the fermionic matter content, the one-loop result relies on our choice of gauge group and matter representation. Although our calculation is performed in Euclidean signature, form factors in twistorial theories of SDQCD are rational functions with poles on the light cone \cite{Costello:2021bah}. They therefore analytically continue to Lorentzian signature.


\section{The Algebra on the Defect}

Let us initiate our analysis of the chiral algebra which at large $N$ will reproduce the celestial chiral algebra of the $\Sp(K)$ SDQCD model.  Recall that we are considering the type I (topological) string on the geometry 
\be
X = \Oo(-1) \oplus \Oo(-3) \to \PT = \Oo(-1) \oplus \Oo(-3) \oplus \Oo(1) \oplus \Oo(1) \to \CP^1.
\ee
We have $D5$ branes on $X$ arranged according to the following table:
\be
\begin{array}{|c|c|c|c|c|c|}
\hline & \Oo(-1) & \Oo(-3) & \Oo(1) & \Oo(1) & \CP^1 \\
\hline N\ D5 & \times & \times & & & \times \\
\hline K\ D5' & & & \times & \times & \times \\
\hline
\end{array}
\ee

The worldvolume of the stack of $N$ $D5$ branes is 
\be 
	 Y = \Oo(-1) \oplus \Oo(-3) \to \CP^1\,.
\ee
The canonical bundle of this variety is $\Oo(2)$. The normal bundle to this variety in the Calabi-Yau $5$-fold is $\Oo(1) \oplus \Oo(1)$. 

The $D5$ branes support a holomorphic BF  gauge theory with $\Sp(N)$ gauge group, with certain matter. The complete field content is
\bea \label{eq:BV_fields}
&\mc{A}\in \Pi \Omega^{0,\ast}(Y,\mf{sp}(N))\,,\qquad \mc{B}\in \Pi \Omega^{0,\ast}(Y,\Oo(2)\otimes\mf{sp}(N))\,, \\
&\Gamma_f\in\Omega^{0,\ast}(Y, \Oo \otimes\mrm{F}_N)\,,\qquad
\til{\Gamma}^f\in\Omega^{0,\ast}(Y, \Oo(2) \otimes  \mrm{F}_N)\,, \\
&\Phi_\da\in\Omega^{0,\ast}(Y,\Oo(1)\otimes\wedge^2_0\mrm{F}_N)\,.
\eea
The field content is written in the BV formalism, including all ghosts and anti-fields. Here $\mrm{F}_N$ is the fundamental representation of $\mf{sp}(N)$. The index $f$ is for the fundamental representation of the $\sl(16)$ flavour symmetry, and $\da$ is a two-component spinor index for the $4d$ space-time Lorentz group.  The field content is only given a $\Z/2$ grading, because the twisted type I string theory has a ghost number anomaly. The symbol $\Pi$ indicates parity change.

The action functional is
\be\label{eq:act_D5} \frac{\i}{2\pi}\int_Y\Tr\big(\mc{B}\wedge F^{0,2}(\mathcal{A})\big) + \frac{1}{2}\la\Phi_\da,\dbar_{\mc{A}}\Phi^\da\ra_{\wedge^2_0\mrm{F}_N} + \la\til{\Gamma}^f,\dbar_{\mc{A}}\Gamma_f\ra_{\mrm{F}_N}\,. \ee
Note that in each term of the action, the integrand is a section of $\Oo(2) = K_Y$. 

We couple this system to a collection of bifundamental fermions
\be \Psi_I\in\Pi\Omega^{0,\ast}(\CP^1,\Oo(-1)\otimes\mrm{F}_N) \ee
where $I$ is an index for the fundamental of $\Sp(K)$.  The fermions are coupled by
\be \label{eq:act_D55'}\frac{\i}{4\pi}\int_{\CP^1} \Omega^{IJ}\la\Psi_I ,\dbar_\mc{A}\Psi_J\ra_{\mrm{F}_N}\,. \ee
The main result of this section is the BRST cohomology of the space of local operators of the $D5$ system with the defect, as $N \to \infty$.  We will use standard techniques to compute the BRST cohomology.  We will find that, after modding out by local operators that come from the bulk of the $D5$ system, the spectrum of operators matches exactly with the spectrum of states in the $4d$ $\op{Sp}(K)$ gauge theory.
 
We can break the BRST operator into two terms: the free part, which is the $\dbar$ operator, and the interacting part. We first compute the BRST cohomology of the free term.

Each field lives in the Dolbeault complex. Since the Dolbeault cohomology is locally given by holomorphic functions, only the components that live in $\Omega^{0,0}$ will be present in the free BRST cohomology of local operators. Further, anti-holomorphic derivatives of these components will be BRST exact. Thus, after taking cohomology with respect to the free BRST operator, we are left with the operators (and their derivatives) listed in Table \ref{tab:free-chomology}.

\begin{table}[ht]
    \centering
    \begin{tabular}{ c c c c c }
	\toprule
    Operator & Parity & Spin & Representation & Allowed derivatives \\
    \midrule
	$\c = \mc{A}^{0}$ & Fermionic & $0$ & $\mf{sp}(N)$ & $\p_{w_i}$, $\p_z$ \\
	$\b = \mc{B}^{0}$ &  Fermionic & $-1$ & $\mf{sp}(N)$ & $\p_{w_i}$, $\p_z$ \\
	$\phi_\da  = \Phi_\da^0$ &  Bosonic  &  $-\half$  &  $\wedge^2_0\mrm{F}_N$ & $\p_{w_i}$, $\p_z$ \\
	$\gamma_f  = \Gamma_f^{0}$ &  Bosonic & $0$& $\mrm{F}_N$ &  $\p_{w_i}$, $\p_z$ \\
	$\til{\gamma}^f  = \til{\Gamma}^{f0}$ &  Bosonic & $-1$ & $\mrm{F}_N$ &  $\p_{w_i}$, $\p_z$ \\
    $\psi_I = \Psi_I^{0}$ & Fermionic & $\half$ &$\mrm{F}_N$ & $\p_z$ \\
    \bottomrule
    \end{tabular}
    \caption{Constituents of local operators after taking cohomology with respect to $\dbar$}
    \label{tab:free-chomology}
\end{table}

Here we use coordinates $z$ on the $\CP^1$ where the defect lives, and $w_1$, $w_2$ on the $\Oo(-1)$, $\Oo(-3)$ fibres.  Differentiating with respect to $w_1$ reduces the spin by $\half$, and differentiating with respect to $w_2$ reduces the spin by $\tfrac{3}{2}$.

The interacting part of the BRST operator is given (at tree-level) by
\be
Q_\mrm{BRST} =  \frac{1}{2}f_{\sfa\sfb}^{~\,~\sfc}\c^\sfa\c^\sfb \dpa{\c^\sfc} + \dots\,. 
\ee
We do not write the contractions of the $\Sp(N)$ indices explicitly.  This acts on a cochain complex built from $\Sp(N)$ invariant words in the fundamental fields and their holomorphic derivatives. 


\subsection{BRST Cohomology at Large \texorpdfstring{$N$}{N}}

The BRST cohomology at tree-level is the Lie algebra cohomology of a certain infinite-dimensional Lie algebra. At infinite $N$, this cohomology can be computed quite easily using standard techniques of homological algebra.  Here we will only state the result, leaving the details to Appendix \ref{app:tree-cohomology}.

The result is presented in Table \ref{tab:celestial-ops} below.  The table lists a  basis\footnote{More precisely, we must take the single-string operators listed in the table and all of their $z$ derivatives to find a basis.} of the single-string part of BRST cohomology involving $\psi$.   Further, we can identify these operators precisely with the operators of the celestial chiral algebra of the integrable $\Sp(K)$ theory. 

\begin{table}[ht] 
\centering
    \begin{tabular}{ c c c c c }
	\toprule
    Celestial operator & Gauge theory operator & Spin & Parity & Representation  \\
    \midrule 
	$\M_{f,I}[k,l]$ & $\gamma_f\phi^{\done(k}\phi^{\dtwo\,l)}\psi_I$ & $\half - \half(k+l)$ & Fermionic & $\C^{16}\otimes\mrm{F}_K$ \\
	$\til{\M}^f_I[k,l]$ & $\til{\gamma}^f\phi^{\done(k}\phi^{\dtwo\,l)}\psi_I$ & $ -\half - \half(k+l)$ & Fermionic & $\C^{16} \otimes \mrm{F}_K$ \\
	$\J_{IJ}[k,l]$ & $\psi_I\phi^{\done(k}\phi^{\dtwo\,l)}\psi_J$ & $1 -\half(k+l)$ & Bosonic & $\mf{sp}(K)$ \\
	$\M_{IJ}[k,l]$ & $\psi_I\phi^{\done(k}\phi^{\dtwo\,l)}\p_{w_1}\c\psi_J$ & $ \half -\half(k+l)$ & Fermionic & $\wedge^2_0\mrm{F}_K$ \\
	$\til{\M}_{IJ}[k,l]$ & $\psi_I\phi^{\done(k}\phi^{\dtwo\,l)}\p_{w_2}\c\psi_J$ & $-\half - \half(k+l)$ & Fermionic & $\wedge^2_0 \mrm{F}_K$  \\
	$\Jt_{IJ}[k,l]$ & $\psi_I\phi^{\done(k}\phi^{\dtwo\,l)}\p_{w_1}\c\p_{w_2}\c\psi_J$ & $-1-\half(k+l)$ & Bosonic & $\mf{sp}(K)$ \\
    \bottomrule 	
    \end{tabular}
    \caption{Single-string states involving $\psi$} 
    \label{tab:celestial-ops}
\end{table}

Here $\phi_\done^{(k}\phi_\dtwo^{l)}$ means the symmetrized product.  Further, in classical BRST cohomology, we can commute $\p_{w_i} \c$ past $\phi_\da$. This is because the BRST variation of $\p_{w_i}\phi_\da$ is $\p_{w_i} [\c,\phi_\da]$, and  because we are taking relative Lie algebra cohomology, we drop any terms which involve a $\c$ without derivatives.

\begin{figure}[ht]
    \centering
        \subfloat{%
        \includegraphics{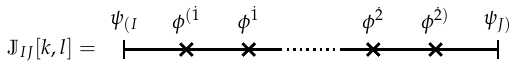}%
        } \\
        \subfloat{%
        \includegraphics{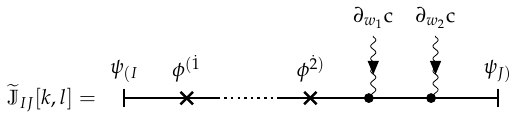}%
        }
    \caption{Diagrammatic representation of the gluon states $\J_{IJ}[k,l]$ and $\Jt_{IJ}[k,l]$} \label{fig:states}
\end{figure}

This identification gives exactly the spectrum of the celestial chiral algebra for the integrable $\Sp(K)$ gauge theory, including states of both helicities. We can represent these states diagrammatically. For example, gluon states of both helicities are illustrated in Fig. \ref{fig:states}.


\subsection{Bulk States}

We have not so far given an explicit description of the bulk states which are removed in our construction.  Here we will enumerate such states at tree-level.  We will not provide the details of the cohomology calculation, which is almost identical to the one presented above and to many other such computations in the literature (see, e.g., \cite{Costello:2018zrm}).  Instead, we will use the general fact that the bulk states on the $N$ $D5$ branes naturally pair with fields of the ten-dimensional string theory, as outlined in \cite{Costello:2019jsy}.

These fields consist of type I Kodaira-Spencer gravity and $\mf{so}(32)$ holomorphic Chern-Simons theory.  The fields of type I Kodaira-Spencer gravity belong to
\be \Pi\PV^{1,\ast}_\p(X)\oplus\Pi\PV^{3,\ast}_\p(X)
\,, \ee
Decomposing $TX = TY\oplus\mc{N}$ where $\mc{N} = \Oo(1)\oplus\Oo(1)\to X$ (and abusing notation slightly by writing $TY$ for the pullback to $X$) we have
\bea \label{eq:decompose-polyvectors}
&\PV^{1,\ast}(X) = \Omega^{0,\ast}(X,TY)\oplus\Omega^{0,\ast}(X,\mc{N})\,, \\
&\PV^{3,\ast}(X) = \Omega^{0,\ast}(X,\wedge^3TY)\oplus\Omega^{0,\ast}(X,\wedge^2TY\otimes\mc{N})\oplus\Omega^{0,\ast}(X,TY\otimes\wedge^2\mc{N})\,.
\eea
Let us fix meromorphic sections $\d z\d w_1\d w_2$ of the canonical bundle on $Y$ and $\d v^\done\d v^\dtwo$ of the canonical bundle on the fibres of $\mc{N}$ such that they wedge together to give $\Omega$.  We can decompose the divergence with respect to the above splitting as $\p = \p_Y + \p_\mc{N}$.  In general the divergence will relate distinct summands of \eqref{eq:decompose-polyvectors}, allowing us to fix the failure of $\p_Y$ closure in one summand by constraining $\p_\mc{N}$ in another.  In this way we can decompose, e.g.,
\be \PV^{1,\ast}_\p(X) = \Omega^{0,\ast}(X,TY)^\prime\oplus\Omega^{0,\ast}_{\p_\mc{N}}(X,\mc{N})\,. \ee
Here $\Omega^{0,\ast}(X,TY)^\prime$ indicates that we've added a term in $\Omega^{0,\ast}(X,\mc{N})$ so that the sum is $\p$-closed.  Elements of $\Omega^{0,\ast}_{\p_\mc{N}}(X,TY\otimes\wedge^2\mc{N})$ require extra care.  $\p_\cN$-closure requires that they be constant along the fibres of $\mc{N}$, and so pushforward to $Y$. Furthermore, since $\wedge^3\mc{N}=0$ we cannot compensate the failure of $\p_Y$-closure and instead must impose it directly. This condition is independent of our choice of meromorphic sections, since $\wedge^2\mc{N} = K_Y$ where again we abuse notation by pulling back the latter to $X$.

Next consider the fields of holomorphic Chern-Simons theory for the chosen $\mf{so}(32)$ bundle, which we took to be $\Oo(1)^{16}\oplus\Oo(-1)^{16}$ for $\Oo(1)^{16}$ and $\Oo(-1)^{16}$ isotropic subbundles.  These decompose into
\be \Pi\Omega^{0,\ast}(X,\mf{sl}(16))\oplus\Pi\Omega^{0,\ast}(X,\Oo(2)\otimes\wedge^2\C^{16})\oplus\Pi\Omega^{0,\ast}(X,\Oo(-2)\otimes\wedge^2\C^{16})\,. \ee

The results are presented in Table \ref{tab:bulk-states}.  The states have been organized according to which bulk fields they couple to. To keep the expressions compact, we use the coordinates $w_1, w_2, w_3$ on the brane, instead of $w_1,w_2,z$.  We also do not include the derivatives on the brane in the table.

\begin{table}[ht]
	\centering
\begin{tabular}{c c c}
    \toprule
    Bulk state & Spin & Corresponding bulk field \\
    \midrule
    $\Tr(\phi^{(\da_1}\dots\phi^{\da_n)})$ & $-\half n$ & $ 
    \Pi\Omega^{0,\ast}_{\p_\cN}(X,\mc{N})$ \\
    $\Tr(\phi^{(\da_1}\dots\phi^{\da_n}\phi^{\db)}\p_{w_i}\phi_\db) + \dots$ & $-\tfrac{3}{2}-\half n,-\tfrac{5}{2}-\half n,-\half n$ & $ 
    \Pi\Omega^{0,\ast}(X,TY)^\prime$ \\
    $\Tr(\phi^{(\da_1}\dots\phi^{\da_n)}\p_{w_{[i}}\c\p_{w_{j]}}\c)$ & $-\tfrac{1}{2}-\half n,\tfrac{1}{2}-\half n,-2-\half n$ & $ 
    \Pi\Omega^{0,\ast}_{\p_\cN}(X,\wedge^2TY\otimes\cN)^\prime$ \\
    $\Tr(\phi^{(\da_1}\dots\phi^{\da_n}\phi^{\db)}\p_{w_{[i}}\phi_\db\p_{w_j}\c\p_{w_{k]}}\c) + \dots$ & $-2-\half n$ & $ 
    \Pi\Omega^{0,\ast}(X,\wedge^3TY)^\prime$ \\
    $\gamma_f\phi^{(\da_1}\dots\phi^{\da_n)}\gamma_g$ & $-\half n$ & $\Pi\Omega^{0,\ast}(X,\Oo(2)\otimes\wedge^2\C^{16})$ \\
    $\til\gamma^f\phi^{(\da_1}\dots\phi^{\da_n)}\til\gamma^g$ & $-2-\half n$ & $\Pi\Omega^{0,\ast}(X,\Oo(-2)\otimes\wedge^2\C^{16})$ \\
    $\gamma_f\phi^{(\da_1}\dots\phi^{\da_n)}\til\gamma^g$ & $-1-\half n$ & $\Pi\Omega^{0,\ast}(X,\mf{sl}(16))$ \\
    \bottomrule
\end{tabular}
\caption{Bulk states} \label{tab:bulk-states}
\end{table}

For two of the above states we've used $+\dots$ to denote terms involving the antifield $b$ and fundamental bosons $\gamma,\til\gamma$. We include complete expressions for both in Appendix \ref{app:bulk-states}. Furthermore, upon coupling to the defect theory the action of the BRST differential on $\b$ is modified. This necessitates further corrections to these two states involving the fundamental fermions $\psi^I$. These can also be found in Appendix \ref{app:bulk-states}.

The $(-1)-$shifted pairings which determine the kinetic term of type I Kodaira-Spencer gravity and $\so(32)$ holomorphic Chern-Simons theory provide an identification between the chiral algebra states appearing in \ref{tab:bulk-states} and bulk fields. We will see in Sect. \ref{sec:backgrounds} that this fixes the background corresponding to a given state.


\section{Backreaction}

The backreaction of the $N$ $D5$ branes affects the theory on the $K$ $D5'$ branes. To determine this, we will first determine the effect of the backreaction on the full type I topological string.

The fields of the type I topological string can be written as polyvector fields, or as forms, using the fact that we can contract a polyvector field into the holomorphic volume form and turn it into a differential form. 

In differential form language, the closed string fields consist of
\be 
	\op{Ker} \p \subset \Pi\Omega^{4,\ast}(X)\oplus\Pi\Omega^{2,\ast}(X)\,.
\ee
Let us denote the bulk closed string fields by
\be \eta\in\Pi\Omega^{2,\ast}(X)\,,\qquad\mu\in\Pi\Omega^{4,\ast}(X)\,. \ee
The kinetic term\footnote{In this version of Kodaira-Spencer theory, we can remove the annoying $\p^{-1}$ operators by replacing $\mu$ by $\p$ of a field in $\Pi\Omega^{3,\ast}(X)$, which is then taken up to a gauge equivalence. We will not do this, however.} of the bulk theory is
\be \frac{\i}{2\pi}\int_X\mu\p^{-1}\dbar\eta\,. \ee
The coupling to the stack of $N$ $D5$ branes is
\be 2N\bigg(\frac{\i}{2\pi}\bigg)\int_{D5}\p^{-1}\mu\,. \ee
Thus, the field sourced by the $N$ $D5$ branes is a $(2,1)$-form $\eta$  satisfying
\be \dbar\eta = 2N \delta_{D5} \ee
where $\delta_{D5}$ indicates a $\delta$-function supported on the location of the $D5$ branes.

An explicit formula for $\eta$ can of course be given, using (for instance) the Bochner-Martinelli kernel. We will not need this, however.

We are interested in the effect of this on the $D5'$ system. The stack of $K$ $D5'$s lives on $\PT$. The worldvolumes of the $D5$ and $D5'$ systems intersect transversely in the zero section $\CP^1 \subset \PT$.  This tells us that, when restricted to $\PT$, the field sourced by the $D5$ branes is a $(2,1)$-form satisfying
\be \dbar \eta = 2N\delta_{\CP^1}\,. \ee
This $(2,1)$-form couples to the fields of the theory on the $D5$ brane as follows. Let
\be a\in\Omega^{0,1}(\PT, \mf{sp}(K)) \ee
be the gauge field on twistor space. 

Then, the coupling is
\be \frac{\i}{2\pi}\int_\PT\eta\op{CS}(a) = \frac{\i}{4\pi}\int_\PT\eta\op{tr}(a\p a)\,. \ee
To make this gauge invariant (away from the $\CP^1$ where $\eta$ is not Dolbeault closed), we also need to modify the action of gauge symmetries on the $b$-field so that $b$ varies by
\be \delta b = [\chi,b] + \eta\p\chi \ee
where $\chi$ is the gauge parameter and we treat $b$ as a $(3,1)$-form.  

The action including the modification to gauge symmetry is conveniently encoded by letting ${\bf a}\in\Omega^{0,\ast}(\PT,\mf{sp}(K))[1]$ be the Dolbeault field including ghosts and anti-fields, and writing
\be \frac{\i}{4\pi}\int_\PT\eta\op{tr}(\bf{a}\p\bf{a})\,. \ee
The dependence of this on the fields in $\Omega^{0,2}$ and $\Omega^{0,0}$ have the effect of modifying the action of gauge symmetries on $b$. 

This expression is gauge invariant on the locus where $\dbar\eta = 0$, which is the complement of the $\CP^1$ where $\eta$ is sourced.  The gauge variation of our Lagrangian including this locus is
\be \frac{\i}{4\pi}\int_\PT\dbar\eta\op{tr}({\bf a}\p{\bf a}) = \frac{\i N}{\pi}\int_{\CP^1}\op{tr}(\chi\p a) \ee
where $\chi = \bf{a}^0$ is the ghost.  This expression is precisely the anomaly we find when we couple the $D5-D5'$ bifundamental fermions on the $\CP^1$.  In other words, the backreaction of the $D5$ system is precisely what is needed to cancel this anomaly by anomaly inflow.

Let us now explain how this effects the $4d$ system.  This was already computed in \cite{Costello:2021bah,Costello:2022wso}: in four-dimensions, introducing an $\eta$ as above means that we add to the SDYM Lagrangian a term with a varying $\theta$-angle like 
\be - \frac{N}{4\pi^2}\int_{\R^4}\log\norm{x}^2\op{tr}(F\wedge F) \label{eq:theta} \ee
where $A$ is the four-dimensional gauge field, and $\op{tr}(F(A)^2)$ is the topological term in the Yang-Mills Lagrangian.

We conclude that the chiral algebra built from the $D5$ system (with its defect) will produce scattering amplitudes in self-dual gauge theory in the presence of the varying $\theta$ angle \eqref{eq:theta}.

For many computational purposes, it is convenient to turn the backreaction off.  In diagrammatic terms this is very simple.  Correlators (or OPEs) can be computed with Feynman diagrams in double line notation.  Faces of such diagrams which have no external states come with a factor of $N$.  To turn off the backreaction we simply throw such diagrams away.

In more conceptual terms, this procedure amounts to working with the super-algebra $\mf{osp}(2N|N)$ instead of $\mf{sp}(N)$.  This superalgebra is the symmetries of the graded vector space $\C^{2 N \mid 2 N}$ preserving a graded-symmetric non-degenerate pairing.  With the superalgebra $\mf{osp}(2M|N)$, the backreaction comes with a coefficient of $2(N - M)$, so setting $M = N$ turns it off. 


\section{Computing OPEs using Feynman Diagrams} \label{sec:how-to-do-OPEs}

The $D5$ brane system, with its defects, appears at first sight to be more complicated than the purely two-dimensional systems discussed in \cite{Costello:2022jpg,Costello:2023hmi}.  However, OPE computations in this system turn out to be not that hard: it turns out that only some very simple Feynman diagrams contribute to OPE computations.

All operators are words in the bulk  fields $\phi_{\dot \alpha}$, $\gamma_f$, $\til{\gamma}^f$, $\c$, and the defect fermions $\psi_I$.  OPEs between operators are computed using Feynman diagrams. For the $\psi$ field this is easy: the propagator is simply $1/z$. The only vertices involving $\psi$ connect it to the descendent $\mc{A}\in\Omega^{0,1}(Y)$ of $\c$, and not to $\c$ itself.  For the bulk fields, things are a little more complicated, and bulk vertices play a role.  However, only a very limited class of Feynman diagrams appear in OPE computations.

Recall that all bulk fields are part of multiplets that live in the Dolbeault complex of the $3$ complex dimensional space-time.  Thus, $\phi^\da$ is the $0$-form component of
\be
\Phi^\da \in \Omega^{0,\ast}(Y, \Oo(1) )\,.
\ee
Working locally in a patch on $Y$ with coordinates $w_1,w_2$, and $w_3 = z$, the propagator for any bulk field is the Bochner-Martinelli kernel
\be
\frac{1}{4\pi^2}\frac{\eps^{ijk}(\wbar_i - \wbar'_i)\d(\wbar_j  - \wbar'_j)\d(\wbar_k - \wbar'_k)}{\norm{w - w'}^6}
\ee
where $w,w'$ refer to coordinates on two copies of the space-time.  The Bochner-Martinelli kernel is the Green's function for the operator $\dbar$ (in a certain gauge).

Thus, the propagator connects a $(0,0)$-form with a $(0,2)$-form, and a $(0,1)$-form with a $(0,1)$-form.  Since the $\phi^\da$ fields are the $(0,0)$-form components, there is no propagator connecting them directly.

\begin{figure}[ht]
\centering
    \subfloat{%
    \includegraphics{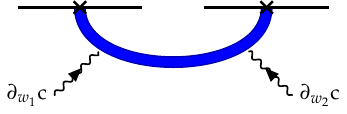}%
    } \hfil
    \subfloat{%
    \includegraphics{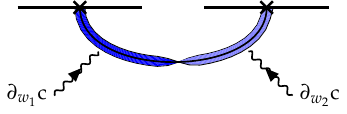}%
    }
    \caption{Planar and non-planar terms in the $\phi-\phi$ propagator, emitting two $\p_{w_i}\c$ fields. There are further contributions from the $\p_{w_i}\c$ fields attaching to the other side of the propagator.} \label{fig:phipropagator}
\end{figure}

However, there is an indirect propagator, depicted in Fig. \ref{fig:phipropagator}, where $\phi^\da$ and $\phi^\db$ are connected by a propagator which has two external lines coming off, labelled by $\c$.  We will refer to this as the $\phi-\phi$ propagator. In fact, this is the only possible propagator between them:
\begin{lemma}\label{lem:prop_phi}
There is only one Feynman diagram that connects two $\phi$ operators, which is the one described above. Similar statements hold for the $\gamma-\til{\gamma}$, $\b-\c$ propagator. There is no Feynman diagram that connects other pairs of fields.
\end{lemma}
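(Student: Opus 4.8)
The plan is to exploit the fact that all the fundamental fields live in Dolbeault complexes on the three-complex-dimensional brane worldvolume $Y$, together with the form-degree bookkeeping of the Bochner–Martinelli propagator. First I would recall that the propagator for every bulk field pairs a $(0,0)$-form component with a $(0,2)$-form component (equivalently a $(0,1)$ with a $(0,1)$), since it is the Green's function for $\dbar$ on $Y$. Local operators are built only from the $(0,0)$-form components of the fields ($\phi_\da$, $\gamma_f$, $\til\gamma^f$, $\c$, $\b$, $\psi_I$) and their holomorphic derivatives, as established in Table~\ref{tab:free-chomology}. Hence a propagator emanating from such a $(0,0)$-form component must terminate on a $(0,2)$-form; but a $(0,2)$-form is not itself a local operator, so it can only arise at an interaction vertex. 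This forces every Feynman diagram connecting two operators to route through vertices, and the task reduces to enumerating which vertices can absorb a $(0,2)$-form leg.

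The key steps, in order: \emph{(i)} List the interaction vertices from the action $\tfrac{\i}{2\pi}\int_Y\Tr(\mc{B}\wedge F^{0,2}(\mc{A})) + \tfrac12\la\Phi_\da,\dbar_\mc{A}\Phi^\da\ra + \la\til\Gamma^f,\dbar_\mc{A}\Gamma_f\ra$, plus the defect coupling $\tfrac{\i}{4\pi}\int_{\CP^1}\Omega^{IJ}\la\Psi_I,\dbar_\mc{A}\Psi_J\ra$; note that in each of the $\Phi$, $\Gamma$, $\Psi$ vertices the gauge field enters as $\mc{A}\in\Omega^{0,1}$, and the cubic $\mc{B}\mc{A}\mc{A}$ vertex needs total form-degree $(0,3)$ to integrate against $K_Y$. \emph{(ii)} Observe that a propagator leaving a $\phi_\da$ ($(0,0)$-form) must land on the $(0,2)$-form component of some field at a vertex; checking each vertex, the only way a $(0,2)$-form $\Phi$-leg appears with the right structure is in the $\la\Phi,\dbar_\mc{A}\Phi\ra$ vertex where the other $\Phi$ is $(0,0)$ and the $\mc{A}$ leg carries the remaining $(0,1)$ — but $\dbar_\mc{A}$ acting on a $(0,0)$ form $\Phi$ gives a $(0,1)$-form, contracted against $\mc A$ to make $(0,2)$ paired with the $(0,2)$-component... the surviving combinatorial possibility is exactly the diagram of Figure~\ref{fig:phipropagator}: the $\phi$–$\phi$ propagator with two $\c$-descendant ($\mc{A}$) legs coming off. \emph{(iii)} Run the identical argument for $\Gamma$/$\til\Gamma$ (giving the $\gamma$–$\til\gamma$ propagator) and for $\mc{A}$/$\mc{B}$ (giving the $\b$–$\c$ propagator via the cubic vertex), using that $\gamma$ and $\til\gamma$ are $(0,0)$-forms and their only vertex is $\la\til\Gamma,\dbar_\mc{A}\Gamma\ra$. \emph{(iv)} Rule out all other pairings: a $\psi$ propagator is the bare $1/z$ on the defect and $\psi$ connects only to $\mc{A}$ (not $\c$), so no new mixed diagrams; and no vertex couples, say, $\phi$ directly to $\gamma$ or $\psi$, so those fields cannot be linked.

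The main obstacle I anticipate is step \emph{(ii)}–\emph{(iii)}: being careful that the \emph{only} surviving contraction is the claimed one and not some higher-order diagram with more vertices. One must argue that iterating vertices cannot produce a new topology — e.g. a chain $\phi \to \mc{A} \to \mc{B} \to \mc{A} \to \phi$ — which requires tracking both the form-degree constraint and the $\Oo(n)$-twist (spin) weights: the propagator is degree-preserving in the appropriate sense, each vertex is fixed, and the external legs are forced to be $\c$ (spin $0$, trivially twisted) because $\b$, $\phi$, $\gamma$, $\til\gamma$ as external legs would themselves need to terminate somewhere, reintroducing the same problem and producing a diagram with strictly more loops/vertices that is either zero by the form-degree count or already accounted for. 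So the real content is a finite case-check showing the enumeration closes, which I would organize as: (a) propagators only connect $(0,0)$ to $(0,2)$; (b) $(0,2)$-components appear only at vertices; (c) each vertex has a unique $(0,2)$-slot and fixed remaining legs; (d) therefore the minimal — and only — connected diagram between any two local operators is the one claimed, and between non-matching field types there is none.
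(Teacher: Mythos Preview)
Your approach correctly identifies Dolbeault form-degree bookkeeping as the key constraint, and steps (i)--(iii) are on the right track. However, there is a genuine gap at the point you yourself flag as ``the main obstacle'': you never actually rule out longer chains or more complex topologies (e.g.\ trees or loops of $\mc{A}$--$\mc{B}$ propagators spliced between the two $\phi$ endpoints). Your proposed resolution (a)--(d) does not close this: item (c) is not quite right --- a cubic vertex has total degree $(0,3)$, so the form-degree can be distributed among its three legs in several ways --- and item (d) simply asserts the conclusion. Nothing in your outline bounds the number of vertices.

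The paper's proof replaces the case-check with a clean global count. Let the diagram have $n$ internal vertices and $k$ internal propagators, in addition to the two propagators attached to the external $\phi$'s. Each vertex carries degree $(0,3)$ and each propagator $(0,2)$; demanding that all external legs be $(0,0)$-forms gives $3n - 2k - 4 = 0$. Now split $n = n_1 + n_2$ into $\Tr(\mc{B}[\mc{A},\mc{A}])$ and $\la\Phi,[\mc{A},\Phi]\ra$ vertices, and $k = k_1 + k_2$ into $\mc{A}$--$\mc{B}$ and $\Phi$--$\Phi$ internal propagators. Counting available legs yields $k_1 \le n_1$ (one $\mc{B}$ per $n_1$-vertex) and $2k_2 + 2 \le 2n_2$ (two $\Phi$'s per $n_2$-vertex, with two consumed by the external propagators), hence $2k + 2 \le 2n$. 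Together with $3n = 2k + 4$ this forces the unique solution $n = 2$, $k = 1$ --- exactly the diagram of Figure~\ref{fig:phipropagator}. The same count works verbatim for $\gamma$--$\til\gamma$ and $\b$--$\c$; for any other pair of external fields the leg-counting tightens to $2k + 3 \le 2n$ or $2k + 4 \le 2n$, which has no solution. Two lines of arithmetic thus replace an open-ended enumeration.
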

\begin{proof}
Consider an arbitrary Feynman diagram with $n$ internal vertices and $k$ internal propagators, in addition to the $2$ propagators connecting the external $\phi$ fields. Each vertex has Dolbeault degree $(0,3)$, while each propagator is a $(0,2)$-form. A valid Feynman must produce $(0,0)$-form on all external legs. Thus, we must satisfy $3n - 4 - 2k = 0$. Suppose the number of $\op{tr}(\mc{B}[\mc{A},\mc{A}])$ vertices is $n_1$ and the number of $\la\Phi^\done,\{\mc{A},\Phi^\dtwo\}\ra_{\wedge^2_0\mrm{F}_N}$ vertices is $n_2$, and suppose there are $k_1$ $\mc{A}-\mc{B}$ propagators and $k_2$ $\Phi-\Phi$ propagators. We have $n_1 + n_2 = n$, $k_1+k_2 = k$. By counting the number of fields and contractions, we have the following inequalities
\be
k_1 \leq n_1,\quad k_1 \leq 2n_1 + n_2,\quad 2k_2+2 \leq 2n_2\,.
\ee
This implies $2k+2\leq2n$. Together with the constraint imposed by the Dolbeault degree, we have the only solution $n = 2$, $k = 1$. This is exactly the Feynman diagram depicted in Fig. \ref{fig:phipropagator}. Similar analysis holds for the $\gamma-\tilde{\gamma}$ or $\b-\c$ Feynman diagrams, where we have the same inequality $2k+2\leq2n$. In all other cases, the inequality becomes $2k+3 \leq 2n$ or $2k+4 \leq 2n$, which has no solution.
\end{proof}

In the appendix, the $\phi-\phi$ propagator is computed explicitly, locally on $Y$ using the gauge associated to the flat metric.  
\begin{proposition}
The $\phi-\phi$ propagator is
\bea
&\phi^{\done pq}(0)\phi^{\dtwo rs}(0,z) = - \frac{1}{\pi^2z}\int_{0\leq x\leq y\leq1}\d x\d y\, \big(\p_{w_1}\c^{[r[p}(0,xz)\p_{w_2}\c^{q]s]}(0,yz) \\
&+ \omega^{[r[p}\p_{w_1}\c^{q]}_{~\,~t}(0,xz)\p_{w_2}\c^{s]t}(0,yz)\big)\,.
\eea
The singular part of this expression is
\be \label{eq:phiOPE-singular}
\phi^{\done pq}(0)\phi^{\dtwo rs}(z,0)\sim - \frac{1}{2\pi^2z}\big(\p_{w_1}\c^{[r[p}\p_{w_2}\c^{q]s]} + \omega^{[r[p}\p_{w_1}\c^{q]}_{~\,t}\p_{w_2}\c^{s]t}\big)(0)\,.
\ee
\end{proposition}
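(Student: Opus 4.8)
The plan is to compute the $\phi$--$\phi$ propagator directly in position space using the explicit Bochner--Martinelli form of the bulk propagators, evaluating the unique Feynman diagram identified in Lemma \ref{lem:prop_phi}. Recall that the $\Phi^\da$ fields sit in degree $(0,0)$ and are joined only through the vertex $\la\Phi^\done,\{\mc{A},\Phi^\dtwo\}\ra_{\wedge^2_0\mrm{F}_N}$; hence the relevant diagram has two cubic vertices, each emitting one external $\mc{A}$ leg (which we replace by $\p_{w_i}\c$ since the operator insertions are $\phi$'s and their $w$-derivatives), joined by a single internal $\Phi$--$\Phi$ propagator. First I would write the amplitude as an integral over the two internal vertex positions $y,y'\in Y$ of the product of two Bochner--Martinelli kernels: one from the first $\phi$ insertion at $0$ to $y$, the internal one from $y$ to $y'$, and one from $y'$ to the second $\phi$ insertion at $(0,z)$ — no, more precisely, the internal propagator connects the two $\Phi$ fields at the two vertices, while the two external $\phi$'s are connected to the vertices, so there are indeed three Bochner--Martinelli kernels, but the degree count ($3\cdot2 - 4 - 2\cdot1 = 0$) forces everything to collapse.

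The key simplification is the choice of gauge associated to the flat metric on $Y$, in which the Bochner--Martinelli kernel is the standard $\dbar$-Green's function and the integrals localize: because two of the three propagators must each contribute a $(0,1)$-form leg that is then absorbed by a vertex, and because the external legs must be $(0,0)$-forms, the $w$-integrations over the transverse fibre directions $w_1,w_2$ of $\Oo(-1)\oplus\Oo(-3)$ can be performed explicitly, leaving only an integral over the $\CP^1$-direction $z$. Here I would use that the propagator along the $z$-line is $1/z$ (as for the $\psi$ field), reducing the two vertex integrations to the ordered simplex $0\le x\le y\le 1$ in the parameters $x,y$ that locate the vertices along the segment from $0$ to $z$, i.e. at $xz$ and $yz$. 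Tracking the $\mf{sp}(K)$ index structure — contracting the $\wedge^2_0\mrm{F}_N$ indices through the $\{\,\cdot\,,\,\cdot\,\}$ in the vertex and using the symplectic form $\omega$ to lower/raise — produces the two index contractions $\p_{w_1}\c^{[r[p}\p_{w_2}\c^{q]s]}$ and $\omega^{[r[p}\p_{w_1}\c^{q]}_{~\,t}\p_{w_2}\c^{s]t}$, where the antisymmetrizations come from the trace-free exterior-square projection. Finally, extracting the singular part as $z\to0$ amounts to evaluating the simplex integral $\int_{0\le x\le y\le1}\d x\,\d y = \tfrac12$ after setting the $\c$ arguments to $0$, which gives the coefficient $-\tfrac{1}{2\pi^2 z}$; the prefactor $-\tfrac{1}{\pi^2}$ and the overall sign come from the two factors of $\tfrac{1}{4\pi^2}$ in the Bochner--Martinelli kernels together with the $\tfrac{\i}{2\pi}$ and $\tfrac12$ normalizations in the action, with the $\i$'s combining to a real result.

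The main obstacle I anticipate is bookkeeping rather than conceptual: correctly assembling the combinatorial factor (the $\tfrac12$ from the simplex, symmetry factors of the diagram, the $\tfrac12$ in the $\Phi$-kinetic term) and the precise normalization of the Bochner--Martinelli kernel in the flat-metric gauge so that the numerical coefficient comes out to exactly $-\tfrac{1}{\pi^2}$, and likewise getting the signs from the parity-shifted (fermionic-vs-bosonic) statistics of $\c$ and $\Phi$ and from the orientation of the integration cycle. A secondary subtlety is justifying that the transverse $w$-integrations really do collapse to the stated local expression with the $\c$'s evaluated at $xz$ and $yz$ along the $z$-axis — this relies on the explicit form of the Green's function in this gauge, which is why the computation is relegated to the appendix. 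Once the normalization is pinned down, passing from the full propagator to its singular part is immediate.
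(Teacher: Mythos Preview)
Your proposal correctly identifies the unique diagram (two cubic vertices joined by an internal $\Phi$--$\Phi$ line, with two external $\c$ legs, as dictated by Lemma~\ref{lem:prop_phi}) and correctly anticipates both the colour structure and the final simplex form of the answer. However, the mechanism you describe for reducing the six-dimensional vertex integrals to the one-dimensional simplex is not the one that actually works, and as stated it is not clear it can be made to work.

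You suggest that the transverse $w_1,w_2$ integrations ``collapse'' and leave a one-dimensional problem along the $z$-axis with a $1/z$ propagator, so that the simplex parameters $x,y$ are literally the $z$-coordinates of the two internal vertices. But the Bochner--Martinelli kernel on $\C^3$ does not factor into transverse and longitudinal pieces, so there is no direct way to integrate out $w_1,w_2$ first and land on a genuine one-dimensional $1/z$ propagator between the vertices. In the paper's computation (Appendix~\ref{app:diagram}) the simplex variables arise quite differently: one applies Feynman parametrisation to combine the denominators $\|w'\|^{6}\|w'-w''\|^{6}$ (introducing a parameter $s$), shifts $w'\to w'+sw''$, and then performs the full six-dimensional $w'$ integral using rotational invariance to pick out the single holomorphic derivative of $\c$; repeating this for $w''$ with a second Feynman parameter $t$ and shifting $w''\to w''+tw$ produces ghosts evaluated at $stw$ and $tw$. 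A final rescaling $s\mapsto s/t$ turns $(s,t)$ into the ordered pair on the simplex. Thus the simplex parameters are Feynman parameters, and their interpretation as ``positions along the $z$-segment'' is an a~posteriori reading of the answer, not the computational method. The ``secondary subtlety'' you flag is therefore the heart of the calculation rather than a bookkeeping detail, and your sketch does not supply it.
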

In any OPE computation, the singular terms in the OPE only depend on the field theory locally, for example in the patch with coordinates $z,w_1,w_2$.  We have written down the $\phi-\phi$ OPE as computed locally in the gauge associated to a flat metric, but including the non-singular terms.  This will not be sufficient to compute correlation functions, which will require the full propagator on the curved geometry.  It will, however, be sufficient to compute the singular parts of all OPEs.  The non-singular terms in the propagator can contribute to singular OPEs by combining with singularities coming from other propagators.  

The $\gamma-\til{\gamma}$ propagator has an identical form, with different colour factors, because the $\Gamma$, $\til{\Gamma}$ fields on $Y$ have the same kinetic term and coupling to the gauge field as the $\Phi^\da$ fields do.  
\begin{figure}[ht]
    \centering
        \includegraphics{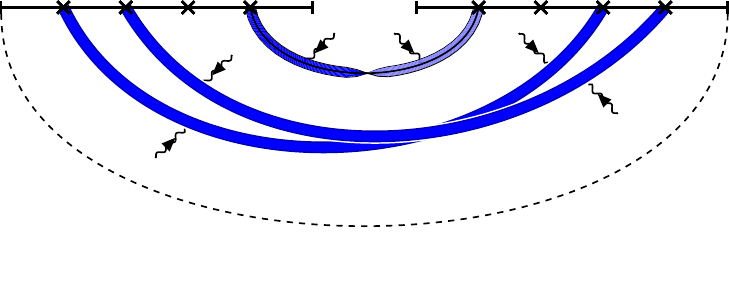}
    \caption{A fairly generic Feynman diagram, including $\phi-\phi$ propagators (blue lines) and a $\psi-\psi$ propagator (dashed line).  Since the gauge group is $\mf{sp}(N)$, the $\phi-\phi$ propagator can be a twisted propagator in double line notation, as drawn.  Each $\phi-\phi$ propagator emits two $\p_{w_i}\c$ fields.} \label{fig:diagramgeneric}
\end{figure}
It turns out that all Feynman diagrams contributing to OPEs are built from Wick contractions involving  the $\phi-\phi$, $\gamma-\til{\gamma}$ and $\psi-\psi$ propagators:  
\begin{lemma} \label{lem:Feyn_rule}
	Consider two single-string operators in the algebra $\mc{A}^{N,K}$, as listed in Table \ref{tab:celestial-ops}.  Then, the most general Feynman diagram that contributes to the OPE between them is obtained by performing some number of the following contractions:
	\begin{enumerate}
		\item Contracting a $\psi$ field in one operator with a $\psi$ field in the other operator, with the propagator $1/z$.
		\item Contracting a $\gamma$ field in one operator with a $\til \gamma$ field in the other operator, using the $\gamma-\til{\gamma}$ propagator described above.
		\item Contracting a $\phi^\done$ field in one operator with a $\phi^\dtwo$ field in another operator, using the $\phi-\phi$ propagator described above.
	\end{enumerate}
\end{lemma}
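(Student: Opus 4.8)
The plan is to run a cohomological-degree argument essentially identical to the one used in Lemma~\ref{lem:prop_phi}, but now for a connected diagram with an arbitrary number of external legs attached to the two single-string operators, rather than just the two $\phi$ legs. First I would set up the bookkeeping: each interaction vertex of the $D5$--defect theory (the $\op{tr}(\mc B[\mc A,\mc A])$ vertex, the $\la\Phi,\{\mc A,\Phi\}\ra$ vertex, the $\la\til\Gamma,\{\mc A,\Gamma\}\ra$ vertex, and the defect vertex $\Omega^{IJ}\la\Psi_I,\mc A\Psi_J\ra$) carries Dolbeault degree $(0,3)$ on $Y$ except the defect vertex which is $(0,1)$ and lives on $\CP^1$; each bulk propagator ($\mc A$--$\mc B$, $\Phi$--$\Phi$, $\Gamma$--$\til\Gamma$) is a Bochner--Martinelli $(0,2)$-form; the $\psi$--$\psi$ propagator is $1/z$ and is effectively degree $0$. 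The external legs of a single-string operator are all $(0,0)$-form components. For the total diagram to produce a local operator (a $(0,0)$-form at each insertion point), the net Dolbeault degree must vanish, which gives the linear Euler-type relation $3V_{\text{bulk}} - 2P_{\text{bulk}} - (\text{external }(0,\ast)\text{-form legs already in the operators}) = 0$ after accounting for which components of each field appear. Combined with the valence/contraction inequalities $P \le$ (number of field insertions at vertices), exactly as in Lemma~\ref{lem:prop_phi}, this forces the bulk part of any connected subdiagram joining two operators to be of the minimal type.

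The key steps, in order, are: (1) observe that the defect fermion $\psi$ only couples to the $(0,1)$-form descendant $\mc A$ of $\c$, never to $\c$ directly, and that the only propagator with a free end able to reach a $(0,0)$ external $\psi$ is the $1/z$ propagator --- so any contraction involving $\psi$ is exactly contraction type~1, with no internal structure; (2) for the bulk fields, note that $\phi^\da$, $\gamma$, $\til\gamma$, $\b$ are all $(0,0)$-components, so two such external legs cannot be joined by a bare propagator (a $(0,2)$-form would need to connect two $(0,0)$-forms, impossible), forcing any joining subdiagram to contain internal vertices; (3) re-run the degree-counting inequality of Lemma~\ref{lem:prop_phi} on a connected subdiagram joining precisely two bulk external legs, concluding as there that the only solution is the two-vertex, one-internal-propagator diagram that emits two $\p_{w_i}\c$ legs --- i.e.\ the $\phi$--$\phi$, $\gamma$--$\til\gamma$, or $\b$--$\c$ propagator described earlier; (4) argue that a connected subdiagram joining $\ge 3$ bulk external legs, or mixing bulk legs from the same operator, violates the inequality (the relevant bound becomes $2k+3 \le 2n$ or $2k+4 \le 2n$ as in the proof of Lemma~\ref{lem:prop_phi}, with no solution); (5) conclude that the full diagram must factor into a disjoint union of the three elementary contractions listed, possibly decorated by the $\p_{w_i}\c$ legs which then get absorbed into the words defining the operators (using that in relative Lie algebra cohomology a bare $\c$ is dropped and $\p_{w_i}\c$ may be commuted past $\phi_\da$, as noted after Table~\ref{tab:celestial-ops}).

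The main obstacle I anticipate is handling the $\c$ (and $\p_{w_i}\c$) legs emitted by the elementary propagators carefully: these are external $(0,0)$-form legs of the subdiagram but they are \emph{not} among the fields originally appearing in the two operators --- they attach back onto the operator words. One must check that no \emph{further} bulk propagator can connect such a $\p_{w_i}\c$ leg to anything (it would again require a $(0,2)$-form joining $(0,0)$-forms, or would reintroduce the forbidden degree count), so the elementary propagators genuinely decouple and cannot be ``chained.'' A secondary subtlety is that, because the gauge group is $\mf{sp}(N)$, the $\phi$--$\phi$ propagator can appear in its twisted (non-planar) form in double-line notation, as in Figure~\ref{fig:diagramgeneric}; this affects colour factors but not the Dolbeault-degree argument, so it does not change the combinatorial conclusion. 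Once these points are dispatched, the statement follows by assembling the elementary contractions, exactly as the connected-diagram analysis dictates.
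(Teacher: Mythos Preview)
Your proposal is correct and the degree-counting strategy will go through, but the paper's proof is considerably shorter because it exploits a structural observation you do not use. The paper notes that the single-string operators in Table~\ref{tab:celestial-ops} are built only from $\psi,\phi,\gamma,\til{\gamma}$ and $\p_{w_i}\c$ --- crucially never from $\b$ --- and that each of these fields appears in an interaction vertex only alongside an $\mc{A}$ field. Since the propagator sends $\mc{A}$ to $\mc{B}$, and the sole vertex containing $\mc{B}$ is $\op{tr}(\mc{B}[\mc{A},\mc{A}])$, every connected subdiagram joining fields in the two operators is automatically of the form analyzed in Lemma~\ref{lem:prop_phi}: a propagator between two external legs with some $\mc{A}$-trees branching off. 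Lemma~\ref{lem:prop_phi} then says the only such tree is the two-vertex one emitting two $\p_{w_i}\c$ legs, and the lemma follows.

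What this buys over your route: your steps (3)--(4), the separate treatment of the defect vertex, and the ``main obstacle'' about whether emitted $\p_{w_i}\c$ legs could seed further chained propagators all become automatic --- once you have observed that $\b$ is absent from the operators and that every remaining field couples only via $\mc{A}$, there is simply nothing for an emitted $\c$ leg to attach to, and no multi-leg subdiagram to worry about. Your approach is more self-contained (it re-proves rather than invokes Lemma~\ref{lem:prop_phi}) but also carries some unnecessary baggage, e.g.\ the $\b$--$\c$ propagator in your step (2) cannot arise here.
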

A generic Feynman diagram built using these rules is illustrated in Fig. \ref{fig:diagramgeneric}. 
\begin{proof} 
	We need to check that no other Feynman diagrams contribute.  Suppose there was a connected Feynman diagram with two vertices given by the operators we are studying. To contribute to the OPE, all external lines in the diagram must be labelled by Dolbeault $0$-forms.  First, note that the operators do not involve $\mc{B}$, only $\psi,\phi,\gamma,\til{\gamma}$ and $\p_{w_i}\c$.  Further, these fields only appear in a vertex together with an $\mc{A}$ field.  The propagator connects an $\mc{A}$ field with an $\mc{B}$ field, and the only vertex involving $\mc{B}$ is $\op{tr}(\mc{B}[\mc{A},\mc{A}])$. Therefore we are in the situation discussed in Lemma \ref{lem:prop_phi}, and where the only possible diagram involves propagators connecting a field in one operator with a field in the other operator, with some trees branch off on which we place $\mc{A}$ fields. Now the argument there applies to show that the only such trees are those which are described in the statement of the lemma.
\end{proof}

\paragraph{Quantum correction to the BRST transformation} Lemma \ref{lem:Feyn_rule} nearly puts us in a situation where we can compute OPEs in the chiral algebra using Wick contractions.  There is one remaining subtlety, however.  The result of the Wick contractions described in the lemma will automatically be a BRST invariant operator. However, it may not be expressed in a simple way in terms of the generators of the BRST cohomology that we are using, as listed in Table \ref{tab:celestial-ops}.  We may therefore have to add on BRST exact terms in order to bring it into this form. 

The only BRST transformation we will really need is that of the $\b$-ghost. At tree-level in the bulk of $Y$ this is given by the standard ADHM-like expression
\be
Q_\mrm{BRST}\b = \frac{1}{2}[\phi^\da,\phi_\da] + \frac{1}{2}\{\til\gamma^f,\gamma_f\}\,.
\ee
In the presence of the defect, however, there is an important correction to this. The corrected BRST transformation is
\bea \label{eq:BRST-full}
&Q_\mrm{BRST}\b^{rs} \\
&= \frac{1}{2}[\phi^\da,\phi_\da]^{rs} + \frac{1}{2}\{\til\gamma^f,\gamma_f\}^{rs} + \frac{1}{8\pi^2}\eps_{ij}\big(\psi_I^{(r}(\p_{w_i}\c\p_{w_j}\c\psi^I)^{s)} + (\p_{w_i}\c\psi_I)^{(r}(\p_{w_j}\c\psi^I)^{s)}\big)\,.
\eea 
Suppressing $\fsp(N)$ indices
\bea \label{eq:Qb}
&Q_\mrm{BRST}\b \\
&= \frac{1}{2}[\phi^\da,\phi_\da] + \frac{1}{2}\{\til\gamma^f,\gamma_f\} + \frac{1}{16\pi^2}\eps_{ij}\big(\psi_I\psi^I\p_{w_i}\c\p_{w_j}\c + \p_{w_i}\c\p_{w_j}\c\psi_I\psi^I + 2\p_{w_i}\c\psi_I\psi^I\p_{w_j}\c\big)\,.
\eea
The additional term involving the $\psi$ fields arises for the following reason. The BRST charge is obtained from the BRST current $J_\mrm{BRST}$, which has two terms: a bulk term and a defect term. These are written as
\be \label{eq:J_BRST} J_\mrm{BRST}^\text{bulk} = \frac{1}{2}\Tr(\b[\c,\c]) + \frac{1}{2}\Tr(\phi_\da[\c,\phi^\da]) + \gamma_f\c\til{\gamma}^f\,,\qquad J_\mrm{BRST}^\text{defect} = \frac{1}{2}\psi_I\c\psi^I\,. \ee
The bulk contribution to the BRST transformation of $\b$  is given by
\be
\frac{1}{2\pi\i}\int_{\norm{w} = 1}\d^3 w\,J_\mrm{BRST}^{\text{bulk},(2)}(w) \b(0)\,.
\ee
Here $J_\mrm{BRST}^{(2)}$ is the operator obtained from descending the BRST charge twice.  Its explicit form is given by replacing the fields $\b,\c,\phi^\da,\gamma_f,\tilde\gamma^g$ in equation \eqref{eq:J_BRST} by their BV counterparts $\mc{A},\mc{B},\Phi^\da,\Gamma_f,\til\Gamma^g$ listed in equation \eqref{eq:BV_fields}, and restricting to the $(0,2)$-form part.  The integral is taken on a $5$-sphere surrounding the location of the operator $\b$.

This integral is computed by performing a Wick contraction in the bulk theory of $\b$ with a copy of the double descendent of $\c$ appearing in the BRST operator.  The bulk propagator is the Bochner-Martinelli kernel, which is built so that its integral on a $5$-sphere against $\d^3 w$ is $1$.  This gives us the classical expression for the BRST transformation of $\b$ seen above.

The contribution of the defect BRST current is computed in a similar way: it is
\be
\frac{1}{2\pi\i}\int_{\abs{z} = 1}\d z\,J_\mrm{BRST}^\text{defect}(z) \b(0)\,. 
\ee
Here, the defect BRST current is given by the coupling between the defect fermions and the bulk gauge fields in \eqref{eq:act_D55'}, and we do not need any descendants. This expression is computed in the same way, using the $\c-\b$ propagator.  Just like with the $\phi-\phi$ propagator, this only produces a non-zero answer when we consider a propagator with two legs coming off in the middle, labelled by $\c$ fields.  Indeed, the $\c-\b$ propagator has an identical form to the $\phi-\phi$ propagator, with only colour indices changed.  The contour integral giving the contribution of the defect BRST current picks up the singular part of the $\b-\c$ propagator, giving rise to the expression \eqref{eq:BRST-full}.

\paragraph{Contributions to the  BRST operator from two Wick contractions}
There are further corrections to the BRST transformation which arise from two Wick contractions of the BRST current.  For the defect BRST current, this will be two $\psi$ contractions, and for the bulk BRST current, it will be two $\gamma$ or $\phi$ contractions.

For example, the contribution of two Wick contractions in the defect BRST current adds a term of the form
\be
Q_\mrm{BRST} :\psi_I^p\psi_{Jq}: = \Omega_{IJ}\p_z\c^p_{~\,q}\,.  
\ee
This occurs in any  word containing the normally ordered product of two $\psi$'s.

In the case of the bulk BRST current, it is a little more complicated.  For example, consider the contribution of double contractions to the BRST transformation of $\phi_{\da} \phi_{\db}$.  There are two propagators connecting the operator with the BRST current $\phi_\da[\c,\phi^\da]$.  The propagators are Dolbeault forms, and we want to end up with a $(0,2)$-form which is integrated on a $5$-sphere against $\d w_1 \d w_2 \d z$.  In a similar way to our discussion above, we find that there must be precisely two $\c$ fields emerging from the propagator, so that one can have contributions to the BRST operator which are schematically of the form
\be
Q_\mrm{BRST} \phi_{\da}\phi_{\db} 
\propto\eps_{\da\db}\eps_{ij} \p_z\c\p_{w_i}\c\p_{w_j}\c\,.
\ee

However, it is not hard to see that neither of these terms contribute to the BRST transformation of our open string states.  Indeed, in all of our open string states, we have symmetrized the $\phi$ fields.  The bulk BRST transformation is anti-symmetric in the $\phi$ fields, so this does not contribute.

The double contraction term in the defect BRST transformation also does not contribute to the BRST transformation of open string states. This is for  the very simple reason that a double contraction will automatically yield a closed string state, and we are setting them to zero.

Later, we will analyze what happens when closed string states are given a non-zero VEV.  In that situation, this simple argument does not apply, and we need to analyze the BRST transformation more carefully.  An argument almost as simple applies in this situation to show that the states $\J[m,n]$ and $\Jt[m,n]$ remain BRST closed.  Any term in the BRST transformation that replaces two $\psi$'s by $\p_z\c$ produces a closed string state, uncharged under the $\mf{sp}(K)$ flavour symmetry.  The states $\J$ and $\Jt$ are charged under the adjoint representation of $\mf{sp}(K)$, which is irreducible.  

For the states $\M[m,n]$, this argument does not apply,  as these states live in the exterior square of the fundamental of $\mf{sp}(K)$, which contains a copy of the trivial representation.  Indeed, 
\be
\M_{IJ} [0,0] = \psi_I \p_{w_1} \c \psi_J 
\ee
has a term in the BRST transformation which produces
\be
\Omega_{JI}\Tr( \p_z \c \p_{w_1} \c ).
\ee
There is a similar expression for the double-contraction term in the BRST transformation of $\M_{IJ}[m,n]$ and $\til{\M}_{IJ}[m,n]$.  We find that, before we quotient by the closed string states, $\Omega^{JI} \M_{IJ}[m,n]$ and $\Omega^{JI}\til \M_{IJ}[m,n]$ and the towers of states
\be
\Tr\big(\p_z\c\p_{w_i}\c\phi^{\done(m}\phi^{\dtwo\,n)}\big) 
\ee
cancel in BRST cohomology.  This observation will not be significant for us, because in our later analysis which includes closed string states, we will still set these particular states to zero.  (It is important to note that since this cancellation is between states on the $\CP^1$ defect and those in the bulk of $Y$, only the constant modes on the fibres of $Y$ are removed from the latter.)

Though double Wick contractions will not alter the BRST representative we choose, it will correct the computation in a subtle way. For example, considering double contraction in $Q_\mrm{BRST}(\psi_I\b\psi_J)$ will further modify the ADHM equations computed in \ref{eq:BRST-full}. This will be important when we compute the OPEs that comes from double Wick contractions. However, these OPEs are typically constrained by other methods, so we will not analyze them further in this paper.\footnote{There is a further subtlety we are suppressing here: nilpotency of the combined bulk and defect BRST operator fails on account of the $\Sp(N)$ gauge anomaly on the defect. This can be compensated by switching on a counterterm on $Y$ proportional to $\int_Y\mc{G}\Tr(\mc{A}\p\mc{A})$ where $\mc{G}$ is the Bochner-Martinelli kernel obeying $\dbar\mc{G} = \delta_{\CP^1}$. This counterterm also modifies the action of the BRST operator.}


\section{Some Simple OPEs} \label{sec:simpleOPEs}

We have given enough information on the OPEs and the BRST transformations that we can now turn to compute some simple OPEs in the chiral algebra, and compare to what is known for collinear limits in gauge theory.

We will begin by analyzing \emph{planar} OPEs of elements of the chiral algebra, before thinking about the very simplest non-planar contributions. We will focus on those OPEs involving at least one positive or negative helicity gluon state.

It will be convenient to organize the operators $\J_{IJ}[m,n](z) = \psi_I\phi^{\done(r}\phi^{\dtwo\,s)}\psi_J(z)$ on the defect into a generating function. Writing
\be
\phi(\lt;z) = \sum_\da\phi^\da(z)\lt_\da
\ee
we define
\be
	\J^{(n)}_{IJ}(\lt;z) := \psi_I\phi(\lt)^n\psi_J(z) = \sum_{r+s=n} \frac{(r+s)!}{r!s!}\lt_\done^r\lt_\dtwo^s\J_{IJ}[r,s](z)\,.
\ee
We can also define the generating function 
\be
	\J_{IJ}(\lt;z) := \sum_{n\geq 0}\frac{1}{n!}\J^{(n)}_{IJ}(\lt;z) = (\psi_Ie^{\phi(\lt)}\psi_J)(z)\,.
\ee
In all of these expressions, terms between two $\psi$'s are treated as $2 N \times 2 N$ matrices, either symmetric or anti-symmetric depending on whether they are in the adjoint of $\mf{sp}(N)$, or in the exterior square of the fundamental.  Fields are concatenated by matrix multiplication. The final $\psi$ is treated as a vector in the fundamental of $\mf{sp}(N)$, and the initial $\psi$ as a covector, using the symplectic pairing.
 
Similarly, we can define the generating function
\be
	\Jt_{IJ}(\lt;z) := (\psi_Ie^{\phi(\lt)}\p_{w_1}\c\p_{w_2}\c\psi_J)(z)\,.
\ee
As before we can expand $\Jt(\lt;z)$ as a sum 
\be
	\Jt_{IJ}(\lt;z) = \sum_{n\geq0} \frac{1}{n!} \Jt^{(n)}_{IJ} (\lt; z) = \sum_{r,s \geq 0}\frac{(r+s)!}{r!s!}\Jt_{IJ}[r,s](z) (\lt^\done)^r(\lt^\dtwo)^s\,.
\ee
The expressions $\J(\lt;z)$ and $\Jt(\lt;z)$
 correspond to positive and negative helicity gluon states, with the momenta encoded in the pair of spinors $\lt$ and $\lambda = (1,z)$.  The expansion into $\J^{(n)}$ and $\Jt^{(n)}$ corresponds to expanding states into soft modes.

 A planar contraction of (say) of $\J^{(n)}$ with $\J^{(m)}$ must be given by a Feynman diagram of the type described in Lemma \ref{lem:Feyn_rule}, with two additional features. First, it must be possible to draw it in the plane, with no propagators crossing and no ``twisted'' propagators. (Recall that because we have an $\mf{sp}(N)$ gauge theory, the double line notation for Wick contractions corresponds to un-oriented surfaces, and the propagator splits into two terms: one oriented, and one with a twist).

The second constraint is that the result of the Wick contraction must again produce a single-string operator, or the identity operator.  This is because we should think of the Feynman diagram as part of a diagram contributing to a planar two or three-point function of two or three single-string operators. Under these two constraints, there are only a very small number of allowed Wick contractions, as we will see.

Consider the OPEs of $\J_{IJ}[p,q]$ with $\J_{KL}[r,s]$.  Here, there are many possible contractions, but most of them are not planar. There are only three kinds of planar contractions:
\begin{enumerate}
\item  We can contract exactly one pair of $\psi$ fields and no other fields.
\item In the case $p=q=r=s=0$, so that there are no $\phi$ fields in the operators we are considering, we can contract both pairs of $\psi$ fields.
\item We can contract one pair of $\psi$ fields, and one pair of $\phi$ fields, which are adjacent to the $\psi$ fields being contracted.
\end{enumerate}
The three kinds of planar contraction are illustrated in Fig. \ref{fig:planar}.

\begin{figure}[ht]
\centering
    \subfloat{%
    \includegraphics{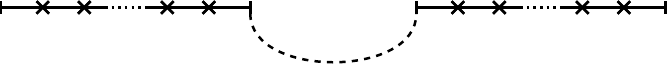}%
    } \\
    \subfloat{%
    \includegraphics{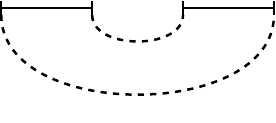}%
    } \\
    \subfloat{%
    \includegraphics{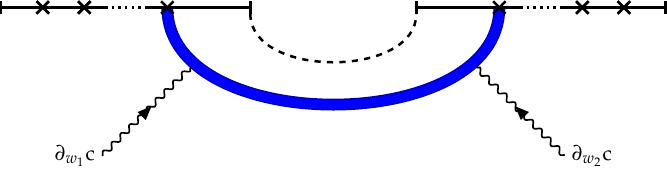}%
    }
    \caption{The three kinds of planar OPEs.  The first one is independent of $N$ and the second two are of order $N$.}
    \label{fig:planar}
\end{figure}

To see that these are the only allowed planar contractions we argue as follows. The contraction of two $\psi$ fields leads to closed string operators, which we are discarding, unless there are no $\phi$ fields in the operators, in which case it leads to the identity. If we do not contract any $\psi$ fields, the result of the OPE is the product of two single-string operators, and therefore is not planar.  Therefore, unless there are no $\phi$ fields in the operators, we must contract precisely one pair of $\psi$ fields.

Next let us consider the contraction of $\phi$ fields. If we contract two or more $\phi$ fields in a planar diagram, then we necessarily produce closed string fields, which we are discarding.  This is because the contraction of a pair of $\phi$ fields produces $\p_{w_1}\c\p_{w_2}\c$, which will be part of the closed string operator generated. 

Now let us turn to the computation of the planar OPEs.   It will be important to note that at the planar level, we can freely commute $\phi$ fields past each other in any word.  This is because, by the BRST relation \eqref{eq:BRST-full}, commuting two $\phi$'s past each other produces bilinears of $\psi$ or of $\gamma$. These split a single-string operator into two, and so correspond to non-planar contributions. 

The second and third kinds of contractions form a closed face, leading to a trace of the identity and generating a factor of $2N$. Both of these represent contributions from the axion background which can be removed by making the replacement $\fsp(N)\to\mathfrak{osp}(2N|N)$. In the first kind no faces are formed; these correspond to tree collinear splitting in flat space. We will address the flat space and axion background pieces separately.


\subsection{Tree OPEs in Flat Space} \label{subsec:TreeOPE}

In this subsection, we compute planar contractions of the first kind, corresponding to tree-level collinear splitting amplitudes on flat space.
  
\paragraph{Single contraction of $\psi$ field}

\bea \label{eq:JJ->J-tree1}
	&\J_{IJ}(\lt_1;z_1)\J_{KL}(\lt_2;z_2) \sim \wick{\c \psi_Ie^{\phi(\lt_1)}\psi_J(z_1) \psi_K e^{\phi(\lt_2)} \c\psi_L(z_2)} +  \wick{\c \psi_I e^{\phi(\lt_1)} \psi_J(z_1) \c\psi_K e^{\phi(\lt_2)}\psi_L(z_2)} \\
	&+ \wick{\psi_I e^{\phi(\lt_1)}\c \psi_J(z_1) \c\psi_K e^{\phi(\lt_2)}\psi_L(z_2)} + \wick{\psi_I e^{\phi(\lt_1)}\c \psi_J(z_1) \psi_K e^{\phi(\lt_2)}\c\psi_L(z_2)} \\
    & = \frac{4}{z_{12}}\Omega_{(J(K}\psi_{L)}e^{\phi(\lt_1)}e^{\phi(\lt_2)}\psi_{I)}(z_1)\,.
\eea

Using the fact that we can commute $\phi$ fields passed each other in planar OPEs, we find that a single contraction of $\psi$ fields leads to
\be \label{eq:JJ->J}
	\J_{IJ}(\lt_1;z_1)\J_{KL}(\lt_2;z_2) \sim \frac{4}{z_{12}}\Omega_{(J(K}\J_{L)I)}(\lt_1+\lt_2;z_1)\,.
\ee
This contraction is the first illustrated in Fig. \ref{fig:planar}. This is exactly the tree-level splitting function of Yang-Mills theory which we reviewed in Sect. \ref{subsec:ccas}.

This OPE is the Kac-Moody algebra at level zero for the double current algebra $\mf{sp}(K)[v^\done,v^\dtwo]$ (often called the $S$-algebra in reference to \cite{Strominger:2021mtt}).

We also define the generating functions for the matter part of the algebra
\bea
    &\M_{IJ}(\lt;z) = \psi_Ie^{\phi(\lt)}\p_{w_1}\c\psi_J(z)\,,\qquad \Mt_{IJ}(\lt;z) = \psi_I e^{\phi(\lt)}\p_{w_2}\c\psi_J(z)\,, \\
    &\M_{f,I}(\lt;z) = \gamma_fe^{\phi(\lt)}\psi_J(z)\,,\qquad  \Mt_I^f(\lt;z) = \widetilde\gamma^f e^{\phi(\lt)}\psi_J(z)\,.
\eea
Similar calculations using the $\psi-\psi$ Wick contraction give us 
\bea \label{eq:JO->O}
	&\J_{IJ}(\lt_1;z_1)\Jt_{KL}(\lt_2;z_2) \sim \frac{4}{z_{12}}\Omega_{(J(K}\Jt_{L)I)}(\lt_1+\lt_2;z_1)\,, \\
	&\J_{IJ}(\lt_1;z_1)\M_{KL}(\lt_2;z_2) \sim - \frac{4}{z_{12}}\Omega_{(J[K}\M_{L]I)}(\lt_1+\lt_2;z_1)\,, \\
	&\J_{IJ}(\lt_1;z_1)\M_{f,K}(\lt_2;z_2) \sim - \frac{2}{z_{12}}\Omega_{K(J}\M_{f,I)}(\lt_1+\lt_2;z_1)\,.
\eea
OPEs with the $\Mt_{IJ}(\lt;z),\Mt_I^f(\lt;z)$ fields take the same form as above.

\paragraph{Single contraction of $\gamma,\til\gamma$ fields}
For the OPE between $\M_f$ and $\Mt^g$, we consider a single contraction between the $\gamma$ and $\til{\gamma}$ fields. We have
\be \label{eq:MMt->Jt}
	\M_{I,f}(\lt_1;z_1)\Mt_J^g(\lt_2;z_2)\sim\wick{\c1\gamma_f e^{\phi(\lt_1)}\psi_I(z_1) \c1 {\til{\gamma}^g} e^{\phi(\lt_2)}\psi_J(z_2)} = - \frac{1}{4\pi^2z_{12}}\delta_f^{~\,g}\Jt_{IJ}(\lt_1+\lt_2;z_1)\,.
\ee
This differs from the standard normalization of the $\M,\Mt$ OPE as appears in \cite{Costello:2023vyy} (which we have reproduced in Appendix \ref{app:simplifyOPE} for convenience) by a factor of $4\pi^2$. We can bring it into this standard from by rescaling $\M\mapsto\M/2\pi,\Mt\mapsto\Mt/2\pi$.


\subsection{Tree OPEs in the Axion Background} \label{subsec:axion}

Next we turn our attention to planar contractions of the second and third kind. In these cases a closed face is generated leading to a factor of $2N$.

\paragraph{Double contraction of two $\psi$ pairs}
Contractions of the second kind only arise in the OPEs of $\J_{IJ} = \J_{IJ}[0,0]$. We find that
\be \J_{IJ}(z_1)\J_{KL}(z_2) \sim \frac{2N}{z_{12}^2}(\Omega_{JK}\Omega_{LI} + \Omega_{JL}\Omega_{KI}) + \frac{4}{z_{12}}\Omega_{(J(K}\J_{L)I)}(z_1)\,. \ee
We learn that the $\J_{IJ}$ generate an $\mf{sp}(K)$ current algebra at level $N$. The level matches the two-point amplitude of gluons generated by the brane backreaction, as can readily be verified by substituting the linearized self-dual field strengths
\be F_{1\da\db} = \t_{IJ}\lt_{1\da}\lt_{1\db}e^{ix\cdot p_1}\,,\qquad F_{2\da\db} = \t_{KL}\lt_{2\da}\lt_{2\db}e^{ix\cdot p_2} \ee
into the axion vertex
\be - \frac{N}{2\pi^2}\int_{\R^4}\d^4x\,\log\|x\|\op{tr}(F_{1\da\db}F_2^{\da\db}) = \frac{2N}{\la12\ra^2}(\Omega_{JK}\Omega_{LI} + \Omega_{JL}\Omega_{KI})\,. \ee
Repeating this calculation having made the replacement $\mf{osp}(N)\to\mf{osp}(M|N)$ would shift the level to $N-M/2$. In particular, for $M=2N$ the level would vanish, consistent with the fact that there is no backreaction in this case.

\paragraph{Contraction of one $\phi$ pair and one $\psi$ pair}
When a pair of $\phi$ fields is contracted it always produces both a planar and a non-planar contribution. In this subsection, we will focus on the planar contribution.  When the $\phi$ contraction is adjacent to the $\psi$ contraction, the planar piece is linear in $N$. But when the $\phi$ contraction is not adjacent to the $\psi$ contraction, we produce a closed string state which is removed in our algebra.  The only planar contribution is therefore from the last diagram illustrated in Fig. \ref{fig:planar}.

The relevant Wick contraction of $\J^{(m)}(\lt_1;z_1)$ with $\J^{(n)}(\lt_2; z_2)$ is
\bea \label{eq:JJ->J-tree2}
	&\J^{(m)}_{IJ}(\lt_1;z_1)\J^{(n)}_{KL}(\lt_2;z_2) \\
    &\sim \wick{(\psi_I\phi(\lt_1)^{m-1}\c2 \phi(\lt_1)\c1\psi_J)(z_1)(\c1\psi_K\c2 \phi(\lt_2)\phi(\lt_2)^{n-1}\psi_L)(z_2)} + (IJ),(KL)\,.
\eea
This has both planar and non-planar terms, depending on whether the $\mf{sp}(N)$ propagator is a double line or twisted double line, and whether the $\c$ ghosts form part of the open string operator after the OPE, or part of a closed string operator. The only relevant term is that depicted in Fig. \ref{fig:planar}, where the $\c$ ghosts form part of the open string operator.  The leading pole in this expression is 
\be \label{eq:JJ->NJt}
    \frac{N}{4\pi^2}\frac{[12]}{\la12\ra^2}\Omega_{JK}\psi_I\phi(\lt_1)^{m-1}\phi(\lt_2)^{n-1}\p_{w_1}\c\p_{w_2}\c\psi_L(z_2) + (IJ),(KL)\,.
\ee
This is proportional to the sum of terms in $\Jt^{(m+n-2)}$ which are homogeneous of order $m-1$ in $\lt_1$ and $n-1$ in $\lt_2$. 
  
To write it more precisely we use the generating function $\J(\lt;z)$. We find the leading pole in the OPE is
\be \label{eq:axion-OPE}
\J_{IJ}(\lt_1;z_1)\J_{KL}(\lt_2;z_2) \sim \frac{N}{\pi^2}\frac{[12]}{\la12\ra^2}\Omega_{(J(K}\int_{s,t=0}^1 \d s\d t\,\Jt_{L)I)}\big(s\lt_1 + t\lt_2;(z_1+z_2)/2\big).
\ee
It turns out that this expression gives us the correct subleading (order $1/z_{12}$) pole also.  (On the right hand side, evaluating at $z$ or at $\half(z_1 + z_2)$ does not change the order $1/z_{12}^2$ pole, but it does change the order $1/z_{12}$ pole).

Calculating the subleading pole directly from the Wick contractions is slightly painful.  Instead we will verify this by symmetry and associativity.  Because we are considering a planar OPE, the subleading term must be a single-string operator.  Symmetry considerations immediately tell us that the subleading OPE must be given by \eqref{eq:axion-OPE}, plus some term which is of the form
\be
\J_\sfa[p,q](z_1)\J_\sfb[r,s](z_2) \sim \frac{(ps-qr)C(p+q,r+s)}{z_{12}} f_{\sfa\sfb}^{~\,~\sfc}\p_z\J_\sfc[p+r-1,q+s-1](z_1)\,.
\ee
(using a basis of the Lie algebra $\mf{sp}(K)$ with structure constants $f_{\sfa\sfb}^{~\,~\sfc}$). The function $C(p+q,r+s)$ must be odd in its two variables.   It is not hard to show, using associativity, that no such term can appear, so that $C(p+q,r+s) = 0$. 

The terms linear in $N$ capture corrections to collinear singularities from coupling to the background axion which is generated by backreaction. These arise as modifications to the perturbiner from the Feynman diagrams in Fig. \ref{fig:axion-perturbiner}, where the solid dots represent insertions of either the quadratic or cubic vertex in
\be \label{eq:background-axion-vertex} - \frac{N}{4\pi^2}\int_{\R^4}\log\|x\|\op{tr}(F^2)\,. \ee
Evaluating these diagrams in the holomorphic collinear limit would be somewhat tedious. Fortunately they coincide with the first order correction in $N$ to the celestial OPE in the $\mrm{WZW}_4$ model on Burns space as computed in \cite{Costello:2023hmi}.

\begin{figure}[ht]
\centering
    \subfloat{%
    \includegraphics{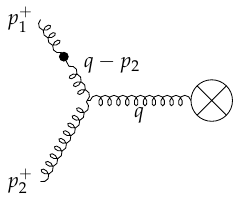}%
    }\hfil
    \subfloat{%
    \includegraphics{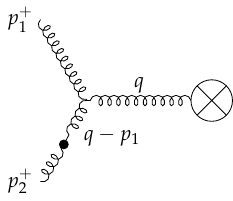}%
    }\hfil
    \subfloat{%
    \includegraphics{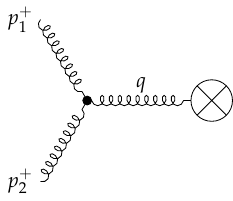}%
    } \\
    \subfloat{%
    \includegraphics{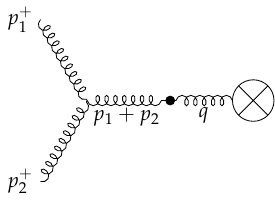}%
    }
    \caption{Corrections to the gluon perturbiner involving the background axion vertex.}
    \label{fig:axion-perturbiner}
\end{figure}

To see why this should be the case we perform a partial gauge-fixing of SDYM (ignoring the fermions for the moment) determined by a choice of K\"{a}hler structure on space-time. Such a gauge-fixing is certainly not valid in the full quantum theory, but for the moment we are interested in only tree-level effects. Writing $\R^4=\C^2$, anti-self-dual 2-forms decompose as
\be \Omega^2_-(\C^2) = \Omega^{2,0}(\C^2)\oplus\omega\Omega^0(\C^2)\oplus\Omega^{0,2}(\C^2) \ee
for $\omega$ the K\"{a}hler form. Integrating out the $(2,0)$ part of $B$ we learn that $F^{0,2}(A) = 0$, which is solved by
\be A^{0,1} = \sigma_0^{-1}\dbar\sigma_0\,. \ee
Similarly integrating out the $(0,2)$ part of $B$ we find that $F^{2,0}(A) = 0$, so that
\be A^{1,0} = - \p\sigma_\infty\sigma_\infty^{-1}\,. \ee
The $\Sp(K)$-valued combination $\sigma = \sigma_0\sigma_\infty$ is then gauge invariant, and $A$ is uniquely determined up to gauge by $\sigma$. This object is sometimes known as Yang's $J$-matrix \cite{Yang:1977zf}. Writing 
\be B^{1,1} = (-\i\omega)\sigma_\infty^{-1}\tau\sigma_\infty \ee
for $\tau\in\Omega^0(\C^2,\mf{sp}(K))$, the SDYM action then specializes to the Chalmers-Siegel form \cite{Chalmers:1996rq}
\be \label{eq:Chalmers-Siegel} \int_{\C^2}(-\i\omega)\,\op{tr}\big(\tau\p(\sigma^{-1}\dbar\sigma)\big)\,. \ee
An expression more convenient for perturbation theory can be obtained by writing $\sigma = \exp\chi$ where $\chi\in\Omega^0(\C^2,\mf{sp}(K))$, in terms of which the above is
\be \sum_{m=0}^\infty\frac{(-)^m}{(m+1)!}\int_{\C^2}(-\i\omega)\,\op{tr}\big(\tau\p(\mrm{ad}_\chi^m\dbar\chi)\big)\,. \ee
The vertices of this action coincide with those of the $\mrm{WZW}_4$ model; however, its propagators are directed so that its only tree diagrams have one external $\tau$ field.

Substituting the gauge-fixed form for $A$ into the background axion vertex \eqref{eq:background-axion-vertex} gives
\be - \frac{N}{4\pi^2}\int_{\C^2}\log\|x\|\op{tr}\big(\p(\sigma^{-1}\dbar\sigma)\p(\sigma^{-1}\dbar\sigma)\big)\,. \ee
We have the identity
\be \op{tr}(\p(\sigma^{-1}\dbar\sigma)^2) = -\frac{1}{2}\p\bar\p\bigg(\op{tr}(\sigma^{-1}\p\sigma\sigma^{-1}\dbar\sigma) - \frac{1}{3}\int_{[0,1]}\op{tr}((\tilde\sigma^{-1}\tilde\d\tilde\sigma)^3\big)\bigg)\,, \ee
where $\tilde\sigma$ is an extension of $\sigma$ to $\C^2\times[0,1]$ representing a homotopy to the identity map and $\tilde\d = \d_{\C^2} + \d_{[0,1]}$. This allows us to rewrite the vertex as
\be \label{eq:gauge-fixed-vertex} \frac{N}{8\pi^2}\int_{\C^2}\p\bar\p\log\|x\|\op{tr}\big(\sigma^{-1}\p\sigma\sigma^{-1}\dbar\sigma\big) - \frac{1}{48\pi^2}\int_{\C^2\times[0,1]}\p\bar\p\log\|x\|\op{tr}\big((\tilde\sigma^{-1}\tilde\d\tilde\sigma)^3\big)\,, \ee
which is precisely the coupling introduced in the $\mrm{WZW}_4$ model when the K\"{a}hler potential is shifted by
\be \frac{1}{2}\|x\|^2\mapsto\frac{1}{2}\|x\|^2 + N\log\|x\|\,. \ee
We recognize the right hand side as the K\"{a}hler potential for Burns space.

Forgetting the orientation of the propagators in Fig. \ref{fig:axion-perturbiner}, we can identify each diagram with its counterpart in the $\mrm{WZW}_4$ model on Burns space. They are the order $N$ corrections to the two incoming states, propagator and interaction respectively. Unlike on Burns space, the vertex \eqref{eq:gauge-fixed-vertex} can appear at most once in any (connected) Feynman diagram, since it acts as a sink for positive helicity states. These diagrams are therefore the only possible corrections to the $2\to1$ OPE at order $N$.

The order $N$ term in the $2\to1$ celestial OPE can therefore be deduced from the Burns OPE appearing in \cite{Costello:2023hmi}. The leading double pole is
\be \J_{IJ}(\lt_1;z_1)\J_{KL}(\lt_2;z_2) \sim \frac{N}{\pi^2}\frac{[12]}{\la12\ra^2}\Omega_{(J(K}\int_0^1\d s\int_0^1\d t\,\Jt_{L)I)}\big(s\lt_1+t\lt_2;\half(z_1+z_2)\big)\,, \ee
matching the dual computation \eqref{eq:axion-OPE}. The relative factor of $-1/4\pi^2$ in this OPE compared to that appearing in \cite{Costello:2023hmi} can be attributed to the unusual normalization of the $\mrm{WZW}_4$ kinetic term when contrasted with \eqref{eq:Chalmers-Siegel}.


\subsection{The Simplest One-Loop OPE} \label{subsec:NonPlanarOPE}

Thus far we've extensively studied the planar OPEs of the chiral algebra, and they match well-known tree-level results from collinear limits in $4d$ gauge theory. Let us now turn our attention to the simplest non-planar contributions to the OPEs. In particular, to the $2\to1$ OPE. There are of course many further non-planar contributions, some of which will be computed in Sect. \ref{sec:nonfactorizing}.

Under the constraint of Lemma \ref{lem:Feyn_rule}, only double Wick contraction of one $\phi$ pair and one $\psi$ pair have non-planar contributions to the 2 to 1 OPEs. We also need to be careful to keep track of the positions of the contracted $\phi$s. Both cases, when the $\phi$ contraction is adjacent to the $\psi$ contraction and when they are not adjacent, have non-planar contributions.

First, we recall the case when the $\phi$ contraction is adjacent to the $\psi$ contraction, which is analyzed in \eqref{eq:JJ->J-tree2}
\bea \label{eq:JJ->J-tree2-again}
	&\J^{(m)}_{IJ}(\lt_1;z_1)\J^{(n)}_{KL}(\lt_2;z_2) \\ &\sim \wick{(\psi_I\phi(\lt_1)^{m-1}\c2\phi(\lt_1)\c1\psi_J)(z_1)(\c1\psi_K\c2\phi(\lt_2)(\phi(\lt_2)^{n-1}\psi_L)}(z_2) + (IJ),(KL) \\		
	&= \frac{1}{2\pi^2}\frac{[12]}{\la12\ra^2}\Omega_{JK}(\psi_I\phi(\lt_1)^{m-1})_p(z_1)\omega_{qr}(\p_{w_1}\c^{[r[p}\p_{w_2}\c^{q]s]} \\
    &+ \omega^{[r[p}\p_{w_1}\c^{q]}_{~\,~t}\p_{w_2}\c^{s]t})(\phi(\lt_2)^{n-1}\psi_L)_s(z_1) + (IJ),(KL)\,.
\eea
The planar contribution from the above expression has been analyzed before; the non-planar contribution contributes a leading double pole
\be \label{eq:phipsi-np1}
	- \frac{1}{2\pi^2}\frac{[12]}{\la12\ra^2}\Omega_{JK}\psi_I\phi(\lt_1)^{m-1}\phi(\lt_2)^{n-1}\p_{w_1}\c\p_{w_2}\c\psi_L(z_1) + (IJ),(KL)\,.
\ee 

Next, we consider the contributions from the $\phi$ contractions that are not adjacent to the $\psi$. One such contribution is illustrated in Fig. \ref{fig:JJnonplanar}.

\begin{figure}[ht]
    \centering
        \includegraphics{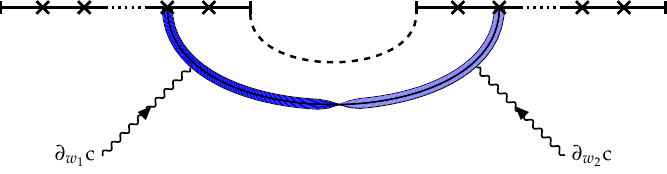}
    \caption{An example non-planar contraction of non-adjacent $\psi$ and $\phi$ fields.} \label{fig:JJnonplanar}
\end{figure}

In this case, the planar piece should be discarded as it generates closed string operators. The non-planar contribution produces the same expression as equation \eqref{eq:phipsi-np1} after we symmetrize the $\phi$s. Therefore, after summing over the $mn$ possible ways of performing the $\phi$ contractions, including the case when the $\phi$ contraction is adjacent to the $\psi$ contraction, we find the leading double pole
\be \label{eq:phipsi-np2}
	- \frac{1}{2\pi^2}\frac{[12]}{\la12\ra^2}mn\Omega_{JK}\psi_I\phi(\lt_1)^{m-1}\phi(\lt_2)^{n-1}\p_{w_1}\c\p_{w_2}\c\psi_L(z_1) + (IJ),(KL)\,.
\ee 
This expression can be also written as
\bea
	&\J_{IJ}(\lt_1;z_1)\J_{KL}(\lt_2;z_2) \sim - \frac{2}{\pi^2}\frac{[12]}{\la12\ra^2}\Omega_{(J(K}\J_{L)I)}\big(\lt_1+\lt_2;\half(z_1+z_2)\big)\,,
\eea
where here the subleading simple pole is fixed by associativity. This result matches the one-loop corrections to the celestial OPE found in \cite{Costello:2022upu}, corresponding to the one-loop QCD splitting amplitude. An example Feynman diagram contributing to this factorization is illustrated below (Fig. \ref{fig:1loopsplitting}).
\begin{figure}[ht]
    \centering
        \includegraphics{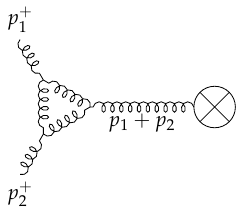}
    \caption{An example of a factorizing diagram contributing to the one-loop splitting amplitude. There are further contributions from bubble diagrams and with fermions running through the loop \cite{Bern:1995ix}.} \label{fig:1loopsplitting}
\end{figure}


\section{Large \texorpdfstring{$N$}{N} Chiral Algebras for Self-Dual Backgrounds} \label{sec:backgrounds}

So far we've taken pains to remove all but the trivial bulk operator from the $D5$ brane theory. This was achieved in Sect. \ref{subsec:SDQCD-string} by forming the quotient
\be \mc{A}^{N,K} = \mc{A}^{N,K}_\mrm{Defect}/I_{\vac}\,, \ee
where $I_{\vac}$ is the maximal ideal generated by bulk operators charged under any of the natural symmetries. In this section, we consider the result of instead quotienting by different ideals of bulk operators. We find that these correspond to switching on a variety of self-dual backgrounds, of which the simplest are gravitational and flavour backgrounds.

This analysis gives us a very large class of self-dual backgrounds with  a ``top-down'' version of the celestial holography, along the lines of that in \cite{Costello:2022jpg,Costello:2023hmi}.

Recall that the $D5$ worldvolume is the submanifold $Y \subset X$, which we have been giving coordinates $w_1,w_2,z$. In this section, to keep the notation compact, we will use coordinates $w_1,w_2,w_3$.  Recall that $Y$ is the bundle $\Oo(-1) \oplus \Oo(-3) \to \CP^1$, where $w_1,w_2$ are coordinates on the fibres and $z = w_3$ is a coordinate on the base.


\subsection{Ideals in the Bulk Algebra} \label{subsec:ideals}

As before, we let $\mc{A}^N_{D5}$ be the algebra of bulk operators on the $D5$ system.  Because this is a holomorphic theory in complex dimension $3$, the OPE has no singularities, making it a commutative algebra. Generators for the algebra are described in Table \ref{tab:bulk-states}.   Instead of setting all the generators $\mc{O}_i$ of the algebra to zero, we can instead implement a relation
\be\label{eq:general_vev}
\mc{O}_i = f_i (w_1,w_2,w_3) 
\ee
where if $\mc{O}_i$ is of spin $-s_i$, then $f_i$ is a section of the bundle $\Oo(2 s_i)$ on $Y$. It is important to note that $\{\mc{O}_i\}$ are independent gauge invariant generators of the commutative algebra $\mc{A}^N_{D5}$. To define ideals of the algebra $\mc{A}^{N,K}_\mrm{Defect}$, we can independently assign vevs to the composite operators $\{\mc{O}_i\}$ without specifying values for the fields $\phi,c,\dots$. More precisely, the relation \eqref{eq:general_vev} defines an ideal $I_{\Psi}$ of the algebra $\mc{A}^{N,K}_\mrm{Defect}$. Imposing these vevs corresponds to considering the quotient $\mc{A}^{N,K}_\Psi  = \mc{A}^{N,K}_\mrm{Defect} / I_\Psi$. 

For example, consider the operator $\gamma_f\til{\gamma}^g$.  This is of spin $-1$, so that we can impose a relation
\be \label{eq:gamma-quotient}
\gamma_f\til{\gamma}^g = M^g_f(w_i)
\ee
where $M^g_f$ is a section of the bundle $\mf{sl}(16)\otimes\Oo(2)$. 

Fields on the $D5$ worldvolume couple to bulk fields of the ten-dimensional string theory.  The simplest such couplings involve a single bulk field coupling to a single trace operator $\mc{O}_i\in\mc{A}^N_{D5}$.  On the worldsheet these correspond to disc amplitudes with a single closed string vertex operator inserted at the origin, and a number of open string vertex operators inserted on the boundary $S^1$.  The bulk fields which couple to particular single-trace operators $\mc{O}_i$ are listed in Table \ref{tab:bulk-states}.  The form of the open-closed couplings we use here, adapted to the target space description of the B-model, are given in \cite{Costello:2015xsa,Costello:2019jsy}.  There they were used to quantize Kodaira-Spencer gravity at all loop orders through a variant of the Green-Schwarz mechanism.

If we set $\mc{O}_i$ to some value, then we add a term to the Lagrangian of the $10d$ system whereby one of the fields is integrated over the worldvolume of the $D5$ system.  This can be interpreted as a source term, and solving yields a non-trivial background in the bulk.

\paragraph{Giving a VEV to the current for the $\mf{sl}(16)$ flavour symmetry}
Let us see how this works for the $\gamma_f\til{\gamma}^g$ operator.  This couples to the $\mf{sl}(16)$ matrices inside the $\mf{so}(32)$ ten-dimensional gauge theory by
\be
\frac{\i}{2\pi}\int_{D5}\gamma_f \til{\gamma}^g\,a^f_{0,g} 
\ee
(here, and elsewhere, the $\mf{sp}(N)$ indices have been contracted).  This term is to be added to the ten-dimensional holomorphic Chern-Simons Lagrangian, yielding, once we specify the value of $\gamma_f\til{\gamma}^g$, the Lagrangian
\be \label{eq:hCS-flavour}
- \bigg(\frac{\i}{2\pi}\bigg)^3\int_X\Omega_X\,\mrm{CS}(a_0) + \frac{\i}{2\pi}\int_{Y} M_f^g\,a_{0,g}^f
\ee
where as above $M^g_f$ is a section of $\Oo(2)$, and $\Omega_X$ is the holomorphic volume form on $X$.

It's worth pausing to explain the unusual normalization for the kinetic term of the holomorphic Chern-Simons theory in Equation \eqref{eq:hCS-flavour}. This can be traced to the Green-Schwarz mechanism for the $D9$ brane theory \cite{Green:1984sg,Costello:2019jsy}. The ten-dimensional holomorphic Chern-Simons theory suffers from a gauge anomaly arising from the variation of the hexagon diagram. Ignoring the background $\mf{so}(32)$ bundle for the moment, the corresponding anomaly cocycle can be expressed terms of the BV field $\mbf{a}_0$ as
\be \frac{1}{2\cdot5!}\bigg(\frac{\i}{2\pi}\bigg)^5\int_X\op{tr}_\mrm{Ad}\big(\mbf{a}_0(\p\mbf{a}_0)^4\big)\,. \ee
Here the trace if taken in the adjoint of $\mf{so}(32)$. This cocycle is independent of the normalization of the holomorphic Chern-Simons kinetic term. The anomaly is cancelled by an exchange of closed string fields coupling to $\mbf{a}_0$ by
\be \label{eq:flavour-couplings} \frac{C_\eta}{4!}\bigg(\frac{\i}{2\pi}\bigg)\int_X\eta\op{tr}_\mrm{F}\big(\mbf{a}_0(\p\mbf{a}_0)^3\big) + \frac{C_\mu}{2!}\bigg(\frac{\i}{2\pi}\bigg)\int_X\mu\op{tr}_\mrm{F}\big(\mbf{a}_0\p\mbf{a}_0\big)\,, \ee
where $\mrm{F}$ is the fundamental of $\mf{so}(32)$. Cancellation of the anomaly is possible on account of the trace identity
\be \op{tr}_\mrm{Ad}(X^6) = 15\op{tr}_\mrm{F}(X^2)\op{tr}_\mrm{F}(X^4)\,, \ee
requiring that $C_\eta C_\mu = 1/4(2\pi)^4$. Implicit in our normalization of the $D5$ and $D5^\prime$ worldvolume actions is the symmetric choice $C_\eta = C_\mu = 1/2(2\pi)^2$. But the field $\mu$ encodes a Beltrami differential on $X$ and the second term in equation \eqref{eq:flavour-couplings} is precisely the change in the holomorphic Chern-Simons action under this complex structure deformation. Noting that the trace in the fundamental of $\mf{so}(32)$ restricts to twice the trace in the fundamental of $\mf{sl}(16)$, the coefficient in Equation \eqref{eq:hCS-flavour} follows.

Varying $a_0$ gives us the source equation
\be
F^{0,2}(a_0)_g^f = 4\pi^2M_g^f\delta_Y\,.
\ee
This tells us that forming the quotient of the bulk algebra in equation \eqref{eq:gamma-quotient} has the effect of changing the backreacted $10d$ system, by inserting a source for the gauge field.  Note that this equation makes sense, because $\delta_Y$ is $(0,2)$-form valued in $\Oo(-2)$, whereas $M_g^f$ is a section of $\Oo(2)$. 

This construction works for all of the generators of the algebra $\mc{A}^{N}_{D5}$. Each bulk operator comes in a tower of states with an increasing number of $\phi$ fields; for simplicity, we only write down the fields associated to the lowest element of the tower, which is invariant under the $\mrm{SU}(2)$ rotating $\phi$.  The higher modes are given by the same expressions but with derivatives normal to the brane. 

We use the following notation for the field content of the type I string theory. The $\mf{so}(32)$ gauge field will be decomposed into the adjoint of $\mf{sl}(16)$, the exterior square of the fundamental, and the exterior square of the anti-fundamental.  The components are $a_{0,g}^f$, $a_{0,fg}$, $a_0^{fg}$ where $f,g$ are indices for the fundamental of $\mf{sl}(16)$. The bulk closed string fields, as before, consist of a cubic tensor $\eta \in \Pi\Omega^{0,\ast}(X,\wedge^3TX)$ and a Beltrami differential $\mu \in \Pi\Omega^{0,\ast}(X, TX)$.\footnote{We slightly abuse notation by calling $\mu$ a Beltrami differential, which strictly speaking only applies to the component of $\mu$ which is a $(0,1)$-form valued in $X$.} We will freely use the holomorphic volume form on $X$ to write $\mu$ as a $(4,\ast)$-form and $\eta$ as a $(2,\ast)$-form on $X$. We will go through the various bulk operators on the stack of $N$ $D5$ branes, and determine what ten-dimensional field it sources when we give it a VEV. 

\paragraph{Giving a VEV to $\gamma$ bilinears}
We will start with the open string operators $\gamma_f\gamma_g$ and $\til{\gamma}^f\til{\gamma}^g$.  We can set
\be \gamma_f\gamma_g = F_{fg}\,,\qquad\til{\gamma}^f\til{\gamma}^g = G^{fg} \ee
where $F$ is constant and $G$ is a section of $\Oo(4)$.   It is clear that, just as with $\gamma_f\til\gamma^g$, these provide a source for $a_0^{fg}$ and $a_{0,fg}$.  We note that on $X$, $a_{0,fg}$ is an element of $\Pi\Omega^{0,\ast}(X,\Oo(-2))$, and the source equation
\be
\dbar a_{0,fg} = 8\pi^2F_{fg} \delta_Y
\ee
makes sense because $\delta_Y$ is a section of $\Oo(-2)$. Similarly, $a_0^{fg}$ is an element of $\Pi\Omega^{0,\ast}(X,\Oo(2))$, and the equation
\be
\dbar a_0^{fg} = 8\pi^2G^{fg} \delta_Y
\ee
makes sense, because $G_{fg}$ is a section of $\Oo(4)$. 

\paragraph{Giving a VEV to the stress tensor}
The next operator is the stress tensor
\be
T^i = \Tr\bigg(\frac{1}{2}\phi_\db\p_{w_i}\phi^\db + \b\p_{w_i}\c + \dots\bigg).
\ee
This is of  different spin depending on the value of $i$. The coordinate independent way of saying this is that we can set $T^i = f^i$ where $f = f^i \d w_i \,\d^3 w$ is a section of the bundle $K_Y \otimes T^\ast Y$.

The stress tensor couples to the bulk Beltrami differential field by
\be 
\frac{\i}{2\pi}\int_{D5} \mu_i^{0,3}\,T^i
\ee
where $\mu_i^{0,3}$ is the component of $\mu$ which is the coefficient of $\d^3 \wbar\,\p_{w_i}$.

If we set the stress tensor to $T^i = f^i$, we find we have added a source term to the action for the twisted type I supergravity, so the action becomes
\be
\frac{\i}{2\pi}\int_{X} \mu\,\p^{-1} \dbar \eta + \frac{\i}{2\pi}\int_{D5} \mu_i^{0,3}\,f^i + \dots
\ee
where $\eta\in\Pi\Omega^{0,\ast}(X,\wedge^3 TX)$ and we have only included the closed string kinetic term and the source term.  Varying $\mu$ gives the equation
\be
\dbar \eta = - \p( \delta_{Y} f ) \label{eq:etasource}
\ee
so that giving a VEV to the stress tensor sources the trivector field $\eta \in \Omega^{0,1}(X, \wedge^3 T X)$.  

Let us explain how to interpret the right hand side.  Recall that $f^i \d w_i \,\d^3 w$ is a holomorphic section of the bundle $K_Y \otimes T^\ast Y$ on the the submanifold $Y$ supporting the stack of $D5$ branes. We can view $\delta_Y$ as a $(0,2)$-form valued in $K_Y^{-1}$ (using the holomorphic volume form on $X$), so that
\be
\delta_Y f^i \d w_i\,\d^3 w \in \br{\Omega}^{1,2}(X)\,.
\ee
(The symbol $\br{\Omega}^{\ast,\ast}$ indicates forms whose coefficients may be distributions instead of smooth functions). 

Therefore, $\p (\delta_Y f^i \d w_i)$ is naturally a distributional $(2,2)$-form on $X$.   Further, by contracting with the holomorphic volume form on $X$, we can view $\eta$ as a $(2,\ast)$-form on $X$, so that equation  \eqref{eq:etasource} gives a source for the $(2,1)$-form component of $\eta$.

\paragraph{Giving a VEV to $\p \c$ bilinears}
Next, we consider the bulk operator
\be
\op{Tr} \p_{w_i} \c \p_{w_j} \c\,. \label{eq:ccoperator}
\ee
The source equation for this operator is a little more complicated. Recall that the ten-dimensional trivector field $\eta$ is constrained to satisfy $\p \eta = 0$. Therefore there exists an element $\what\eta \in \Pi\Omega^{0,\ast}(X, \wedge^4 TX)$ so that $\p \what \eta = \eta$.  The operator $\op{Tr} \p_{w_i}\c\p_{w_j}\c$ couples by
\be
\frac{\i}{2\pi}\int_Y\eps^{\db\da}\what \eta_{\da\db ij} \op{Tr} \p_{w_i}\c\p_{w_j}\c\,.
\ee
The operator \eqref{eq:ccoperator} can be set equal to a $\p$-closed holomorphic $2$-form $\omega = \omega^{ij} \d w_i \d w_j$ on $Y$:
\be
\op{Tr} \p_{w_i} \c \p_{w_j} \c = \omega^{ij} \,.
\ee
Since the Beltrami differential couples to $\what \eta$ simply by $\int \mu \dbar \what \eta$, we find that the source equation for the operator $\eqref{eq:ccoperator}$ is simply  
\be
\dbar \mu = \delta_{Y} \omega.
\ee
This makes sense, because $\mu$ can be thought of as a $(4,\ast)$-form on $X$, $\delta_Y$ is a closed $(2,2)$-form, and $\omega$ is a closed $(2,0)$-form on $Y$. 

\paragraph{Giving a VEV to the remaining operator}
The final bulk operator is
\be
\frac{\eps^{ijk}}{3!2}\Tr\big((3\phi_\db\p_{w_i}\phi^\db - 2\b\p_{w_i}\c)\p_{w_j}\c\p_{w_k}\c + \dots\big)\,.
\ee
This operator  is of spin $-2$, and so can set to a section $F$ of $\Oo(4)$ on $Y$. That is, $F$ is a section of $K_Y^{2}$.  It couples to the $\p_{w_1} \p_{w_2} \p_{w_3}$ component of $\eta$, giving rise to the source equation
\be
\dbar \mu = \p ( \delta_Y F).
\ee
This equation  needs to be interpreted carefully. By using the holomorphic volume form on $X$, we can view $\delta_Y$ as a $(0,2)$-form valued in $K_Y^{-1}$. By hitting it with the section $F$ of $K_Y^2$, we view $\delta_Y F$ as a $(3,2)$-form on $X$. On the left hand side we view $\mu$ as a $(4,\ast)$-form.

We can sum up the fields on the $10d$ geometry $X$ sourced by giving VEVs to the bulk operators in the chiral algebra in Table \ref{tab:twistor-source}.
\begin{table}
\centering
    \begin{tabular}{c c c}
         \toprule 
         \thead{Operator} & \thead{Field sourced} & \thead{Source equation}  \\
         \midrule
         $\gamma_f\til{\gamma}^g $, $\gamma_f\gamma_g$, $\til{\gamma}^f \til{\gamma}^g$ & $\mrm{SO}(32)$ hCS gauge field $a_0$ & $\dbar a_0 = \delta_Y$ \\
         Stress tensor $T^i = \frac{1}{2}\Tr\big(\phi_\da\p_{w_i}\phi^\da + \dots\big)$ & $(3,1)$-form field $\eta$ & $\dbar \eta = \p\delta_Y$ \\
         $\Tr\p_{w_i}\c\p_{w_j}\c$ & Beltrami differential $\mu$ & $\dbar\mu = \delta_Y$ \\
         $\frac{1}{4}\eps^{ijk}\Tr\big(\phi_\da\p_{w_i}\phi^\da\p_{w_j}\c\p_{w_k}\c + \dots\big)$ & Beltrami differential $\mu $ & $\dbar \mu = \p \delta_Y$ \\
        \bottomrule
    \end{tabular}
\caption{Bulk operators and the fields they source on $X$}
\label{tab:twistor-source}
\end{table}


\subsection{Effect of Sourcing on Twistor Space} \label{subsec:twistor-sources}

We have described how working in a non-trivial vacuum of the $D5$ system has the effect of sourcing a field in the $10d$ geometry $X$, which we recall is the bundle $\Oo(-1) \oplus \Oo(-3)$ over twistor space.  Our ultimate interest, of course, is how this affects the dual system of the $D5$ branes wrapping twistor space, and then affects the self-dual gauge theory on $\R^4$.

To determine this, we proceed as follows.  The theory on the $D5^\prime$ branes has essentially the same field content as that on the $D5$ branes, except that we use an $\Sp(K)$ gauge group instead of $\Sp(N)$, and that the roles of the coordinates $v^\da$ and $w_1,w_2$ reversed. The coupling to the fields of the type I string theory is also the same.  

As on the $D5$ brane, operators on the $D5^\prime$ brane live in towers whose lowest lying elements have no or minimal dependence on the two scalar fields $\phi^\prime_1,\phi^\prime_2$. (These scalar fields are the bosonic ghosts associated to $(\alpha,\beta)\in\Pi\Omega^{0,1}(\PT,(\Oo(-1)\oplus\Oo(-3))\otimes \wedge^2_0\mrm{F}_K)$; the twistor uplifts of space-time fermions.) A field of the type I string theory which depends in some polynomial way on $w_1,w_2$ will couple to an operator on the $D5$ brane which will have similar polynomial dependence on the scalar fields $\phi_1^\prime, \phi_2^\prime$.  Fields which are independent of $w_1,w_2$ will couple to the  simplest operators.

In particular, this means that if the value of an operator on the $D5$ is a polynomial only in $z$, and constant in $w_1,w_2$, then the field it sources will couple to a very simple operator in the $D5^\prime$ system. Conversely, operators which are set to polynomials in $w_1,w_2$ will source fields that couple to complicated operators on the $D5^\prime$ system.  It will be difficult to determine how such fields affect the self-dual gauge theory on $\R^4$.  For this reason, we will focus on vacua where the operators are set equal to some polynomial only in $z$.

\paragraph{Giving a VEV to the current for the $\mf{sl}(16)$ flavour symmetry}
Let us now describe explicitly the affect of the various non-trivial vacua of the $D5$ system has on the $D5^\prime$ system.  We will start with the vacuum where $\gamma_f\til{\gamma}^g$ has a VEV. This involves setting  
\be
\gamma_f\til{\gamma}^g = M_f^g(z)
\ee
where $M(z)$ is a polynomial of order two in $z$ valued in $\mf{sl}(16)$ matrices. Then the $\mrm{SL}(16)$ gauge field we source satisfies
\be
F^{0,2}(a_0) = 4\pi^2\delta_{v^\da = 0}  M(z)\,. \label{eqn:sourceflavour}
\ee
When restricted to twistor space, this gives a background gauge field for the $\mrm{SL}(16)$ flavour symmetry satisfying the same equation.  This flavour symmetry gauge fields couples only to fermion bilinears on twistor space.  If we had included $w_i$ dependence in the source equation \eqref{eqn:sourceflavour}, we would have sourced fields that also couple to open string words involving two fermions and a number of scalar fields.

When we apply the Penrose transform, a holomorphic bundle on twistor space satisfying \eqref{eqn:sourceflavour} will become an $\mrm{SL}(16)$ bundle on space-time which satisfies the SDYM equation with a source:
\be
F(A_0)_{\alpha \beta} = 4\pi^2M_{\alpha \beta} \delta_{x = 0} \label{eqn:sdym_source}
\ee
where we expand $M(z) = M_{11} + 2M_{12} z + M_{22} z^2$. 
\begin{lemma}
Up to gauge equivalence, there is a unique solution to equation \eqref{eqn:sdym_source} that extends to $S^4$. 
\end{lemma}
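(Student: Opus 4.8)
The plan is to translate the statement into twistor space and apply the Penrose--Ward transform. Equation~\eqref{eqn:sdym_source} is the space-time shadow of the twistor equation~\eqref{eqn:sourceflavour}, $F^{0,2}(a_0) = 4\pi^2\delta_{v^\da=0}M(z)$, on the twistor space $\PT = \Oo(1)^2 \to \CP^1$ of $\R^4$. Since the twistor space of $S^4$ is $\CP^3$, the origin $x = 0$ corresponds to the zero section $L_0 = \{v^\da = 0\}\subset\PT$, and the line $L_\infty = \CP^3\setminus\PT$ over $x = \infty$ is disjoint from $L_0$. A solution extending to all of $S^4$ therefore corresponds to a holomorphic rank-$16$ bundle $E$ with trivial determinant on $\CP^3\setminus L_0$, holomorphically trivial on every twistor line $L_x$ with $x\neq 0$ (in particular on $L_\infty$), and whose $\dbar$-operator has the prescribed first-order singularity along $L_0$ --- concretely, $\dbar_{a_0} - \dbar$ agrees near $L_0$ with $4\pi^2 M(z)\,\kappa$, where $\kappa$ is the Bochner--Martinelli primitive of $\delta_{L_0}$, i.e.\ $\dbar\kappa = \delta_{L_0}$. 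Gauge transformations of connections on $S^4$ correspond to holomorphic bundle isomorphisms, so it suffices to prove that such an $E$ exists and is unique up to isomorphism.

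For existence one writes down the particular solution $a_0 := 4\pi^2 M(z)\,\kappa$. As $M(z)$ is a global section of $\mf{sl}(16)\otimes\Oo(2)$ and $\kappa$ carries the compensating weight, this is a global $\mf{sl}(16)$-valued $(0,1)$-form on $\CP^3\setminus L_0$, smooth away from $L_0$ with exactly the required singularity there; in particular it is smooth near $L_\infty$, so $(\Oo^{16}, \dbar + a_0)$ is a genuine holomorphic bundle on $\CP^3\setminus L_0$. The one property that must be checked is that its restriction to each line $L_x$, $x\neq 0$, is holomorphically trivial; this is a rank-$16$, degree-$0$ $\dbar$-problem on $L_x\cong\CP^1$ which one solves directly (when $M_{\alpha\beta} = D\xi_\alpha\xi_\beta$ the solution is explicit, and the general case is an identical computation).

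For uniqueness, let $a_0, a_0'$ be two solutions. Their difference $b := a_0' - a_0$ is smooth across $L_0$ --- the singular parts $4\pi^2 M(z)\kappa$ cancel --- and satisfies the Maurer--Cartan equation $\dbar_{a_0} b + b\wedge b = 0$ for the holomorphic structure $\dbar_{a_0}$. Thus $E = (\Oo^{16}, \dbar_{a_0})$ and $E' = (\Oo^{16}, \dbar_{a_0'})$ are holomorphic bundles on $\CP^3\setminus L_0$ with isomorphic restrictions to a deleted neighbourhood of $L_0$ and to every line $L_x$ with $x\neq 0$ (both trivial there, including $L_\infty$). One deduces that $b$ is pure gauge: the class obstructing $b = g^{-1}\dbar_{a_0} g$ lies in $H^1$ of $\CP^3$ with coefficients in the sheaf of endomorphisms of $E$ intertwining the common singular behaviour along $L_0$, and triviality on all lines forces this to vanish. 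The resulting $g$ is regular on $\CP^3\setminus L_0$, descends under the Penrose--Ward transform to a gauge transformation on $S^4\setminus\{0\}$ carrying $A_0$ to $A_0'$, and extends over the origin by Hartogs' theorem. Hence the two solutions are gauge equivalent.

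The main obstacle is the uniqueness step --- the rigidity of the twistor bundle once its singular behaviour along $L_0$ and its triviality on lines are fixed --- where one must control the relevant $H^1$ on $\CP^3$ and ensure that ``extends to $S^4$'' is genuinely imposed at $L_\infty$ (a bundle trivial on the lines $L_x$, $x\in\R^4$, but not over $L_\infty$ would be a non-trivial deformation). Existence, by contrast, is a short explicit verification, essentially already packaged in the Bochner--Martinelli propagator used throughout the paper.
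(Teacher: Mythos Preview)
Your route and the paper's both pass through the twistor space $\CP^3$ of $S^4$, but the paper's proof is a two-line cohomology vanishing rather than an explicit construction. Working directly on $\CP^3$ with the distributional source $4\pi^2M(z)\,\delta_{L_0}$ viewed as an $\mf{sl}(16)$-valued $(0,2)$-current, the obstruction to solving $F^{0,2}(a_0)=\text{source}$ lies in $H^{0,2}(\CP^3)$ and the ambiguity in the solution modulo gauge lies in $H^{0,1}(\CP^3)$; both groups vanish. No explicit primitive, no triviality-on-lines verification, and no Hartogs step are invoked.

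Your existence argument via the Bochner--Martinelli primitive $\kappa$ is more explicit than necessary but correct, and in fact buys you something: since $\kappa$ is a scalar $(0,1)$-form one has $a_0\wedge a_0 = M(z)^2\,\kappa\wedge\kappa = 0$, so your particular $a_0$ solves the full nonlinear equation exactly, not just to leading order.

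Your uniqueness argument, however, has a genuine gap. The claim that ``triviality on all lines forces $H^1$ to vanish'' for the endomorphism sheaf of $E$ is false in general: instanton bundles on $\CP^3$ are trivial on every real twistor line, yet $H^1(\CP^3,\mathrm{End}_0(E))$ is nonzero --- it is precisely the tangent space to the instanton moduli. You correctly identify this step as ``the main obstacle'', but the proposed resolution does not work. The fix is the observation you made and then abandoned: once $b=a_0'-a_0$ is seen to be smooth across $L_0$, it is (order by order in $M$) a globally smooth $\dbar$-closed $\mf{sl}(16)$-valued $(0,1)$-form on all of $\CP^3$, hence lies in $H^{0,1}(\CP^3,\mf{sl}(16))=0$ and is $\dbar$-exact. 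There is no need to pass to $\mathrm{End}(E)$ for a singular $E$, nor to invoke triviality on lines or Hartogs; doing so replaces a vanishing group by one that need not vanish.
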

\begin{proof}
We can check this on the twistor space $\CP^3$ of $S^4$.  There, the equation is of the form
\be
F^{0,2}(a_0) = 4\pi^2\delta_{\CP^1}\,.
\ee
The $\delta$-function on a $\CP^1$ inside $\CP^3$ is naturally a $(2,2)$-form.  To make it a $(0,2)$-form, we need to contract it with a section of the exterior square of the normal bundle to the $\CP^1$, of which there are $3$, because this exterior square is $\Oo(2)$.  

The obstruction to solving an equation like this is in $H^{0,2}(\CP^3)$, which vanishes.  The moduli of solutions (up to gauge equivalence) lie in $H^{0,1}(\CP^3)$, which also vanishes. 
\end{proof}

Note that this bundle is obtained by inserting the operator $\exp(\op{tr}(B_{0,\alpha\beta}(0)M^{\alpha\beta}))$ to the path integral of SDYM theory for the group $\mrm{SL}(16)$.

This leads to the conjecture:
\begin{conjecture}
Consider the $\op{Sp}(K)$ self-dual gauge theory, with massless fermions in $\mrm{F}_K\otimes\C^{16}\oplus\wedge^2_0\mrm{F}_K$, in the presence of the background axion field \eqref{eq:theta} and the background gauge field \eqref{eqn:sdym_source} for flavour symmetry.  Then, collinear singularities in this theory are given by the quotient of the chiral algebra $\mc{A}^{N,K}_\mrm{Defect}$ of the coupled system, by the relation which sets all bulk operators to zero except setting $\gamma\til{\gamma}$ to $M(z)$.   
\end{conjecture}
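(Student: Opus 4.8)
Since the assertion is a conjecture rather than a theorem, the plan is to assemble the chain of reasoning and the explicit checks that would establish it, in exact parallel with Conjecture~\ref{conj:largeN} and its background deformations in subsection~\ref{subsec:backgrounds}. The first step is to pin down the geometric input: giving a VEV $\gamma_f\til\gamma^g = M_f^g(z)$, with $M$ a degree-two polynomial in $z$ valued in $\mf{sl}(16)$ and all other bulk operators set to zero, sources the $\mf{sl}(16)\subset\mf{so}(32)$ holomorphic Chern--Simons field on the Calabi--Yau $5$-fold $X$ with $F^{0,2}(a_0) = 4\pi^2\delta_{v^\da=0}M(z)$, as derived in subsection~\ref{subsec:twistor-sources}. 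Restricting to the zero section $\PT\subset X$ gives a holomorphic $\mrm{SL}(16)$ bundle on twistor space obeying the same equation, and the preceding Lemma guarantees that the Penrose--Ward transform produces a self-dual $\mrm{SL}(16)$ connection $A_0$ on $S^4$, unique up to gauge, with $F(A_0)_{\alpha\beta} = 4\pi^2 M_{\alpha\beta}\delta_{x=0}$. Because $M$ is polynomial only in $z$, this source couples on twistor space solely to fermion bilinears, so on $\R^4$ the deformation is exactly the insertion $\exp(\op{tr}(B_{0,\alpha\beta}(0)M^{\alpha\beta}))$ into the $\mrm{SL}(16)$ SDYM path integral, which is the flavour background of the statement.

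The second step invokes the twisted-holography principle underlying the paper: the large-$N$ open-string algebra $\mc{A}^{N,K}_\Psi = \mc{A}^{N,K}_\mrm{Defect}/I_\Psi$ on the $N$ $D5$ branes with the $D5$--$D5'$ defect, with the ideal $I_\Psi$ imposing a bulk VEV $\Psi$, is dual to the celestial chiral algebra of the $\Sp(K)$ theory on the $D5'$ branes in the self-dual background determined by $\Psi$. For $\Psi$ the trivial vacuum this is Conjecture~\ref{conj:largeN}; here $\Psi$ is the vacuum $\gamma\til\gamma = M(z)$, whose dual background was identified in the previous paragraph, together with the axion flux~\eqref{eq:theta} always present from the backreaction. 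Two structural facts make the quotient legitimate exactly as before: bulk operators have nonsingular (commutative) OPE with everything by Hartogs, so $I_\Psi$ is a genuine ideal; and the identification of ``sourcing a bulk operator'' with ``adding the corresponding flux term to the $10d$ action'' in subsection~\ref{subsec:ideals} goes through verbatim for $M\neq 0$.

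The third and substantive step is the battery of checks. \emph{Spectrum}: setting $\gamma\til\gamma$ to a $z$-polynomial deforms only the BRST differential, not the underlying graded vector space, so the large-$N$ cohomology computation of Appendix~\ref{app:tree-cohomology} is a deformation of the one already done and still yields single-string generators in bijection with the single-particle states of Table~\ref{tab:celestial-ops} --- now those of $\Sp(K)$ SDQCD in the flavour background. \emph{OPEs versus collinear limits}: by Lemma~\ref{lem:Feyn_rule} the OPEs of $\mc{A}^{N,K}_\Psi$ are computed by the same restricted family of Feynman diagrams, with the $\gamma$--$\til\gamma$ contraction now resolved into the background value $M(z)$; one must check these reproduce the collinear splitting of SDQCD dressed by the background perturbiner, which is precisely the origin of the closed-form two- and three-point amplitudes quoted in subsection~\ref{subsec:amplitude-comps}, and these should be re-derived as correlators of the quotient algebra and compared against an independent space-time computation. \emph{Consistency}: the flavour deformation must commute with the axion backreaction (equivalently the $\fsp(N)\to\mf{osp}(2N|N)$ replacement), and associativity of the deformed chiral algebra fixes subleading poles exactly as in subsection~\ref{subsec:axion}.

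The main obstacle is that none of this amounts to a deductive proof. A genuine proof would require either a proof of twisted holography for this intersecting-brane configuration --- available only in closely related setups such as \cite{Ishtiaque:2018str}, where the branes meet along a line rather than a plane --- or an all-loop-orders independent characterization of collinear singularities of $\Sp(K)$ SDQCD in the sourced background against which the chiral-algebra output could be matched term by term. Both are out of reach, so in practice the conjecture is to be \emph{supported}, not proved: by the spectrum match, by the tree- and one-loop OPE computations, and by agreement with the $\mrm{WZW}_4$-type results of \cite{Costello:2022jpg,Costello:2023hmi} specialized to the flavour background.
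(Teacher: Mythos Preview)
Your proposal is correct and follows essentially the same approach as the paper: you correctly recognize that this is a conjecture to be supported rather than proved, identify the source equation and its Penrose transform as the geometric input, invoke the twisted-holography framework of Conjecture~\ref{conj:largeN} with a nontrivial bulk VEV, and outline the same battery of checks (spectrum, leading-order two-point functions against space-time, OPEs versus collinear limits) that the paper carries out in sections~\ref{sec:simple-background-OPEs} and~\ref{sec:flavour-amps}. The only minor imprecision is describing the VEV as ``deforming the BRST differential'' --- it is really a quotient of the commutative bulk algebra that leaves the defect cohomology intact --- but your conclusion that the single-string spectrum is unchanged is correct.
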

We perform explicit checks of this shortly.

\paragraph{Giving a VEV to $\gamma$ bilinears}
We can make a similar conjecture for the other vacua.  If we set
\be
\gamma_f\gamma_g = N_{fg}
\ee
where $N$ is a constant tensor, then we source a background field for the components of the ten-dimensional $\mf{so}(32)$ gauge field which live in $\Omega^{0,1}(X,\wedge^2\C^{16}\otimes\Oo(-2))$.  This satisfies
\be
\dbar a_{0,fg} = 8\pi^2\delta_Y N_{fg}
\ee
using the fact that the $\delta$-function on $Y$ is twisted by $\Oo(-2)$.  Restricting to $\PT$, we find a similar equation.  On space-time, the fields $a_{0,fg}$ become background scalars $\varphi_{fg}$ which couple to the fermions $\psi^f_\da$ on space-time by
\be
\varphi_{fg}\eps^{\db\da}\psi^f_{\da}\psi^g_{\db}\,.
\ee
The source equation is
\be
\triangle\varphi_{fg} = 8\pi^2N_{fg}\delta_{x=0}
\ee
so that 
\be
\varphi_{fg} = - 2N_{fg}\frac{1}{\norm{x}^2}\,.
\ee
Thus, $\phi$ is a background pion-like field with a Yukawa interaction which involves only positive helicity fermions.

A similar (but somewhat more involved) analysis will determine the effect of turning on a VEV for the operator $\til{\gamma}^f\til{\gamma}^g$.  This can be set to an order $4$ polynomial of $z$, which we can encode in a four-index tensor
\be
T^{fg}_{\alpha_1\alpha_2\alpha_3\alpha_4}
\ee
which is symmetric in the spinor indices.

The background introduces a coupling to the fermions of the form
\be
- 2\int_{\R^4} \frac{\d^4x}{\norm{x}^2} T^{fg}_{\alpha_1\alpha_2\alpha_3\alpha_4}\eps_{\db_1\db_3}\p_{x_{\alpha_1\db_1}}\psi_f^{\alpha_2}\p_{x_{\alpha_3\db_3}}\psi_g^{\alpha_4}\,.
\ee
Thus, the background coming from giving $\til\gamma^f\til\gamma^g$ a VEV is not as natural as that coming from $\gamma_f\til{\gamma}^g$ or from $\gamma_f\gamma_g$. 

\paragraph{Giving a VEV to $\p_{w_i}\c$ bilinears and Burns space}
Recall that we can set
\be
\Tr \p_{w_i}\c\p_{w_j}\c = \omega^{ij} 
\ee
where $\omega^{ij}\d w_i\d w_j$ is a $\p$-closed holomorphic $(2,0)$-form on $Y$. This gives rise to a source for the Beltrami differential on $X$ (viewed as a $(4,1)$-form):
\be
\dbar \mu = \delta_{Y} \omega\,. \label{eq:beltramisource}
\ee
Let us take 
\be
\omega = \d w_1 \d w_2 z^2.
\ee
Then, the source equation is precisely the equation that sources the twistor space of the Burns metric \cite{Costello:2023hmi}. We are led to the following conjecture, which we will investigate in more detail shortly: 
\begin{conjecture}
Collinear singularities in scattering amplitudes in the integrable $\op{Sp}(K)$ gauge theory on Burns space are controlled by the quotient of the chiral algebra $\mc{A}^{N,K}_\mrm{Defect}$ of the coupled system, by the relation which sets
\be
\op{Tr} (\p_{w_1}\c \p_{w_2} \c) = z^2\,.
\ee
\end{conjecture}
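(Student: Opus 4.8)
The plan is to establish the conjecture in two stages: first to reduce it to the general holographic statement of Conjecture~\ref{conj:largeN}, now applied in a non-trivial vacuum of the $D5$ system, and then to identify the $4d$ background produced by that vacuum as self-dual gauge theory on Burns space. For the first stage, observe that the quotient $\mc{A}^{N,K}_\mrm{Defect}/I_\Psi$, where $I_\Psi$ is the ideal generated by $\op{Tr}(\p_{w_1}\c\p_{w_2}\c) - z^2$ together with all other generators of $\mc{A}^N_{D5}$, is a well-defined chiral algebra for exactly the reason that $\mc{A}^{N,K}$ is: bulk operators can be placed at a point of a three-complex-dimensional space, so by Hartogs they have trivial OPE with everything and $I_\Psi$ genuinely is a two-sided ideal. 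The holographic reasoning that motivates Conjecture~\ref{conj:largeN} then says that the large-$N$ limit of $\mc{A}^{N,K}_\mrm{Defect}/I_\Psi$ is the celestial chiral algebra of the theory on the $K$ $D5'$ branes sitting in the geometry backreacted by the vacuum $\Psi$.

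For the second stage I would trace the backreaction from $X$ down to $\PT$ and thence to $\R^4$. By Table~\ref{tab:twistor-source}, setting $\op{Tr}\p_{w_i}\c\p_{w_j}\c = \omega^{ij}$ with $\omega = \d w_1\,\d w_2\, z^2$ sources the Beltrami differential on $X$ via $\dbar\mu = \delta_Y\omega$. Since the $D5$ and $D5'$ worldvolumes meet transversely along the zero section $\CP^1 \subset \PT$, restriction to $\PT$ produces a complex-structure deformation $\mu|_\PT \in \Omega^{0,1}(\PT, T\PT)$ obeying $\dbar(\mu|_\PT) = \delta_{\CP^1}\,\omega$; here $z^2$ plays the role of the section of $\wedge^2 N_{\CP^1/\PT} = \Oo(2)$ needed to turn the $(2,2)$-current $\delta_{\CP^1}$ into a tangent-bundle-valued $(0,1)$-form. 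I would then solve this $\dbar$-equation explicitly, e.g.\ with a Bochner--Martinelli-type kernel, feed the resulting deformed twistor space into the nonlinear graviton / Penrose--Ward construction, and read off a self-dual conformal structure on $\R^4$ extending to $S^4$ minus a point. The remaining step is to recognise this as Burns space. One route is to appeal directly to \cite{Costello:2023hmi}, where this same $\dbar$-equation is shown to define the twistor space of the Burns metric, and then match conventions. An independent check is that, after the partial gauge fixing of subsection~\ref{subsec:axion}, this deformation corresponds to shifting the K\"ahler potential $\half\norm{x}^2 \mapsto \half\norm{x}^2 + N\log\norm{x}$, which is precisely the Burns K\"ahler potential; the conceptual content is that turning on $\mu$ promotes the bare axion/flux background of subsection~\ref{subsec:axion} to the full gravitational Burns background.

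Concrete support for the conjecture would come from matching OPEs: one computes the leading corrections to the $2\to 1$ and related OPEs in $\mc{A}^{N,K}_\mrm{Defect}/I_\Psi$ — the analogue of the calculation in subsection~\ref{subsec:axion}, now retaining the full $\mu$-dependence of the BRST differential rather than only the axion vertex — and compares against the two-point gluon amplitude on Burns space from \cite{Costello:2022jpg,Costello:2023hmi} and the $n$-point tree and one-loop gluon recursions quoted in subsection~\ref{subsec:amplitude-comps}.

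I expect the main obstacle to be that the argument is necessarily conditional: it reduces to Conjecture~\ref{conj:largeN}, which is not proved here, so what one really obtains is a reduction together with a battery of checks. Within that reduction, the genuinely delicate technical point is the identification of the sourced $\mu|_\PT$ with the Burns twistor deformation up to diffeomorphism: one must be careful that the source is supported on the correct $\CP^1$, twisted by the correct line bundle, and that the explicit $\dbar$-inverse agrees with the twistor data of \cite{Costello:2023hmi} after a space-time diffeomorphism — which, being self-dual, preserves the conformal class. A secondary subtlety, irrelevant to the statement about collinear singularities but important for correlators, is that the conformal block in this background is ambiguous unless the source $z^2$ is refined to a reference-spinor-dependent one of the form $\xi_\alpha\xi_\beta\xi_\gamma\xi_\eta$.
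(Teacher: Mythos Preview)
This statement is presented in the paper as a \emph{conjecture}, not a theorem; there is no proof in the paper to compare against. The paper's supporting argument is exactly the one you outline: the preceding paragraph derives the source equation $\dbar\mu = \delta_Y\,\omega$ with $\omega = \d w_1\,\d w_2\,z^2$ and observes that this is precisely the equation of \cite{Costello:2023hmi} sourcing the twistor space of the Burns metric, whence the conjecture. Checks are then carried out in later sections (the leading-order two-point function in section~\ref{sec:simple-background-OPEs}, and the all-order two- and three-point computations plus recursion relations in section~\ref{sec:Burns-all}).

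Your proposal is therefore well aligned with the paper, and you correctly diagnose the essential point: the argument is a reduction to Conjecture~\ref{conj:largeN} together with the identification of the backreacted geometry, not an independent proof. One small correction: the ambiguity in conformal blocks that you flag at the end (requiring $H = \xi^4$) is a feature of the \emph{Eguchi--Hanson}-type background sourced by the other closed-string operator $\Tr(\phi_\da\p_z\phi^\da\p_{w_1}\c\p_{w_2}\c + \dots)$, where the source tensor is $W_{\alpha\beta\gamma\eta}$, and also of the Burns-type backgrounds with general $H$; the paper's own uniqueness analysis in subsection~\ref{sec:uni_confblock} shows that for both operators the counter-term ambiguity disappears precisely when $H = \xi^4$, so your remark applies equally here, but it is worth being clear that the genuine Burns case $H = z^2 = \xi^2\nu^2$ with $\ip{\xi\nu}=1$ already suffers from this ambiguity at order $H^3$.
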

It is worth noting that Burns space is conformally equivalent to $\mbb{CP}^2$, so one could also talk about scattering on $\mbb{CP}^2$.  Since Burns space is asymptotically flat, it seems more natural to discuss scattering in that geometry.

There are other possible ways to specialize the operators $\Tr(\p_{w_i}\c\p_{w_j}\c)$ which lead to interesting backgrounds on twistor space. For example, if we set
\be
\Tr(\p_{w_1}\c\p_z\c) = w_1\,,
\ee
we source a Beltrami differential satisfying
\be
\dbar \mu = \delta_{Y} w_1 \d w_1 \d z 
\ee
(again viewing $\mu$ as a $(4,1)$-form on $X$).\footnote{We saw in Sect. \ref{sec:how-to-do-OPEs} that the constant mode in $w_1,w_2$ of the state $\Tr(\p_{w_1}\c\p_z\c)$ is exact. This is compatible with the above since we are assigning a VEV which is linear in $w_1$.} The effect of this Beltrami differential on self-dual theory on $\R^4$  is to give a mass to the fermions $\psi_\da$ which transform in the representation $\wedge^2_0\mrm{F}_K$. More precisely, we get a term
\be
-\frac{1}{4\pi^2}\int_{\R^4}\frac{\d^4x}{\norm{x}^2} \eps^{\db\da}\op{tr}(\psi_\da\psi_\db). 
\ee
\paragraph{Giving a VEV to the stress tensor gives a background axion}
The next background to consider is that sourced by giving a VEV to the stress tensor 
\be
T^i = \Tr\bigg(\frac{1}{2}\phi_\da\p_{w_i}\phi^\da - \b\p_{w_i}\c + \dots\bigg)
\ee
(again using the notation where $i$ runs from $1$ to $3$, corresponding to coordinates $w_1,w_2, w_3 = z$).  We can set $T^i = f^i$ where $f^i \d w_i\,\d^3 w$ is holomorphic $1$-form on $Y$ twisted by $K_Y$.  This gives rise to a source for the field $\eta$ as in equation \eqref{eq:etasource}.

Giving a VEV to the stress tensor will correspond, when we pass to $\R^4$, to turning on a background axion field which is different from the background $-\tfrac{N}{4\pi^2}\log\norm{x}$ sourced by the brane.  For example, let us look at the case when we set
\be
T_z = \Tr\bigg(\frac{1}{2}\phi_\da\p_z\phi^\da - \b\p_z\c + \dots\bigg) = 1.
\ee
In that case, the equation for $\eta$ (which we view as a $(2,\ast)$-form) reads
\be
\dbar\eta = \eps^{\db\da} \p \left( \iota_{\p_{v^{\da} } } \iota_{\p_{v^{\db} } } \d z \delta_{v^\dc=0} \right) \,.
\ee
Manipulations using the Cartan homotopy formula allow us to rewrite this equation as
\be
\dbar\eta = \eps^{\db\da} \mc{L}_{z\p_{v^\da}} \mc{L}_{\p_{v^\db}}  \delta_{v^\dc = 0} \label{eq:etasource2}
\ee
where $\mc{L}$ indicates Lie derivative.  The equation on twistor space which sources the logarithmic axion field is
\be
\dbar\eta = \delta_{v^\dc = 0}\,.
\ee
The vector fields $\p_{v^\da}$ and $z\p_{v^\db}$ correspond to the vector fields $\p_{u^\da}$ and $\p_{{\hat u}^\db}$ on space-time.  This means that equation \eqref{eq:etasource2} sources an axion field on space-time which is
\be
\rho = - \frac{1}{2}\triangle\bigg(\frac{1}{8\pi^2}\log\norm{x}\bigg) = \frac{1}{8\pi^2\norm{x}^2}\,.
\ee

\paragraph{The remaining bulk operator} 
The final bulk operator we consider is 
\be
\frac{\eps^{ijk}}{3!}\Tr\big((3\phi_\da\p_{w_i}\phi^\da + 2\b\p_{w_i}\c)\p_{w_j}\c\p_{w_k}\c\big) \,.
\ee
We can set this operator to a polynomial $F(z)$ of order $4$ in $z$. The (linearized) source equation is
\be
\dbar\mu  = \eps^{\db\da}\p(\iota_{v_{\da}} \iota_{v_{\db}}\delta_Y\d z\d w_1\d w_2 F(z) ) \label{eq:beltramisource2}
\ee
where the Beltrami differential $\mu$ is viewed as a $(4,1)$-form on $X$. 

Viewing the Beltrami differential again as a $(0,1)$-form valued in $TX$, we note that the only terms sourced by this equation are those proportional to $\p_{v^{\da}}$ and $\p_{v^{\db}}$.  The Beltrami differential is naturally divergence free. This means that, under Penrose's non-linear graviton construction \cite{Penrose:1976js}, the Beltrami differential sourced in equation \eqref{eq:beltramisource2} is given by a self-dual solution to the Einstein equations on space-time.

To determine explicitly which solution, let us introduce the twistor uplift of the self-dual GR, described by fields $h \in \Omega^{0,1}(\PT, \Oo(2))$ and $\til{h} \in \Omega^{0,1}(\PT, \Oo(-6))$.  The Lagrangian is \cite{Mason:2007ct}
\be
\frac{\i}{2\pi}\int_\PT \d v^\done \d v^\dtwo \d z\,\til{h}\bigg(\dbar h + \frac{1}{2}\eps^{\db\da}\p_{v^\da} h\p_{v^\db} h\bigg)\,.
\ee
We can write $\mu$ in terms of $h$ by
\be
\mu^{\da} = \eps^{\da\db}\p_{v^\db}h\,.
\ee
Then, the linearized source equation \eqref{eq:beltramisource2}, when written in terms of $h$, is
\be
\dbar h = F(z) \delta_{v^\dc=0}\,.
\ee
The full non-linear source equation is 
\be
\dbar h + \frac{1}{2}\eps^{\db\da}\p_{v^\da}h\p_{v^\db}h = F(z) \delta_{v^\dc=0}\,.
\label{eqn:hsource}
\ee
This is field sourced by the operator
\be
\frac{\i}{2\pi}\int_{\CP^1}\d z\,\til{h} F(z)\,.
\ee
In four-dimensions, this expression becomes an operator in self-dual GR built from the Lagrange multiplier field.

If  $F(z) = z^2$, this is exactly the expression found in \cite{Bittleston:2023bzp} which sources the Eguchi-Hanson metric. There is an important difference, however. The twistor space of Eguchi-Hanson involves a $\Z/2$ orbifold followed by a resolution of singularities.  Here, we do not take any $\Z/2$ quotient, and instead solve the sourced equation with asymptotically Euclidean boundary conditions.  This means that the resulting metric is \emph{singular}: indeed the only non-singular asymptotically Euclidean self-dual Einstein manifold is flat space \cite{Kronheimer:1989zs}.  This analysis does tell us that, away from the origin $x = 0$, the metric we source is isometric to the double-cover of the open subset of Eguchi-Hanson space where an $S^2$ is removed.

It will be helpful to work out the explicit form of the metric to first order. We will do this by gauge-fixing the metric, and writing it as the second derivative of a scalar field, following \cite{Plebanski:1975wn,Siegel:1992wd}.

To do this, let us solve the equation on twistor space, not in terms of $h \in \Omega^{0,1}(\PT, \Oo(2))$, but in terms of a field
\be
\gamma \in \Omega^{0,1}(\PT, \Oo(-2)) 
\ee
related to $h$ by
\be
h = F(z) \gamma\,.
\ee
A simple argument with exact sequences tells us that on-shell, every $h$ is gauge equivalent to one of this form. 

This tells us that every solution $g_{\alpha \da \beta \db}$ of the linearized self-dual Einstein equations can be described in terms of a scalar field $\varphi$.  The expression is
\be
g_{\da\alpha\db\beta} = \eps_{\da\db}\eps_{\alpha\beta} + F_{\alpha\beta}^{~\,~\,\gamma\eta} \p_{x^{\da\gamma}}\p_{x^{\db\eta}}\varphi\,,
\ee
where we have written the section $F(z)$ of $\Oo(4)$ as a symmetric tensor.

\begin{table}
\centering
    \begin{tabular}{c c c}
    \toprule
    \thead{Operator} & \thead{Background field type} & \thead{Field sourced} \\
    \midrule
    $\gamma_f\gamma_g$ & Pion-like field $\varphi_{fg}$ & $\tr\varphi_{fg} = \delta_{x=0}$ \\
    $\gamma_f\til\gamma^g$ & \makecell{Background field $A_0$ \\ for flavour symmetry} & $F(A_0)_- = \delta_{x=0}$ \\
    $\Tr\p_{w_1}\c\p_{w_2}\c$ & Metric $g$ & Burns metric \\
    Stress tensor $T_z = \frac{1}{2}\Tr\big(\phi_\da\p_z\phi^\da + \dots\big)$ & Axion field $\rho$ & $\tr\rho = \delta_{x=0}$ \\
    $\frac{1}{4}\eps^{ijk}\Tr\big(\phi_\da\p_{w_i}\phi^\da\p_{w_j}\c\p_{w_k}\c + \dots\big)$ & Metric $g$ & \makecell{Double cover \\ of Eguchi-Hanson}  \\
    \bottomrule
    \end{tabular}
    \caption{Background fields associated to giving a VEV to various operators}
    \label{tab:space-time-source}
\end{table}

In terms of the section $\gamma \in \Omega^{0,1}(\PT, \Oo(-2))$, the source equation is the simpler equation
\be
\dbar \gamma = \delta_{v^\da = 0} 
\ee
which, in terms of the scalar field $\varphi$, becomes
\be
\tr\varphi = \delta_{x = 0}\,.
\ee
We conclude that, up to some factors of $\pi$, and to leading order in the backreaction, the metric sourced is
\be
g_{\da\alpha\db\beta} = \eps_{\da\db}\eps_{\alpha\beta} - \frac{1}{4\pi^2}F_{\alpha\beta}^{~\,~\,\gamma\eta}\p_{x^{\da\gamma}}\p_{x^{\db\eta}}\bigg(\frac{1}{\norm{x}^2}\bigg)\,. 
\ee
It is not hard to see that (again to leading order in the second term) this expression gives a self-dual Einstein manifold away from $x = 0$. Indeed, the curvature tensor is a second derivative of the second term in $g$, and we have for instance
\be
W_{\alpha\beta\gamma\eta} = - \frac{1}{24\pi^2}F_{\alpha \beta \gamma \eta}\tr^2\bigg(\frac{1}{\norm{x}^2}\bigg) = \frac{1}{6}F_{\alpha\beta\gamma\eta}\triangle\delta_{x=0}
\ee
so that the anti-self-dual Weyl curvature is $F_{\alpha\beta\gamma\eta}$ times the Laplacian of the $\delta$-function at $x=0$.

In this work we will only perform the simplest possible check of the duality on the Eguchi-Hanson double-cover by matching the leading contribution to the two-point amplitude.  For these purposes it's enough to interpret the background as a graviton source, a perspective explained and utilized in Sect. \ref{sec:formfactors}, avoiding the need to perform a detailed analysis of scattering solutions on this singular geometry.  Nevertheless, it would be very interesting to perform a systematic analysis, and ultimately to recover the algebra deformation observed in \cite{Bittleston:2023bzp}.  We have summarized the four-dimensional background fields sourced by various bulk operators in Table \ref{tab:space-time-source}.


\section{Matching the Simplest Planar OPEs with Tree-Level Collinear Limits in Self-Dual Backgrounds} \label{sec:simple-background-OPEs}

We have argued that chiral algebras controlling self-dual gauge theory in certain backgrounds can be obtained as the large $N$ limit of quotients of the chiral algebra $\mc{A}^{N,K}_\mrm{Defect}$.  In this section, we will perform explicit checks of this to leading order. We will verify that tree-level scattering of two states (gluons or matter fields) on space-time, in the appropriate self-dual background, matches with the planar two-point function of the chiral algebra.  In this section, we will only work to leading order in the background fields.  Later, we will focus on particular background fields and show how to work to all orders.

Recall that each bulk operator of spin $-s$ was specified to be some polynomial $F(z)$ of order $2s$ (recalling that all such are of spin $\le 0$). We can also view $F(z)$ as a symmetric tensor of rank $2s$ in undotted spinor indices $F^{\alpha_1 \dots \alpha_{2s}}$. Similarly, the corresponding background field for self-dual gauge theory will not be Lorentz invariant, but will depend on the symmetric tensor $F$. Correlators when the bulk operator is given a VEV, and the scattering amplitudes in the presence of a background field, will depend on $F$.  We need some notation to indicate this dependence.

Suppose we are scattering some number states $1,\dots,n$  (or similarly taking correlators of $n$ operators).  These will be labelled by spinors $\lambda_{i,\alpha}$, $i = 1,\dots, n$, where as usual if we are computing correlators
\be
\lambda_i = (1,z_i)
\ee
and $z_i$ is the insertion point of the operator.  Given $2s$ states $i_1,\dots,i_{2s}$ (which are not necessarily distinct) we set 
\be
\ip{ i_1 \dots i_{2s}\mid F } = \lambda_{i_1,\alpha_{1}} \dots \lambda_{i_{2s},\alpha_{2s}} F^{\alpha_1 \dots \alpha_{2s} }\,.
\ee
In terms of the variables $z_i$, if we normalize $F$ so that it is monic and has roots $u_1,\dots,u_{2s}$, we have
\be
\ip{i_1 \dots i_{2s}\mid F} = \sum_{\sigma \in S_{2s} } \frac{1}{(2s)! } (z_{i_1} - u_{\sigma(1)} ) \dots (z_{i_{2s}} - u_{\sigma(2s)} )\,.
\ee

\paragraph{Background $\mf{sl}(16)$ flavour symmetry}
We have seen that when we set
\be
\gamma_f\til{\gamma}^g = M_f^g(z)
\ee
where $M(z)$ is a polynomial of order $2$, then on space-time we are working in a background $\mf{sl}(16)$ bundle which satisfies
\be
F(A_0)_{\alpha\beta} = 4\pi^2M_{\alpha\beta}\delta_{x=0}\,.
\ee
To leading order in $M$, the unique solution to this equation (up to gauge equivalence) is given by
\be
A_{0,\da\alpha} = - 2M_\alpha^{~\,\beta}\p_{x^{\da\beta}}\bigg(\frac{1}{\norm{x}^2}\bigg)\,.
\ee
To check that this solves the equation, recalling that we are working to leading order in $M$, we note that the field strength of the gauge field is
\bea
&F(A_0)_{\alpha\beta} = \frac{1}{2}\eps^{\db\da}\big(\p_{x^{\da\alpha}}A_{0,\db\beta} - \p_{x^{\db\beta}}A_{0,\da\alpha}\big) \\
&= \eps^{\db\da}\big(M_\alpha^{~\,\gamma}\p_{x^{\da\gamma}}\p_{x^{\db\beta}} +  M_\beta^{~\,\gamma}\p_{x^{\da\gamma}}\p_{x^{\db\alpha}}\big)\frac{1}{\|x\|^2}\,.
\eea
Using the fact that
\be
-\eps^{\db\da}\p_{x^{\da\alpha}}\p_{x^{\db\beta}}\bigg(\frac{1}{2\pi^2\norm{x}^2}\bigg) = \eps_{\alpha\beta}\delta_{x = 0}
\ee
the result holds. 

Now let us compute the leading order correction of this background to the two-point function.  Since we have a background gauge field for flavour symmetry, the scattering of the fermions $\psi^\da_g$, $\psi^{\alpha f}$ charged under this flavour symmetry will be modified.  Suppressing $\fsp(K)$ gauge indices, these fields of course couple to the background gauge field by
\be
\psi^\da_gA^g_{0,\da\alpha f}\psi^{\alpha f}\,.
\ee
To leading order in the matrix $M$, this means we add on the vertex
\be
2\psi^\da_gM^g_{\alpha\beta f}\psi^{\alpha f}\p_{x^\da_{~\,\beta}}\bigg(\frac{1}{\norm{x}^2}\bigg)\,.
\ee
Integrating by parts and using the Dirac equation for $\psi^\da_g$ brings this to
\be
-\frac{2}{\|x\|^2}\psi^\da_gM^g_{\alpha\beta f}\p_{x^\da_{~\,\beta}}\psi^{\alpha f}\,.
\ee
Let us compute the scattering in the presence of this two-point vertex, where $\psi^{\alpha f}$ has momentum $p_1$ and $\psi^\da_g$ has momentum $p_2$. As usual, we will encode each on-shell momenta into a pair of spinors $\lambda_{i,\alpha}$ and $\lt_{i,\da}$, for $i = 1,2$.  The two-component spinor $\lambda_i$ is $(1,z_i)$.  The scattering matrix can be read off the vertex; the factor of $1/2\pi^2\norm{x}^2$ is the Fourier transform of $2/\|p\|^2$ and so contributes $1/p_1 \cdot p_2 = 1/\ip{12}[12]$.  The expression
\be
\psi^\da_gM^g_{\alpha\beta f}\p_{x^\da_{~\,\beta}}\psi^{\alpha f}
\ee
contributes 
\be
[12]\lambda_1^\alpha\lambda_1^\beta M^g_{\alpha\beta f} = [12] \ip{11\mid M^g_f}\,.
\ee
The final answer is
\be
4\pi^2\Omega\frac{\ip{11 \mid M^g_f}}{\ip{12}}\,.
\ee
It's convenient to rewrite this in terms of $z_1,z_2$ instead of $\lambda_1,\lambda_2$. To do this we also rewrite $M^g_f$ as a quadratic polynomial in $z$.  In these terms, the answer is
\be \label{eq:flavour-2pt}
4\pi^2\Omega\frac{M^g_f(z_1)}{z_{12}}\,.
\ee

Let us compare this to the answer we get from the chiral algebra. The chiral algebra states corresponding to $\psi^{\alpha f}$ and $\psi^\da_g$ are  
\be
\M_f(\lt;z) = \gamma_fe^{\phi(\lt;z)}\psi\,,\qquad\til{\M}^g(\lt;z) = \til{\gamma}^g e^{\phi(\lt;z)}\psi\,.
\ee
We are interested in the leading-order OPE, as a function of the matrix $M$.  The leading order OPE will produce one copy of $\gamma_f\til{\gamma}^g$, giving
\be
\M_f(\lt_2;z_2) \til{\M}^g (\lt_1;z_1) \sim \frac{\Omega}{z_{12}}\gamma_f\til{\gamma}^g + \dots
\ee
from contraction of the $\psi$ fields.  The terms in the ellipsis are those involving the $\phi$ fields between two $\gamma$'s, or else terms resulting from Wick contraction of $\phi$ or $\gamma$ fields. These terms are all zero to the order we are working.

We have given a VEV to $\gamma_f\til{\gamma}^g$ in terms of $M$.  This gives the OPE
\be
\M_f(\lt_2;z_2)\til{\M}^g(\lt_1; z_1 ) \sim \Omega \frac{M^g_f(z_1)}{z_{12}} + \dots\,.
\ee
On the right hand side, we could evaluate $M^g_f$ at either $z_1$ or $z_2$, or at some point on the line between them. These choices differ by a regular term, not determined by the OPE.  However, the full correlation function must by invariant under the $\mrm{PSL}(2)$ conformal symmetry of the $z$ plane. Because of $\gamma_f\psi$ has spin $1/2$ whereas $\til{\gamma}^g\psi$ is of spin $-1/2$, the correlation function must use $z_1$ and not $z_2$, giving the expression above. Recalling that we should rescale the states $\M_f\mapsto\M_f/2\pi,\Mt^g\mapsto\Mt^g/2\pi$ in order to match the tree splitting functions, we find agreement with the space-time computation \eqref{eq:flavour-2pt}.

\paragraph{OPEs in the presence of a pion field mass}
We have seen that giving a VEV to the fermion bilinears 
\be
\gamma_f\gamma_g = N_{fg} 
\ee
where $N_{fg}$ is a constant tensor, amounts to introducing a pion-like background field
\be
\varphi_{fg} = - 2N_{fg}\frac{1}{\norm{x}^2} 
\ee
which couples to the fermions by
\be
- 2\int_{\R^4}\d^4x\,\eps^{\db\da}\psi_\da^f\psi_\db^g\frac{N_{fg}}{\norm{x}^2}\,. 
\ee
Following the discussion above, one can easily show that the two-point scattering in the presence of this vertex is
\be
- 4\pi^2\Omega\frac{N_{fg}}{\la12\ra}\,.
\ee
This is also what we find from the chiral algebra. The OPE of $\M_f(\lt_1,z_1)$ with $\M_g(\lt_2,z_2)$ is of the form
\be
\M_f(\lt_1,z_1)\M_g(\lt_2,z_2) \sim - \frac{\Omega}{\ip{12}} \gamma_f\gamma_g + \dots
\ee
coming from the contraction of a single pair of $\psi$ fields. The terms in ellipses do not contribute.  Since we have given a VEV to $\gamma_f\gamma_g$, we find the two-point function
\be
\ip{\M_f(\lt_1,z_1)\M_g(\lt_2,z_2)} = - \Omega\frac{N_{fg}}{\la12\ra}
\ee
matching the space-time computation upon making the replacements $\M_f\mapsto\M_f/2\pi,\Mt^g\mapsto\Mt^g/2\pi$.

We will not present details on the similar computation that is required to match the space-time and chiral algebra OPEs when we give a VEV to $\til{\gamma}^f \til{\gamma}^g$. 

\paragraph{Burns space two-point function and $\p_{w_i}\c$ bilinears}
Next we will check that if we give a VEV to $\Tr\p_{w_1}\c\p_{w_2}\c$, the two-point function in the chiral algebra matches that on Burns spaces.  In this section, we will work to leading order in the deformation away from the flat geometry.  We will discuss the all-order version of this computation in Sect. \ref{sec:Burns-all}. The $4d$ version of this computation was performed in \cite{Costello:2023hmi}. We redo it here for completeness. 

We will work somewhat more generally than just Burns space, and consider setting
\be
\Tr(\p_{w_1}\c\p_{w_2}\c) = H(z)
\ee
where $H(z)$ is a polynomial of order $4$ in $z$, corresponding to a four-index symmetric tensor $H_{\alpha\beta\gamma\delta}$. 

We have seen that when $H(z) = z^2$, this corresponds to deforming to Burns space.  The metric for generic $H$ is related by a coordinate change to the Burns space metric, and is given (to leading order in $H$) by
\be
g_{\da\alpha\db\beta} = \eps_{\da\db}\eps_{\alpha\beta} - \frac{1}{8\pi^2}H_{\alpha\beta}^{~\,~\,\gamma\eta}\p_{x^{\da\gamma}}\p_{x^{\db\eta}}\log\norm{x}\,.  
\ee
This couples to the stress tensor of the four-dimensional gauge theory
\be
2\op{tr}(B_{\alpha\beta}F(A)_{\da\db})
\ee
giving the vertex
\be
\frac{1}{4\pi^2}\int_{\R^4}\d^4x\,H^{\alpha\beta\gamma\eta}\op{tr}\big(B_{\alpha\beta}F(A)^{\da\db}\big)\p_{x^{\da\gamma}}\p_{x^{\db\eta}}\log\norm{x}\,.  
\ee
(There are also terms coupling to the matter field components of the stress tensor, but here we will only analyze the gauge fields.)

Integration by parts, together with the linearized Bianchi identity $\p_{x^{\db\alpha}} F_\da^{~\,\db} = 0$, brings this to the form
\be
\frac{1}{4\pi^2}\int_{\R^4}\d^4x\,H^{\alpha\beta\gamma\eta}\op{tr}\big(\p_{x^{\da\gamma}}\p_{x^{\db\eta}}B_{\alpha\beta}F(A)^{\da\db}\big)\log\norm{x}\,.
\ee
Since $-\tfrac{1}{4\pi^2}\log\norm{x}$ is the Fourier transform of $2/\|p\|^4$, this factor contributes 
\be
\frac{1}{2\ip{12}^2[12]^2}\,.
\ee
The factor
\be
H^{\alpha\beta\gamma\eta}\op{tr}\big(\p_{x^{\da\gamma}}\p_{x^{\db\eta}}B_{\alpha\beta}F(A)^{\da\db}\big)
\ee
contributes
\be
4\Omega_{J(K}\Omega_{L)I}\ip{1^4\mid H}[12]^2 
\ee
so that the two-point function is
\be
2\Omega_{J(K}\Omega_{L)I}\frac{\ip{1^4 \mid H}}{\ip{12}^2}\,.
\ee
Rewriting this in terms of the function $H(z)$ and using $\lambda_1 = (1,z_1)$ we get 
\be
2\Omega_{J(K}\Omega_{L)I}\frac{H(z_1)}{\ip{12}^2}\,.
\ee

From the chiral algebra perspective, it is easy to reproduce this result.  The OPE of $\Jt_{IJ}[0] = \psi_I\p_{w_1}\c\p_{w_2}\c\psi_J$ with $\J_{KL}[0] = \psi_K\psi_L$ which contracts both $\psi$ fields is
\be
\frac{(\Omega_{JK}\Omega_{LI} + \Omega_{JL}\Omega_{KI})}{\ip{12}^2}\Tr(\p_{w_1}\c\p_{w_2}\c\big)(z_1)\,.
\ee
Since we have specified a VEV of $\Tr(\p_{w_1}\c\p_{w_2}\c)(z) = H(z)$, we find the chiral algebra OPE gives
\be
2\Omega_{J(K}\Omega_{L)I}\frac{H(z_1)}{\ip{12}^2} 
\ee
just as in the space-time computation.  As in the case of the background gauge field for flavour symmetry, there is some ambiguity about where $H(z)$ is evaluated, as some different possibilities can differ by regular terms. However, the spin of the fields $\Jt_{IJ}[0]$ and $\J_{KL}[0]$ tell us that in the two-point function, we must evaluate at $z_1$.

\paragraph{Matching OPEs for the stress tensor and background axion}
We have seen that when we give a VEV to the stress tensor
\be
T_z = \Tr\bigg(\frac{1}{2}\phi_\da\p_z\phi^\da - \b\p_z\c + \dots\bigg)
\ee
corresponds to turning on a background axion of the form $\rho = 1/8\pi^2\norm{x}^2$.  We need to match the two-point gluon scattering in the presence of this background axion with the chiral algebra two-point function.

What we will find is that
\be
\J^{(1)}_{IJ}(\lt_1;z_1)\J^{(1)}_{KL}(\lt_2;z_2) \sim 2\Omega_{J(K}\Omega_{L)I}\frac{[12]}{\la12\ra} T_z + \dots
\ee
where the terms in ellipses are other open or closed strong states which do not get a VEV. This will tell us that, when we give a VEV to $T_z$, we get the two-point function
\be \label{eq:axion-derivative-2pt}
\ip{\J^{(1)}(\lt_1;z_1)\J^{(1)}(\lt_2;z_2)} = 2\Omega_{J(K}\Omega_{L)I}\frac{[12]}{\la12\ra}\,.
\ee
This we can then compare with the scattering of gluons in the presence of an axion. 

At first sight, this looks impossible, because the field $\b$ does not appear in any of our open string operators, and so apparently can not appear in the OPEs.

This difficulty is solved by a careful analysis of contributions from the loop correction to the BRST operator.  Let us consider the coefficient of $1/z$ in the planar OPE of $\J_{IJ}[1,0](z_1) = \psi_I \phi_\done \psi_J$ with $\J_{KL}[0,1](z_2) = \psi_K \phi_\dtwo \psi_L$.  The OPEs which involve $\phi-\phi$ contractions do not contribute, because they are either non-planar or they are planar but produce two faces, which does not contribute to first order in the back reaction.

Thus, there are terms with a single $\psi-\psi$ contraction, or with two $\psi-\psi$ contractions.  The term with a single $\psi-\psi$ contraction is 
\be
\frac{1}{z_{12}}\left(\Omega_{JK}\psi_I  \phi_\done \phi_\dtwo\psi_L +  
\Omega_{IK}\psi_J  \phi_\done \phi_\dtwo\psi_L +  \Omega_{JL}\psi_I  \phi_\done \phi_\dtwo\psi_K + \Omega_{IL}\psi_J \phi_\done \phi_\dtwo\psi_K \right)\,. \label{eqn:notsymmetrized}
\ee
The term with two $\psi$ contractions is, at order $1/z_{12}$,
\be
\frac{1}{z_{12}}\left(\Omega_{JK}\Omega_{LI} + \Omega_{JL}\Omega_{KI}\right)\Tr (\p_z\phi_\done \phi_\dtwo)\,. \label{eqn:closedstring}
\ee
We would like to write this order $1/z_{12}$ OPE as a sum of four terms:
\begin{enumerate}
\item A closed string state.
\item An open string state.
\item A BRST exact state.
\item Double-trace terms.
\end{enumerate}
Double-trace terms can then be dropped, as we work in the planar limit.  

The expression \eqref{eqn:notsymmetrized} is not a BRST closed open string state, because the $\phi$'s in the string are not symmetrized.  Similarly, the closed string state \eqref{eqn:closedstring} is not BRST closed.  Indeed, the BRST variation is easily seen to be
\be
\frac{1}{z_{12}}\left(\Omega_{JK}\Omega_{LI} + \Omega_{JL}\Omega_{KI} \right)\Tr([\p_z\c,\phi_\done]\phi_\dtwo)\,.
\ee
The only BRST closed state with the same quantum numbers as \eqref{eqn:closedstring} is the stress tensor.

The BRST variation of \eqref{eqn:notsymmetrized} and \eqref{eqn:closedstring} cancel.  The relevant term in the BRST variation of \eqref{eqn:notsymmetrized} comes from two Wick contractions of the BRST current $\psi_M\c\psi^M/2$ with \eqref{eqn:notsymmetrized}, yielding a closed string operator.  

To determine the decomposition into a sum of an open and closed string operator, we will apply the BRST variation of $\psi_M\b\psi_N$.  We will compute the BRST variation
\be
\frac{1}{2}Q_\mrm{BRST}\left( \Omega_{JK}\psi_I \b \psi_L +  
\Omega_{IK}\psi_J \b \psi_L +  \Omega_{JL}\psi_I \b \psi_K + \Omega_{IL}\psi_J \b \psi_K \right)\,.  \label{eqn:BRSTvariation}
\ee
There are several terms, the simplest is that associated to the BRST variation
\be
Q_\mrm{BRST}\b = [\phi_\done,\phi_\dtwo] + \gamma_f\til\gamma^f\,.
\ee
The terms like $\psi_M\gamma_f\til\gamma^f\psi_N$ can be dropped as they are double-trace and so sub-planar.  There is an additional contribution from the double Wick contraction of the BRST current with $\psi_M\b\psi_N$, so that the BRST variation \eqref{eqn:BRSTvariation} is
\bea
&- \frac{1}{2}\left(\Omega_{JK}\psi_I[\phi_\done,\phi_\dtwo]\psi_L +  
\Omega_{IK}\psi_J [\phi_\done,\phi_\dtwo]\psi_L + \Omega_{JL}\psi_I[\phi_\done,\phi_\dtwo]\psi_K + \Omega_{IL}\psi_J[\phi_\done,\phi_\dtwo]\psi_K\right) \\
&- \left(\Omega_{JK}\Omega_{LI} + \Omega_{JL}\Omega_{KI}\right)\Tr(\b\p_z\c)\,.
\eea
From this we conclude that we can write the order $1/z_{12}$ OPE of $\J[1,0]$ with $\J[0,1]$ as a sum of an open string state, a double-trace state, a BRST exact state, and 
\be
2\Omega_{J(K}\Omega_{L)I}\Tr\left(\p_z\phi_\done \phi_\dtwo - \b\p_z\c\right)\,.
\ee
This is the stress tensor plus $\p_z\Tr(\phi_\done\phi_\dtwo)/2$.  This total derivative does not contribute as we are not giving a VEV to $\Tr(\phi_\done\phi_\dtwo)$.

It's straightforward to compute the amplitude in the corresponding axion background. The relevant vertex is
\be \frac{1}{8\pi^2}\int_{\R^4}\frac{\d^4x}{\|x\|^2}\,\op{tr}(F_{\da\db}F^{\da\db})\,, \ee
and plugging momentum eigenstates into the quadratic piece gives
\be 2\Omega_{J(K}\Omega_{L)I}\frac{[12]}{\la12\ra}\,, \ee
agreeing with equation \eqref{eq:axion-derivative-2pt}.

\paragraph{Eguchi-Hanson and the remaining closed string state}
The final closed string state we will analyze is
\be
\mc{O} = \frac{\eps^{ijk}}{12}\Tr\big( (3\phi_\db\p_{w_i}\phi^\db - 2\b\p_{w_i}\c)\p_{w_j}\c\p_{w_k}\c + \dots\big)  \label{eq:finaloperator}
\ee
(which is written in coordinates $w_1,w_2,w_3 = z$).  Since this operator is of spin $-2$,  we are setting this equal to $F(z)$, a polynomial of order $4$ in $z$. 

As in the discussion of the stress tensor, it seems at first sight that it is impossible to produce this operator from the OPE of two gluon states, because this operator includes $\b$.  However, we can perform an analysis similar to the one used in the case of the stress tensor.  The order $1/z$ pole in the OPE of $\J^{(1)}$ with $\Jt^{(1)}$ produces $\Tr(\phi_\da\p_z\phi^\da\p_{w_1}\c\p_{w_2}\c)$ plus open string states which are not BRST closed.  By using the relations coming from the BRST transformation of $\psi_I\b\p_{w_1}\c\p_{w_2}\c\psi_J$, we can write the order $1/z$ pole as a sum of double-trace terms, BRST closed open string terms, and the operator $\mc{O}$ of \eqref{eq:finaloperator}.

We conclude that
\be
\Jt^{(1)}_{IJ}(\lt_1;z_1)\J^{(1)}_{KL}(\lt_2;z_2)\sim4\Omega_{J(K}\Omega_{L)I}\frac{[12]}{\ip{12}}\mc{O} + \dots 
\ee
where the ellipses indicate terms that are zero when we give a VEV to $\mc{O}$.   
To compute the two-point function, we use the OPE together with conformal invariance to conclude that 
\be
\ip {\Jt^{(1)}(\lt_1;z_1)\J^{(1)}(\lt_2;z_2)} = 4\Omega_{J(K}\Omega_{L)I}\frac{[12]\ip{1^4\mid F}}{\ip{12}}\,.
\ee

Let us check that the same result holds from the space-time perspective. The background metric, as a first order variation of the flat metric, is 
\be
g_{\da\alpha\db\beta} = \eps_{\da\db}\eps_{\alpha\beta} - F_{\alpha\beta}^{~\,~\,\gamma\eta}\p_{x^{\da\gamma}}\p_{x^{\db\eta}}\bigg(\frac{1}{4\pi^2\norm{x}^2}\bigg)\,.
\ee
This is identical to the case of Burns space, except the potential is $- 1/4\pi^2\norm{x}^2$ instead of $- \log\norm{x}/8\pi^2$. This couples to the stress tensor of the four-dimensional gauge theory giving the vertex
\be
\frac{1}{2\pi^2}\int_{\R^4}F^{\alpha\beta\gamma\eta}\op{tr}\big(B_{\alpha\beta}F(A)^{\da\db}\big)\p_{x^{\da\gamma}}\p_{x^{\db\eta}}\bigg(\frac{1}{\|x\|^2}\bigg)\,.
\ee
As in the case of Burns space, we can integrate by parts and use the Bianchi identity to move all derivatives to act on $B$.  Then, $1/2\pi^2\norm{x}^2$ contributes $1/\la12\ra[12]$, and the remaining terms give us $\ip{1^4\mid F}[12]^2$, so the amplitude is
\be 4\Omega_{J(K}\Omega_{L)I}\frac{[12]\ip{1^4\mid F}}{\la12\ra}\,, \ee
matching the chiral algebra computation. 


\section{From Collinear Singularities to Amplitudes in Self-Dual Backgrounds} \label{sec:amplitude-bootstrap}

So far we have described how to compute collinear singularities in certain self-dual backgrounds in terms of the large $N$ chiral algebra.  For the simplest two-point amplitudes in the previous section, we have seen that symmetry considerations imply that the amplitude is determined by its singularities.  In this section, we will investigate more generally under which circumstances we can write down amplitudes, and not just their collinear singularities.

The self-dual backgrounds we are considering have point sources.  Tree-level amplitudes in such backgrounds have ultra-violet singularities of the same kind as loop-level Feynman diagrams.  Indeed, tree-level Feynman diagrams in such backgrounds are simply the same as loop-level Feynman diagrams in a setting where the background field is treated as dynamical, and we insert an appropriate operator to source the background.  

Because of these UV singularities, we would expect there to be renormalization ambiguities in defining these amplitudes, coming from the ambiguity in defining the finite part of any counter-terms needed to remove the UV singularities.  The counter-terms are local operators of the self-dual gauge theory inserted at the location of the source for the background field.   

In this section, we will analyze this question from the point of view of the chiral algebra, and find that from the chiral algebra perspective exactly the same renormalization ambiguities appear.

The analysis of counter-terms in the background field context is a little different in one respect to the ordinary analysis of counter-terms on flat space.  The background we are working on depends on some parameter with non-trivial dimension and space-time indices.  For instance, in the case of a background field for the flavour symmetry, the background field satisfies
\be
F(A_0)_{\alpha\beta f}^g = 4\pi^2M_{\alpha \beta f}^g\delta_{x = 0}
\ee 
where $M_{\alpha \beta f}^g$ is the parameter for the background field.

Since the $\delta$-function has dimension $4$, and $F(A_0)_{\alpha\beta}$ has dimension $2$, the parameter $M_{\alpha\beta f}^g$ has dimension $-2$, and of course has spinor and flavour symmetry indices.  When we study counter-terms in the presence of the background field, at order $k$ in the parameter $M$, the counter-term is an expression like
\be
\mc{O}^{\alpha_1\beta_1\dots\alpha_k\beta_k  g_1 \dots g_k}_{f_1 \dots f_k} M_{\alpha_1\beta_1f_1}^{g_1} \dots M_{\alpha_k\beta_kf_k}^{g_k}
\ee
which is dimensionless, and as indicated is uncharged under rotations and flavour symmetry.  This means in particular that $\mc{O}$ is of dimension $2k$. 

The fact that $\mc{O}$ is of dimension $2k$ at order $k$ in $M$ means that there is much more freedom to choose counter-terms in this context than there is when one studies counter-terms for ordinary renormalizable theories, where counter-terms are typically dimensionless.  Indeed, the problem of constructing counter-terms in the presence of a background field has more in common with the study of non-renormalizable Lagrangians, such as Yang-Mills theory in dimension greater than $4$.

However, with a judicious choice of the background field parameter, we will show that counter-terms can be fixed uniquely by symmetry.  We will see that this happens, for instance, in the case of a background field for flavour symmetry of the form
\be M_{\alpha \beta f}^g = \xi_\alpha\xi_\beta D_f^g \ee
for a left-handed reference spinor $\xi$ and matrix $D$.  


\subsection{Amplitudes and Conformal Blocks}
\label{sec:uni_confblock}
From the chiral algebra perspective, we need to determine ways of giving consistent correlation functions for the chiral algebra deformed by the self-dual background.  In general, a way of giving correlation functions to a chiral algebra compatible with the OPEs is called a \emph{conformal block}. 

At infinite $N$, when we turn off all the background fields and work in the planar limit, the chiral algebra is that describing the collinear singularities in tree-level self-dual gauge theory, with gauge group $\mf{sp}(K)$ and matter as discussed before.  In \cite{Costello:2022wso} it was shown that the conformal blocks for this chiral algebra are precisely the \emph{gauge invariant local operators} of self-dual gauge theory.  Under this identification, correlators in a conformal block are form factors for the corresponding local operator. 

It is important for this analysis that we are working with the derived (or homological) space of conformal blocks \cite{beilinson2004chiral}; this arises as the homology of a natural cochain complex.  As we change the structure constants of the chiral algebra, the size of this cochain complex does not change, but its differential does.  Thus, there is a spectral sequence which computes the conformal blocks of the \emph{deformed, quantized} chiral algebra in terms of the those tree-level undeformed chiral algebra.  

Since there are no local operators for self-dual gauge theory in ghost number $1$ (or $-1$), the cohomology of this spectral sequence in  ghost number $0$ has no corrections.  We conclude that any ambiguity in defining a conformal block for the deformed algebra is given by adding a local operator.

This exactly matches the ambiguity we found by thinking about the question from a purely $4d$ gauge theory perspective.

\paragraph{Uniqueness of conformal blocks for the flavour background}
The flavour background involves identifying
\be
\gamma_f\til{\gamma}^g = M_f^g(z)
\ee
where $M$ is a matrix-valued polynomial of order $2$ in $z$.
From the gauge theory perspective, we are working in a background flavour symmetry gauge field whose field strength fails to be self-dual at the origin, 
$F(A_0)_{\alpha \beta f}^g = 4\pi^2M_{\alpha\beta f}^g\delta_{x^=0}$. 

Amplitudes which have a negative helicity gluon vanish in a flavour background. This is because gluons can only interact with the background field through the fermions, and in the self-dual theory only positive helicity gluons couple to the fermions.

We can see this from the chiral algebra perspective as well.  The flavour background involves giving a VEV to $\gamma \til\gamma$ and setting all other bulk fields to zero. Negative helicity gluon states involve the $\p_{w_i}\c$ field, which can not be Wick contracted away.  OPEs involving a negative helicity gluon, where all $\psi$ fields are contracted, necessarily give bulk operators other than $\gamma\til\gamma$ which are set to zero.

Because of this, it does not make sense to allow counter-terms for the flavour background involving the negative helicity gluon operator $B_{\alpha\beta}$.  This restriction will make it easier to show uniqueness of counter-terms; however, we need some other constraints.

The chiral algebra has a grading which corresponds on space-time to the dimension of operators/states.  This grading survives the deformation of the chiral algebra by the flavour background as long as we give $M_f^g$ dimension $-2$. This is because the flavour background is sourced by the operator $\op{tr}(B_0M)$, where $B_0$ is the anti-self-dual component of the flavour symmetry field strength, which is of dimension $2$.  If we would like the deformed algebra to be graded by dimension, it is necessary that $M$ is of dimension $-2$.

Since $M_{\alpha\beta f}^g$ is the parameter for the deformed algebra, the ambiguity in defining the conformal block will be an $M$-dependent local operator in self-dual gauge theory,
\be
\mc{O}^{\alpha\beta g}_fM_{\alpha\beta f}^g + \dots
\ee
where the ellipses indicate local operators with quadratic and higher dependence on $M$.  As mentioned above, this expression should be dimensionless, meaning that the coefficient of $M^k$ is of dimension $2k$. We further make the natural requirement that the whole expression is invariant under the action of the $\mf{sl}(16)$ flavour symmetry.

Let us see what are the possible operators at each order in the entries of $M_{\alpha\beta f}^g$.  To first order, we need some operator of dimension $2$. However, the lightest gauge invariant operator in self-dual gauge theory is of dimension $3$. 

To second order in $M_{\alpha\beta f}^g$, we need an operator of dimension $4$, with $4$ spinor indices, $2$ pairs of $\mf{sl}(16)$ indices, and which does not involve the negative helicity gluon operator $B_0$.  There are essentially two operators,\footnote{If we take into account the ways to separate the indices of these operators into two groups (like $M_{\alpha\beta f}^gM_{\gamma\eta d}^e$) each of these expressions corresponds to a small number of counter-terms.} which are 
\be \eps^{\beta\alpha}\delta^f_g\psi^\dc_eD_{x^{\dc\gamma}}\psi_{\eta d}\,,\qquad
\eps^{\beta\alpha}\eps^{\eta\gamma}\delta^f_g\delta^d_e\op{tr}(F_{\da\db}F^{\da\db})\,. \ee
(Here $D_{x^{\dc\gamma}} = \p_{x^{\dc\gamma}} + A_{\dc\gamma}$ is the gauge covariant derivative.) On-shell both of these expressions are total derivatives. The second term is the integrand for the second Chern class, and the first term can be seen to be a total derivative using the Dirac equation for $\psi^\dc$.  This means that if we are to integrate over the position of the source for the background gauge field, then this counter-term ambiguity will not contribute.  

At higher order in $M_{\alpha\beta f}^g$, there are more complicated expressions like this. However, all of them involve a contraction of at least one pair of undotted spinor indices with an $\eps$-tensor. Indeed, at order $k$ in $M$, a potential counter-term which involved no $\eps_{\alpha\beta}$ tensors would be of dimension $2k$ and have $2k$ undotted spinor indices.  The operator $B_{\alpha\beta}$ has two spinor indices and dimension $2$, $\psi_\alpha$ has dimension $3/2$ and one spinor index, and $D_{x^{\da\alpha}}$ has dimension $1$ and one spinor index.  All other terms in an operator have no undotted spinor indices and dimension at least $2$.  From this we see that the only way to build an operator of dimension $2k$ with $2k$ undotted spinor indices is to have $l$ copies of $B_{\alpha\beta}$ and $2(k-l)$ derivatives.  However, we have already seen that counter-terms involving $B$ are irrelevant.

We can restrict to operators without $\eps_{\alpha\beta}$ tensors by making a further specialization on $M_{\alpha\beta f}^g$.  We can ask that there exists a left-handed reference spinor $\xi_\alpha$ and a matrix $D_f^g$ such that 
\be
M_{\alpha\beta f}^g = \xi_\alpha\xi_\beta D_f^g\,.
\ee
Generically, we can take $D_f^g$ to be in a copy of $\mf{gl}(1) \subset \mf{sl}(16)$.  Viewing $M$ as a function of $z$, this constraint means $M(z) = f(z)^2 D$ where $f(z)$ is an order $1$ polynomial in $z$.  With this restriction on the form of the matrix $M$, there are no possible counter-terms. 

\paragraph{Uniqueness of conformal blocks for $\Tr\p_{w_1}\c\p_{w_2}\c$ and for $\Tr(\phi_\da\p_z\phi^\da\p_{w_1}\c\p_{w_2}\c + \dots)$}
Next, let us analyze the possible counter-terms that can arise for the two gravitational operators. The operator $\Tr\p_{w_1}\c\p_{w_2}\c$ can be set to a four-index tensor $H_{\alpha_1 \dots \alpha_4}$, which is of dimension $2$.  The operator $\Tr(\phi_\da\p_z\phi^\da\p_{w_1}\c\p_{w_2}\c + \dots)$ can be set to a four-index tensor $H_{\alpha_1 \dots \alpha_4}$ of dimension $4$.

At tree-level, since these operators both couple to the stress tensor, they involve one copy of $B$.  This implies that counter-terms when we give a VEV to these operators can involve at most one $B$ for each $H$. 

If we make the same assumption as before, that 
\be
H_{\alpha\beta\gamma\eta} = \xi_\alpha\xi_\beta\xi_\gamma\xi_\eta
\ee
for some spinor $\xi$, then it is not hard to show that conformal blocks are unique.  Indeed, in the case of the operator $\Tr\p_{w_1}\c\p_{w_2}\c$, the ambiguity in defining the conformal block at order $n$ in $H$ is a local operator in the $4d$ gauge theory of dimension $2n$ which is totally symmetric in $4n$ undotted spinor indices.  There are no such operators.

Similarly, for the operator $\Tr(\phi_\da\p_z\phi^\da\p_{w_1}\c\p_{w_2}\c + \dots)$, the ambiguity will be a local operator in the $4d$ gauge theory of dimension $4n$ which is totally symmetric in $4n$ undotted spinor indices, and with less than $n$ copies of $B$.  Again, there are no such operators.

Without this assumption on the form of $H$, there are ambiguities in defining conformal blocks. Conformal blocks do exist, but the choice of conformal block is the choice of a scheme for renormalizing in the presence of the background fields.

For example, for the Burns metric we give a VEV to $\Tr\p_{w_1}\c\p_{w_2}\c$ associated to $H = \xi^2\nu^2$, for spinors $\xi,\nu$.  The conformal block for the Burns metric, and other metrics with generic choices of $H$, are ambiguous.  Ambiguities in defining the conformal block to order $n$ in $H$ correspond to operators of dimension $2n$ with all dotted spinor indices contracted, and undotted spinor indices contracted with $n$ copies of $H$.  An example of such an operator with $n = 3$ is
\be
\op{tr}\big(F^{\da\db}D_{x^{\da\alpha}}D_{x^{\db\beta}}B_{\gamma\eta}\big)  \ip{H\mid H} H^{\alpha\beta\gamma\eta}\,. 
\ee
The two-point form factor of this operator is
\be
\ip{\Jt^{(2)}(\lt_1;z_1)\J^{(2)}(\lt_2;z_2)} \propto [12]^2\la1^4\mid H\ra\ip{H\mid H}.
\ee
This tells us that the correlators of the chiral algebra when we give a generic VEV to $\Tr\p_{w_1}\c\p_{w_2}\c$ have an ambiguity at cubic order in $H$ of this form.

For the particular VEV $H = \nu^2\xi^2$, corresponding to Burns space, it's curious to compare this perspective with that of \cite{Costello:2022jpg,Costello:2023hmi}.  There the authors also observed ambiguities in conformal blocks which they identified with the freedom to insert a local operator at the apparently singular origin of the backreacted geometry.  They also found a distinguished conformal block in which the manifest $\mrm{SU}(2)\times\mrm{U}(1)$ symmetry was enhanced to $\mrm{SO}(4)$, corresponding to completing the backreacted geometry to Burns by adjoining an exceptional $\CP^1$ at the origin. It'd be to interesting to leverage their results to fix the ambiguities in the Burns background $H = \nu^2\xi^2$ in self-dual gauge theory.


\section{Amplitudes in Burns-Type Backgrounds} \label{sec:Burns-all}

In this section, we analyze amplitudes in the background with VEV
\be
\Tr(\p_{w_1}\c\p_{w_2}\c) = H(z)\,. \label{eqn:burnstype_metric} 
\ee
We have seen that the special case when $H(z) = z^2$ corresponds to the Burns space metric.  In the spinor language, this corresponds to the case when 
\be
H_{\alpha\beta\gamma\eta} = \frac{1}{6}\left( \xi_{\alpha} \xi_{\beta} \nu_{\gamma} \nu_{\eta} + \text{permutations} \right)
\ee
where the permutations make the right hand side symmetric in the four indices.  We will write this as $H = \xi^2\nu^2$, where $\xi,\nu$ are spinors.

We have also seen that beyond leading order in $H$, there is not in general a unique conformal block for the algebra, so that correlation functions are not well defined.  This issue does not arise in the case when $H = \xi^4$. Only in this case are we able to compute amplitudes without any ambiguity.

The background metric corresponding to the VEV \eqref{eqn:burnstype_metric} is to leading order in $H$
\be
g_{\da\alpha\db\beta} = \eps_{\da\db}\eps_{\alpha\beta} - \frac{1}{8\pi^2}H_{\alpha \beta}^{~\,~\,\gamma\eta}\p_{x^{\da\gamma}}\p_{x^{\db\eta}}\log\norm{x}\,. \label{eqn:burnsmetric}
\ee
In the case when $H = \xi^2\nu^2$, this expression is known to be exact \cite{burns1986twistors} to all orders in $H$.  It follows that it remains exact when we take the limit $\nu\to\xi$.  


\subsection{The Two-Point Amplitude}

Let's begin by focusing on the two-point function calculation, since in \cite{Costello:2023hmi} it was shown that a chiral algebra computation agrees with the direct computations from space-time (the latter was first computed by Hawking et al. \cite{Hawking:1979hw}).  Our goal is to reproduce this calculation in our context.

The result of \cite{Costello:2023hmi} concerned, not self-dual gauge theory, but the $\mrm{WZW}_4$ model. This is a kind of gauge-fixed version of self-dual gauge theory which only has positive helicity fields $\J(\lt;z)$, corresponding to a single scalar field.  To match with the context of this paper, we need to modify the space-time calculation to deal with gauge fields and not a scalar field.

The second change we have to make compared to \cite{Costello:2023hmi} is that we are dealing here with a degeneration of the Burns space metric, which has $H(z) = z^2$.  This will modify the space-time computation of the two-point function, but in a fairly simple way.

Let us describe the space-time two-point function, as computed in \cite{Hawking:1979hw}.  We will write the two-point function, as computed on space-time, in terms of the scalar fields $\varphi(\lt;z)$, normalized so that in the flat space limit the state $\varphi$ becomes a plane wave. (This differs by a factor of $z$ from the normalization used in \cite{Costello:2023hmi}).  

The two-point function was found to be
\be
\ip{\varphi(\lt_1;z_1)\varphi(\lt_2;z_2)} = \frac{Nz_1 z_2}{z_{12}^2}J_0\left(\sqrt{\frac{4Nz_1z_2[12]}{z_{12}}}\right)
\ee
where $J_0$ is a Bessel function of the first kind and $N$ is the number of $D1$ branes.  Let us rewrite this when we view $H(z) = z^2$ as determined by a pair of spinors $\xi,\nu$.  In that language, the two-point function is
\be
\ip{\phi(\lt_1;z_1)\phi(\lt_2;z_2)} = \frac{N\ip{1\xi}\ip{1\nu}\ip{2\xi}\ip{2\nu}}{\la12\ra^2}J_0\left(\sqrt{\frac{4N\ip{1\xi}\ip{1 \nu} \ip{2\xi} \ip{2\nu}[12]}{\la 12 \ra}}\right)\,.
\ee

We would like to translate this into the two-point function for self-dual gauge theory. This was previously done by \cite{warner1982scattering}, but we will rederive it directly from the scalar field case. Since we are only considering tree-level two-point amplitudes, we may as well consider Abelian gauge theory.  In the gauge theory case, the cleanest interpretation of the two-point scattering amplitude is as an infinite sum of Feynman diagrams, where each diagram corresponds to a gluon in the presence of $n$ identical (self-dual conformal) gravitons, as depicted in Fig. \ref{fig:gluonscattering}.\footnote{The anti-self-dual Weyl curvature of these gravitons is a $\delta$-function supported at the origin $W_{\alpha\beta\gamma\delta} = \tfrac{1}{6}H_{\alpha\beta\gamma\delta}\delta_{x=0}$; however, they have non-vanishing Ricci curvature and self-dual Weyl curvature away from the origin.} This interpretation also applies to the scalar field computation.

\begin{figure}[ht]
    \centering
        \includegraphics{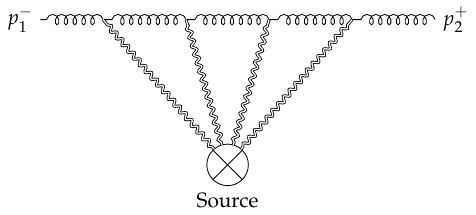}
    \caption{The scattering of gluons in the presence of a graviton with a point source. \label{fig:gluonscattering}}
\end{figure}

On flat space, we can gauge-fix SDYM theory and write it in terms of two scalars $\varphi,\til\varphi$.  We have
\be
B_{\alpha\beta} = \frac{2\xi_{(\alpha}\nu_{\beta)}}{\la\xi\nu\ra}\til\varphi\,,\qquad A_{\da\alpha} = \frac{2\xi_{(\alpha}\nu_{\beta)}}{\la\xi\nu\ra}\eps^{\gamma\beta}\p_{x^{\da\gamma}}\varphi\,.
\ee
The Abelian SDYM action becomes that of a free scalar $\til\phi\tr_g\phi$ for a background metric of the form
\be
g_{\da\alpha\db\beta} = \eps_{\da\db}\eps_{\alpha\beta} - \frac{1}{8\pi^2}H_{\alpha\beta}^{~\,~\,\gamma\eta}\p_{x^{\da\gamma}}\p_{x^{\db\eta}}\log\|x\|\,.
\ee
The metric here coincides with that considered in \cite{Costello:2023hmi} if we take $H = \xi^2\nu^2$ and make the replacement $N\to-1/8\pi^2$.

The coupling to the gauge-fixed SDYM and to the pair of scalars $\til\varphi,\varphi$ is then almost the same, it differs by a factor of
\be
\frac{\la1\xi\ra\la1\nu\ra}{\la2\xi\ra\la2\nu\ra}\,.
\ee
Let us consider how this factor changes the scalar field answer.  At the $n$\textsuperscript{th} order in the expansion, we have a single Feynman diagram where $n$ gravitons are connected to a gluon line, at vertices labelled $1,\dots,n$.  The momentum between the $i$\textsuperscript{th} and $(i+1)$\textsuperscript{th} vertex is $q_i$, so that the external momenta are $q_0=p_1$ and $q_n=p_2$, where $p_1$ is taken to be negative helicity and $p_n$ positive helicity. At the $i$\textsuperscript{th} vertex, we have a negative helicity gluon $(i-1)^-$ and a positive helicity gluon $i^+$. Compared to the computation in the scalar field, the factor at the $i$\textsuperscript{th} vertex is changed by $\la(i-1)\ra\la(i-1)\nu\ra/\la i\xi\ra\la i\nu\ra$.  These factors cancel out between adjacent vertices, except for a contribution of $\la1\xi\ra\la1\ra$ from the initial vertex and $1/\la2\xi\ra\la2\nu\ra$ from the final vertex.

We conclude that the scattering between a positive helicity and negative helicity gluon on Burns space is
\be \label{eq:gaugetheory-Besselfn}
\mc{A}(1^-,2^+) = \frac{\ip{1\xi}^2\ip{1\nu}^2}{2\la 12\ra^2}J_0\left(\sqrt{- \frac{\ip{1\xi}\ip{1\nu}\ip{2\xi}\ip{2\nu}[12]}{2\pi^2\la12\ra}}\right)\,.
\ee
Here the amplitude has be rescaled by a factor of $-1/4\pi^2$ to compensate for the unconventional normalization of the $\mrm{WZW}_4$ kinetic term.  We are interested in the limit $\nu\to\xi$.  In this limit \footnote{There is a slight subtlety here: the Burns space computation assumes that $\ip{\xi\nu} = 1$, so it could happen that equation \eqref{eq:gaugetheory-Besselfn} secretly includes various powers of $\ip{\xi\nu}$.  However, if one counts the powers of $H$ that appear -- one at each order in the backreaction -- one sees that this is not possible.} we get 
\be
\mc{A}(1^-,2^+) = \ip{\Jt(\lt_1;z_1)\J(\lt_2;z_2)} = \frac{\ip{1\xi}^4}{2\la 12\ra^2}J_0\left(\ip{1\xi}\ip{2\xi}\sqrt{-\frac{[12]}{2\pi^2\la12\ra}}\right)\,.
\ee

We would like to compare this with the chiral algebra computation.  The Feynman diagrams for the chiral algebra computation are indicated in Fig. \ref{fig:Burns2pt}.  In this figure, the bottom and top lines correspond to the states whose two-point functions we are computing. 

\begin{figure}[ht]
    \centering
        \includegraphics{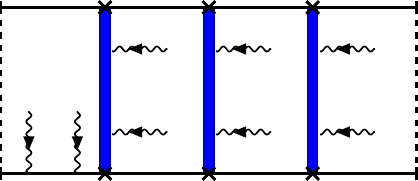}
    \caption{The two-point function on Burns space.  Here, the bottom and top lines indicate the states $\Jt^{(3)}(\lt_1;z_1)$ and $\J^{(3)}(\lt_2,z_2)$ respectively. \label{fig:Burns2pt}}
\end{figure}

Consider a diagram as in Fig. \ref{fig:Burns2pt}, where we are considering the two-point function of $\Jt^{(n)}(\lt_1;z_1)$ with $\J^{(n)}(\lt_2;z_2)$. (The diagram illustrated is the case $n = 3$.)  The two $\psi$ propagators contribute a factor of $1/\ip{12}$ and the $n$ bulk propagators contribute $- [12]/8\pi^2\la12\ra$.  So there is an overall factor of
\be
\bigg(\frac{1}{8\pi^2}\bigg)^n\frac{[12]^n}{\la12\ra^{n+2}}\,.
\ee
The number of copies of $H$ coincides with the number of faces.  The ambiguity in defining the chiral algebra correlator is the ambiguity of how one contracts the copies of $H$ with the spinors $\lambda_1,\lambda_2$ and with itself.  In general, the numerator will be an expression like
\be
\la1^4\mid H\ra^r \la1^22^2\mid H\ra^s \la H \mid H\ra^t \dots 
\ee
consisting of various possible ways of contracting the spinor indices.  If we take $H = \xi^4$ to be built from a single spinor, the numerator is fixed by the weights to be
\be
\ip{1\xi}^{2n + 4}\ip{2\xi}^{2n}\,.
\ee
The overall coefficient of the numerator is fixed by the operator product expansion to be $1$.

If we recall that the generating function $\Jt(\lt_1;z_1)$, $\J(\lt_2;z_2)$ are series in $\Jt[k,l](z_1)$ and $\J[k,l](z_2)$ with coefficients $1/k!l!$, we see that the two-point function computed by the chiral algebra is
\be\label{eq:Burns_2pt}
\frac{1}{2}\sum_{n\geq0}\frac{1}{(n!)^2}\frac{\ip{1\xi}^{2n+4}\ip{2\xi}^{2n}[12]^n}{(8\pi^2)^n\la12\ra^{n+2}}\,.
\ee
This is exactly the Bessel function that we found when computing this amplitude from the space-time perspective, as desired. (The overall factor of $1/2$ arises from the normalization of the invariant bilinear on $\fsp(K)$, which we've stripped.)


\subsection{The Three-Point Amplitude}

It is rather easy to generalize this calculation to compute the scattering of two positive helicity and one negative helicity gluon, as in Fig. \ref{fig:Burns3pt}.  The Feynman diagrams that contribute to the chiral algebra computation are illustrated in Fig. \ref{fig:3gluonscattering}.

\begin{figure}[ht]
    \centering
        \includegraphics{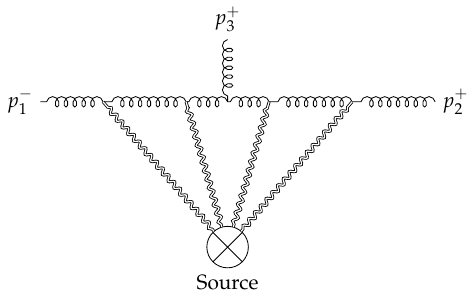}
    \caption{Two positive helicity and one negative helicity gluon scattering in the presence of a graviton with a point source. Although not depicted in this Feynman diagram, the source can also couple to the $3^+$ leg. \label{fig:3gluonscattering}}
\end{figure}

Let us compute the amplitudes of these diagrams where the first gluon is of negative helicity, and the second two of positive helicity. In any such diagram there are $k_{ij}$ $\phi-\phi$ propagators connecting the $i$\textsuperscript{th} to the $j$\textsuperscript{th} edge, for $i,j = 1,2,3$, $i \neq j$.  Any such edge contributes  $[12]/8\pi^2\ip{12}$.  The three $\psi-\psi$ propagators contribute the Parke-Taylor factor $1/\ip{12}\ip{23}\ip{31}$.  The fact that the states $\J^{(m)}(\lt;z)$ come with a factor of $1/m!$ means that the Wick contraction of all the edges contributes
\be
\frac{[12]^{k_{12}}[13]^{k_{13}}[23]^{k_{23}}} {(8\pi^2)^{k_{12}+k_{13}+k_{23}}\ip{12}^{k_{12} + 1}\ip{13}^{k_{13} + 1}\ip{23}^{k_{23} + 1}(k_{12}+k_{23})!(k_{23}+k_{13})!(k_{13} + k_{12})!}\,.
\ee
As above we assume that the Burns-type metric is associated to a tensor $H_{\alpha\beta\gamma\eta}$ of the form $H = \xi^4$ for a spinor $\xi$.  The amplitude is accompanied by an overall factor of
\be
\ip{1\xi}^{2k_{12} + 2k_{13} + 4} \ip{2\xi}^{2k_{12} + 2k_{23}}  \ip{3\xi}^{2k_{13} + 2k_{23}} 
\ee
where the exponents are determined by $\mrm{PSL}(2)$ symmetry.

\begin{figure}[ht]
    \centering
        \includegraphics{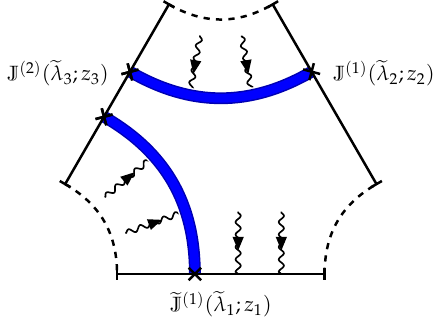}
    \caption{A simple chiral algebra Feynman diagram corresponding to scattering two positive helicity and one negative helicity gluon in the Burns-type background.  We are working at tree-level, so only planar diagrams contribute.  Each face of the diagram corresponds to a background graviton.}
    \label{fig:Burns3pt}
\end{figure}

Summing over all diagrams we find the total amplitude is
\bea\label{eq:Burns_3pt}
&\mc{A}_\mrm{tree}(1^-,2^+,3^+) = \frac{1}{2}\sum_{k_{12}, k_{13}, k_{23} \ge 0} \frac{\ip{1\xi}^{2k_{12} + 2k_{13} + 4}\ip{2\xi}^{2k_{12} + 2k_{23}}\ip{3\xi}^{2 k_{13} + 2 k_{23}}}{(8\pi^2)^{k_{12}+k_{13}+k_{23}}} \\ &\frac{[12]^{k_{12}} [13]^{k_{13}} [23]^{k_{23}}} {\ip{12}^{k_{12} + 1} \ip{13}^{k_{13} + 1} \ip{23}^{k_{23} + 1}(k_{12}+k_{23})!(k_{23}+k_{13})!(k_{13} + k_{12})!}\,.
\eea


\subsection{Differential Equations for Tree-Level \texorpdfstring{$n$}{n}-Point Amplitudes} \label{subsec:laplacetransform}

In this section, we will determine the tree-level amplitudes on the Burns-type geometry recursively, by deriving a differential equation for their Laplace transform.  The Burns-type geometry we use is that associated to the fourth power of a single spinor $\xi$.  

Consider the Burns-type amplitudes at tree-level,
\be
\mc{A}_\text{tree}(1^-,2^+, \dots,n^+;\xi)\,.
\ee
The Burns-type metric is encoded in a spinor $\xi$, so this amplitude is a function of $\xi$ and of the momenta.  

The differential equation is expressed, not in terms of the amplitude itself, but in terms of its Laplace transform
\be
\mc{L}\mc{A}_\text{tree}(1^-,2^+, \dots, n^+;\xi) = \int_{t_i=0}^\infty \d t_1 \dots \d t_n\,e^{-\sum t_i }\prod_{i = 1}^n t_i^{\lt_i \dpa{\lt_i}}   \mc{A}_\text{tree}(1^-,2^+,\dots,n^+;\xi)\,.
\ee 
The Laplace transform acts on the amplitude in the following way.  It replaces any occurrence of
\be
\prod_{i < j} [ij]^{a_{ij} }
\ee
by
\be
\prod_{i < j} [ij]^{a_{ij} } \prod_{k = 1}^n\Big(\sum_{l \neq k} a_{lk}\Big)!\,.
\ee
Amplitudes in Burns space become much simpler after the Laplace transform. For example, the two-point amplitude becomes
\be
\frac{\ip{1\xi}^4}{2\ip{12}^2}\left(1 - \frac{\ip{1\xi}^2\ip{2\xi}^2[12]}{8\pi^2\ip{12}}\right)^{-1}\,.
\ee
The three-point amplitude then admits a closed expression
\be \frac{\la1\xi\ra^4}{2\la12\ra\la23\ra\la13\ra}\prod_{1\leq i<j\leq3}\left(1 - \frac{\la i\xi\ra^2\la j\xi\ra^2[ij]}{8\pi^2\la ij\ra}\right)^{-1}\,. \ee

We can also view the Laplace transform as acting on the states we scatter (or the chiral algebra elements whose correlators we are taking) by replacing the exponential state
\be
\J_{IJ}(\lt;z) = \psi_I e^{\phi^\da\lt_\da} \psi_J 
\ee
by the Laplace transformed state
\be
\mc{L}\J_{IJ}(\lt;z) = \psi_I\bigg(\frac{1}{1 - \phi^\da\lt_\da}\bigg)\psi_J\,.
\ee
Similarly, for states of negative helicity, we have
\be
\mc{L}\Jt_{IJ}(\lt;z) = \psi_I\p_{w_1}\c\p_{w_2}\c\bigg(\frac{1}{1 - \phi^\da\lt_\da}\bigg)\psi_J\,.
\ee
We can apply the Penrose transform to write these Laplace transformed states in four-dimensional language. For a scalar field, the Laplace transform replaces a plane wave state $e^{\i p\cdot x}$ for a scalar field by $1/(1 - \i p\cdot x)$.  

For gluon states, the Laplace transform is slightly more complicated. A positive helicity gluon, expressed in terms of spinors $\lt$, $\lambda$, has field strength 
\be
 F = \frac{1}{2}\eps_{\alpha\beta}\d x^{\da\alpha}\,\d x^{\db\beta}\lt_\da\lt_\db e^{x^{\dc\gamma}\lt_\dc\lambda_\gamma}\,.
 \ee
 This positive helicity state is sent to is sent to
\be
\mc{L}F = \frac{1}{2}\eps_{\alpha\beta}\d x^{\da\alpha}\,\d x^{\db\beta}\lt_\da\lt_\db \frac{1}{(1 - x^{\dc\gamma}\lt_\dc\lambda_\gamma)^3}\,.
\ee

For negative helicity states, a plane wave is
\be
B = \frac{1}{2}\eps_{\da\db}\d x^{\da\alpha}\,\d x^{\db\beta}\lambda_\alpha\lambda_\beta e^{x^{\dc\gamma}\lt_\dc\lambda_\gamma}\,.
\ee

The Laplace transform sends this to the state
\be
\mc{L} B = \frac{1}{2}\eps_{\da\db}\d x^{\da\alpha}\,\d x^{\db\beta}\lambda_\alpha\lambda_\beta \frac{1}{(1 - x^{\dc\gamma}\lt_\dc\lambda_\gamma)}\,.
\ee

\paragraph{Differential equations for the Laplace transformed amplitude}
It is convenient to normalize the Laplace transformed amplitude slightly differently, and let
\be
\widehat{\mc{L}\mc{A}}_\text{tree}(1,\dots,n;\xi) = \frac{1}{\ip{1 \xi}^4}\mc{L}\mc{A}_\text{tree}(1^-,2^+\dots,n^+;\xi)\,. 
\ee
Graphically, $\widehat{\mc{L}\mc{A}}_\text{tree}$ is given by a sum over planar diagrams. The planar diagram can be viewed as a polygon with $n$ sides and some number of chords (that is, a $\phi-\phi$ propagator) stretched between the sides.  Each $\phi-\phi$ propagator contributes $\la i\xi\ra^2\la j \xi\ra^2/8\pi^2\la ij\ra$, and there is an overall Parke-Taylor factor $1/\ip{12} \dots \ip{n1}$.  

With the normalization we are using, $\widehat{\mc{L}\mc{A}}_\text{tree}$ is cyclically symmetric in the $n$ momenta, so we will no longer distinguish between positive and negative helicity states.  

We let 
\be
\mc{D}^{(i)} = \lt_i^\da \dpa{\lt_i^\da} 
\ee
be the dilatation operator on the $i$\textsuperscript{th} state.  

Then,
\be
\mc{D}^{(i)}\widehat{\mc{L}\mc{A}}_\text{tree}(1,\dots,n;\xi)  
\ee
is a sum over such planar diagrams, where we have also selected a $\phi-\phi$ starting at the $i$\textsuperscript{th} state.  By considering the effect of cutting along the diagram along this propagator, we find a differential equation for $\widehat{\mc{L}\mc{A}}_\text{tree}$.  

A $\phi-\phi$ propagator starting at the $i$\textsuperscript{th} side must end at some side $j\neq i$.  If we cut the planar diagram along this propagator, we turn the $\phi-\phi$ propagator into two $\psi-\psi$ propagators.  This gives us a disconnected diagram, leading to the differential equation
\bea \label{eq:diff_Burns_tree}
&\mc{D}^{(i)}\widehat{\mc{L}\mc{A}}_\text{tree}(1,\dots,n; \xi) \\
&= \sum_{j\neq i}\frac{[ij]\ip{ij}\ip{i\xi}^2\ip{j\xi}^2}{4\pi^2}\widehat{\mc{L}\mc{A}}_\text{tree}(i,i+1,\dots,j-1,j;\xi)\widehat{\mc{L}\mc{A}}_\text{tree}(j,j+1,\dots,i-1,i;\xi)\,.
\eea 
The factors on the right hand side arise because each the two $\psi-\psi$ propagator contributes a $1/\ip{ij}^2$, whereas the single $\phi-\phi$ propagator contributes 
\be
c_{ij} := \frac{[ij]\ip{i \xi}^2\ip{j \xi}^2}{8\pi^2\ip{ij}}\,.
\ee
The factor of $2$ again arises from our normalization of the invariant bilinear. We solve this differential equation in Appendix \ref{app:rec_Burns}, which leads to a recursion relation, expressing $n$-point tree level amplitudes in terms of $2, 3, \dots, n-1$-point tree level amplitudes:
\be
	\widehat{\mc{L}\mc{A}}_\text{tree}(1,\dots,n) = \left(\int_{0}^1ds\frac{(1-c_{12}s)(1-c_{1n}s)}{(1-c_{12})(1-c_{1n})}\mathcal{K}_{n-1}(s) + \frac{\ip{2n}}{\ip{12}\ip{1n}}\widehat{\mc{L}\mc{A}}_\text{tree}(2,\dots,n)\right)\,,
\ee
where
\begin{equation}
	\mathcal{K}_{n-1}(s) = \sum_{j = 3}^{n-1}2\ip{1j}^2c_{1j}\widehat{\mc{L}\mc{A}}_\text{tree}(s\widetilde{\lambda}_1,\widetilde{\lambda}_2,\dots,\widetilde{\lambda}_j)\widehat{\mc{L}\mc{A}}_\text{tree}(s\widetilde{\lambda}_1,\widetilde{\lambda}_j,\dots,\widetilde{\lambda}_n)\,.
\end{equation}


\subsection{One-Loop Amplitudes}

In SDYM theory in any background metric, there are no amplitudes at two-loops and higher. This is for simple combinatorial reasons: the theory has a $++-$ vertex and a $-+$ propagator, and there is no way to build a two-loop Feynman diagram with such ingredients.  (Of course, if we introduce another vertex, it can be possible to build higher-loop diagrams, but these would compute form factors).

The differential equation describing the Laplace transform of tree-level amplitudes can be extended to the Laplace transform of one-loop amplitudes. Before we do this, we should first check that the chiral algebra does not predict amplitudes at higher than one-loop.

\paragraph{Vanishing of amplitudes beyond one-loop}
From the perspective of the dual chiral algebra, loop number is computed using the standard large $N$ combinatorics. We will focus on single-trace correlators for simplicity.  Any single-trace diagram consists of some number of $\J$ and $\til \J$ states arranged in a cycle connected by $\psi-\psi$ propagators. This cycle is viewed as the boundary of a polygon, with some number $e$ of $\phi-\phi$ propagators which form chords on the polygon.  By viewing the $\phi-\phi$ propagators as double lines, we find a Riemann surface with some number $F$ of faces, including the ``outer'' face where the $\J$ and $\Jt$ states are placed.  Each face, except the outer face, has two lines emerging from it, labelled by $\p_{w_i}\c$. Each such face corresponds to an external graviton.

Let $\Sigma_\text{open}$ denote this Riemann surface, and $\Sigma_\text{closed}$ denote the closed surface obtained by filling in all of the faces.  

Each $\phi-\phi$ propagator gives rise to two $\p_{w_i}\c$s, as does each $\Jt$ state we insert.  There are two $\p_{w_i}\c$s for every face (except the outer face).  Also, the Euler characteristic of $\Sigma_\text{open}$ is minus the number of $\phi-\phi$ propagators. We conclude that
\begin{equation}
   - \chi ( \Sigma_{\text{open}} )  + \# \Jt = F - 1. 
\end{equation}
Using the fact that $\chi (\Sigma_{\text{closed}} ) = \chi(\Sigma_{\text{open}} ) + F$, we find
\begin{equation}
    \chi(\Sigma_{\text{closed}}) = \# \Jt + 1.
\end{equation}
Every closed connected Riemann surface has Euler characteristic $\le 2$.  The only surface with Euler characteristic $2$ is the sphere, which gives us tree-level amplitudes.  In this case, we must have only one insertion of a $\Jt$ state.

The only surface with Euler characteristic $1$ is $\mathbb{RP}^2$, corresponding to one-loop amplitudes.  Single-trace chiral algebra Feynman diagrams in this case have the topology of $\mbb{RP}^2$ with at least two discs removed, and all $\J$ states placed on the same boundary of the resulting surface (Fig. \ref{fig:BurnsRP2}). 

\begin{figure}[ht]
    \centering
        \includegraphics{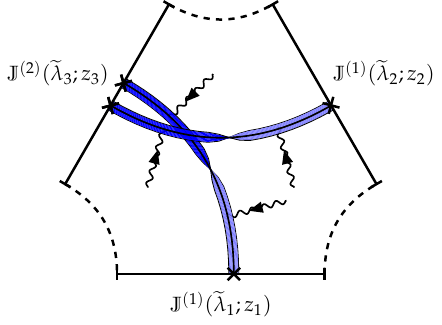}
    \caption{A one-loop three-point diagram.  This diagram has three faces, where the two internal faces each have two $\p_{w_i}\c$ fields.}
    \label{fig:BurnsRP2}
\end{figure}

\paragraph{Differential equations for one-loop amplitudes}
Now that we have shown that only one-loop amplitudes can occur, let us compute them.  
Just like for tree-level amplitudes, one-loop amplitudes are easiest to compute after a Laplace transform.   As with tree-level amplitudes, we write
\begin{equation}
\mc{L}\mc{A}_\text{one-loop}(1^+,2^+, \dots, n^+;\xi) = \int_{t_i = 0}^\infty \d t_1 \dots \d t_n\,e^{-\sum t_i} \prod_{i = 1}^{n} t_i^{ \lt_i \dpa{\lt_i}}\mc{A}_\text{one-loop}(1^+,2^+,\dots,n^+;\xi) 
\end{equation}

As in the tree-level case, we will find a differential equation for this Laplace transformed amplitude.  Before explaining it, we need to analyze the combinatorics of one-loop diagrams a little more carefully.  As we explained, every such diagram corresponds to a topological surface $\Sigma_\text{open}$ that has the topology of $\RP^2$ with several discs removed.  Since $\RP^2$ with a single disc removed is a M\"obius band, we can view $\Sigma_\text{open}$ as a M\"obius band with several discs removed.  The outer boundary component -- where the states $\J$ are placed -- will be taken to be the boundary of the M\"obius band.

A $\phi-\phi$ propagator in this picture gives a simple curve on the M\"obius band, stretching from the boundary to itself.  Such a simple curve can be oriented or unoriented, depending on whether we can choose an orientation of the boundary of the M\"obius band and of a neighbourhood of the curve in a compatible way.  In terms of the oriented double cover of the M\"obius band, which is a cylinder, oriented curves lift to curves on the cylinder starting and ending on the same boundary, whereas unoriented curves lift to curves going from one boundary to the other.

Untwisted propagators give rise to oriented curves on the M\"obius band, whereas twisted propagators give unoriented curves.  

The differential equation will involve cutting along a $\phi-\phi$ propagator. Cutting a M\"obius band along an oriented curve results in a disconnected surface, which is a union of a disc and a M\"obius band. (One can see this by considering the oriented double cover). Cutting along an unoriented curve results in a connected surface with the topology of a disc.   From this we see that the differential equation will have two terms, depending on whether we cut along a twisted or untwisted propagator.

 As in the tree-level case, the Laplace transformed amplitude $\mc{LA}_{\text{one-loop}}$ is a sum over Feynman diagrams. Applying the operator
 \be 
 \mc{D}^{(i)} = \lt_i \dpa{\lt_i} 
 \ee
 yields a similar sum over Feynman diagrams, but where we have selected a single $\phi-\phi$ propagator emerging from the $i$\textsuperscript{th} operator insertion. This $\phi-\phi$ propagator must end at some other operator insertion $j$.

As in the tree-level case, we replace this single $\phi-\phi$ propagator by two $\psi-\psi$ propagators.  Bearing in mind the fact that we may have a twisted or untwisted propagator, we find the following differential equation:
\bea\label{eq:diff_Burns_loop}
&\mc{D}^{(i)}\mc{L}\mc{A}_\text{one-loop}(1,\dots,n;\xi) \\
&= \sum_{j\neq i}\frac{[ij]\ip{ij}\ip{i\xi}^2\ip{j\xi}^2}{4\pi^2}\Big(\widehat{\mc{L}\mc{A}}_\text{tree}(i,i+1,\dots,j-1,j;\xi)\mc{L}\mc{A}_\text{one-loop}(j,j+1,\dots,i-1,i;\xi) \\
&+ \mc{L}\mc{A}_\text{one-loop}(i,i+1,\dots,j-1,j;\xi)\what{\mc{L}\mc{A}}_\text{tree}(j,j+1,\dots,i-1,i;\xi) \\
&+ \frac{1}{2}\what{\mc{L}\mc{A}}_\text{tree}(i, j , j-1, \dots, i, j, j+1,\dots, i-1 ;\xi)\Big)\,.
\eea 
The last term on the right hand side comes from cutting a twisted propagator, yielding a disc diagram.  The fact that the propagator we have cut is twisted is what leads to the unusual ordering of the external states. We also solve this differential equation in Appendix \ref{app:rec_Burns}. This leads to a recursion relation \eqref{eq:rec_Burns_loop}, expressing the $n$-point one-loop amplitudes in terms of $2, 3, \dots, n-1$-point one-loop amplitudes and $2, 3, \dots, n+1$-point tree-level amplitudes.


\subsection{Computing Form Factors of the Average Null Energy Operator}

Given a null vector $v$ in Minkowski space, the average null energy (ANE) operator is
\be
\mc{E}_v = \int_\R\d t\,T_{vv}(t v)\,.
\ee
It is the average of the $v-v$ component of the stress tensor along the line $v$.  The ANE operator has played a prominent role in many recent works, see for example \cite{Hofman:2008ar}.   

It turns out that the Burns-type amplitudes we have computed, when working to first order in the perturbation around the flat metric, encode the form factors of the ANE operator.  In this section, we will explain this observation and use it to compute the tree-level and one-loop form factors of the ANE operator in the self-dual gauge theory we are considering.  

This computation gives a partial answer for the ANE form factor for non-self-dual gauge theory: at tree-level we are computing the form factor with one state of negative helicity and an arbitrary number of positive helicity states, and at one-loop we are computing the all-plus form factor.

Consider the theory in the Burns background, where we give a VEV to $\partial_{w_1}\c\partial_{w_2}\c$. Let us set
\be
\ip{ \partial_{w_1} \c \partial_{w_2} \c (z) } = H(z)  
\ee 
where $H(z)$ is a polynomial of order $4$, or equivalently a four spinor index symmetric tensor $H_{\alpha_1 \dots \alpha_4}$.  

We will compute amplitudes to first order in the deformation away from the flat geometry, i.e., to first order in $H$.  These will be translated into information about the stress tensor form factor.

As before, we will assume that $H = \xi^4$ for some spinor $\xi$.  When working to first order in $H$, this is not necessary,  because there is no ambiguity in defining the scattering amplitudes.  The ambiguity arises at second order and higher in $H$, and are associated with terms where some indices of one copy of $H$ are contracted with indices of another copy of $H$.   However, it turns out that the ANE operator is encoded in the amplitude when $H = \xi^4$.

A simple computation shows that, to first order in $H$, the tree-level and one-loop single-trace amplitude for $n$ positive helicity gluons in the general Burns space background are given by the expressions 
\bea
&\ip{\Jt(\lt_1;z_1) \dots \J(\lt_n;z_n)}_\text{tree} 
= - \frac{1}{2\ip{12} \dots \ip{n1}}\ip{1^4 \mid H}\,, \\
&\ip{\J(\lt_1;z_1) \dots \J(\lt_n;z_n) }_\text{one-loop}
= - \frac{1}{4\pi^2\ip{12} \dots \ip{n1} } \sum_{1 \le i < j \le n} \frac{[ij] \ip{i^2 j^2 \mid H}}  { \ip{ij} }\,.
\eea
The tree-level amplitude is computed in the dual chiral algebra by a planar diagram with no $\phi-\phi$ propagator.  The one-loop amplitude is given by a diagram as in Fig. \ref{fig:one-twisted}, which is planar except for a single twisted $\phi-\phi$ propagator (and no other $\phi-\phi$ propagators). This propagator contributes the factor 
\be
  \frac{[ij]\ip{i^2j^2\mid H}}{2\pi^2\ip{ij}}\,.
\ee

\begin{figure}[ht]
    \centering
        \includegraphics{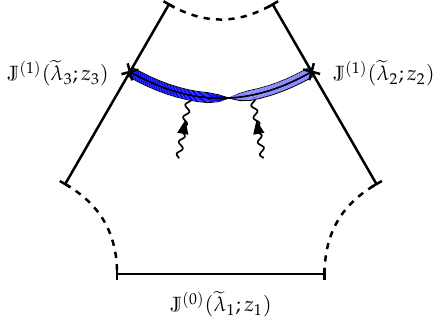}
    \caption{This figure illustrates one of the Feynman diagrams that contribute to the three-point function in the Burns background at one-loop, to leading order in perturbation around the flat metric.}
    \label{fig:one-twisted}
\end{figure}

This is the amplitude in a background metric which deforms the flat metric by
\be
\delta g_{\alpha \da \beta \db}(x) = - \frac{1}{8\pi^2}H_{\alpha\beta}^{~\,~\,\gamma\eta}\p_{x^{\da\gamma}}\p_{x^{\db\eta}}\log\norm{x}\,.
\ee
Inserting such a background metric can of course be realized as the form factor for an operator built from the stress tensor, namely
\be
\frac{1}{8\pi^2} \int_{\R^4}\d^4x\,T^{\da\alpha\db\beta}(x)\p_{x^{\da\gamma}}\p_{x^{\db\eta}}\log\norm{x}\,.
\ee
From the equation for scattering in the Burns background we conclude that
\bea
&\frac{1}{8\pi^2}H_{\alpha\beta}^{~\,~\,\gamma\eta} \int_{\R^4}\d^4x\,\p_{x^{\da\gamma}}\p_{x^{\db\eta}}\log\norm{x}\ip{T^{\da\alpha\db\beta}(x)\mid \Jt(\lt_1;z_1) \dots \J(\lt_n;z_n)}_\text{tree} \\
&= - \frac{1}{2\ip{12} \dots \ip{n1}}\ip{1^4 \mid H}\,, \\
&\frac{1}{8\pi^2}H_{\alpha\beta}^{~\,~\,\gamma\eta} \int_{\R^4}\d^4x\,\p_{x^{\da\gamma}}\p_{x^{\db\eta}}\log\norm{x}\ip{T^{\alpha\da\beta\db}(x) \mid \J(\lt_1;z_1) \dots \J(\lt_n;z_n)}_\text{one-loop} \\
&= - \frac{1}{4\pi^2\ip{12} \dots \ip{n1} } \sum_{1 \le i < j \le n} \frac{[ij] \ip{i^2 j^2 \mid H}}{\ip{ij}}\,.
\eea
Let $P = \sum p_i$ denote the sum of the $n$ external momenta.  Placing the stress tensor at $x$ is the same as placing the stress tensor at the origin, up to a factor of $e^{\i P\cdot x}$.  Therefore,
\bea
&\frac{1}{8\pi^2}H_{\alpha\beta}^{~\,~\,\gamma\eta} \int_{\R^4}\d^4x\,\p_{x^{\da\gamma}}\p_{x^{\db\eta}}\log\norm{x}e^{\i P\cdot x}\ip{T^{\da\alpha\db\beta}(0)\mid \Jt(\lt_1;z_1) \dots \J(\lt_n;z_n)}_\text{tree} \\
&= - \frac{1}{2\ip{12}\dots\ip{n1}}\ip{1^4\mid H}\,, \\ 
&\frac{1}{8\pi^2}H_{\alpha\beta}^{~\,~\,\gamma\eta} \int_{\R^4}\d^4x\,\p_{x^{\da\gamma}}\p_{x^{\db\eta}}\log\norm{x}e^{\i P\cdot x}\ip{T^{\alpha\da\beta\db}(0) \mid \J(\lt_1;z_1) \dots \J(\lt_n;z_n)}_\text{one-loop} \\
&= - \frac{1}{4\pi^2\ip{12} \dots \ip{n1} } \sum_{1 \le i < j \le n} \frac{[ij] \ip{i^2 j^2 \mid H}}{\ip{ij}}\,.
\eea
Performing the integral over $x$, we find the following expression for the stress tensor form factor at tree-level and at one-loop:
\bea
&\frac{P_{\da\gamma}P_{\db\eta}H_{\alpha\beta}^{~\,~\,\gamma\eta}}{(P\cdot P)^2}\ip{T^{\da\alpha\db\beta}\mid\Jt(\lt_1;z_1)\dots\J(\lt_n;z_n)}_\text{tree} = - \frac{1}{2\ip{12} \dots \ip{n1}}\ip{1^4 \mid H}\,, \\  
&\frac{P_{\da\gamma}P_{\db\eta}H_{\alpha\beta}^{~\,~\,\gamma\eta}}{(P\cdot P)^2}\ip{T^{\da\alpha\db\beta} \mid \J(\lt_1;z_1)\dots\J(\lt_n;z_n)}_\text{one-loop} \\
&= - \frac{1}{4\pi^2\ip{12}\dots\ip{n1}}\sum_{1 \le i < j \le n} \frac{[ij]\ip{i^2j^2 \mid H}}{\ip{ij}}\,.
\eea
If we assume, as above, that $H = \xi^4$, then we can simplify the expression.  

We let $h = \xi^2$. Then, $h_{\alpha\beta}$ is an element of $\mf{sl}(2)$ which is nilpotent, meaning $h_{\alpha\beta} h_{\gamma \delta} \eps^{\beta \gamma} = 0$.  Up to a sign we can determine $\xi$ from $h$, and it is more convenient to write everything in terms of $h$.

We have
\be
H_{\alpha \beta \gamma \delta} = h_{\alpha \beta} h_{\gamma \delta}\,.
\ee
Note that this is symmetric, because the right hand side vanishes whenever we contract any two spinor indices.  Then, 
\be
\ip{i^2 j^2 \mid H} = \ip{ij \mid h}^2\,.
\ee
Also
\be
P_{\da\gamma}P_{\db\eta}H_{\alpha\beta}^{~\,~\,\gamma\eta} = (h \circ P)_{\da \alpha } (h \circ P)_{\db \beta} 
\ee
where $h \circ P$ indicates the action of $h$, viewed as an element of $\mf{so}(4)$, on $P$.

We conclude that
\bea
&\frac{(h \circ P)_{\da\alpha}(h \circ P)_{\db\beta}}{(P\cdot P)^2}\ip{T^{\da\alpha\db\beta} \mid \Jt(\lt_1;z_1) \dots \J(\lt_n;z_n)}_\text{tree} = - \frac{1}{2\ip{12}\dots\ip{n1}}\ip{1^4\mid H}\,, \\
&\frac{(h \circ P)_{\da\alpha}(h \circ P)_{\db\beta}}{(P\cdot P)^2}\ip{T^{\da\alpha\db\beta} \mid \J(\lt_1;z_1) \dots \J(\lt_n;z_n)}_{\text{one-loop}} \\
&= - \frac{1}{4\pi^2\ip{12} \dots \ip{n1} }\sum_{1\le i < j \le n} \frac{ [ij] \ip{i j \mid h }^2 } {\ip{ij}}\,.
\eea
It is convenient to parametrize everything in terms of the vector $v = h \circ P$ instead of $h$.  Note that $v \cdot P = 0$ and $v \cdot v = 0$ (the latter follows from the fact that $h^2 = 0$). 

We can recover $h$ from $v$ as follows.  The map which sends an element $h$ of $\mf{sl}(2)$ to the vector $h \circ P$ is an isomorphism from $\mf{sl}(2)$ to the orthogonal complement of $P$.  This isomorphism sends nilpotent elements of $\mf{sl}(2)$ to null vectors in $P^\perp$.  We can explicitly invert this isomorphism to write $h$ in terms of $v$ as follows.  

First, we note that 
\be
\eps^{\db\da}P_{\da\alpha}P_{\db\beta} = \frac{1}{2}\eps_{\alpha\beta}(P \cdot P)
\ee
and, since $P \cdot v = 0$, $\eps^{\db\da}P_{\da\alpha}v_{\db\beta}$ is symmetric. Then,
\bea
\frac{2\eps^{\db\da}P_{\alpha\da}v_{\beta\db}}{P\cdot P} =  
\frac{2 \eps^{\db\da}\eps^{\eta\gamma}P_{\da\alpha}h_{\beta\gamma}P_{\db\eta}} { P \cdot P} = h_{\alpha\beta}\,.
\eea
Therefore,
\bea
&\ip{ij \mid h} = - \frac{2}{P\cdot P}\ip{i|P v|j} = - \frac{2}{P\cdot P}\sum_{k = 1}^n\ip{ik}[k|v|j\rangle\,, \\
&\ip{1^2\mid h} = - \frac{2}{P\cdot P}\sum_{k = 1}^n \ip{1k}[k|v|1\rangle\,. \\
\eea
We have re-expressed everything in terms of a null vector $v$ orthogonal to $P$, giving us the expression
\bea
&\ip{v^rv^sT_{rs} \mid \Jt(\lt_1;z_1) \dots \J(\lt_n;z_n)}_\text{tree}
= - \delta_{v\cdot P = 0}\frac{2}{\ip{12} \dots \ip{n1}} \Big(\sum_{k=1}^n\ip{1k}[k|v|1\rangle\Big)^2\,, \\
&\ip{v^rv^sT_{rs} \mid \J(\lt_1;z_1)\dots\J(\lt_n;z_n)}_\text{one-loop} \\
&= - \delta_{v \cdot P = 0} \frac{1}{\pi^2\ip{12} \dots \ip{n1}}\sum_{1 \le i < j \le n}\frac{[ij]}{\ip{ij}}\Big(\sum_{k=1}^n\ip{ik}[k|v|j\rangle\Big)^2\,.
\eea
The $\delta$-function forcing $v$ to be orthogonal to the overall momentum means that this expression is the form factor of the stress tensor which has been integrated along the line $v$.   We conclude that the form factor we have computed is the average of $T_{vv}$ along the line $v$, which is the ANE operator. 

Rewriting the previous expression, we find the tree-level and one-loop ANE form factors are given by  
\bea
&\ip{\int_{t\in\R}\d t\,T_{vv}(tv) \mid \Jt(\lt_1;z_1) \dots \J(\lt_n;z_n)}_\text{tree} = - \frac{2}{\ip{12} \dots \ip{n1}}\Big(\sum_{k=1}^n\ip{1k}[k|v|1\rangle\Big)^2 \\
&\ip{\int_{t\in\R}\d t\,T_{vv}(tv) \mid \J(\lt_1;z_1) \dots \J(\lt_n;z_n)}_\text{one-loop} \\
&= - \frac{1}{\pi^2\ip{12}\dots\ip{n1}}\sum_{1\le i < j \le n}\frac{[ij]}{\ip{ij}}\Big(\sum_{k=1}^n\ip{ik}[k|v|j\ra\Big)^2\,.
\eea


\subsection{Rephrasing in Terms of a Background Metric}

Form factors of the average null energy operator are the same as scattering amplitudes in a background metric that in light cone coordinates $x_+,x_-$, $x_i$, deforms the flat metric to first order by
\begin{equation}
(\d x_+)^2 \delta_{x_+ = 0, x_i = 0}.
\end{equation}
Performing a translation so that the $\delta$-function is at $x$ instead of at the origin amounts to multiplying the scattering amplitude by the phase $e^{\i P\cdot  x}$, where $P = \sum p_i$ is the total momentum of the $n$ incoming particles.  

By integrating over $x$ we can find the scattering amplitude in  a metric which deforms the flat metric by
\begin{equation}
H(x_+,x_i) (\d x_+)^2\,.
\end{equation}
where $\partial H/\partial x_- = 0$.  Metrics of this type are called \emph{pp-wave} metrics.  Note we do not assume that the Einstein equation is satisfied.

The result will be an integral over $x$ of the amplitudes written above, times $e^{\i P \cdot x} H(x)$.  This integral over $x$ will of course yield the Fourier transform $\what{H}(P)$ of $H(x)$. 

Explicitly, we find the scattering amplitude in such a background is
\bea
&\mc{A}^\text{tree}( 1^-, 2^+, \dots,n^+ \mid H(x) (\d x_+)^2 ) \\
&= - \frac{2}{\ip{12} \dots \ip{n1}}\Big(\sum_{k=1}^n\ip{1k}[k|\d x_+ |1\rangle\Big)^2 \delta_{P\cdot\d x_+ = 0}  \what{H} (P)\,, \\
&\mc{A}^\text{one-loop}( 1^+, 2^+, \dots,n^+ \mid H(x) (\d x_+)^2 ) \\
&= - \frac{1}{\pi^2\ip{12}\dots\ip{n1}}\sum_{1\le i < j \le n}\frac{[ij]}{\ip{ij}}\Big(\sum_{k=1}^n\ip{i k }[k | \d x_+ |j\ra\Big)^2 \delta_{P \cdot \d x_+ = 0} \what{H}(P) \,.
\eea


\section{Amplitudes in a Flavour Symmetry Background} \label{sec:flavour-amps}

The next amplitude we will compute is that in the flavour symmetry background. This turns out to be much more complicated than the Burns space case discussed above.

We can not treat an arbitrary flavour symmetry background, because there are ambiguities in defining the correlation functions for the chiral algebra.  These correspond to counter-term ambiguities when we study the QFT in the flavour background.

However, as we have discussed in Sect. \ref{sec:uni_confblock}, if we assume that $M_f^g(z) = f(z)^2D_f^g$, for some diagonal matrix $D$ and linear polynomial $f$, then there is a unique conformal block (with correct dimension and symmetry properties) for the chiral algebra deformed by giving a VEV to $\gamma_f\til\gamma^g$.  From the perspective of the $4d$ gauge theory, this means that computing amplitudes in the presence of the background flavour symmetry gauge field does not suffer from counter-term ambiguities.

In this section, we will compute these amplitudes.  We will find that the computation relies on some non-trivial combinatorial algebra.


\subsection{Two-Point Amplitude of Fundamental Fermions}

Switching on this flavour symmetry background will give the fermionic states $\M_f,\Mt^g$ a two-point amplitude, as depicted in Fig. \ref{fig:fermion2pt}.  When evaluating this, it's useful to note that any states involving $\p_{w_i}\c$ vanish, and so we can ignore any contributions from bulk propagators. Only $\gamma,\til\gamma$s can contribute.  Aside from those appearing explicitly in the states, the only way to generate these is through the ADHM constraint. Upon discarding all terms involving the ghost, it reads
\be [\phi^\done,\phi^\dtwo] + \til\gamma^f\gamma_f = 0\,. \ee
We can then readily compute the two-point term in the OPE
\be \label{eq:MMt-2pt} \M_f[m,0](z_1)\Mt^g[0,m](z_2) \sim - \frac{1}{z_{12}}\gamma_f(\phi^\done)^m(\phi^\dtwo)^m\til\gamma^g(z_2)\,. \ee
As above, we've chosen to suppress $\fsp(K)$ colour indices and stripped the associated factor of $\Omega$.

\begin{figure}[ht]
    \centering
        \includegraphics{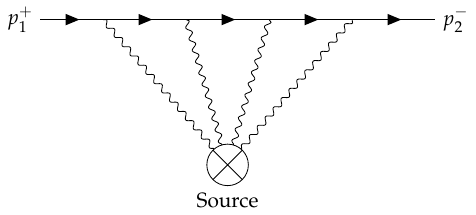}
    \caption{A fermion moving in the flavour symmetry background which has a point source.}
    \label{fig:fermion2pt}
\end{figure}

On the right hand side of this equation is a bulk field.  Our task is to determine what the expectation value of this bulk field is, once we have specified that $\ip{\gamma_f\til\gamma^g} = f(z)^2D_f^g$.  When we specify that this operator has a VEV, we also mean that the VEVs of other operators remain zero, so that in particular
\be
\ip{\gamma_f(\phi^\done)^{(m}(\phi^\dtwo)^{n)}\til\gamma^g} = 0
\ee
whenever $m+n>0$.  It's essential that in this expression, the $\phi$ states are symmetrized.

Using the BRST relation, we can write the product $(\phi^\done)^m(\phi^\dtwo)^m$ uniquely as a finite sum of products of expressions like $\gamma_f (\phi^\done)^{(r}(\phi^\dtwo)^{s)}\til\gamma^g$.  When we take expectation values of these operators, only the term which involves no $\phi$ fields will contribute. 

Determining the coefficient of this term is a problem in pure combinatorics which we discuss extensively in Appendix \ref{app:flavour-combinatorics}.  Let us phrase this problem slightly more abstractly, using the terminology used in the appendix. 

Consider a non-commutative algebra freely generated by two elements $X^\done,X^\dtwo$, denoted  $\C\la X^\done,X^\dtwo\ra$. We can define symmetric polynomials depending on a choice of reference spinor $\lt_\da$
\be X^m(\lt) = (X^\da\lt_\da)^m\,. \ee
This is a generating function of the symmetrized products of $X^\done$ and $X^\dtwo$. 

It's convenient to extend this to symmetric polynomials in two reference spinors
\be 
    X^{m,n}(\lt_1,\lt_2) = \frac{(\lt_2\cdot\p_{\lt_1})^n}{[m+n]_n} 
    X^{m+n}(\lt_1) = \frac{(\lt_1\cdot\p_{\lt_2})^m}{[m+n]_m}X^{m+n}(\lt_2)
\ee
where $[x]_n$ is the descending factorial $x (x-1) \dots (x-n+1)$.  This expression is again a generating function for the symmetrized products of $X^\done$ and $X^\dtwo$. However, since $\lt_1,\lt_2$ are independent reference spinors, with four components in total, this is a redundant generating functional: the coefficients of the four variables are not independent.  

We also use the shorthand notation
\be
C = [X^\done, X^\dtwo]\,.
\ee
Consider the expressions
\be
X^{m_1}(\lt_1) C X^{m_2}(\lt_2) C \dots C X^{m_n} (\lt_n)\,.
\label{eq:noncommutativealgebra}
\ee
It is not hard to show (and we will derive this carefully in the appendix) that the Taylor coefficients of these expressions as we expand in the variables $\lt_i$  form a basis of the non-commutative algebra $\C\la X^\done, X^\dtwo\ra $. (Of course, we must also vary $n$ and $m_1,\dots, m_n$.)

We would like to find the structure constants $R_{\{k_i\},\{l_j\}}$ obeying
\bea \label{eq:structure-constant-def}
&X^m(\lt_1)X^n(\lt_2) \\
&= \sum_{a=0}^{\min{m,n}} [12]^a \sum_{\substack{k_1+\dots+k_{a+1} = m-a \\ l_1+\dots+l_{a+1} = n-a}} R_{\{k_i\},\{l_j\}} X^{k_1,l_1}(\lt_1,\lt_2)C\dots CX^{k_{a+1},l_{a+1}}(\lt_1,\lt_2)\,,
\eea
where $C = [X^\done,X^\dtwo]$.  We can view $a,m,n$ as being determined by the sets $\{k_i\},\{l_j\}$ (constrained to have the same cardinality).  Note that $R_{\{m\},\{n\}} = 1$.  In the appendix we develop recursive formulae for these coefficients.

We find the following closed form expressions for some of the coefficients $R$:
\bea \label{eq:combinatorial-formulae}
&R_{\{k,\tilde k\},\{l,\tilde l\}} = \binom{k+\tilde k+l+\tilde l+2}{k+\tilde k+1}^{-1}\sum_{p=0}^k\binom{k+l-p}{l}\binom{\tilde k+\tilde l+1+p}{\tilde l}\,, \\
&R_{\{k,0^a\},\{l,0^a\}} = \frac{1}{[k+l+a+1]_a}\binom{k+a}{a}\,,\quad R_{\{0^a,\tilde k\},\{0^a,\tilde l\}} = \frac{1}{[\tilde k+\tilde l+a+1]_a}\binom{\tilde l+a}{a}\,.
\eea
In particular, from either of the last expressions in this equation, we see that
\be
X^m(\lt_1) X^m(\lt_2) = \frac{1}{(m+1)!}[12]^mC^m + \dots  
\ee

Returning to the chiral algebra, as a consequence of this formula and the BRST relation we have 
\bea \label{eq:comb-gaphiga}
&\gamma_f(\phi^\done)^m(\phi^\dtwo)^m\til\gamma^g = \frac{(-)^m}{(m+1)!}\gamma_f(\til\gamma^h\gamma_h)^m\til\gamma^g \\
&+ \text{terms involving }\gamma(\phi^\done)^{(r}(\phi^\dtwo)^{s)}\til\gamma \text{ for }r+s>0\,.
\eea
This implies  that \eqref{eq:MMt-2pt} evaluates to
\bea
&\M_f[m,0](z_1)\Mt^g[0,m](z_2) \sim \frac{1}{z_{12}}\frac{(-)^{m+1}}{(m+1)!}\gamma_f(\til\gamma^h\gamma_h)^m\til\gamma^g(z_2) \\
&= \frac{1}{z_{12}}\frac{(-)^{m+1}}{(m+1)!}\big(M(z_2)^{m+1}\big)_f^g\,.
\eea
For general matrices $M(z)$, the regular part is ambiguous. From the chiral algebra perspective, this is because of the lack of uniqueness of conformal blocks, and from the gauge theory perspective, it's because of counter-term ambiguities.

However, if we assume that $M(z) = f(z)^2D$ for a fixed matrix $D$ (which we assume, without loss of generality, to be diagonal), then counter-terms/conformal blocks are unique.  The two-point function is then uniquely determined by its singularities, and we have
\be \ip{\M_f[m,0](z_1)\Mt^g[0,m](z_2)} = \frac{1}{z_{12}}\frac{(-)^{m+1}}{(m+1)!}f(z_1)^mf(z_2)^{m+2}(D^{m+1})_f^g\,. \ee
This can be resummed to give the two-point function for hard generators
\bea
&\ip{\M_f(\lt_1;z_1)\Mt^g(\lt_2;z_2)} = \frac{1}{z_{12}}\sum_{m=0}^\infty\frac{(-)^{m+1}}{(m+1)!(m!)^2}f(z_1)^mf(z_2)^{m+2}[12]^m(D^{m+1})_f^g \\
&= - \frac{f(z_2)^2}{z_{12}} D_f^h\,{}_0F_2\big(1,2;-f(z_1)f(z_2)[12]D\big)^g_h\,,
\eea
where ${}_0F_2(1,2;x)$ is a generalized hypergeometric function.


\subsection{Three-Point Amplitude of Fundamental Fermions and a Gluon}

The amplitude for $n$ positive helicity gluons and two fundamental fermions is, in the chiral algebra language, the correlator 
\be
\ip{\M_f(\lt_1;z_1)\J(\lt_2;z_2)\dots\J(\lt_{n+1};z_{n+1})\Mt^g(\lt_{n+2};z_{n+2})}\,. 
\ee
Here we're suppressing the colour structure for clarity. An example Feynman diagram contributing to this amplitude is depicted in Fig. \ref{fig:fermiongluon}.

\begin{figure}
    \centering
        \includegraphics{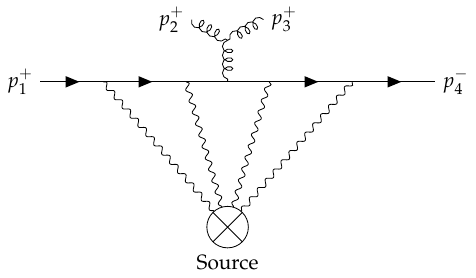}
    \caption{Gluons scattering off fermions in the flavour background.}
    \label{fig:fermiongluon}
\end{figure}

Expanding this in soft modes, we need to compute the correlators
\be
\ip{\M_f[k_1,l_1](z_1)\J[k_2,l_2](z_2)\dots\J[k_{n+1},l_{n+1}](z_{n+1})\til \M^g[k_{n+2},l_{n+2}]}\,. \label{eq:ngluoncorrelator}
\ee
In principle, this can be computed in a similar way to the computation of the correlator of just two fermions, however the combinatorics is more challenging.  First, the only Wick contractions that can appear are $\psi-\psi$ Wick contractions, because $\phi-\phi$ and $\gamma-\til\gamma$ propagators introduce $\p_{w_i}\c$ fields which are sent to zero in this background.  Further, we will focus on the colour-ordered correlator (assuming $K$ is not too small compared to $n$).  This means that we consider only situations where Wick contractions are between two adjacent states in the expression \eqref{eq:ngluoncorrelator}. 

Because of our assumption that the matrix $M(z)$ is of the form $f(z)^2 D$, there is no ambiguity in determining the correlation function \eqref{eq:ngluoncorrelator} from its singular part.  When all Wick contractions have been performed, the expression \eqref{eq:ngluoncorrelator} becomes
\be
\frac{1}{\la12\ra\la23\ra\dots\la(n+1)(n+2)\ra}\gamma_f(\phi^\done)^{(k_1} (\phi^\dtwo)^{l_1)}(z_1)\dots(\phi^\done)^{(k_{n+2}}(\phi^\dtwo)^{l_{n+2})}\til\gamma^g(z_{n+2})\,.
\ee
To determine the correlation function, we need to re-order the $\phi$s as before.

This amounts to the following problem in combinatorial algebra.  As before, consider the free associative algebra generated by non-commuting variables $X^\done,X^\dtwo$.  We have seen that there is a basis for this algebra consisting of the expressions in \eqref{eq:noncommutativealgebra}.  We can therefore write
\bea
X^{m_1}(\lt_1)X^{m_2}(\lt_2)\dots X^{m_n}(\lt_n) = \sum_a S_a\prod_{i < j} [ij]^{a_{ij}}C^{\half\sum_im_i}  + \text{ other basis elements}\,.
\eea
The sum is over strictly upper triangular $n \times n$ matrices $a_{ij}$ with non-negative integer entries, such that $m_i = \sum_{i<j}a_{ij} + \sum_{j<i}a_{ji}$.  The coefficient $S_a$ is some combinatorial constant to be determined and which will immediately give the formula for the scattering amplitude with $2$ fermions and $n-2$ gluons. 

The case of a single gluon can be solved using only the recursion coefficient $R_{(0^a), (0^a)}$  we used in the case with two fermions.  In that case, we have 
\bea
&X^{m_1}(\lt_1)X^{m_2}(\lt_2) X^{m_3}(\lt_3) \\
&= c_{m_1,m_2,m_3}[12]^{\half(m_1 + m_2 - m_3)} [13]^{\half(m_1 + m_3 - m_2)} [23]^{\half(m_2 + m_3 - m_1)}C^{\half(m_1 + m_2 + m_3)} \\
&+ \text{ other basis elements}
\eea
and we need to determine the constants $c_{m_1,m_2,m_3}$. Obviously they vanish unless $m_1 + m_2 + m_3$ is divisible by $2$ and $m_3\leq m_1 + m_2$ (and permutations).

 Set $m_i = k_i + l_i$.  The constants $c_{m_1,m_2,m_3}$ are the coefficients of an $\mrm{SL}(2)$ invariant linear map
\be
\op{Sym}^{m_1}(\C^2)\otimes\op{Sym}^{m_2}(\C^2)\otimes\Sym^{m_3}(\C^2) \to \C
\ee
from the tensor product of the representations of spin $m_i/2$ to the trivial representation.  (Here the non-commuting variables $X^\done,X^\dtwo$ form an $\mrm{SL}(2)$ doublet.)  Up to normalization, there is at most one such linear map, which is present if and only if $\abs{m_1 - m_3} \le m_2 \le m_1 + m_3$ and $m_1 + m_2 + m_3$ is even. By $\mrm{SL}(2)$ invariance, it suffices to project
\be (X^\done)^{m_1} (X^\done)^{(k_2}(X^\dtwo)^{l_2)} (X^\dtwo)^{m_3}\quad\text{onto}\quad[X^\done,X^\dtwo]^{\half(m_1 + m_2 + m_3)} \ee
in the basis we have discussed above.  Here, the middle two terms $(X^\done)^{(k_2} (X^\dtwo)^{l_2)}$ are symmetrized.  If these terms were not symmetrized, the problem would be equivalent to that discussed in the case of the amplitude with only fermions.

It turns out, however, that we find the same answer if we use the symmetrized expression $(X^\done)^{(k_2}(X^\dtwo)^{l_2)}$ or the unsymmetrized one $(X^\done)^{k_2}(X^\dtwo)^{l_2}$.  To see this, note that 
\be
(X^\done)^{(k_2} (X^\dtwo)^{l_2)} = (X^\done)^{k_2} (X^\dtwo)^{l_2} + f(X^\done,X^\dtwo) [X^\done,X^\dtwo] g(X^\done,X^\dtwo)
\ee
where $f,g$ are non-commutative polynomials in $X^\done,X^\dtwo$, each of which is of order less than $k_2$ in $X^\done$ and less than $l_2$ in $X^\dtwo$. 

We need to show that the projection of 
\be
(X^\done)^{m_1} f(X^\done,X^\dtwo) [X^\done,X^\dtwo] g(X^\done,X^\dtwo) (X^\dtwo)^{m_3} \label{eq:symmetrization}
\ee
onto $[X^\done,X^\dtwo]^{\half (m_1 + m_2 + m_3)}$ is zero.  Recall that we are using the basis of the algebra of non-commutative polynomials whose elements consist of symmetric expressions in $X^\done,X^\dtwo$ sandwiched between commutators (with initial and final symmetric expressions). To write \eqref{eq:symmetrization} in this basis, we need to write $(X^\done)^{m_1} f(X^\done,X^\dtwo)$ and $g(X^\done,X^\dtwo) (X^\dtwo)^{m_3}$ in the basis and then concatenate them.  

However, the coefficient of $[X^\done,X^\dtwo]^m$ in the expansion of $(X^\done)^{m_1} f(X^\done,X^\dtwo)$ in this basis is zero.  This is because $f(X^\done,X^\dtwo)$ is of order less than $l_2 = \half (m_1 + m_2 - m_3)$ in $X^\dtwo$, and $\half (m_1 + m_2 - m_3) \le m_1$.  Thus, there are more $X^\done$s than $X^\dtwo$s in $(X^\done)^{m_1} f(X^\done,X^\dtwo)$.  

From this, we see that
\be
(X^\done)^{m_1} (X^\done)^{(k_2} (X^\dtwo)^{l_2)} (X^\dtwo)^{m_3} 
= \frac{C^{\half (m_1 + m_2 + m_3) }}{\big(\half(m_1 + m_2 + m_3) + 1\big)!} + \text{ other basis elements}
\ee
where as above $k_2 = \half (m_2 + m_3 - m_1)$ and $l_2 = \half(m_1 + m_2 - m_3)$. This implies that
\be
c_{m_1,m_2,m_3} = \frac{1}{\big(\half(m_1+m_2+m_3)+1\big)!}\binom{m_2}{\half(m_2+m_3-m_1)}\,,
\ee
immediately giving us the amplitude with one gluon:
\bea
&\ip{\M_f(\lt_1;z_1)\J(\lt_2;z_2)\Mt^g(\lt_3;z_3)} \\
&= \frac{1}{\la12\ra\la23\ra}\sum_{a,b,c \ge 0}\frac{(-)^{a+b+c}f(z_1)^{a+b} f(z_2)^{a+c} f(z_3)^{b+c+2} }{ a! (a+b)!  (b+c)! c!  (a + b + c + 1)!  } [12]^{a} [13]^b[23]^c \big(D^{a + b + c + 1}\big)_f^g\,. 
\eea
Here $a=\half(m_1+m_2-m_3)$, $b=\half(m_1+m_3-m_2)$ and $c=\half(m_2+m_3-m_1)$. The conditions on $m_1,m_2,m_3$ are equivalent to $a,b,c\in\Z_{\geq0}$.


\subsection{Recursion Relations for \texorpdfstring{$n$}{n}-Point Amplitudes}

With two fermions and $n$ gluons, there is not such a simple expression for the amplitude. Nevertheless, there is a recursion relation which we will state.

As we mentioned in \eqref{eq:combinatorial-formulae} we have
\be
 R_{\{0^a,\tilde k\},\{0^a,\tilde l\}} = \frac{1}{[\tilde k+\tilde l+a+1]_a}\binom{\tilde l+a}{a}\,.
\ee
This tells us that
\be
X^{m} (\lt_{1} ) X^{n} (\lt_{2}) = \sum_{a = 0}^{\min{m, n}}  C^a [12]^a  X^{m - a, n - a}  (\lt_{1},\lt_2)   \frac{1}{[m + n-a +1]_a}\binom{n}{a} + \text{ other terms} \label{eq:recursion}
\ee
where the other terms, which have some $X$ to the left of a commutator, will not contribute to our final calculation.    

Manipulation of this formula will give the desired recursion relation.  Recall that we aim to determine the coefficient of $C^{\half\sum_im_i}$ in $X^{m_1}(\lt_1) \dots X^{m_n}(\lt_n)$.  We do this by first determining those terms in $X^{m_1}(\lt_1) X^{m_2}(\lt_2)$ with all the $C$'s to the left, and then repeating this process. 

Since we are interested ultimately in the scattering of hard modes, it will be helpful to write the recursion relation \eqref{eq:recursion} in those terms.  All the following manipulations will drop any terms where an $X$ is to the left of a $C$.  We have
\be \label{eq:hard-recursion}
e^{X(\lt_1)} e^{X(\lt_2)} = \sum_{m,n\geq0} \sum_{a=0}^{\min{m,n}} \frac{1}{m! n!}  C^a [12]^a  X^{m - a, n - a}  (\lt_{1},\lt_2)   \frac{1}{[m + n-a +1]_a}\binom{n}{a} 
\ee
where we recall that
\be X^{m,n}(\lt_1,\lt_2) = \frac{(\lt_2\cdot\p_{\lt_1})^n}{[m+n]_n}X^{m+n}(\lt_1) = \frac{(\lt_1\cdot\p_{\lt_2})^m}{[m+n]_m}X^{m+n}(\lt_2) 
 = S( X^m(\lt_1) X^n(\lt_2) ) \ee
where $S$ indicates symmetrization.
 
We can reorganize the sum by replacing $m$ by $m-a$ and $n$ by $n-a$ so that the three indices now range over all of $\Z_{\geq0}$. Equation \eqref{eq:hard-recursion} simplifies to
\be
e^{X(\lt_1)} e^{X(\lt_2)} = \sum_{m,n,a\geq0} \frac{C^a[12]^a}{(m+a)!n!a![m + n + a + 1]_a} X^{m,n}(\lt_1,\lt_2)\,.
\ee
Focusing on the terms where we fix both $a$ and $m+n = k$, we have the following identity:
\bea
&\sum_{m = 0}^k\frac{1}{[k+a+1]_a(m+a)!(k-m)!}X^{m,k-m}(\lt_1,\lt_2) \\
&= \frac{1}{k!}\int_{0<t_1<\dots<t_a<1}\d^a\mathbf{t}\int_{0<s_1<\dots<s_a<1}\d^a\mathbf{s}\,s_1X^k(t_1 s_1\lt_1 + s_1\lt_2)\,.
\eea
To check this, we note that
\be
X^k(ts\lt_1 + s\lt_2) = \sum_{m = 0}^k \binom{k}{m}t^m s^k X^{m,k-m}( \lt_1,\lt_2)  \,.
\ee
It is straightforward to apply the iterated integral to this expression giving the desired combinatorial factors.

This brings us to 
\bea
&e^{X(\lt_1)} e^{X(\lt_2)} = \sum_{k,a\geq0}\frac{C^a[12]^a}{a!k!} \int_{0 < t_1 < \dots < t_a < 1}\d^a\mbf{t}\int_{0 < s_1 < \dots < s_a < 1}\d^a\mbf{s}\,s_1 X^k(t_1s_1\lt_1 + s_1\lt_2) \\
&= \sum_{a\geq} \frac{C^a[12]^a}{a!}\int_{0 < t_1 < \dots < t_a < 1}\d^a\mbf{t}\int_{0 < s_1 < \dots < s_a < 1}\d^a\mbf{s}\,s_1 e^{X(t_1 s_1 \lt_1 + s_1 \lt_2)}\,.
\eea
This gives us the recursion relation for colour-ordered amplitudes, expressing the amplitude for $2$ fermions and $n$ gluons in terms of that for $2$ fermions and $n-1$ gluons:
\bea \label{eq:flavour-recursion}
&\ip{\M_f(\lt_1;z_1)\J(\lt_2;z_2) \dots \J(\lt_{n+1};z_{n+1}) \til \M^g(\lt_{n+2};z_{n+2})}  \\
&= \frac{1}{\ip{12}}\sum_{a\geq0}\frac{(-)^af(z_1)^af(z_2)^a[12]^a}{a!}(D^a)_f^h\int_{0 < t_1 < \dots < t_a < 1}\d^a\mbf{t}\int_{0 < s_1 < \dots < s_a < 1}\d^a\mbf{s}\,s_1 \\
&\Big\la\M_h(t_1s_1\lt_1 + s_1\lt_2;z_2)\J(\lt_3;z_3)\dots\J(\lt_{n+1};z_{n+1})\Mt^g(\lt_{n+2};z_{n+2})\Big\rangle\,.
\eea
It's worth noting that the solution to this recursion relation is of the form
\bea
&\ip{\M_f(\lt_1,z_1)\J(\lt_2,z_2)\dots\J(\lt_{n+1};z_{n+1})\Mt^g(\lt_{n+2};z_{n+2})} \\
&= \frac{1}{\la12\ra\dots\la(n+1)(n+2)\ra} F_f^g(\lt_1,\dots,\lt_{n+2}, D) 
\eea
where $F$ is an entire analytic function of $n+3$ complex variables, into one of whose entries we insert the matrix $D$.  To see that $F$ is analytic, the key point is to note that if $f(w)$ is an entire analytic function of one variable, then the expression
\be
\sum_{a\geq0}\frac{1}{a!}\int_{0 < t_1 < \dots < t_a < 1}\d^a\mbf{t}\int_{0 < s_1 < \dots < s_a < 1}\d^a\mbf{s}\,s_1f(s_1t_1\lt_1 + s_1\lt_2)
\ee
is an entire analytic function of the two variables $\lt_1$, $\lt_2$.  
Analyticity is seen by noting that in each term in the sum, the integral is bounded above in absolute value by the supremum of $f(w)$ on the triangle in $\C$ with vertices $0$, $\lt_2$, $\lt_1 + \lt_2$.  This makes it clear that the sum converges locally uniformly absolutely in $\lt_1,\lt_2$ (and also uniformly in the additional variables $z_i$, $\lt_i$), implying analyticity.


\subsection{Celestial Chiral Algebra in a Flavour Symmetry Background}

The recursion relation derived in the previous subsection follows from a deformation of the celestial chiral algebra on the flavour symmetry background.

Indeed, equation \eqref{eq:flavour-recursion} can be recovered from the following planar deformation of the $\J,\M\to\M$ OPE
\bea
&\J_{IJ}(\lt_1;z_1)\M^f_K(\lt_2;z_2) \sim - \frac{2}{\la12\ra}\sum_{a\ge0}\frac{1}{a!}[12]^a\big(M(z_2)^a\big)^f_h\\
&\times \int_{0 < s_1 < \dots  < s_a < 1}\d^a\mbf{s}\,s_1\int_{0 < t_1 < \dots  < t_a < 1}\d^k\mbf{t}\,\Omega_{K(I}\M^h_{J)}(s_1\lt_1+s_1t_1\lt_2;z_2)\,.
\eea
Since the celestial chiral algebra is sensitive only to the singular part of the OPE this holds for any $M(z)$, not only when $M(z) = f(z)^2D$.

Similar arguments lead to the following expression for the planar part of the $\J,\til\M\to\til\M$ OPE:
\bea \label{eq:JMt-flavour-int}
&\J_{IJ}(\lt_1;z_1)\Mt_{Kg}(\lt_2;z_2) \sim\frac{2}{\la12\ra}\sum_{a\ge0}\frac{(-)^a}{a!}[12]^a(M(z_2)^a)^g_h\\
&\times \int_{0 < s_1 < \dots  <s_a < 1}\d^a\mbf{s}\,s_1\int_{0 < t_1 < \dots < t_a < 1}\d^k\mbf{t}\,\Omega_{K(I}\Mt_{J)}^h(s_1\lt_1+s_1t_1\lt_2;z_2)\,.
\eea

The final modified $2\to1$ planar OPE in the flavour background is $\M,\Mt\to\Jt$. This is generated by the contraction of a pair of $\phi$ fields between the operators $\M_f$ and $\Mt^g$.

First we consider the case when the $\phi$ contraction is adjacent to the $\gamma_f$ and $\til{\gamma}^g$ fields.
\be
\wick{\gamma_f\c\phi(\lt_1)\phi(\lt_1)^{m-1}\psi(z_1)\til{\gamma}^g\c \phi(\lt_2)\phi(\lt_2)^{n-1}\psi(z_2)}\,.
\ee
We extract the term that includes $\gamma_f\til{\gamma}^g$.
\be
\M^{(m)}_{If}(\lt_1;z_1)\Mt^{(n),g}_J(\lt_2;z_2) \sim \frac{[12]}{8\pi^2\la12\ra}\gamma_f\til{\gamma}^g\psi_I\phi(\lt_1)^{m-1}\phi(\lt_2)^{n-1}\p_{w_1}\c\p_{w_2}\c\psi_J\,.
\ee
We can verify that this OPE can be written as
\be
\M_{If}(\lt_1;z_1)\Mt^{g}_J(\lt_2;z_2) \sim \frac{[12]}{8\pi^2\la12\ra}M(z_2)_f^g\int_0^1\d s\int_0^1\d t\,\Jt_{IJ}(s\lt_1 + t\lt_2;z_2)\,.
\ee

More generally, we consider $\phi$ contraction that is not adjacent to the $\gamma$ and $\til{\gamma}$ fields
\be
\wick{\gamma_f\phi(\lt_1)^a\c\phi(\lt_1)\phi(\lt_1)^{m-a-1}\psi_I(z_1) \til{\gamma}^g \phi(\lt_2)^a\c\phi(\lt_2)\phi(\lt_2)^{n-a-1}\psi_J(z_2)}
\ee
generating $\gamma_f\phi(\lt_1)^a\phi(\lt_2)^a\til{\gamma}^g$.
\bea \label{eq:OPE_MtM_fl_0}
&\M^{(m)}_{If}(\lt_1;z_1)\Mt^{(n),g}_J(\lt_2;z_2) \\
&\sim \frac{[12]}{8\pi^2\la12\ra }\gamma_f\phi(\lt_1)^a\phi(\lt_2)^a\til{\gamma}^g\psi_I\phi(\lt_1)^{m-a-1}\phi(\lt_2)^{n-a-1}\p_{w_1}\c\p_{w_2}\c\psi_J\,.
\eea
There are various ways to express this OPE. First, we can observe that, up to symmetrization of the $\phi$ field,
\bea
&\psi_I\phi(\lt_1)^{m-a-1}\phi(\lt_2)^{n-a-1}\p_{w_1}\c\p_{w_2}\c\psi_J(z_2) \\
&= \frac{(n-a-1)!}{(m+n-2a-2)!}(\lt_1\cdot\p_{\lt_2})^{n-a-1} \Jt_{IJ}^{(m+n-2a-2)}(\lt_2;z_2)\,.
\eea
where $\lt_1 \cdot \p_{\lt_2} = \lt_1^\da\p/\p\lt_2^\da$. Therefore, using the combinatorial formula \eqref{eq:comb-gaphiga} which implies 
\be \label{eq:sandwich-flavour}
\gamma_f\phi(\lt_1)^a\phi(\lt_2)^a\til{\gamma}^g(z) \sim \frac{(-)^a}{(a+1)!} [12]^a\big(M(z)^{a+1}\big)_f^g\,.
\ee
we can express \eqref{eq:OPE_MtM_fl_0} as follows
\bea \label{eq:OPE_MtM_fl_1}
&\M^{(m)}_{If}(\lt_1;z_1)\Mt^{(n),g}_J(\lt_2;z_2) \sim \sum_{a\ge0}\frac{[12]^{a+1}}{8\pi^2\la12\ra}\frac{(-)^a}{(a+1)!}(M(z_2)^{a+1})_f^g \\
&\frac{(n-a-1)!}{(m+n-2a-2)!}(\lt_1 \cdot \p_{\lt_2})^{n-a-1} \Jt_{IJ}^{(m+n-2a-2)}(\lt_2;z_2)\,.
\eea
Equivalently we can write this OPE in integral form as
\bea
&\M_{If}(\lt_1;z_1)\Mt^g_J(\lt_2;z_2) \sim \sum_{a\ge0}\frac{[12]^{a+1}}{8\pi^2\la12\ra}\frac{(-)^a}{(a+1)!}(M(z_2)^{a+1})_f^g\\
&\int_{0\leq s_1\leq\dots\leq s_{a+1}\leq1}\d^{a+1}\mbf{s}\int_{0\leq t_1\leq\dots\leq t_{a+1}\leq1}\d^{a+1}\mbf{t}\,\Jt_{IJ}(s_1\lt_1 + t_1\lt_2;z_2)\,.
\eea
Upon making the replacement $\M_f\mapsto\M_f/2\pi,\Mt^g\mapsto\Mt^g/2\pi$ to match the tree OPEs on flat space the factors of $\pi^2$ on the right hand side drop out.

Note that we do not expect these OPEs to associate on the nose. In a background where the celestial chiral algebra as a non-trivial two-point amplitude the failure in the associativity of the $2\to1$ OPEs can be compensated by the $2\to2$ OPE followed by the $2\to0$ OPE. 

 
\section{Amplitude and Form Factor Computations from Background Field Scattering} \label{sec:formfactors}

We have seen that the chiral algebra, where bulk operators are given a VEV, controls scattering amplitudes in certain self-dual backgrounds.  In this section, we will show how these computations can be rewritten as computations of form factors in Yang-Mills theory, in a context where the background field has been made dynamical.  In special cases, these form factors are the same as amplitudes in either Yang-Mills theory or Einstein-Yang-Mills theory.

This technique will allow us to generate new formulae for amplitudes in gauge theory, at up to two-loops, and more generally to prove rationality of a wide variety of form factors. 

Let us start in the simplest case, when the background field is a self-dual gauge field for flavour symmetry.  This is achieved when we set
\be
\gamma_f\til\gamma^g = M_f^g(z)
\ee
where $M_f^g(z)$ is a polynomial of order $2$ in $z$ valued in $\mf{sl}(16)$.  We can write $M$ instead as an expression symmetric in two spinor indices, $M_{\alpha\beta f}^g$.    

The background field satisfies the source equation 
\be
F(A_0)_{\alpha\beta f}^g = M_{\alpha\beta f}^g\delta_{x=0}\,. \label{eq:sourced-field}
\ee
Studying $\mf{sp}(K)$ self-dual gauge theory in this flavour background is very similar to studying $\mf{sp}(K)\oplus\mf{sl}(16)$ gauge theory where we have inserted the operator
\be
\exp{\left(\op{tr}(M^{\alpha\beta}B_{0,\alpha\beta}) \right)}\,. \label{eq:operator-insertion}
\ee
The $B_0$ in this equation is that of the $\mf{sl}(16)$ self-dual gauge theory.  

The goal of this section is to precisely formulate and prove the relationship between amplitudes in the background field \eqref{eq:sourced-field} and form factors for the operator \eqref{eq:operator-insertion}. 

The insertion of the operator \eqref{eq:operator-insertion} sources the field \eqref{eq:sourced-field}. However, this is problematic because this operator is not gauge invariant.  In order to match background field computations and form factor computations, we need to modify the background field construction to impose gauge invariance. 

Gauge invariance here means the following.  The background gauge field is parameterized by the matrices $M_{\alpha\beta}$, and so the scattering amplitudes in the presence of the background field are functions of the matrices $M_{\alpha\beta}$, as well as of the momenta of the external fields.  The gauge invariant part of the background field amplitude is obtained by projecting to functions of the $M_{\alpha\beta}$ which are invariant under the simultaneous conjugation $M_{\alpha\beta} \to g M_{\alpha\beta} g^{-1}$ of the three matrices.


\subsection{Background Field Amplitudes as Form Factors for Maxwell-Yang-Mills Theory}

We will use two slightly different ways to extract a gauge invariant part of the background field amplitude.  The first method is to choose a cocharacter of $\mrm{SL}(16)$, giving a copy of $\mf{gl}(1)$ inside $\mf{sl}(16)$.  We can then ask that the three matrices $M_{\alpha\beta}$ all lie in this copy of $\mf{gl}(1)$.  If we do this, then the background field amplitude is automatically a gauge invariant quantity, for gauge transformations in $\mf{gl}(1)$.  In this case, the operator \eqref{eq:operator-insertion} is also a gauge invariant operator for the Abelian gauge theory.  

This construction provides an exact equivalence between
\begin{enumerate}
\item Scattering amplitudes of the $\mf{sp}(K)$ SDYM in the flavour symmetry background given by \eqref{eq:sourced-field}.
\item Form factors for the operator \eqref{eq:operator-insertion}
in the $\mf{sp}(K)\oplus\mf{gl}(1)$ gauge theory with matter as before, where none of the external fields are photons of either helicity.
\end{enumerate}
It is very easy to see that these two quantities are the same, because the Feynman diagrams that compute them are identical (we will draw the diagrams below, when we consider the more complicated non-Abelian case).  The key point is that self-dual Maxwell theory has a $-+$ propagator, and only the positive helicity gluon interacts with matter fields.  The only Feynman diagrams that contribute to the form factor are diagrams where some number of photon propagators join the local operator to the fermion line. These are exactly the diagrams that appear in the Feynman diagram expansion of the background field amplitude, where now the same photons are sourced by the operator.  

It is important that we do not allow external photons. If there are external photons, then there would be additional diagrams in the form factor computation which involve fermion loops. 


\subsection{Background Field Amplitudes as Form Factors of \texorpdfstring{$\mf{sp}(K)\oplus\mf{sl}(8)$}{sp(K)+sl(8)} Gauge Theory}

We will also show a non-abelian version of this statement.   An extra subtlety here is that the bifundamental matter $\mrm{F}_K\otimes\C^{16}$ (where $\mrm{F}_K$ denotes the fundamental of $\mf{sp}(K)$) is chiral when viewed as a representation of $\mf{sl}(16)$.  It's therefore not possible to make the $\mf{sl}(16)$ gauge fields dynamical at the quantum level because of the chiral anomaly.  Instead, we must focus on some subalgebra of $\mf{sl}(16)$ for which the chiral anomaly vanishes.  

One natural way to do this is to choose a copy of $\mf{sl}(8)$ inside $\mf{sl}(16)$ under which the fundamental representation $\C^{16}$ decomposes as the fundamental plus the anti-fundamental of $\mf{sl}(8)$.  One can alternatively choose some subalgebra $\mf{sl}(R)\subset\mf{sl}(8)$ to make dynamical.

Let us do this, and ask that our matrices $M_{\alpha\beta}$ lie in $\mf{sl}(R)\subset\mf{sl}(8)$.  We will denote the amplitude depending on the  background field $M_{\alpha\beta}$ by an expression like
\be
\ip{ M_{\alpha\beta} \mid \J(\lt_1;z_1)\dots\Jt(\lt_n;z_n)}
\ee
where the states that are scattered can be fermions or gluons of any helicity.  We can project to background field amplitudes that are invariant under the adjoint action by performing a finite-dimensional integral over the space of matrices $M_{\alpha\beta}\in\mf{sl}(R)\otimes\op{Sym}^2(S_-)$, weighted by some integral kernel invariant under the adjoint action of $\mrm{SL}(R)$ on $\mf{sl}(R)$. We will choose to use average against
\be e^{-\tfrac{c}{2}\op{tr}(M^{\alpha\beta}M_{\alpha\beta})}P(M_{\alpha \beta}) \ee
for some $\mrm{SL}(R)$ invariant polynomial $P(M_{\alpha \beta})$ of the matrices $M_{\alpha\beta}$, giving us 
\be
\int_{M_{\alpha\beta}\in\mf{sl}(R)\otimes\op{Sym}^2(S_-)}e^{-\frac{c}{2}\op{tr}(M^{\alpha\beta}M_{\alpha\beta})} P(M_{\alpha\beta}) \ip{M_{\alpha\beta}\mid\J(\lt_1;z_1)\dots\Jt(\lt_n;z_n)}\,. \label{eq:averaged-amplitude}
\ee
(One should choose a contour so that the integral converges.)

We will show that there is an equivalence between:
\begin{enumerate}
\item Averages of single-trace background field amplitudes of the $\mf{sp}(K)$ SDYM theory in the flavour symmetry background given by \eqref{eq:averaged-amplitude}.
\item Form factors for the operator 
\be
\int_{M_{\alpha\beta}\in\mf{sl}(R)\otimes\op{Sym}^2(S_-)} e^{-\tfrac{c}{2}\op{tr}(M^{\alpha\beta}M_{\alpha\beta}) + \op{tr} (M^{\alpha\beta}B_{0,\alpha\beta})} P(M_{\alpha\beta})  \label{eq:averaged-operator}
\ee
in the $\mf{sp}(K)\oplus\mf{sl}(R)$ gauge theory, where none of the external fields are $\mf{sl}(R)$ gluons of either helicity.  Here $B_0$ is the anti-self-dual part of the field strength for the $\mf{sl}(R)$ self-dual gauge theory.  The Gaussian integral should be performed over a slice chosen to make the integral converge.
\end{enumerate}
Clearly, the Gaussian integral yields an operator of the form
\be
e^{\tfrac{1}{2c}\op{tr}(B_0^{\alpha\beta}B_{0,\alpha\beta})}\what{P}(B_{0,\alpha \beta})
\ee
where, if $P$ is a homogeneous polynomial of order $k$, then $\what{P}(B_{0,\alpha\beta})$ is $P(B_{0,\alpha\beta}/c)$ plus lower order terms arising from Wick contractions.  It's not hard to see that we can build any gauge invariant operator which only depends on $B_0$ (and not its derivatives) in this way.\footnote{It's possible to vary the construction to include derivatives of $B_0$ by giving a VEV to operators like $\gamma_f\exp(\phi^\da\lt_\da)\til \gamma^g$. We will not pursue this generalization.} We conclude that all such form factors must be rational functions with poles determined by a chiral algebra.\footnote{At least as long as we use an IR regulator that plays well with self-duality in $4d$. See \cite{Dixon:2024mzh} for further details.}

The proof of this equivalence relies on the fact that we are working with self-dual gauge theory, not full Yang-Mills.  In full Yang-Mills, the Feynman diagram expansion of the form factor would have complicated loop diagrams involving $\mf{sl}(R)$ gluons, not present in the background field computation.  In self-dual gauge theory, these loop diagrams don't contribute and the Feynman diagram expansion of the form factor is precisely the same as that of amplitudes in the presence of the background field.  

Form factors of self-dual gauge theory compute form factors of full Yang-Mills at particular loop numbers and helicity configurations, so that this method does allow us to access these special Yang-Mills form factors. 


\subsection{Feynman Diagrams for Background Field Amplitudes}

Let us describe the Feynman diagram interpretation of the background field amplitudes (these are illustrated in Fig. \ref{fig:backgroundloopdiagrams}). Working with background fields for the $\mf{sl}(R)$ flavour symmetry means treating the $\mf{sl}(R)$ sector classically, that is, at tree-level.  The relevant Feynman diagrams in the $\mf{sl}(R)$ sector are unions of rooted trees, where the leaves are placed at the source and the roots connect to the matter fields.

Diagrams like this are simply the Feynman diagram interpretation of building a solution to the $\mf{sl}(R)$ Yang-Mills equations with a source.

\begin{figure}[ht]
    \centering
        \subfloat[A diagram that can be interpreted as the scattering of two fermions to cubic order in the background field in $\mf{sl}(R)$, or equivalently as the form factor for the operator $\op{tr}(B_0^3)$ in the $\mf{sl}(R)$ gauge theory.]
        {
        \includegraphics{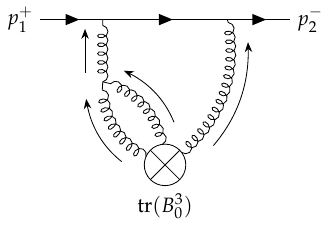}%
        } \hfil
        \subfloat[A diagram that is part of the form factor computation for the operator $\op{tr}(B_0^2)$, with two external positive helicity gluons in $\mf{sl}(R)$ and two external fermions.]
        {
        \includegraphics{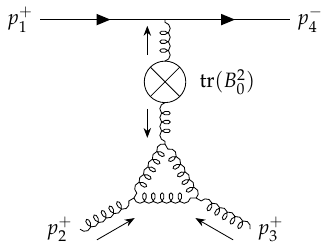}%
        }
        \caption{Two examples of Feynman diagrams contributing to form factors in $\mf{sl}(R)$ self-dual gauge theory. In both arrows indicate the flow of helicity from negative to positive. The crossed dot represents a local operator insertion; removing this vertex in the first diagram eliminates all $\mf{sl}(R)$ gluon loops, whereas in the second diagram a loop remains. Only the first can be obtained from a background field computation. \label{fig:backgroundloopdiagrams}}
\end{figure}

In the form factor computation, these diagrams also occur, but one can worry that there might also be other diagrams.  Recall that SDYM theory has a $-+$ propagator and a $++-$ vertex.  Diagrams can be viewed as in Fig. \ref{fig:backgroundloopdiagrams} as oriented, where the arrow goes from $-$ to $+$, and gluon vertices have two incoming and one outgoing.  The operator sourcing the background field will be something like $\op{tr}(B_0^k)$, where the spinor indices are contracted in some unspecified way.  This corresponds to a vertex attached to $k$ negative helicity gluons, which acts as a source.  

In the background field computation, the only diagrams that can contribute are those where the $\mf{sl}(R)$ gluons form unions of rooted trees, whose leaves attach to the vertex associated to the operator \eqref{eq:averaged-operator},  and whose roots connect to matter fields.  In these diagrams, two gluons that emerge from the local operator can join together, but a single gluon cannot split in two.

In the form factor computation, in principle one can have other diagrams, such as those illustrated in Fig. \ref{fig:backgroundloopdiagrams}.  These diagrams have the feature that a gluon emerging from the local operator splits into two. 

However, the combinatorics of SDYM Feynman diagrams tells us immediately that this second kind of diagram cannot occur, as long as none of our external fields are $\mf{sl}(R)$ gluons. In this situation, all of the $\mf{sl}(R)$ gluons must  be absorbed by the fermion line, which only couples to the positive helicity gluon.  Any time a gluon emerges from the local operator and splits into two, one of these two (after applying the $-+$ propagator) is negative helicity and can not be absorbed by the fermion line.

What this shows is that form factors for the non-gauge invariant operator $\exp \left( \op{tr} (M^{\alpha\beta}B_{0,\alpha\beta}) \right)$ and background field amplitudes in the background determined by $M^{\alpha \beta}$ are the same.  To obtain form factors of the gauge invariant amplitude, we perform an integral over $M$.  

Ultimately, $\mf{sl}(K)$ gauge theories are more interesting than $\mf{sp}(K)$ gauge theories.  If we restrict our external states to lie in the subalgebra $\mf{sl}(K) \subset \mf{sp}(K)$, the form factor we find will be that of the $\mf{sl}(K)$ gauge theory. This is simply because the Feynman diagrams that contribute to these form factors do not have any loops of $\mf{sp}(K)$ gluons.

This leads us to the following result.  Consider $\mf{sl}(K) \oplus \mf{sl}(R)$ self-dual gauge theory, with $R \le 8$ and with bifundamental matter.  As above let $B_0$ be the Lagrange multiplier field for the $\mf{sl}(R)$ theory.  Then, form factors for any trace of powers of $B_0$, with external gluons living only in $\mf{sl}(K)$, are rational and are given by correlators of a chiral algebra.  As mentioned before, this rationality is not expected to hold in dimensional regularization; instead one should use purely four-dimensional regulator such as a mass regulator.


\section{Two-Loop Amplitude Computations on Flat Space}

Let us use this technique to compute amplitudes of gauge theory with bifundamental matter.  The idea is the following.  As above, we consider the $\mf{sp}(K)$ self-dual gauge theory coupled to $\mf{sl}(R)$ self-dual gauge theory, where $\mf{sl}(R)\subset\mf{sl}(8)\subset\mf{sl}(16)$ and $\mf{sl}(8)\subset\mf{sl}(16)$ is a subalgebra under which the fundamental of $\sl(16)$ decomposes as the fundamental plus anti-fundamental of $\mf{sl}(8)$.  

If $\mrm{F}_K$ denotes the fundamental of $\mf{sp}(K)$ and $\mrm{F}_R$ that of $\mf{sl}(R)$, the relevant matter content consists of bifundamental matter in $\mrm{F}_K\otimes(\mrm{F}_R\oplus\mrm{F}_R^\vee)$.  (Matter uncharged under $\mf{sl}(R)$ will not be relevant for this computation).  From the point of view of the $\mf{sl}(R)$ gauge theory, this is self-dual gauge theory with $N_f = 2K$ (because $\mrm{F}_K$ is of dimension $2K$). 

We will compute all of the the two-loop all $+$ amplitudes in the $\mf{sp}(K)\oplus\mf{sl}(R)$ gauge theory with the following features:
\begin{enumerate}
    \item All the external gluons are in $\mf{sp}(K)$.
    \item There is a single $\mf{sl}(R)$ gauge theory propagator.  
\end{enumerate}
At four-points, there is only one Feynman diagram contributing to the amplitude; it is depicted in Fig. \ref{fig:twoloop}.  All diagrams contributing to this amplitude are planar.   

The $n$-point trace-ordered partial amplitude is
\be 
- \frac{R^2-1}{4R(4\pi)^4}\sum_{1 \le i < j < k < l \le n}\frac{\la ij\ra[jk]\la kl\ra[li] + [ij]\la jk\ra[kl]\la li\ra}{\la 12\ra\dots\la n1\ra}
\ee
Note that, up to a factor, this is the `parity-even part' of the one-loop all $+$ amplitude \cite{Mahlon:1993si,Bern:1993sx,Bern:1993qk}.  (Strictly only the numerator is parity even.)  The full amplitude is obtained by multiplying this by $\op{tr}_{\mrm{F}_K}(\t_1 \dots \t_n)$ and then summing over permutations.

It is straightforward to modify this to compute amplitudes where the gauge groups are $\mf{sl}(L) \subset \mf{sp}(K)$ and $\mf{sl}(R)$.  Because the $\mf{sp}(K)$ gluons only form trees in the Feynman diagrams of the amplitudes we are computing, we can compute the amplitudes the $\mf{sl}(L)$ gauge theory simply by restricting the external gluons to live in this subalgebra.  The fundamental of $\mf{sp}(K)$ becomes the fundamental plus anti-fundamental of $\mf{sl}(K)$, and we are considering $\mf{sl}(K) \oplus \mf{sl}(R)$ gauge theory with bifundamental matter.

If we do this, we find the 
trace-ordered amplitude becomes 
\be \label{eqn:twoloop} 
- \frac{R^2-1}{2R(4\pi)^4}\sum_{1 \le i < j < k < l \le n}\frac{\la ij\ra[jk]\la kl\ra[li] + [ij]\la jk\ra[kl]\la li\ra}{\la 12\ra\dots\la n1\ra}\,.
\ee
We get an extra factor of $2$ here because the trace ordered amplitude for $\mf{sl}(L)$ is multiplied by trace in the fundamental $F_L$, and not by the trace in $F_L \oplus F_L^\vee$.

At four-points Equation \eqref{eqn:twoloop} matches the $\eps\to0$ limit of the partial amplitude evaluated using dimensional regularization in \cite{Dixon:2024mzh,Bern:2002zk}. (The relevant formula is given in Equation (4.22) of the first reference.) The QCD five-point two-loop all-plus amplitude has been computed in \cite{Agarwal:2023suw}. We expect the same methods can be used to compute the partial amplitude considered here, and it would certainly be interesting to compare the results.


\subsection{Preliminaries}

It's important to note that this amplitude is essentially insensitive to the value of $R$.  Indeed, as we see from the Fig. \ref{fig:twoloop}, the interaction vertex for the $\mf{sl}(R)$ gluons never appears.  We find exactly the same answer by replacing the gauge algebra $\mf{sl}(R)$ by a copy of $\mf{gl}(1)\subset\mf{sl}(16)$.  We will do this in what follows, and at the end restore the colour factor.  If we do this then $\mrm{F}_K$ will be of some charge under $\mf{gl}(1)$. 

\begin{figure}[ht]
    \centering
        \includegraphics{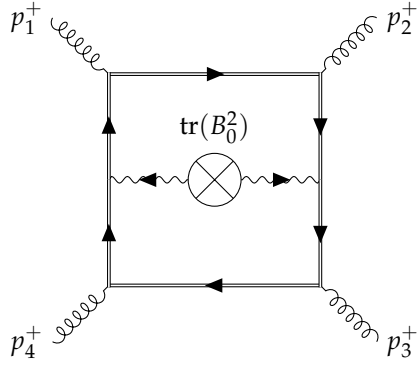}
    \caption{This figure depicts a two-loop diagram with external gluons in $\mf{sp}(K)$ and a gluon exchange in $\mf{sl}(R)$. Here the $\mf{sl}(R)$ gluons are represented by photon propagators with arrows indicating the flow of helicity. We argue above that the overall amplitude is unchanged, modulo a colour factor, if we replace $\mf{sl}(R)$ by an Abelian subalgebra. \label{fig:twoloop}}
\end{figure}

As before, we let $B_0$ denote the Lagrange multiplier field of the $\mf{gl}(1)$ gauge theory.  We would like to compute the form factor of the operator $\op{tr}(B_0^2)$, at two-loops, where the external fields are all $\mf{sp}(K)$ positive helicity gluons.  We have seen in Sect. \ref{sec:amplitude-bootstrap} that the counter-term ambiguities in this form factor are all total derivatives.  Therefore, while the form factor itself is ambiguous, its integral -- which is an amplitude -- is not ambiguous and can be computed from the chiral algebra.

To compute the amplitude from the chiral algebra, we should take some number of external gluons $\J^{(m)}(\lt;z) = \psi(\phi^\da\lt_\da)^m\psi$, contract all of the $\psi$ fields, and re-order the $\phi$ fields using the ADHM equation.  After reordering, we can be left with some terms like $(\gamma\til\gamma)^2$ (and with no $\phi$s).  Since the background field amounts to giving a VEV to $\gamma\til\gamma$, and we are working to quadratic order in the background field, only these terms will contribute to the form factor.  This makes it clear (and it is easy to see on other grounds) that the external fields must contain a total of exactly four $\phi$s.  

We also note that no $\phi-\phi$ or $\gamma-\til\gamma$ propagators can appear in the Feynman diagrams which contribute to this form factor.  Any such propagator would give rise to an operator containing $\p_{w_1}\c\p_{w_2}\c$ operator, and we are not giving a VEV to any such operator. 

We will use the freedom to add on counter-terms to simplify the form factor.  The tree-level form factor of $\op{tr}(F^2)$ has a two-point function
\be
\ip{\op{tr}(F^2)\mid\J^{(2)}(\lt_1;z_1)\J^{(2)}(\lt_2;z_2)} \propto [12]^2\,.
\ee
This operator is of course a total derivative, and does not affect the integrated form factor.  By adding this counter-term, we can assume that the two-loop two-point form factor of $\op{tr}(B_0^2)$ is zero.

This also implies that
\be \ip{\op{tr}(B_0^2) \mid \J^{(2)}(\lt_1;z_1)\J^{(1)}(\lt_2;z_2)\J^{(1)}(\lt_3;z_3)} = 0\,. \ee
Symmetry under the $\mrm{PSL}_2(\C)$ acting by conformal transformations on the $z$ plane tells us that this expression is proportional to $[12][13]/\la23\ra$.  However, the pole at $z_2 = z_3$ must be equal to the two-point function of $\J^{(2)}$ with $\J^{(2)}$, which we have already seen is zero.

The first non-zero trace-ordered amplitude is
\be
\ip{\op{tr}(B_0^2) \mid \J^{(1)}(\lt_1;z_1)\J^{(1)}(\lt_2;z_2)\J^{(1)}(\lt_3;z_3)\J^{(1)}(\lt_4;z_4)}\,,
\ee
which we will show evaluates to
\be \label{eqn:twoloop-again} - \frac{1}{2(4\pi)^4}\bigg(\frac{[12][34]}{\la12\ra\la34\ra} + \frac{[23][41]}{\la23\ra\la41\ra}\bigg)\,. \ee
This matches the $\eps\to0$ limit of this partial amplitude evaluated in dimensional regularization at four-points in \cite{Dixon:2024mzh,Bern:2002zk}. To recover the expression for $\mf{sl}(R)$ instead of $\mf{gl}(1)$ we simply dress the amplitude by a factor of
\be \op{tr}_{\mrm{F}_R}(\mrm{Cas}) = \frac{R^2-1}{2R} \ee
where $\mrm{Cas}$ is the Casimir of $\mf{sl}(R)$.


\subsection{Computing the Four-Point Function}

Let us compute this four-point function by studying the OPE of $\J^{(1)}_{IJ} = \psi_I\phi^\da\lt_\da\psi_J$ with itself.  Only $\psi-\psi$ Wick contractions play a role in this computation.  Using only $\psi-\psi$ contractions we have
\be
\J^{(1)}_{IJ}(\lt_1;z_1)\J^{(1)}_{KL}(\lt_2;z_2) = 
 - \frac{4\Omega_{(K(J}}{\la12\ra}\bigg(\J_{I)L)}^{(2)} (\lt_1+\lt_2;z_1) + \frac{1}{2}[12]\psi_{I)}[\phi^\done,\phi^\dtwo]\psi_{L)}(z_1)\bigg)\,.
\ee
The term involving $\J^{(2)}$ (where the $\phi$s are symmetrized) does not contribute.  We conclude that the pole in the four-point function at $z_1 = z_2$ is
\bea
&\ip{\op{tr}(B_0^2) \mid \J^{(1)}_{IJ}(\lt_1;z_1)\J^{(1)}_{KL}(\lt_2;z_2) \J^{(1)}_{MN}(\lt_3;z_3) \J^{(1)}_{PQ}(\lt_4;z_4)} \\
&\sim - \frac{2[12]}{\la12\ra}\Omega_{(K(J}\ip{\op{tr}(B_0^2)\mid \J_{I)L)}[C](z_1) \J^{(1)}_{MN}(\lt_3;z_3) \J^{(1)}_{PQ}(\lt_4;z_4) }
\eea
where $\J_{IL}[C] = - \J_{LI}[C]$ denotes $\psi_I[\phi^\done,\phi^\dtwo]\psi_L$.  

To compute the four-point function, it remains to compute the three-point function
\be
 \ip{ \op{tr}(B_0^2) \mid \J_{IL}[C](z_1) \J_{MN}^{(1)}(\lt_3;z_3) \J^{(1)}_{PQ}(\lt_4;z_4) }\,.
\ee
This can only have a pole at $z_3 = z_4$, because the residue of the pole at $z_1 = z_3$ or $z_3 = z_4$ would be the two-point function between operators of different spins.

The pole at $z_3 = z_4$ yields $\J[C](z_3)/2$ as well as $\J^{(2)}(\lt_3;z_3)$, but we have seen that $\J^{(2)}$ does not contribute.  The next step of the computation of the four-point function is to compute the two-point function
\be
\ip{ \op{tr}(B_0^2) \mid \J_{IJ}[C](z_1)\J_{KL}[C](z_3) }\,.
\ee
By the BRST relation, this is equivalent to computing
\be
\ip{ \op{tr}(B_0^2) \mid (\psi_I\til\gamma^f\gamma_f\psi_J)(z_1) (\psi_K\til\gamma^g\gamma_g\psi_L)(z_3) }\,.
\ee
Recall that the form factor we are computing is obtained by expanding to second order in a background field corresponding to giving a VEV 
\be
\ip{\gamma_f\til\gamma^g(z)} = \frac{1}{4\pi^2}M_f^g(z)
\ee
Here $M(z)$ is a matrix-valued polynomial of order $2$ in $z$, and we've normalized the right hand side so that on space-time the favour background sourced obeys
\be F_{\alpha\beta}(A_0) = M_{\alpha\beta}\delta_{x=0} \ee
As we discussed earlier, there is no loss of generality assuming that $M(z)$ is in a copy of $\mf{gl}(1)\subset\mf{sl}(16)$, so that $M(z) = f(z)D$ where $D$ is a single matrix with no $z$ dependence, and $f$ is a quadratic function of $z$. In spinor notation, we can write $M_{\alpha\beta} = \eta_{\alpha\beta}D$ for some two spinor index symmetric tensor $\eta$. 

We will write background field correlation functions as
\be
\ip{ \eta D \mid \mc{O}(z_1) \dots \mc{O}(z_n) }\,.
\ee  
Expanding in $\eta D$, the term quadratic in $D$ is $\op{tr}(D^2)$ multiplied by a linear combination of the two possible expressions built from $\eta$: 
\be \eta_{\alpha \beta} \eta^{\alpha \beta}\,,\qquad\eta_{(\alpha \beta}\eta_{\gamma\delta)}\,. \ee
The first term is in the trivial representation of $\mrm{SU}(2)$, and the second term is symmetrized in the spinor indices so that it lives in the spin $2$ representation.  The term proportional to $\ip{\eta \mid \eta}$ is exactly the form factor $\op{tr}(B_0^2)$.  
 
The next step in the computation is to to compute the expectation value  
\be \label{eq:twoptfunction} \ip{\eta D \mid (\psi_I\til\gamma^f\gamma_f \psi_J) (z_1) (\psi_K \til\gamma^g\gamma_g \psi_L) (z_3)} \ee
in this background, to quadratic order in $\eta D$. 

To do this, first we note that we are implicitly using the normally ordered product when we refer to $\psi_I\til\gamma^f\gamma_f\psi_J$. This operator is defined by the contour integral
\be \psi_I\til\gamma^f\gamma_f\psi_J(z) = \frac{1}{2\pi\i}\oint_{S^1}\frac{\d w}{w}\,(\psi_I\til\gamma^f)(z)(\gamma_f\psi_J)(z+w)\,, \ee
where $S^1$ is a small circle around $w=0$. Further, any $\gamma-\til\gamma$ contractions that appear in the normal ordering procedure will not matter, as they will lead to states involving $\p_{w_i}\c$ which vanish in our chosen background.

Using this expression for the normal ordering, the two-point function \eqref{eq:twoptfunction} can be written in terms of the four-point function
\be
\ip{ \eta D \mid (\psi_I\til\gamma^f)(z_1)(\gamma_f\psi_J)(z_2) (\psi_K\til\gamma^g)(z_3)(\gamma_g\psi_L)(z_4)}\,, \label{eq:fourptfunction}
\ee
by performing a contour integral which extracts the non-singular parts as $z_2\to z_1$, $z_4\to z_3$.  

It turns out that this four-point function has no singularities in this limit.  A $\til\gamma-\gamma$ OPE produces $\p_{w_i}\c$ fields, which can not contribute; and the $\psi(z_1)-\psi(z_2)$ OPE produces the bulk state $\gamma_f\til\gamma^f$.  Since we are assuming that the VEV for the $\gamma_f\til\gamma^g$ state is trace free, this also does not contribute.

The only singularities in the four-point function \eqref{eq:fourptfunction} are at $z_1 = z_4$ and $z_2 = z_3$ (as we do not give a VEV to the $\gamma_f\gamma_g$ or $\til\gamma^f\til\gamma^g$ bulk operators).  The pole at $z_1 = z_4$ and $z_2 = z_3$ is
\be \label{eq:polarpartfourpt}
\frac{\Omega_{JK}\Omega_{LI}}{\la14\ra\la23\ra}\ip{ \eta D \mid (\gamma_g\til\gamma^f)(z_1) (\gamma_f\til\gamma^g)(z_3) } = \frac{1}{(2\pi)^4}\frac{\Omega_{JK}\Omega_{LI}}{\la14\ra\la23\ra} \op{tr}(D^2) \ip{\eta \mid 1^2 }\ip{\eta \mid 3^2}\,.
\ee
Using the Schouten identity we can write 
\be
\ip{\eta \mid 1^2} \ip{\eta \mid 3^2} = \frac{1}{3}\ip{\eta \mid \eta} \ip{13}^2 + \ip{\eta^2 \mid 1^2 3^2 }\,,
\ee
where the second term is totally symmetric in the four spinor indices of $\eta^2$. This term can be discarded as, to compute the form factor of the $\mrm{SO}(4)$ invariant operator $\op{tr}(B_0^2)$, we need to expand to second order in $\eta D$ and then project onto invariant expressions.

Finally, we find that
\be \ip{ \op{tr}(B_0^2) \mid (\gamma_g\til\gamma^f)(z_1) (\gamma_f\til\gamma^g)(z_3) } = \frac{1}{(2\pi)^4}\ip{13}^2\,. \ee 
Inserting this back into our previous computations gives us
\be \ip{ \op{tr}(B_0^2) \mid (\psi_I\til\gamma^f)(z_1)(\gamma_f\psi_J)(z_2)(\psi_K\til\gamma^g)(z_3)(\gamma_g\psi_L)(z_4) } = \frac{\Omega_{JK}\Omega_{LI}}{(2\pi)^4}\frac{\ip{13}^2}{\la14\ra\la23\ra}\,. \ee
Indeed, as we have seen, this correlator only has poles at $z_1 = z_4$ and $z_2 = z_3$, and symmetry under $\mrm{PSL}(2)$ transformations in the $z$ plane determines it in full. Sending $z_2\to z_1$ and $z_4\to z_3$ gives us 
\be \ip{ \op{tr}(B_0^2) \mid (\psi_I\til\gamma^f\gamma_f\psi_J)(z_1)(\psi_K\til\gamma^g\gamma_g\psi_L)(z_3)} = \frac{\Omega_{JK}\Omega_{LI}}{(2\pi)^4}\,, \ee
or equivalently,
\be
\ip{ \op{tr}(B_0^2) \mid \J_{IJ}[C](z_1)\J_{KL}[C](z_3) } = \frac{1}{(2\pi)^4}\Omega_{J[K}\Omega_{L]I} = - \frac{1}{4(2\pi)^4}\op{tr}_{\mrm{F}_K}(\mrm{a}_{IJ}\mrm{a}_{KL})\,.
\ee
where $(\mrm{a}_{IJ})^K_{~\,L} = 2\delta^K_{~\,[I}\Omega_{J]L}$.

The singularity at $z_3 = z_4$ and the conformal weights determine the three-point function
\be
\ip{ \op{tr}(B_0^2) \mid \J_{IL}[C](z_1) \J_{MN}^{(1)}(\lt_3;z_3)\J^{(1)}_{PQ}(\lt_4;z_4)} = - \frac{1}{8(2\pi)^2}\frac{[34]}{\la34\ra}\op{tr}_{\mrm{F}_K}(\mrm{a}_{IL}\{\t_{MN},\t_{PQ}\})\,.
\ee
Here we've exploited the fact that in the fundamental representation  $\{\t_{MN},\t_{PQ}\} = -4\Omega_{(P(N}\mrm{a}_{M)Q}$.  From this we find that the four-point function is
\bea
&\ip{ \op{tr}(B_0^2) \mid \J_{IJ}^{(1)}(\lt_1;z_1)\J_{KL}^{(1)}(\lt_2;z_2)\J_{MN}^{(1)}(\lt_3;z_3)\J_{PQ}^{(1)}(\lt_4;z_4) } \\
&= - \frac{1}{(4\pi)^4}\frac{[12][34]}{\la12\ra\la34\ra}\op{tr}_{\mrm{F}_K}(\{\t_{IJ},\t_{KL}\}\{\t_{MN},\t_{PQ}\}) + \text{cyclic permutations of $\{2,3,4\}$}
\eea
Here when we cyclically permute $\{2,3,4\}$ we also permute the corresponding colour indices. Decomposing the anticommutators so that we have a sum over fundamental traces, we get a total of twelve terms. There are six possible orderings of the trace $\op{tr}(\t_1\t_{\sigma(2)}\t_{\sigma(3)}\t_{\sigma(4)})$ where $\sigma\in S(\{2,3,4\})$. The coefficient of the trivial permutation gives the trace-ordered four-point function
\bea \label{eqn:4pt-trace-orderd}
&\ip{ \op{tr}(B_0^2) \mid \J^{(1)}(\lt_1;z_1)\J^{(1)}(\lt_2;z_2)\J^{(1)}(\lt_3;z_3)\J^{(1)}(\lt_4;z_4) } \\
&= - \frac{1}{(4\pi)^4}\bigg(\frac{[12][34]}{\la12\ra\la34\ra} + \frac{[23][41]}{\la23\ra\la41\ra}\bigg)\,.
\eea
Finally, in order to deform sdQCD to the corresponding QCD we should insert $\op{tr}(B_0^2)/2$, leading to the quoted expression \eqref{eqn:twoloop-again}.

The $n$-point function is determined relatively easily by induction, using the tree-level OPE -- see \cite{Costello:2023vyy} for similar computations.\footnote{We are very grateful to Anthony Morales for identifying an error in the original version of this formula. This was caused by omitting the second term in equation \eqref{eqn:4pt-trace-orderd}. Whilst on the support of momentum conservation both terms are equal, as form factors they differ leading to an incorrect inductive $n$-point formula.} We find that the trace-ordered $n$-point amplitude is
\bea
&\ip{ \op{tr}(B_0^2) \mid \J(\lt_1;z_1) \dots \J(\lt_n;z_n) } \\
&= - \frac{1}{2(4\pi)^4}\sum_{1 \le i < j < k < l \le n} \frac{\la ij\ra[jk]\la kl\ra[li] + [ij]\la jk\ra[kl]\la li\ra} {\ip{12} \dots \ip{n1}}\,.
\eea
This is proportional to the `parity-even part' of the one-loop all $+$ amplitude. (Strictly only the numerator, denoted $E_n$ in \cite{Bern:1993qk}, is parity-even.)

As we mentioned, the $\mf{sl}(R)$ version of this calculation will be the same, where $\mf{sl}(R)$ gauge indices are contracted in the fundamental representation.


\section{Non-Factorizing OPEs} \label{sec:nonfactorizing}

We have analyzed some one-loop corrections to the OPE in Sect. \ref{subsec:NonPlanarOPE}, which are known to correspond to the one-loop QCD splitting amplitude. However, as shown in \cite{Costello:2022upu,Bittleston:2022jeq}, these $2\to 1$ OPEs alone fail to satisfy the associativity of the chiral algebra, and some $2\to 2$ OPEs are required to restore associativity. In this section, we derive these $2\to 2$ OPEs from the large $N$ chiral algebra.

In order to do so, we first review necessary ingredients for the OPE calculations. The OPEs of the defect fermions are given by
\be \psi_I^p(z_1)\psi_J^q(z_2) \sim \frac{1}{\la12\ra}\Omega_{IJ}\omega^{qp}\,. \ee
Here $\Omega$ and $\omega$ are the symplectic forms for $\mf{sp}(K)$ and $\mf{sp}(N)$ respectively. We will also need the bulk $\phi^\done-\phi^\dtwo$, $\gamma-\til\gamma$ and $\c-\b$ OPEs computed in Appendix \ref{app:diagram}
\bea
&\phi^{\done pq}(z_1)\phi^{\dtwo rs}(z_2) \sim \frac{1}{4\pi^2\la12\ra}\eps_{ij}\big(\omega^{[r[p}\p_{w_i}\c^{q]}_{~~t}\p_{w_j}\c^{s]t} + \p_{w^i}\c^{[r[p}\p_{w^j}\c^{q]s]}\big)(0)\,, \\
&\gamma_f^p(z_1)\til\gamma^{gq}(z_2) \sim \frac{1}{8\pi^2\la12\ra}\delta_f^{~\,g}\eps_{ij}\p_{w_i}\c^p_{~t}\p_{w_j}\c^{qt}(0)\,, \\
&\c^{pq}(z_1)\b^{rs}(z_2) \sim \frac{1}{4\pi^2\la12\ra}\eps_{ij}\big(\omega^{(r(p}\p_{w_i}\c^{q)}_{~\,t}\p_{w_j}\c^{s)t} - \p_{w_i}\c^{(r(p}\p_{w_j}\c^{q)s)}\big)(0)\,.
\eea

An important part of the computation is to rearrange the OPE into our chosen BRST representatives using ADHM constraint. The ADHM constraint receives a quantum correction arising from the defect: The BRST operator acts on the defect fermions via
\be Q\psi^p_I = \c^p_{~\,q}\psi^q_I\,, \ee
and so takes the form
\be Q = Q_\mrm{bulk} + \frac{1}{4\pi\i}\oint\d z\,\psi_I\c\psi^I\,. \ee
This leads to a modification of the ADHM constraint
\be Q_\mrm{bulk}\b = \dbar\b + [\c,\b] + [\phi^\done,\phi^\dtwo] + \frac{1}{2}\{\til\gamma^f,\gamma_f\}\,, \ee
so that it now reads
\bea
&Q\b^{rs} = \dbar\b^{rs} + [\c,\b]^{rs} + [\phi^\done,\phi^\dtwo]^{rs} + \frac{1}{2}\{\til\gamma^f,\gamma_f\}^{rs} \\
&+ \frac{1}{8\pi^2}\eps_{ij}\big((\p_{w_i}\c\psi_O)^{(r}(\p_{w_j}\c\psi^O)^{s)} + \psi_O^{(r}(\p_{w_i}\c\p_{w_j}\c\psi^O)^{s)}\big)\,.
\eea

We are now in a position to recover the non-factorizing part of the one-loop splitting function from the chiral algebra.

\begin{figure}[ht]
\centering
    \subfloat{%
    \includegraphics{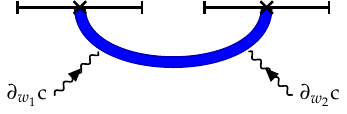}%
    } \hfil
    \subfloat{%
    \includegraphics{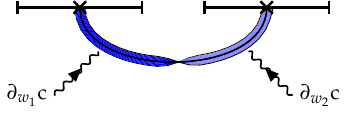}%
    }
    \caption{Contributions to the $2\to2$ OPE from a single $\phi$ contraction. In the first diagram the external $\p_{w_1}\c,\p_{w_2}\c$ appear on the same side of the propagator, and so it generates $\normord{\J_{(J(K}\Jt_{L)I)}}$ terms. In the second $\p_{w_1}\c$ is on a different side to $\p_{w_2}\c$, and it therefore generates $\normord{\M_{(J(K}\Mt_{L)I)}}$.} \label{fig:single-phi}
\end{figure}

As a warm-up, we first consider the OPE between $\J_{IJ}[1,0]$ and $\J_{KL}[0,1]$. There are two kinds of Wick contractions that can lead to a $2 \to 2$ OPE: a single contraction of a $\psi$ pair or a single contraction of a $\phi$ pair. We first consider the case of a single $\psi$ contraction.
\bea
&\J_{IJ}[1,0](z_1)\J_{KL}[0,1](z_2) \sim \wick{\psi_I \phi^\done\c\psi_J(z_1) \c\psi_K\phi^\dtwo\psi_L(z_2)} + (I\leftrightarrow J),(K\leftrightarrow L) \\
&\sim - \frac{2}{\la 12 \ra}\Omega_{(K(J}\psi_{I)}\{\phi^\done,\phi^\dtwo\}\psi_{L)}(z_2) - \frac{2}{\la 12  \ra}\Omega_{(K(J}\psi_{I)}[\phi^\done,\phi^\dtwo]\psi_{L)}(z_2)\,.
\eea
The first term is analyzed in the previous section and produces the planar OPE. In this section, we focus on the second term. Here we use the full ADHM equation derived from the modified BRST transformation \ref{eq:BRST-full} in the presence of the defect
\be \label{eq:ADHM-full}
[\phi^\done,\phi^\dtwo]^{rs} + \frac{1}{2}\{\til\gamma^f,\gamma_f\}^{rs} + \frac{1}{8\pi^2}\eps_{ij}\big((\p_{w_i}\c\psi_O)^{(r}(\p_{w_j}\c\psi^O)^{s)} + \psi_O^{(r}(\p_{w_i}\c\p_{w_j}\c\psi^O)^{s)}\big) = 0\,.
\ee
We find that 
\bea
&\psi_I[\phi^\done,\phi^\dtwo]\psi_L = - \frac{1}{8\pi^2}\big(\psi_I\psi_O\psi^O\p_{w_1}\c\p_{w_2}\c\psi_L + \psi_I\p_{w_1}\c\p_{w_2}\c\psi_O\psi^O\psi_L \\
&+ \psi_I\p_{w_1}\c\psi_O\psi^O\p_{w_2}\c\psi_L -  \psi_I\p_{w_2}\c\psi_O\psi^O\p_{w_1}\c\psi_L\big) - \frac{1}{2}\big(\psi_I\til\gamma^f\gamma_f\psi_L + \psi_I\gamma_f\til\gamma^f\psi_L\big)\,.
\eea
As a result, a single $\psi$ contraction lead to the following $2\to 2$ OPE:
\bea \label{eq:OPE2->2simple1}
&\J_{IJ}[1,0](z_1)\J_{KL}[0,1](z_2) \sim - \frac{1}{\la12\ra}\big(\Omega_{(J(K}\M_{L),f}\Mt_{I)}^f(z_2) + \Omega_{(K(J}\M_{I),f}\Mt_{L)}^f(z_2)\big) \\
&- \frac{1}{4\pi^2\la 12\ra}\big(\Omega_{(J(K}\J_{L)}^{~\,~O}\Jt_{I)O}(z_2) + \Omega_{(K(J}\J_{I)}^{~\,~O}\Jt_{L)O}(z_2) \\
&+  \Omega_{(J(K}\M_{L)}^{~\,~O}\Mt_{I)O}(z_2) + \Omega_{(K(J}\M_{I)}^{~\,~O}\Mt_{L)O}(z_2)\big)\,.
\eea
Next, we consider the case of a single $\phi$ contraction (Fig. \ref{fig:single-phi}).
\bea \label{eq:OPE2->2simple2}
&\J_{IJ}[1,0](z_1)\J_{KL}[0,1](z_2) \sim  \wick{ \psi_{Ip} \c\phi^{\done pq} \psi_{Jq}(z_1) \psi_{Kr} \c \phi^{\dtwo rs} \psi_{Ls}(z_2)} \\
&= \frac{1}{4\pi^2\la12\ra} \psi_{Ip} \psi_{Jq} \psi_{Kr} \psi_{Ls} \epsilon_{ij}(\omega^{[r[p} \p_{w_i}\c^{q]}_t \p_{w_j}\c^{s]t} + \p_{w_i}\c^{[r[p} \p_{w_j}\c^{q]s]})(z_2) \\
& = - \frac{1}{2\pi^2\la12\ra}\big(\J_{(J(K}\Jt_{L)I)}(z_2) - \M_{(J(K}\Mt_{L)I)}(z_2)\big)\,.
\eea
Upon rescaling $\M_f\mapsto\M_f/2\pi,\Mt^g\mapsto\Mt^g/2\pi$ to fix the standard form for the tree OPEs, Equations \eqref{eq:OPE2->2simple1} and \eqref{eq:OPE2->2simple2} match the results obtained from associativity in \cite{Costello:2022upu}.

\begin{figure}[ht]
    \centering
        \includegraphics{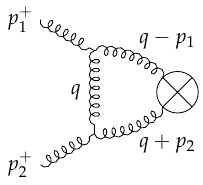}
    \caption{Non-factorizing diagram leading to a subleading collinear singularity at one-loop. There are further contributions with fermions running through the loop.} \label{fig:non-factorizing}
\end{figure}

These terms capture the subleading collinear singularity generated by the loop integration in the non-factorizing diagram illustrated in Fig. \ref{fig:non-factorizing}.

\paragraph{A more general case}
As a more nontrivial example, we compute the $2\to2$ OPE between $\J^{(1)}_{IJ}(\lt_1;z_1)$ and $\J^{(n)}_{KL}(\lt_2;z_2)$ for general $n$. 
	
We first consider the case of a single $\psi$ contraction. 
\be
\J^{(1)}_{IJ}(\lt_1;z_1)\J^{(n)}_{KL}(\lt_2;z_2) \sim - \frac{4}{\la12\ra} \Omega_{(K(J}\psi_{I)}\phi(\lt_1)\phi(\lt_2)^n\psi_{L)}(z_2)\,.
\ee
We have the following identity
\be
\sum_{i=0}^{n-1}(i+1)\phi(\lt_2)^{n-1-i}[\phi(\lt_1),\phi(\lt_2)]\phi(\lt_2)^i = (n+1)\phi(\lt_1)\phi(\lt_2)^n - (\lt_1\cdot\p_{\lt_2})\phi(\lt_2)^{n+1}\,.
\ee
Therefore,
\bea
&\J^{(1)}_{IJ}(\lt_1;z_1)\J^{(n)}_{KL}(\lt_2;z_2) \sim - \frac{4}{\la 12 \ra} \frac{1}{n+1} \Omega_{(K(J}(\lt_1\cdot\p_{\lt_2}) \psi_{I)} \phi(\lt_2)^{n+1}\psi_{L)}(z_2) \\
&- \frac{4}{\la 12 \ra}\sum_{i = 0}^{n-1}\frac{i+1}{n+1}\Omega_{(K(J}  \psi_{I)}\phi(\lt_2)^{n-1-i}[\phi(\lt_1),\phi(\lt_2)]\phi(\lt_2)^i\psi_{L)}(z_2)\,.
\eea
The first term gives a $2 \to 1$ OPE and is analyzed in the last section. We focus on the terms in the second line. Using the ADHM equation \ref{eq:ADHM-full} we find the following
\bea
&\psi_I\phi(\lt_2)^{n-1-i}[\phi(\lt_1),\phi(\lt_2)]\phi(\lt_2)^i\psi_L(z_2) = - \frac{[12]}{8\pi^2}\big(\normord{\J^{(n-1-i)}_{IO}(\lt_2;z_2)\Jt^{(i),O}_L(\lt_2;z_2)} \\
&+ \normord{\M^{(n-1-i)}_{IO}(\lt_2;z_2)\Mt^{(i),O}_L(\lt_2;z_2)} - (I \leftrightarrow L,i \leftrightarrow n-1-i)\big) \\
&+ \frac{1}{2}\big(\normord{\M_{I,f}^{(n-1-i)}(\lt_2;z_2)\Mt_L^{(i),f}(\lt_2;z_2)} - (I \leftrightarrow L,i \leftrightarrow n-1-i)\big)\,.
\eea
Therefore, a single $\psi$ contraction gives us the following $2 \to 2$ OPE
\bea \label{eq:psi2->2}
&\J^{(1)}_{IJ}(\lt_1;z_1)\J^{(n)}_{KL}(\lt_2;z_2) \sim - \frac{[12]}{2\pi^2\la12\ra}\Omega_{(K(J}\sum_{i = 0}^{n-1}\frac{n-i}{n+1}\Big(\normord{\J^{(i),O}_{I)}(\lt_2;z_2)\Jt^{(n-1-i)}_{L)O}(\lt_2;z_2)} \\
&+ \normord{\M^{(i),O}_{I)}(\lt_2;z_2)\Mt^{(n-1-i)}_{L)O}(\lt_2;z_2)} - (I\leftrightarrow L,i \leftrightarrow n-1-i)\Big) \\
&- 2\Omega_{(K(J}\sum_{i = 0}^{n-1}\frac{n-i}{n+1}\Big(\normord{\M^{(i)}_{I)f}(\lt_2;z_2)\Mt^{(n-1-i),f}_{L)}(\lt_2;z_2)} - (I\leftrightarrow L,i \leftrightarrow n-1-i)\Big)\,.
\eea
A single $\phi$ contraction gives us the following
\bea \label{eq:phi2->2}
&\J^{(1)}_{IJ}(\lt_1;z_1)\J^{(n)}_{KL}(\lt_2;z_2) \\
&\sim \sum_{i = 0}^{n-1}\wick{\psi_{Ip}\c\phi^{pq}(\lt_1) \psi_{Jq}(z_1) (\psi_K\phi(\lt_2)^i)_r\c\phi^{rs}(\lt_2) (\phi(\lt_2)^{n-1-i}\psi_L)_s(z_2)} \\
&= \frac{[12]}{4\pi^2\la12\ra}\sum_{i = 0}^{n-1}\psi_{Ip}\psi_{Jq}(\psi_{K}\phi(\lt_2)^i)_r(\phi(\lt_2)^{n-1-i}\psi_L)_s\eps_{ij}(\omega^{[r[p}\p_{w_i}\c^{q]}_t\p_{w_j}\c^{s]t} + \p_{w_i}\c^{[r[p}\p_{w_j}\c^{q]s]}) \\
&= - \frac{[12]}{2\pi^2\la12\ra}\sum_{i = 0}^{n-1}\big(\J^{(i)}_{(J(K}(\lt_2;z_2)\Jt^{(n-1-i)}_{L)I)}(\lt_2;z_2) - \M^{(i)}_{(J(K}(\lt_2;z_2)\Mt^{(n-1-i)}_{L)I)}(\lt_2;z_2)\big)\,.
\eea

The full $2\to2$ OPE can be expressed using the structure constants $R_{\{k,\tilde k\},\{l,\tilde l\}}$ evaluated in Appendix \ref{app:flavour-combinatorics}. We find that
\bea \label{eq:2->2soft}
&\J_{IJ}[m,0](z_1)\J_{KL}[0,n](z_2) \\
&\sim - \frac{1}{2\pi^2\la12\ra}\sum_{k+\tilde k=m-1}\sum_{l+\tilde l=n-1}\Big(\normord{\J_{(J(K}[k,l]\Jt_{L)I)}[\tilde k,\tilde l]} - \normord{\M_{(J(K}[k,l]\Mt_{L)I)}[\tilde k,\tilde l]} \\
&+ R_{\{k,\tilde k\},\{l,\tilde l\}}\Omega_{(K(J}(\normord{\J_{I)}^{~\,~O}[k,l]\Jt_{L)O}[\tilde k,\tilde l]} + \normord{\M_{J)}^{~~O}[k,l]\Mt_{K)O}[\tilde k,\tilde l]}) \\
&+ R_{\{l,\tilde l\},\{k,\tilde k\}}\Omega_{(J(K}(\normord{\J_{L)}^{~~\,O}[k,l]\Jt_{I)O}[\tilde k,\tilde l]} + \normord{\M_{L)}^{~~O}[k,l]\Mt_{I)O}[\tilde k,\tilde l]})\Big)(z_2) \\
&- \frac{2}{\la12\ra}\sum_{k+\tilde k=m-1}\sum_{l+\tilde l=n-1}\big(R_{\{k,\tilde k\},\{l,\tilde l\}}\Omega_{(K(J}\normord{\M_{I)f}[k,l]\Mt_{L)}^f[\tilde k,\tilde l]} \\
&+ R_{\{l,\tilde l\},\{k,\tilde k\}}\Omega_{(J(K}\normord{\M_{L)f}[k,l]\Mt_{I)}^f[\tilde k,\tilde l]}\big)(z_2)
\eea
This takes a considerably simpler form when expressed using hard generators. Using the integral expression for the structure constants $R_{\{k,\tilde k\},\{l,\tilde l\}}$ given in Appendix \ref{app:flavour-combinatorics}
\bea \label{eq:2->2hard}
&\J_{IJ}(\lt_1;z_1)\J_{KL}(\lt_2;z_2) \sim - \frac{[12]}{2\pi^2\la12\ra}\int_{0\leq s_i\leq1}\d^2\mbf{s}\int_{0\leq t_j\leq1}\d^2\mbf{t}\,\delta(1-s_1-s_2)\delta(1-t_1-t_2) \\
&\big(\normord{\J_{(J(K}(s_1\lt_1+t_1\lt_2;z_2)\Jt_{L)I)}(s_2\lt_1+t_2\lt_2;z_2)} \\
&+ \Theta(t_1\leq s_1)\Omega_{(K(J}\normord{\J_{I)}^{~~\,O}(s_1\lt_1+t_1\lt_2;z_2)\Jt_{L)O}(s_2\lt_1+t_2\lt_2;z_2)} \\
&+ \Theta(s_1\leq t_1)\Omega_{(J(K}\normord{\J_{L)}^{~~\,O}(s_1\lt_1+t_1\lt_2;z_2)\Jt_{I)O}(s_2\lt_1+t_2\lt_2;z_2)}\big)
+ \text{fermions}\,.
\eea
Here $\Theta(t)$ is the indicator function. The fermion terms can be inferred from equation \eqref{eq:2->2soft}.

We show in Appendix \ref{app:simplifyOPE} the these terms reproduce the full subleading one-loop collinear singularity coming from the non-factorizing illustrated in Fig. \ref{fig:non-factorizing}. For a general gauge algebra $\g$ and anomaly free representation $R$ these take the form
\bea
&\J_\sfa(\lt_1;z_1)\J_{\sfb}(\lt_2;z_2) \sim \frac{[12]}{8\pi^2\la12\ra}\int_{0\leq s_i\leq1}\d^2\mbf{s}\int_{0\leq t_j\leq1}\d^2\mbf{t}\,\delta(1-s_1-s_2)\delta(1-t_1-t_2) \\
&\Theta(t_1\leq s_1)f_{\sfa\sfe}^{~\,~\sfc} f_{\sfb\sfd}^{~\,~\sfe}\normord{\J_\sfc(s_1\lt_1+t_1\lt_2)\Jt^\sfd(s_2\lt_1+t_2\lt_2)}(z_2) \\
&+ \Theta(s_1\leq t_2)f_{\sfb\sfe}^{~\,~\sfc}f_{\sfa\sfd}^{~\,~\sfe}\normord{\J_\sfc(s_1\lt_1+t_1\lt_2)\Jt^\sfd(s_2\lt_1+t_2\lt_2)}(z_2) + \mrm{fermions}\big)\,.
\eea
Unlike the leading double pole discussed in Sect. \ref{subsec:NonPlanarOPE}, which only appears in the $\J-\J$ OPE, non-factorizing terms of the above type can appear in the OPEs of many states in the chiral algebra. It's not hard to see that these are also reproduced in the dual \cite{Zeng:2023qqp}.


\subsubsection*{Acknowledgements}  We would like to thank Lance Dixon, Davide Gaiotto, Anthony Morales, Natalie Paquette, Atul Sharma, David Skinner and Yehao Zhou for helpful comments and conversations. We are furthermore particularly grateful to Anthony Morales for pointing out a computational error in the first version of this paper. This work was supported by the Simons Collaboration on Celestial Holography. Research at Perimeter Institute is supported in part by the Government of Canada through the Department of Innovation, Science and Economic Development and by the Province of Ontario through the Ministry of Colleges and Universities. KZ is also supported by Harvard University CMSA.

\begin{appendix}


\section{Tree-Level Large \texorpdfstring{$N$}{N} BRST Cohomology Computations} \label{app:tree-cohomology}

In this section, we will compute the large $N$ BRST cohomology at tree-level, using standard homological algebra techniques. 

Given any Lie algebra $\g$, and any symplectic representation $V$ of $\g$, there is a super-Lie algebra denoted by $\g_V$ in \cite{Costello:2018fnz} which consists of 
\be \g_V = \g\oplus\Pi V\oplus\g^\vee\,. \ee
The Lie bracket of an element of $\g$ with anything is given by the $\g$ action, and the only other non-trivial bracket is the map 
\be V\otimes V\to\g^\vee \ee
given by the moment map.

The Lie algebra describing the fields on the stack of $N$ $D5$ branes is
\be \mscr{L}_N = \mf{sp}(N)_{(\wedge^2_0\mrm{F}_N\otimes\C^2) \oplus (\C^{16}\otimes\mrm{F}_N) \oplus (\C^{16}\otimes\mrm{F}_N)}\llbracket w_1,w_2,z\rrbracket\,. \ee
That is, we form the Lie algebra $\g_V$ for our gauge algebra $\mf{sp}(N)$ and matter content $(\wedge^2_0\mrm{F}_N\otimes\C^2)\oplus(\C^{16}\otimes\mrm{F}_N)\oplus(\C^{16}\otimes\mrm{F}_N)$, and then adjoin series in the three variables $w_1,w_2,z$.  (Relative) Lie algebra cohomology of this Lie algebra describes classical local operators on the $D5$ brane system; this is by now a standard construction \cite{Eager:2018oww,Chang:2013fba,Costello:2018zrm,Budzik:2023xbr}.

For future reference, we note that the part of $\mscr{L}_N$ which does not contain any fundamental representations is the $\Z/2$ fixed points in the algebra
\be \sl(2N)\llbracket z,w_1,w_2,\eps_\done,\eps_\dtwo\rrbracket  \ee
where $\eps_\da$ are fermionic variables.  The $\Z/2$ action acts on $\sl(2N)$ by the symplectic transpose 
\bea
A \mapsto A^t\,,\qquad (A^t)_p^q = \omega_{pr}\omega^{qs}A^r_s\,.
\eea
whose fixed points are the Lie algebra $\mf{sp}(N)$.  The $\Z/2$ action also sends $\eps_\da\to-\eps_\da$. This means that on the fixed points, the coefficients of $\eps_\da$ live in $\wedge^2_0\mrm{F}_N$.     

To incorporate the fermions on the defect, we note that there is a Lie algebra homomorphism
\be \mscr{L}_N \to \mf{sp}(N) \llbracket z\rrbracket \label{eq:homomorphism} \ee
obtained by setting $w_i \to 0$ and sending the matter fields and $\mf{sp}(N)^\vee$ to zero.  The (Abelian) Lie algebra for the free fermions is
\be \C^{2K} \otimes F \llbracket z\rrbracket \,. \ee
These elements are given bosonic parity; the formulation of Lie algebra cohomology reverses parity, giving us fermionic operators. 

This is acted on by $\mscr{L}_N$ via the homomorphism \eqref{eq:homomorphism}.  Thus, we can form the semi-direct product Lie algebra
\be \mscr{L}_{N,K} = \mscr{L}_N \ltimes \big(\C^{2K} \otimes F\llbracket z\rrbracket\big)\,. \ee
The Lie algebra cohomology
\be H^\ast(\mscr{L}_{N,K},\mf{sp}(N))\,. \ee
relative to $\mf{sp}(N)$ is the space of classical local operators on the defect in the $D5$ system.

We are interested in the part of the Lie algebra cohomology, at large $N$, which does not come from the pure $D5$ system.  At large $N$, the Lie algebra cohomology is quite easy to compute, using the Loday-Quillen-Tsygan \cite{loday1984cyclic,tsygan1983homology} theorem and related results \cite{LODAY198893,Zeng:2023lox}.  In this limit, the Lie algebra cohomology is described by $\Sp(N)$ invariant words in the elements of the (dual of) the Lie algebra $\mscr{L}_{N,K}$.  There are closed string words, forming dihedral cohomology \cite{LODAY198893}; and open string words, starting and ending with an element in the fundamental representation. Open string words can be computed by Tor groups \cite{Zeng:2023lox}. 

Since we only care about the part of the Lie algebra cohomology that is only present on the defect, we are only interested in words which contain at least one $\psi$ field (where $\psi$ indicates the bifundamental fermion on the defect).  Since $\psi$ lives in the fundamental representation of $\Sp(N)$, any expression containing a $\psi$ is a product of open string words. There are three possibilities: open string words with a $\psi$ at each end, open string words with a $\psi$ at one end and a $\gamma$ at another end, and open string words with a $\psi$ at one end and a $\til{\gamma}$ at the other end.  

An elementary computation (using the ideas around the Loday-Quillen-Tsygan theorem) gives us the following:
\begin{proposition}
The single-string part of the Lie algebra cohomology of $\mscr{L}_{N,K}$, at large $N$, consisting of strings with a $\psi$ at one end and a $\gamma$ (or $\til{\gamma}$) at the other, is the dual of
\be \C^{2K} \otimes \C^{16} \otimes	\op{Tor}_{\C\llbracket z,w_1,w_2,\eps_\done,\eps_\dtwo\rrbracket }(\C\llbracket z,w_1,w_2\rrbracket ,\C\llbracket z\rrbracket )\,. \ee

Similarly, the single-string large $N$ Lie algebra cohomology consisting of strings with a $\psi$ at each end is the $\Z/2$ invariants in
\be \C^{2K} \otimes \C^{2K} \otimes \op{Tor}_{\C\llbracket z,w_1,w_2,\eps_\done,\eps_\dtwo\rrbracket} (\C\llbracket z\rrbracket ,\C\llbracket z\rrbracket) 
\ee
where the $\Z/2$ action is specified in the proof.
\end{proposition}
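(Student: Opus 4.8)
The plan is to use the large $N$ (Loday--Quillen--Tsygan type) reduction for the Lie algebra cohomology of $\mscr{L}_{N,K}$, reducing it to a computation with cyclic words, and then to isolate the ``single-string'' part involving at least one $\psi$. The key structural input is that at large $N$ the relative Lie algebra cohomology $H^\ast(\mscr{L}_{N,K},\mf{sp}(N))$ is computed by $\Sp(N)$-invariant tensors built from the dual of $\mscr{L}_{N,K}$; invariant theory for $\Sp(N)$ (the first and second fundamental theorems) tells us that such invariants are spanned by ``closed strings'' (cyclic words, contributing dihedral homology) and ``open strings'' (linear words beginning and ending in a copy of the fundamental $\mrm{F}_N$, with the two ends contracted by the symplectic form $\omega$). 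This is exactly the dichotomy already recalled in the excerpt, so I may invoke it.

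First I would observe that any word containing a $\psi$ — which lives in $\mrm{F}_N$ — must, by the invariant-theory description, be a product of open strings, with the $\psi$ sitting at an end of one of them. The possible open-string endpoints among the generators of $\mscr{L}_{N,K}$ are $\psi$, $\gamma$, $\til\gamma$ (all valued in $\mrm{F}_N$ or its dual), while $\mc A$, $\mc B$, $\Phi_\da$ are adjoint-valued and can only sit in the interior of a string (after using $\mf{sp}(N)\subset\mf{sl}(2N)$, adjoint entries are $2N\times 2N$ matrices, concatenated by matrix multiplication). Hence a single open string with a $\psi$ at one end has its other end equal to $\psi$, $\gamma$, or $\til\gamma$, giving precisely the three families asserted. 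Next I would use the fact — this is the main computational reduction, and it is the heart of the argument — that the homology of open strings for a semidirect-product / Koszul-type algebra of this shape is computed by a $\op{Tor}$ group: concretely, linear words with prescribed module-type endpoints $M$ and $N$ over the enveloping algebra compute $\op{Tor}_{A}(M,N)$ where $A = \C\llbracket z,w_1,w_2,\eps_\done,\eps_\dtwo\rrbracket$ is the (super)commutative algebra governing the interior letters. The fermionic variables $\eps_\da$ arise because the $\Phi_\da = \phi_\da$ fields, valued in $\wedge^2_0\mrm{F}_N$, get repackaged — as the excerpt already notes — as the $\eps_\da$-coefficients in $\sl(2N)\llbracket z,w_1,w_2,\eps_\done,\eps_\dtwo\rrbracket$, with the $\Z/2$ symplectic-transpose action sending $\eps_\da\mapsto-\eps_\da$. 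For the $\psi$--$\gamma$ (or $\psi$--$\til\gamma$) strings, the $\psi$-end is a module supported on the defect $\C\llbracket z\rrbracket$ and the $\gamma$-end is supported on the bulk $\C\llbracket z,w_1,w_2\rrbracket$; the flavour index $\C^{16}$ on $\gamma$ and the $\Sp(K)$ index $\C^{2K}$ on $\psi$ come along as overall tensor factors, and (being the dual of a cohomology group) one gets the dual of $\C^{2K}\otimes\C^{16}\otimes\op{Tor}_A(\C\llbracket z,w_1,w_2\rrbracket,\C\llbracket z\rrbracket)$. For the $\psi$--$\psi$ strings, both ends are the defect module $\C\llbracket z\rrbracket$ and each end carries a $\C^{2K}$, giving $\C^{2K}\otimes\C^{2K}\otimes\op{Tor}_A(\C\llbracket z\rrbracket,\C\llbracket z\rrbracket)$; but now the string has an orientation-reversal symmetry (reading the word backwards, which swaps the two $\psi$ ends), combined with the symplectic-transpose action on the interior adjoint letters and the sign $\eps_\da\mapsto-\eps_\da$, so one must pass to $\Z/2$-invariants, and this $\Z/2$ action is precisely the one to be specified in the proof.

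The steps in order: (1) invoke the large-$N$ invariant-theory description to reduce $H^\ast(\mscr{L}_{N,K},\mf{sp}(N))$ to closed plus open strings; (2) argue that $\psi$-containing single-string contributions are open strings with one end a $\psi$ and classify the other end as $\psi$, $\gamma$, or $\til\gamma$; (3) identify the algebra of interior letters with $A=\C\llbracket z,w_1,w_2,\eps_\done,\eps_\dtwo\rrbracket$, using the repackaging of $\phi_\da$ via the fermionic $\eps_\da$ and of $\mf{sp}(N)$ via the $\Z/2$ fixed points of $\sl(2N)$; (4) apply the bar/Koszul resolution to express open-string homology as the stated $\op{Tor}$ groups, tracking the $\C^{2K}$ and $\C^{16}$ spectator factors and dualizing; (5) in the $\psi$--$\psi$ case, work out the word-reversal-plus-symplectic-transpose $\Z/2$ action on $\op{Tor}_A(\C\llbracket z\rrbracket,\C\llbracket z\rrbracket)$ and take invariants. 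The main obstacle I expect is step (5): getting the $\Z/2$ action exactly right — in particular the interaction of word-reversal with the Koszul sign rules, the symplectic transpose $A^t_{pq}=\omega_{pr}\omega^{qs}A^r_s$ on adjoint letters, and the sign $\eps_\da\to-\eps_\da$ — and checking that after taking invariants one lands on exactly the generating set $\J,\Jt,\M,\Mt$ (and their $\Sp(K)$-representation content) recorded in Table~\ref{tab:celestial-ops}, with no spurious states surviving. The $\op{Tor}$ computations themselves are then routine Koszul-complex calculations, which I would relegate to the subsequent (omitted) part of the appendix rather than grind through here.
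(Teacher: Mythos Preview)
Your proposal is correct and follows essentially the same route as the paper: reduce via large-$N$ invariant theory to open strings, identify the interior letters (i.e.\ $\c,\phi^\da,\b$ with their derivatives) as the dual of $A=\C\llbracket z,w_i,\eps_\da\rrbracket$, and recognize the open-string cochain complex as the dual of the bar complex $\op{Bar}(M,A,N)$ with the appropriate endpoint modules, whose homology is the stated $\op{Tor}$. The paper makes explicit one point you leave implicit, namely that the BRST differential on open-string words is precisely dual to the bar differential (multiplying adjacent letters with signs), but your step (4) clearly intends this identification.
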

\begin{proof} 
	Let us first treat the case of the $\gamma-\psi$ strings.  Any such open string state is given by an expression which starts with $\gamma$ (or derivatives of $\gamma$), applies a number of matrix-valued fields $\b,\c,\phi^\da$ (or derivatives), and ends with $\psi$ (or derivatives).  The $\Sp(N)$ indices are contracted by matrix multiplication, and by contracting with a fundamental at the beginning and end.  At each step in the string, the data of which field we have ($\b,\c,\phi^\da$), together with all of the derivatives are expressed as an element of the dual of the algebra $A = \C\llbracket z,w_i,\eps_\da\rrbracket$. This is because, as we remarked above, the matrix-valued fields are elements of the $\Z/2$ invariants of $A \otimes \sl(2N)$.  The fields $\phi^\da$ are the coefficients of $\eps_\da$, $\c$ is the coefficient of $1$, and $\b$ is the coefficient of $\eps_\done \eps_\dtwo$.  Derivatives are counted by the dual of $\C\llbracket z,w_i\rrbracket$.  The possible derivatives of the initial $\gamma$ are also counted by the dual of $\C\llbracket z,w_i\rrbracket$ and of the final $\psi$ by the dual of $\C\llbracket z\rrbracket$. 

	Because we are dealing with an $\Sp(N)$ gauge theory, an open string state is the same as a state where the words are read in the reverse order.  This symmetry is broken by declaring that we start with $\gamma$ and end with $\psi$. 

	In this way, we see that the open string states in the Lie algebra cochain complex (which is the BRST complex) map identically onto the dual of the bar complex:
	\be \C^{16} \otimes \C^{2K} \otimes	\op{Bar}(\C\llbracket z,w_i\rrbracket,A,\C\llbracket z\rrbracket) = 	\C^{16} \otimes \C^{2K} \otimes	\oplus_{n\ge0} \C\llbracket z,w_i\rrbracket \otimes A^{\otimes n} \otimes \C\llbracket z\rrbracket\,.
	\ee
	The fact of $\C^{16} \otimes \C^{2K}$ comes from the flavour indices on the initial and final states. It is standard (and easy to see) that the differential on the Lie algebra cohomology, which is the same as the BRST differential, is the dual to the differential on the bar complex, which multiplies adjacent elements with a sign.   

	Since the homology of the bar complex is $\op{Tor}_A(\C\llbracket z,w_i\rrbracket,\C\llbracket z\rrbracket)$, this completes the proof.

	The statement for strings starting and ending with $\psi$ is more or less the same, except that we have to take into account the fact that a string read in reverse order gives the same state (up to a sign).  Before applying this $\Z/2$ action, we find the dual of  
	\be \C^{2K} \otimes \C^{2K}  \otimes \op{Tor}_{A} (\C\llbracket z\rrbracket,\C\llbracket z\rrbracket)\,. \ee
	The $\Z/2$ action permutes the two copies of $\C^{2K}$, reverse the order of a string in the bar complex, and acts on $A$ by the $\Z/2$ action described above with an extra factor of $-1$. 
\end{proof}
Now let us compute these Tor groups.  This is straightforward. The rules are the following. The Tor groups will be expressed as series in some number of bosonic and fermionic variables.  Any variable in $\C\llbracket z,w_i,\eps_\da\rrbracket$ which is present in both the left and right module is present in the Tor group. Any variable which is present in the left but not the right (or vice versa) drops out. Any variable that is absent on  both the left and the right changes parity.  Thus,
\be \op{Tor}_{\C\llbracket z,w_i,\eps_\da\rrbracket}(\C\llbracket z,w_i\rrbracket,\C\llbracket z\rrbracket) = \C\llbracket z,\lt_\da\rrbracket \ee
where $\lt_\da$ are bosonic.  Similarly,
\be \op{Tor}_{\C\llbracket z,w_i, \eps_\da\rrbracket}(\C\llbracket z\rrbracket,\C\llbracket z\rrbracket) = \C\llbracket z,\lt_\da,\delta^i\rrbracket \ee
where $\delta_i$ are fermionic.  

To complete the calculation, we need to describe the $\Z/2$ action on this Tor group. It turns out that it sends $\delta^i$ to $-\delta^i$ and fixes $\lt_\da$.  

Let us reintroduce the vector spaces arising from the flavour symmetry of the $\gamma$ and $\psi$ fields. We find that the single-string Lie algebra cohomology of $\mscr{L}^{N,K}$, containing at least one $\psi$ field, is the dual of 
\be \C^{16} \otimes \C^{2K} \otimes \C\llbracket z,\lt_\da\rrbracket \oplus \big(\C^{2K} \otimes \C^{2K} \otimes \C\llbracket z,\lt_\da,\delta^i\rrbracket\big)^{\Z/2}\,. \ee
This computation gives us a basis of the single-string part of BRST cohomology involving $\psi$ as presented in Table \ref{tab:celestial-ops} in the main text. 


\section{Direct calculation of the \texorpdfstring{$\phi-\phi$}{phi-phi} OPE} \label{app:diagram}

In this appendix we explicit compute the singular part of the OPE between $\phi^\done(0)$ and $\phi^\dtwo(z)$ in the worldvolume theory of the $N$ $D5$ branes. Lemma \ref{lem:Feyn_rule} tells us that the only diagram we need to evaluate is the one illustrated below.

\begin{figure}[ht]
    \centering
    \includegraphics{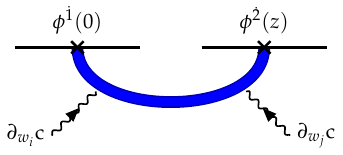}
    \caption{A Feynman diagram contributing to the $\phi-\phi$ OPE.  All other diagrams differ only in their colour structure.} \label{fig:phi-OPE}
\end{figure}

For completeness, let's compute this with $\phi^\dtwo$ at some arbitrary bulk point $w$. At the end of the computation we will take $w\to(0,z)$. In this appendix we write $z = w_3$, and let $i,j,k,\dots$ take values in the range $\{1,2,3\}$.

Stripping off the colour factor and working in a patch isomorphic to $\C^3$, the $\Phi-\Phi$ propagator coincides with that of the free holomorphic theory. A local computation of this type is not capable of determining the regular part of the OPE, but is sufficient for the singular term. Fixing the gauge by requiring the fields to lie in the kernel of the $\dbar^\ast$ operator defined using the flat metric, the propagator is proportional to the Bochner-Martinelli kernel
\be P(w) = \frac{1}{4\pi^2}\frac{\eps^{ijk}\bar w_i\d\bar w_j\d\bar w_k}{\|w\|^6}\,. \ee
Here we've normalized the action with an overall factor of $\i/2\pi$ so that the propagator obeys 
\be \dbar P = \frac{1}{4\pi^2}\bar\delta^{0,3} \ee
with
\be - \bigg(\frac{\i}{2\pi}\bigg)^3\int_{\C^3}\d^3w\,\bar\delta^{0,3}(w) = 1\,. \ee
The diagram illustrated in Fig. \ref{fig:phi-OPE} evaluates to
\be \bigg(\frac{\i}{2\pi}\bigg)^2\int_{w',w''}\d^3w'\d^3 w''\,P(w')\c(w')P(w'-w'')\c(w'')P(w''-w)\,. \ee
Concentrating on the antiholomorphic forms appearing in the numerators, we find
\bea
&\eps^{i_1j_1k_1}\bar w_{i_1}'\d\bar w_{j_1}'\d\bar w_{k_1}'\eps^{i_2j_2k_2}(\bar w' - \bar w'')_{i_2}\d(\bar w'-\bar w'')_{j_2}\d(\bar w'-\bar w'')_{k_2} \\
&\eps^{i_3j_3k_3}(\bar w'' - \bar w)_{i_3}\d\bar w_{j_3}''\d\bar w_{k_3}'' = 8\eps^{ijk}\bar w_i\bar w_j'\bar w_k''\d^3\bar w'\d^3\bar w''\,.
\eea
Therefore the integrand simplifies to
\be 8\bigg(\frac{\i}{2\pi}\bigg)^8\eps^{ijk}\bar w_i\int_{w',w''}\frac{\d^6w'\d^6w''\,\bar w_j'\bar w_k''\,\c(w')\c(w'')}{\|w'\|^6\|w'-w''\|^6\|w''-w\|^6}\,. \ee
We proceed by performing the integral over $w'$ using the standard Feynman parametrisation trick:
\be \int_{w'}\frac{\d^6w'\,w_j'\,\c(w')}{\|w'\|^6\|w'-w''\|^6} = \frac{5!}{2!^2}\int_0^1\d s\,s^2(1-s)^2\int_{w'}\frac{\d^6w'\,\bar w_j'\,\c(w')}{(\|w'-sw''\|^2+s(1-s)\|w''\|^2)^6}\,. \ee
Shifting the integration variable this becomes
\be \frac{5!}{2!^2}\int_0^1\d s\,s^2(1-s)^2\int_{w'}\frac{\d^6w'\, (\bar w_j'+s\bar w_j'')\,\c(w'+sw'')}{(\|w'\|^2+s(1-s)\|w''\|^2)^6}\,. \ee
Anti-holomorphic derivatives of the ghost vanish on-shell, so we need only expand the ghost field holomorphically in its arguments. The precise number of holomorphic derivatives that can appear are fixed by the requirement that the integrals over the phases of the $w_i'$ variables are non-vanishing. This gives
\be \frac{5!}{2!^2}\int_0^1\d s\,s^2(1-s)^2\int_{w'}\frac{\d^6w'\, (\bar w_j'w_l'\p_{w_l}\c(sw'')+s\bar w_j''\c(sw''))}{(\|w'\|^2+s(1-s)\|w''\|^2)^6}\,. \ee
We can discard the second term, as it will vanish upon contracting with $\eps^{ijk}\bar w_k''$. At this stage the dependence on $w'$ is completely fixed, and we can perform the integrals to get
\be \frac{(-2\pi\i)^3}{2!^2}\int_0^1\d s\,\frac{\delta_{jl}\p_{w_l}\c(sw'')}{\|w''\|^4}\,. \ee
Next we repeat these tricks in the $w''$ variables.
\bea
&2\bigg(\frac{\i}{2\pi}\bigg)^5\eps^{ijk}\bar w_i\int_0^1\d s\int_{w''}\frac{\d^6w''\,\bar w''_k\delta_{jl}\p_{w_l}\c(sw'')\c(w'')}{\|w''\|^4\|w''-w\|^6} \\
&= 4!\bigg(\frac{\i}{2\pi}\bigg)^5\eps^{ijk}\bar w_i\int_{[0,1]^2}\d s\d t\,t^2(1-t)\int_{w''}\frac{\d^6w''\,\bar w_k''\delta_{jl}\p_{w_l}\c(sw'')\c(w'')}{(\|w''-tw\|^2 + t(1-t)\|w\|^2)^5}\,.
\eea
Shifting the argument in $w''$, and noting that to get a non-vanishing result for the integral over the phases we must get an extra factor of $w_k''$ from the second ghost (the anti-symmetry of $\eps^{ijk}$ prevents the first ghost contributing) this is
\bea
&4!\bigg(\frac{\i}{2\pi}\bigg)^5\eps^{ijk}\bar w_i\int_{[0,1]^2}\d s\d t\,t^2(1-t) \\
&\int_{w''}\frac{\d^6w''\,\bar w_k''w_m''\delta_{jl}\p_{w_l}\c(sw''+st w)\p_{w_m}\c(w''+tw)}{(\|w''\|^2 + t(1-t)\|w\|^2)^5} \\
&= \bigg(\frac{\i}{2\pi}\bigg)^2\frac{\eps^{ijk}\bar w_i}{\|w\|^2}\int_{[0,1]^2}\d s\d t\,t\delta_{jl}\delta_{km}\p_{w_l}\c(stw)\p_{w_m}\c(tw)\,.
\eea
This can be written in a more symmetric form by rescaling $s$ by $t$ so that it now takes values in the range $[0,t]$.
\be \bigg(\frac{\i}{2\pi}\bigg)^2\frac{\bar w_i}{\|w\|^2}\int_{0\leq s\leq t\leq1}\d s\d t\,\eps^i_{~jk}\p_{w_j}\c(sw)\p_{w_k}\c(tw)\,. \ee
In the limit $w\to(0,z)$ this simplifies to
\be \bigg(\frac{\i}{2\pi}\bigg)^2\frac{1}{z}\int_{0\leq s\leq t\leq1}\d s\d t\,\eps_{jk}\p_{w_j}\c(0,sz)\p_{w_k}\c(0,tz)\,. \ee
As explained above, only the singular term matches that on the non-trivial complex manifold $Y$. It reads
\be - \frac{1}{4\pi^2}\frac{\eps_{ij}}{2z}\p_{w_i}\c\p_{w_j}\c(0)\,. \ee
Reintroducing colour indices
\be \label{eq:phiOPE-singular-again}
\phi^{\done pq}(0)\phi^{\dtwo rs}(z,0)\sim -\frac{1}{4\pi^2}\frac{\eps_{ij}}{z}\big(\omega^{[r[p}\p_{w_i}\c^{q]}_{~\,t}\p_{w_j}\c^{s]t} + \p_{w_i}\c^{[r[p}\p_{w_j}\c^{q]s]}\big)(0)\,.
\ee

Since $Q_\mrm{BRST}\p_{w_i}\p_{w_j}\c = [\c,\p_{w_i}\p_{w_j}\c] + [\p_{w_i}\c,\p_{w_j}\c]$ we can replace $\half\eps_{ij}\p_{w_i}\c\p_{w_j}\c = \p_{w_1}\c\p_{w_2}\c$ in simpler gauge invariant operators. Making this replacement recovers equation \eqref{eq:phiOPE-singular} from the bulk of the manuscript.

This same computation gives the $\gamma,\til\gamma$ and $\c,\b$ OPEs, which differ only in their colour structure. We have
\be
\gamma^p_f(0)\til\gamma^{qg}(z,0)\sim -\frac{1}{4\pi^2}\frac{\eps_{ij}}{2z}\delta_f^{~\,g}\p_{w_i}\c^p_{~t}\p_{w_j}\c^{qt}(0)
\ee
and
\be
\c^{pq}(0)\b^{rs}(z,0) \sim -\frac{1}{4\pi^2}\frac{\eps_{ij}}{z}\big(\omega^{(r(p}\p_{w_i}\c^{q)}_{~\,t}\p_{w_j}\c^{s)t} - \p_{w_i}\c^{(r(p}\p_{w_j}\c^{q)s)}\big)(0)\,.
\ee
The relative sign in the second term of the above OPE when compared to equation \eqref{eq:phiOPE-singular-again} is ultimately responsible for the relative sign in the coefficients of the $\J\Jt$ and $\M\Mt$ terms in the $2\to2$ OPE.


\section{Simplifying Expected OPEs} \label{app:simplifyOPE}

The first non-trivial one-loop corrected celestial OPEs for integrable self-dual QCD in the absence of a background axion have been given in \cite{Costello:2023vyy}. For a general Lie algebra $\g$ with basis $\{\t_\sfa\}$ and fermions in the representation $R$ with basis $\{e_i\}$ we write
\be [\t_\sfa,\t_\sfb] = f_{\sfa\sfb}^{~\,~\sfc}\t_\sfc\,\qquad \t_\sfa.e_i = e_jg^j_{~\,i\sfa}\,. \ee
Then the $\J_\sfa[1,0],\J_\sfb[0,1]$ OPE up to one-loop is
\bea \label{eq:J[1,0]J[0,1]}
&\J_\sfa[1,0](z_1)\J_\sfb[0,1](z_2) \sim \frac{1}{z_{12}}f_{\sfa\sfb}^{~\,~\sfc}\J_\sfc[1,1](z_1) \\
&+ \frac{1}{4\pi^2}\bigg(\frac{1}{24z_{12}^2}\op{tr}_{\g\oplus\Pi R}([\t_\sfa,\t_\sfb]\t_\sfc)\Jt^\sfc(z_1) - \frac{1}{48z_{12}}\op{tr}_{\g\oplus\Pi R}([\t_\sfa,\t_\sfb]\t_\sfc)\p_z\Jt^\sfc(z_1) \\
&+ \frac{1}{8z_{12}}\big((f_{\sfe\sfa}^{~\,~\sfc}f_{\sfd\sfb}^{~\,~\sfe} + f_{\sfd\sfa}^{~\,~\sfe}f_{\sfe\sfb}^{~\,~\sfc})\normord{\J_\sfc\Jt^\sfd}(z_1) + (g^i_{~\,k\sfa}g^k_{~\,j\sfb} + g^k_{~j\sfa}g^i_{~\,k\sfb})\normord{\M_i\Mt^j}(z_1)\big)\bigg)\,.
\eea
(The relative factors of $2\pi\i$ when compared to \cite{Costello:2022upu,Costello:2023vyy} can be attributed to the differing normalization of our action.) This is the OPE we'd like to reproduce with a dual computation. In our case $\g = \fsp(K)$, so we identify $\sfa = (IJ)$, and have a basis
\be (\t_{(IJ)})^K_{\,~\,L} = \delta^K_{\,~\,I}\Omega_{JL} + \delta^K_{\,~\,J}\Omega_{IL} = 2\delta^K_{~\,(I}\Omega_{J)L}\,. \ee
The structure constants are then determined by
\be [\t_{IJ},\t_{KL}] = \Omega_{IK}\t_{(JL)} + \Omega_{IL}\t_{(JK)} + \Omega_{JK}\t_{(IL)} + \Omega_{JL}\t_{(IK)} = - 4\Omega_{(K(I}\t_{J)L)}\,, \ee
so that
\be f^{(MN)}_{(IJ)(KL)} = - 4\Omega_{(K(I}\delta^{(M}_{~J)}\delta^{N)}_{~L)}\,. \ee
Next consider the fundamental representation, denoted $\mrm{F}_K$.  Writing $e_I$ for our basis we have
\be \t_{(IJ)}.e_L = e_K(\t_{(IJ)})^K_{\,~\,L}\,, \ee
so that
\be g^K_{\,~\,L(IJ)} = 2\delta^K_{~\,(I}\Omega_{J)L} = (\t_{(IJ)})^K_{\,~\,L}\,. \ee
Finally, for the traceless anti-symmetric square of the fundamental representation, denoted $\wedge^2_0\mrm{F}_K$. Writing $a_{[IJ]}$ for our basis
\be \t_{(IJ)}.a_{[KL]} = - 4\Omega_{[K(I}a_{J)L]}\,. \ee
so that
\be g^{[MN]}_{[KL](IJ)} = 4\Omega_{[K(I}\delta^{[M}_{~J)}\delta^{N]}_{~L]}\,. \ee
Now
\bea
&\op{tr}_{\mrm{F}_K}(\t_{(IJ)}\t_{(KL)}) = 4\Omega_{(I(L}\Omega_{K)J)}\,, \\
&\op{tr}_{\wedge^2_0\mrm{F}_K}(\t_{(IJ)}\t_{(KL)}) = 8(K-1)\Omega_{I(L}\Omega_{K)J} = 2(K-1)\op{tr}_\mrm{F}(\t_{(IJ)}\t_{(KL)})\,, \\
&\op{tr}_\mathrm{Ad}(\t_{(IJ)}\t_{(KL)}) = 8(K+1)\Omega_{I(L}\Omega_{K)J} = 2(K+1)\op{tr}_\mrm{F}(\t_{(IJ)}\t_{(KL)})\,.
\eea
From this we infer that
\be \op{tr}_{\Ad\oplus\Pi R}(\t_{(IJ)}\t_{(KL)}) = (2(K+1)-2(K-1) - 16)\op{tr}_{\mrm{F}_K}(\t_{(IJ)}\t_{(KL)}) = - 12\op{tr}_{\mrm{F}_K}(\t_{(IJ)}\t_{(KL)})\,. \ee
We can also readily compute
\bea
&\frac{1}{2}\big(f^{(MN)}_{(RS)(IJ)}f^{(RS)}_{(PQ)(KL)} + (IJ)\leftrightarrow(KL)\big)\J_{(MN)}\Jt^{(PQ)} \\
&= 4\big(\J_{(K(J}\Jt_{I)L)} + \Omega_{(K(J}\J_{I)}^{~\,~O}\Jt_{L)O} + (IJ)\leftrightarrow(KL)\big)\,,
\eea
where we note that there's no normal ordering ambiguity. Similarly
\bea
&\frac{1}{2}\big(g^{[MN]}_{[RS](IJ)}g^{[RS]}_{[PQ](KL)} + (IJ)\leftrightarrow(KL)\big)\M_{[MN]}\Mt^{[PQ]} \\
&= - 4\big(\M_{(J(K}\Mt_{L)I)} - \Omega_{(K(I}\M_{J)}^{~\,~O}\Mt_{L)O} + (IJ)\leftrightarrow(KL)\big)\,.
\eea
For $\g = \fsp(K)$ there is no normal ordering ambiguity in this expression either.

Specializing to the case of $\g = \fsp(K)$, $R = \wedge^2_0\mrm{F}_K\oplus16\mrm{F}_K$ SDQCD the OPE \eqref{eq:J[1,0]J[0,1]} reads
\bea \label{eq:target-OPE}
&\J_{(IJ)}[1,0](z_1)\J_{(KL)}[0,1](z_2) \sim \frac{4}{z_{12}}\Omega_{(J(K}\J_{L)I)}[1,1](z_1) - \frac{1}{\pi^2}\bigg(\frac{2}{z_{12}^2}\Omega_{(J(K}\Jt_{L)I)}(z_1) \\
&- \frac{1}{z_{12}}\Omega_{(J(K}\p_z\Jt_{L)I)}(z_1) + \frac{1}{2z_{12}}\big(\normord{\J_{(J(K}\Jt_{L)I)}}(z_1) - \normord{\M_{(J(K}\Mt_{L)I)}}\big)(z_1) \\
&+ \frac{1}{4z_{12}}\Omega_{(K(J}\big(\normord{\J_{I)}^{~\,~O}\Jt_{L)O}}(z_1) + \normord{\M_{I)}^{~\,~O}\Mt_{L)O}}(z_1) + \normord{\M_{I),f}\Mt_{L)}^f}(z_1)\big) \\
&+ \frac{1}{4z_{12}}\Omega_{(J(K}\big(\normord{\J_{L)}^{~\,~O}\Jt_{I)O}}(z_1) + \normord{\M_{L)}^{~\,~O}\Mt_{I)O}}(z_1) + \normord{\M_{L),f}\Mt_{I)}^f}(z_1)\big)\bigg)\,.
\eea
The remaining OPEs are easy to write down. One particularly important tree OPE for fixing the normalization of the dual chiral algebra states is
\be \M_i(\lt_1;z_1)\Mt^j(\lt_2;z_2) \sim \frac{1}{z_{12}}g^j_{~\,i\sfa}\Jt^\sfa(\lt_1+\lt_2;z_1)\,. \ee
Specializing to the SDQCD under consideration and concentrating on the states charged under the $\mrm{SL}(16)$ flavour symmetry we find that
\be \M_{I,f}(\lt_1;z_1)\til\M_J^g(\lt_2;z_2) \sim - \frac{1}{z_{12}}\delta_f^{~\,g}\Jt_{IJ}(\lt_1+\lt_2;z_1)\,. \ee


\section{Bulk States} \label{app:bulk-states}

In this section, we verify tree-level BRST invariance of the bulk states appearing in the Table \ref{tab:bulk-states}. All but two are evidently closed using the standard transformations of $\phi,\gamma,\til\gamma$ together with $Q_\mrm{BRST}\p_{w_i}\c = [\c,\p_{w_i}\c]$.

The first non-trivial state is
\be \label{eq:Tm-partial} T^{i\da_1\dots\da_m} = \frac{1}{2}\Tr S\big(\phi^{\da_1}\dots\phi^{\da_m}\phi_\db\p_{w_i}\phi^\db\big) + \dots\,. \ee
We begin by showing that, in the absence of a defect, there exist terms involving either the antifield $\b$ or fundamental bosons $\gamma,\til\gamma$ we can add in place of $+ \dots$ which render the above state BRST invariant. With an $\mrm{SL}_2(\C)$ rotation on $\da_k$ indices we can bring equation \eqref{eq:Tm-partial} into the form
\be T^i[m,0] = \frac{1}{2}\Tr S\big(\phi^{\done m}\phi_\db\p_{w_i}\phi^\db\big)  + \dots = \frac{1}{2(m+1)}\sum_{a+b=m}\Tr\big(\phi^{\done a}\phi_\db\phi^{\done b}\p_{w_i}\phi^\db\big) + \dots \ee
Acting with the bulk BRST operator
\bea
&Q_\mrm{BRST}\Tr\big(\phi^{\done a}\phi_\db\phi^{\done b}\p_{w_i}\phi^\db\big) = \Tr\big(\phi^{\done a}\phi^\dtwo\phi^{\done b}[\p_{w_i}\c,\phi^\done]\big) - \Tr\big(\phi^{\done(m+1)}[\p_{w_i}\c,\phi^\dtwo]\big) \\
&= \Tr\big(\phi^{\done b}[\p_{w_i}\c,\phi^\done]\phi^{\done a}\phi^\dtwo\big) + \sum_{c+d=m}\Tr\big(\phi^{\done c}[\p_{w_i}\c,\phi^\done]\phi^{\done d}\phi^\dtwo\big)\,.
\eea
This can be partially compensated with
\bea
&Q_\mrm{BRST}\Tr\big(\phi^{\done a}\b\phi^{\done b}\p_{w_i}\c\big) = \Tr\big(\phi^{\done a}[\phi^\done,\phi^\dtwo]\phi^{\done b}\p_{w_i}\c\big) + \frac{1}{2}\Tr\big(\phi^{\done a}\{\gamma_f,\til\gamma^f\}\phi^{\done b}\p_{w_i}\c\big) \\
&= \Tr\big(\phi^{\done b}[\p_{w_i}\c,\phi^\done]\phi^{\done a}\phi^\dtwo\big) + \frac{1}{2}\til\gamma^f\phi^{\done b}\p_{w_i}\c\phi^{\done a}\gamma_f + \frac{1}{2}\til\gamma^f\phi^{\done a}\p_{w_i}\c\phi^{\done b}\gamma_f\,.
\eea
The terms involving $\gamma,\til\gamma$ are BRST exact
\be Q_\mrm{BRST}\til\gamma^f\phi^{\done a}\p_{w_i}(\phi^{\done b}\gamma_f) = \til\gamma^f\phi^{\done a}\p_{w_i}\c\phi^{\done b}\gamma_f = - Q_\mrm{BRST}\p_{w_i}(\til\gamma^f\phi^{\done a})\phi^{\done b}\gamma_f\,. \ee
This is consistent since $\til\gamma^f\phi^{\done m}\gamma_f$ is BRST exact. We conclude that
\bea \label{eq:Tm}
&T^i[m,0] = \frac{1}{2}\Tr S\big(\phi^{\done m}\phi_\db\p_{w_i}\phi^\db\big) - \frac{m+2}{2}\Tr S\big(\phi^{\done m}\b\p_{w_i}\c\big) + \frac{m+2}{2}\til\gamma^fS\big(\phi^{\done m}\p_{w_i}\big)\gamma_f
\eea
is a BRST invariant bulk state. For $m=0$ it coincides with the stress tensor
\be T^i[0,0] = \frac{1}{2}\Tr(\phi_\db\p_{w_i}\phi^\db) - \Tr(\b\p_{w_i}\c) + \til\gamma^f\p_{w_i}\gamma_f\,. \ee
(The slightly unconventional sign in front of the second term if generated by anticommuting the components of the Beltrami differential $\mu = \mu_i\p_{w_i}$ past $\b$.) In \eqref{eq:Tm} we choose the derivative $\p_{w_i}$ to act on $\gamma_f$ rather than $\til\gamma^f$, consistent with the expected form of the stress tensor.

Coupling to the defect adds a new term to the BRST differential \eqref{eq:BRST-full}
\be \label{eq:Q-defect} Q_\mrm{BRST}^\text{defect}\b = \frac{1}{16\pi^2}\eps_{jk}\big(2\p_{w_j}\c\psi_I\psi^I\p_{w_k}\c + \psi_I\psi^I\p_{w_j}\c\p_{w_k}\c + \p_{w_j}\c\p_{w_k}\c\psi_I\psi^I\big)\,. \ee
Hence
\bea
&Q_\mrm{BRST}^\text{defect}\Tr(\phi^{\done a}\b\phi^{\done b}\p_{w_i}\c) \\
&= - \frac{1}{16\pi^2}\eps_{jk}\big(2\psi^I\p_{w_j}\c\phi^{\done b}\p_{w_i}\c\phi^{\done a}\p_{w_k}\c\psi_I + \psi^I\p_{w_j}\c\p_{w_k}\c\phi^{\done b}\p_{w_i}\c\phi^{\done a}\psi_I \\
&+ \psi^I\phi^{\done b}\p_{w_i}\c\phi^{\done a}\p_{w_j}\c\p_{w_k}\c\psi_I\big) = - \frac{1}{16\pi^2}\eps_{jk}\big(\psi^I\p_{w_j}\c[\p_{w_k}\c,\phi^{\done b}\p_{w_i}\c\phi^{\done a}]\psi_I + a\leftrightarrow b\big)\,.
\eea
In the usual way we can express a commutator with $\p_{w_k}\c$ as the BRST variation of a derivative
\be Q_\mrm{BRST}^\text{defect}\Tr(\phi^{\done a}\b\phi^{\done b}\p_{w_i}\c) = - \frac{1}{16\pi^2}\eps_{jk}Q_\mrm{BRST}\big(\psi^I\p_{w_j}\c\p_{w_k}(\phi^{\done b}\p_{w_i}\c\phi^{\done a})\psi_I + a\leftrightarrow b\big)\,, \ee
We therefore find that in full
\bea
&T^i[m,0] = \frac{1}{2}\Tr S\big(\phi^{\done m}\phi_\db\p_{w_i}\phi^\db\big) - \frac{m+2}{2}\Tr S\big(\phi^{\done m}\b\p_{w_i}\c\big) + \frac{m+2}{2}\til\gamma^fS(\phi^{\done m}\p_{w_i})\gamma_f \\
&+ \frac{1}{16\pi^2}(m+2)\eps_{jk}\psi_I\p_{w_j}\c\p_{w_k}\big(S\big(\phi^{\done m}\p_{w_i}\c\big)\big)\psi^I\,.
\eea

The second non-trivial state has the form
\be \cO^{\da_1\dots\da_m} = \frac{1}{2^2}\eps_{ijk}\Tr S\big(\phi^{\da_1}\dots\phi^{\da_m}\phi_\db\p_{w_i}\phi^\db\p_{w_j}\c\p_{w_k}\c\big) + \dots\,. \ee
Rotating so that all $\da_k=\done$, unpacking the symmetrization gives
\bea
&\frac{1}{2^2}\eps_{ijk}\Tr S\big(\phi^{\done m}\phi_\db\p_{w_i}\phi^\db\p_{w_j}\c\p_{w_k}\c\big) \\
&= \frac{1}{2[m+3]_3}\eps_{ijk}\Big(\sum_{a+b+c+d=m}\Tr\big(\phi^\dtwo\phi^{\done a}\p_{w_i}\phi^\done\phi^{\done b}\p_{w_j}\c\phi^{\done c}\p_{w_k}\c\phi^{\done d} + \text{cyclic permutations}\big) \\
&- (m+3)\sum_{a+b+c=m}\Tr(\p_{w_i}\phi^\dtwo\phi^{\done a}\p_{w_j}\c\phi^{\done b}\p_{w_k}\c\phi^{\done c})\Big)\,.
\eea
Here we mean appropriately graded cyclic permutations of $\{\p_{w_i}\phi^\done,\p_{w_j}\c,\p_{w_k}\c\}$. The BRST variation of the above is
\bea \label{eq:QH}
&\frac{m+4}{2[m+3]_3}\eps_{ijk}\sum_{a+b+c+d=m}\Tr\big(\phi^\dtwo\phi^{\done a}[\p_{w_i}\c,\phi^\done]\phi^{\done b}\p_{w_j}\c\phi^{\done c}\p_{w_k}\c\phi^{\done d} + \text{cyclic permutations}\big) \\
&= \frac{m+4}{2^2}\Tr S(\phi^{\done m}\phi^\dtwo[\p_{w_i}\c,\phi^\done]\p_{w_j}\c\p_{w_k}\c)\,.
\eea
Here we mean graded cyclic permutations of $\{[\phi^\done,\p_{w_i}\c],\p_{w_j}\c,\p_{w_k}\c\}$. In getting the above we've used $\eps_{ijk}[\p_{w_i}\c,\p_{w_j}\c] = 0$, since $[~,\,]$ is the graded commutator. We can partially compensate \eqref{eq:QH} with the BRST variation of
\be \eps_{ijk}\Tr S\big(\phi^{\done m}\b\p_{w_i}\c\p_{w_j}\c\p_{w_k}c\big) = \frac{3!}{[m+3]_3}\eps_{ijk}\sum_{a+b+c+d=m}\Tr\big(\b\phi^{\done a}\p_{w_i}\c\phi^{\done b}\p_{w_j}\c\phi^{\done c}\p_{w_k}\c\phi^{\done d}\big)\,. \ee
The contribution from $Q_\mrm{BRST}\b = [\phi^\done,\phi^\dtwo] + \dots$ is
\bea
&\frac{3!}{[m+3]_3}\eps_{ijk}\sum_{a+b+c+d=m}\Tr\big(\phi^\dtwo\phi^{\done a}[\p_{w_i}\c,\phi^\done]\phi^{\done b}\p_{w_j}\c\phi^{\done c}\p_{w_k}\c\phi^{\done d} + \text{cyclic permutations}\big) \\
&= 3\eps_{ijk}\Tr S\big(\phi^{\done m}\phi^\dtwo[\p_{w_i}\c,\phi^\done]\p_{w_j}\c\p_{w_k}\c\big)\,,
\eea
but there's also a term generated by $Q_\mrm{BRST}\b = \dots + \{\gamma_f,\til\gamma^f\}/2$
\bea
&\frac{3!}{[m+3]_3}\eps_{ijk}\sum_{a+b+c+d=m}\til\gamma_f\phi^{\done a}\p_{w_i}\c\phi^{\done b}\p_{w_j}\c\phi^{\done c}\p_{w_k}\c\phi^{\done d}\gamma^f = \eps_{ijk}\til\gamma_fS(\phi^{\done m}\p_{w_i}\c\p_{w_j}\c\p_{w_k}\c)\gamma^f \\
&= - \frac{1}{[m+3]_3}\eps_{ijk}Q_\mrm{BRST}\sum_{a+b+c+d=m}\til\gamma_f\phi^{\done a}\p_{w_i}\c\phi^{\done b}\p_{w_j}\c\phi^{\done c}\p_{w_k}(\phi^{\done d}\gamma^f)\,.
\eea
In the presence of the defect, our BRST differential acquires a new contribution \eqref{eq:Q-defect}. This will generate a word capped on both ends with $\psi$ and with five insertions of $\p_{w_l}\c$ in the middle. These are certainly exact.


\section{Combinatorics for Symmetrizing Free Products} \label{app:flavour-combinatorics}

At many stages in the bulk of this work we are required to decompose a product of symmetric polynomials in non-commuting variables into a sum of terms consisting of symmetric polynomials sandwiched between commutators.  More precisely, we consider the group ring of the free group on two generators, denoted $\C\la X^\done,X^\dtwo\ra$. We can define symmetric polynomials depending on a choice of reference spinor $\lt_\da$
\be X^m(\lt) = (X^\da\lt_\da)^m\,. \ee
Let $(X^\done)^{(r}(X^\dtwo)^{s)}$ denote the symmetrized product of $(X^\done)^r$ with $(X^\dtwo)^s$.  These are the coefficients in the expansion of $X^{r+s}(\lt)$ in its two components. 

Then, a basis for the non-commutative algebra is provided by expressions like 
\be
(X^\done)^{(r_1} (X^\dtwo)^{s_1)} [X^\done,X^\dtwo] (X^\done)^{(r_2} (X^\dtwo)^{s_2)} [X^\done,X^\dtwo] \dots [X^\done,X^\dtwo] (X^\done)^{(r_k} (X^\dtwo)^{s_k)}
\label{eq:noncommutativealgebra-again}
\ee
for $r_i, s_i,k \ge 0$.  To see that this is a basis, note that the algebra $\C\la X^\done,X^\dtwo\ra$ of non-commutative polynomials has a two-sided ideal $I$ generated by the commutator $[X^\done,X^\dtwo]$.  We can take the $I$-adic filtration, where $F^k\C\la X^\done,X^\dtwo\ra$ is the subspace spanned by products of $k$ elements of $I$.  Expressions like those in \eqref{eq:noncommutativealgebra-again} form a basis for the associated graded $F^k / F^{k+1}$.  It is obvious that these expressions span $F^k / F^{k+1}$, This is because for any expression involving $k$ commutators, words between any two commutators can be symmetrized at the price of introducing further commutators; but we are setting expressions with $k+1$ commutators to zero.   

It is not quite so obvious, but true, that these expressions form a basis.  To check this, we need only check that there are $2^n$ such elements which are sums of words of length $n$ in $X^\done$ and $X^\dtwo$.  Any word in $X^\done$ and $X^\dtwo$ can be written uniquely in the form
\be
(X^\done)^{r_1} (X^\dtwo)^{s_1} (X^\dtwo X^\done) (X^\done)^{r_2} (X^\dtwo)^{s_2} (X^\dtwo X^\done) \dots (X^\dtwo X^\done) (X^\done)^{r_k} (X^\dtwo)^{r_k}
\ee
where\footnote{To see this, note that any word has $k$ occurrences of $X^\dtwo X^\done$ for $k \ge 0$. Between any two occurrences of $X^\dtwo X^\done$, and before and after the initial and final occurrence, we can only have $(X^\done)^r (X^\dtwo)^s$ for $r,s \ge 0$.}  $r_i, s_i, k \ge 0$, and $\sum r_i + s_i + 2 k = n$.  Since the indices take the same range as those in the expression \eqref{eq:noncommutativealgebra-again} (when we restrict the latter to words of length $n$), we see that there are $2^n$ elements in the set \eqref{eq:noncommutativealgebra}.

We will now consider symmetric polynomials in two reference spinors
\be X^{m,n}(\lt_1,\lt_2) = \frac{(\lt_2\cdot\p_{\lt_1})^n}{[m+n]_n}X^{m+n}(\lt_1) = \frac{(\lt_1\cdot\p_{\lt_2})^m}{[m+n]_m}X^{m+n}(\lt_2)\,. \ee
We would like to find the structure constants $R_{\{k_i\},\{l_j\}}$ obeying
\bea \label{eq:structure-constant-app}
&X^m(\lt_1)X^n(\lt_2) \\
&= \sum_{a=0,\dots,\min\{m,n\}} [12]^a \sum_{\substack{k_1+\dots+k_{a+1} = m-a \\ l_1+\dots+l_{a+1} = n-a}} R_{\{k_i\},\{l_j\}} X^{k_1,l_1}(\lt_1,\lt_2)C\dots CX^{k_{a+1},l_{a+1}}(\lt_1,\lt_2)\,,
\eea
where $C = [X^\done,X^\dtwo]$.  The fact that these coefficients are well-defined follows from the fact that the expressions written above form a basis.  

We can view $a,m,n$ as being determined by the sets $\{k_i\},\{l_j\}$ (constrained to have the same cardinality). Note that $R_{\{m\},\{n\}} = 1$. We can obtain recursion relations for these structure constants by multiplying \eqref{eq:structure-constant-app} by $X(\lt_1)$. To evaluate the r.h.s. we write
\be X(\lt_1)X^{k+l}(\lt_2) = X^{1,k+l}(\lt_1,\lt_2) + [12]\sum_{\substack{p+q=k+l-1}}R_{\{0,0\},\{p,q\}}X^p(\lt_2)CX^q(\lt_2)\,. \ee
Acting with $(\lt_1\cdot\p_{\lt_2})^k/[k+l]_k$ gives
\bea
&X(\lt_1)X^{k,l}(\lt_1,\lt_2) =  X^{k+1,l}(\lt_1,\lt_2) \\
&+ [12]\sum_{\substack{k^\prime_1+k^\prime_2=k \\ l^\prime_1 + l^\prime_2=l-1}}\frac{\binom{k_1^\prime+l^\prime_1}{k_1^\prime}\binom{k^\prime_2+l^\prime_2}{k_2^\prime}}{\binom{k+l}{k}}R_{\{0,0\},\{k^\prime_1+l^\prime_1,k^\prime_2+l^\prime_2\}}X^{k_1^\prime,l_1^\prime}(\lt_1,\lt_2)CX^{k_2^\prime,l_2^\prime}(\lt_1,\lt_2)\,. 
\eea
The second term is not present if $l=0$. Therefore
\bea
&X^{m+1}(\lt_1)X^n(\lt_2) \\
&= \sum_{a = 0,\dots,\min\{m,n\}}[12]^a\sum_{\substack{k_1+\dots+k_{a+1}=m-a \\ l_1+\dots+l_{a+1}=n-a}}R_{\{k_i\},\{l_j\}}X^{k_1+1,l_1}(\lt_1,\lt_2)C\dots CX^{k_{a+1},l_{a+1}}(\lt_1,\lt_2) \\
&+ \sum_{a = 0,\dots,\min\{m,n\}}[12]^{a+1}\sum_{\substack{k_1+\dots+k_{a+1}=m-a \\ l_1+\dots+l_{a+1}=n-a}}\sum_{\substack{k^\prime_1+k^\prime_2 = k_1 \\ l^\prime_1+l^\prime_2=l_1-1}}\frac{\binom{k_1^\prime+l^\prime_1}{k_1^\prime}\binom{k^\prime_2+l^\prime_2}{k_2^\prime}}{\binom{k+l}{k}}R_{\{0,0\},\{k^\prime_1+l^\prime_1,k^\prime_2+l^\prime_2\}}R_{\{k_i\},\{l_j\}} \\
&X^{k_1^\prime,l_1^\prime}(\lt_1,\lt_2)CX^{k_2^\prime,l_2^\prime}(\lt_1,\lt_2)C\dots CX^{k_{a+1},l_{a+1}}(\lt_1,\lt_2) \\
&= \sum_{a=0,\dots,\min\{m+1,n\}}[12]^a\sum_{\substack{k_1+\dots+k_{a+1}=m+1-a \\ l_1+\dots+l_{a+1}=n-a}}\Bigg(R_{\{k_1-1,k_2,\dots,k_{a+1}\},\{l_j\}} + \frac{\binom{k_1+l_1}{k_1}\binom{k_2+l_2}{k_2}}{\binom{k_1+k_2+l_1+l_2+1}{k_1+k_2}} \\
&R_{\{0,0\},\{k_1+l_1,k_2+l_2\}}R_{\{k_1+k_2,k_3,\dots,k_{a+1}\},\{l_1+l_2+1,l_3,\dots,l_{a+1}\}}\Bigg)X^{k_1,l_1}(\lt_1,\lt_2)C\dots CX^{k_{a+1},l_{a+1}}(\lt_1,\lt_2)\,.
\eea
This gives the following recursion relation for the structure constants
\bea \label{eq:recursion-1}
&R_{\{k_i\}_{i=1}^{a+1},\{l_j\}_{j=1}^{a+1}} = R_{\{k_1-1,k_2,\dots,k_{a+1}\},\{l_j\}_{j=1}^{a+1}} \\
&+ \frac{\binom{k_1+l_1}{k_1}\binom{k_2+l_2}{k_2}}{\binom{k_1+k_2+l_1+l_2+1}{k_1+k_2}}R_{\{0,0\},\{k_1+l_1,k_2+l_2\}}R_{\{k_1+k_2,k_3,\dots,k_{a+1}\},\{l_1+l_2+1,l_3,\dots,l_{a+1}\}}\,.
\eea
When $k_1 = 0$, the first term is understood to vanish. Similarly when $a=0$ the second is taken to be zero. The missing ingredients are the structure constants $R_{\{0,0\},\{p,q\}}$, but these are straightforward to compute. Indeed we did so in Sect. \ref{sec:nonfactorizing}, finding
\be R_{\{0,0\},\{p,q\}} = \frac{q+1}{p+q+2}\,. \ee
There is, of course, an entirely analogous relation obtained by multiplying on the left of equation \eqref{eq:structure-constant-app} with $X(\lt_2)$. We find
\bea \label{eq:recursion-2}
&R_{\{k_i\}_{i=1}^{a+1},\{l_j\}_{j=1}^{a+1}} = R_{\{k_i\}_{i=1}^{a+1},\{l_1,\dots,l_a,l_{a+1}-1\}} \\
&+ \frac{\binom{k_a+l_a}{k_a}\binom{k_{a+1}+l_{a+1}}{k_{a+1}}}{\binom{k_1+k_2+l_1+l_2+1}{l_1+l_2}}R_{\{k_a+l_a,k_{a+1}+l_{a+1}\},\{0,0\}}R_{\{k_1,\dots,k_{a-1},k_a+k_{a+1}+1\},\{l_1,\dots,l_{a-1},l_a+l_{a+1}\}}\,.
\eea
As in equation \eqref{eq:recursion-1} we discard the first term if $l_{a+1}=0$ and the second if $a=0$. It's easy to check that
\be R_{\{p,q\},\{0,0\}} = \frac{p+1}{p+q+2}\,. \ee
As a very basic consistency check, we can easily verify that $R_{\{p,q\},\{0,0\}}$ obeys \eqref{eq:recursion-1} and $R_{\{0,0\},\{p,q\}}$ obeys \eqref{eq:recursion-2}. In the main text we make use of the following structure constants, for which we have explicit formulas
\bea \label{eq:explicit-structure-constants}
&R_{\{k,\tilde k\},\{l,\tilde l\}} = \binom{k+\tilde k+l+\tilde l+2}{k+\tilde k+1}^{-1}\sum_{p=0}^k\binom{k+l-p}{l}\binom{\tilde k+\tilde l+1+p}{\tilde l}\,, \\
&R_{\{k,0^a\},\{l,0^a\}} = \frac{1}{[k+l+a+1]_a}\binom{k+a}{a}\,,\quad R_{\{0^a,\tilde k\},\{0^a,\tilde l\}} = \frac{1}{[\tilde k+\tilde l+a+1]_a}\binom{\tilde l+a}{a}\,.
\eea


\section{Recursion Relation for Burns Space Amplitudes} \label{app:rec_Burns}

In this appendix, we solve the differential equation \eqref{eq:diff_Burns_tree} for the tree level Burns space amplitude. This will give us an recursion relation expressing $n$-point amplitude in terms of $2,\dots,n-1$-point amplitudes.
	
Recall that the differential equation takes the form:
\bea
&\mc{D}^{(i)}\widehat{\mc{L}\mc{A}}_\text{tree}(1,\dots,n; \xi) \\
&= \sum_{j\neq i}\frac{[ij]\ip{ij}\ip{i\xi}^2\ip{j\xi}^2}{4\pi^2}\widehat{\mc{L}\mc{A}}_\text{tree}(i,i+1,\dots,j-1,j;\xi)\widehat{\mc{L}\mc{A}}_\text{tree}(j,j+1,\dots,i-1,i;\xi)\,.
\eea 
Since the rescaled amplitude $\widehat{\mc{L}\mc{A}}_\text{tree}(1,\dots,n; \xi)$ is cyclically symmetric, we can fix $i = 1$. This equation can be separated into two parts: one containing the $n$-point amplitude $\widehat{\mc{L}\mc{A}}_\text{tree}(1,\dots,n; \xi)$, and the other containing amplitudes with fewer than $n$ points.
	 	\bea
&\mc{D}^{(1)}\widehat{\mc{L}\mc{A}}_\text{tree}(1,\dots,n) \\
& = \frac{[12]\ip{12}\ip{1\xi}^2\ip{2\xi}^2}{4\pi^2}\widehat{\mc{L}\mc{A}}_\text{tree}(1,2)\widehat{\mc{L}\mc{A}}_\text{tree}(1,\dots,n)\\
&+\frac{[1n]\ip{1n}\ip{1\xi}^2\ip{n\xi}^2}{4\pi^2}\widehat{\mc{L}\mc{A}}_\text{tree}(1,\dots,n)\widehat{\mc{L}\mc{A}}_\text{tree}(1,n)\\
&+ \sum_{j = 3}^{n-1}\frac{[1j]\ip{1j}\ip{1\xi}^2\ip{j\xi}^2}{4\pi^2}\widehat{\mc{L}\mc{A}}_\text{tree}(1,\dots,j)\widehat{\mc{L}\mc{A}}_\text{tree}(j,j+1,\dots,n,1)\,.
\eea 
 We can further rewrite this equation as an ordinary differential equation (ODE). We define
\begin{equation}
	\mathcal{F}_{(1,\dots,n)}(t) = t^{\mc{D}^{(1)}}\widehat{\mc{L}\mc{A}}_\text{tree}(1,\dots,n)\,.
\end{equation}
For convenience we also write $\mathcal{F}_n(t) = \mathcal{F}_{(1,\dots,n)}(t)$. This quantity can also be identified with the $n$-point amplitude
\begin{equation}\label{eq:Burnamp_t}
	\frac{1}{\ip{1\xi}^4}\ip{\mathcal{L}\Jt(t\lt_1;z_1) \mathcal{L}\J(\lt_2;z_2)\dots \mathcal{L}\J(\lt_n;z_n)}_\text{tree} \,.
\end{equation}
We can therefore rewrite the differential equation as
\bea\label{eq:diff_Burns_sep}
	&\frac{\d}{\d t}\mathcal{F}_n(t) = \frac{[12]\ip{12}\ip{1\xi}^2\ip{2\xi}^2}{4\pi^2}\mathcal{F}_2(t)\mathcal{F}_n(t) +\frac{[1n]\ip{1n}\ip{1\xi}^2\ip{n\xi}^2}{4\pi^2}\mathcal{F}_{(1,n)}(t)\mathcal{F}_n(t)\\
    &+ \sum_{j = 3}^{n-1}\frac{[1j]\ip{1j}\ip{1\xi}^2\ip{j\xi}^2}{4\pi^2}\mathcal{F}_j(t)\mathcal{F}_{(1,j,j+1,\dots,n)}(t)\,.
\eea 
This differential equation is subject to the initial value condition
\begin{equation}
	\mathcal{F}_n(0) = \frac{\ip{2n}}{\ip{12}\ip{1n}}\widehat{\mc{L}\mc{A}}_\text{tree}(2,\dots,n)\,,
\end{equation}
which can be derived from \eqref{eq:Burnamp_t}. From our two-point amplitude \eqref{eq:Burns_2pt}, it's easy to compute that
\begin{equation}
	\mathcal{F}_2(t) = \frac{1}{2\ip{12}^2(1 - c_{12}t)},\qquad \mathcal{F}_{(1,n)}(t) = \frac{1}{2\ip{1n}^2(1 - c_{1n}t)}\,,
\end{equation}
where we define
\begin{equation}
	c_{ij} = \frac{\ip{i\xi}^2\ip{j\xi}^2[ij]}{8\pi^2\ip{ij}}\,.
\end{equation}
Then the differential equation takes the standard form of a non-homogeneous first order linear ODE. We first solve the homogeneous equation
\be
    \frac{\d}{\d t}I_n(t) = \left(\frac{c_{12}}{1-c_{12}t} +\frac{c_{1n}}{1-c_{1n}t}\right)I_n(t)\,,
\ee
which has solution 
\begin{equation}
	I_n(t) = \frac{1}{(1-c_{12}t)(1-c_{1n}t)}\,.
\end{equation}
Then the non-homogeneous equation, subjects to the initial value, have solution
\begin{equation}\label{eq:rec_Burns_tree}
	\mathcal{F}_n(t) = I_n(t)\left(\int_{0}^tdsI_n(s)^{-1}\mathcal{K}_{n-1}(s) + \frac{\ip{2n}}{\ip{12}\ip{1n}}\widehat{\mc{L}\mc{A}}_\text{tree}(2,\dots,n)\right)\,,
\end{equation}
where
\begin{equation}
	\mathcal{K}_{n-1}(t) = \sum_{j = 3}^{n-1}2\ip{1j}^2c_{1j}\mathcal{F}_j(t)\mathcal{F}_{(1,j,j+1,\dots,n)}(t)\,.
\end{equation}
We can use this recursion relation to compute the first few amplitudes. For $n = 3$, $\mathcal{K}_{2}(t)$ vanishes. We have
\bea
	&\mc{L}\mc{A}_\text{tree}(1,2,3) = \frac{\ip{1\xi}^4}{(1-c_{12})(1-c_{13})}\frac{\ip{23}}{\ip{12}\ip{13}}\frac{1}{2\ip{23}^2(1-c_{23})}`\\
	&= \frac{\ip{1\xi}^4}{2\ip{12}\ip{23}\ip{13}(1 - c_{12})(1-c_{23})(1-c_{13})}\,.
\eea
This matches the result \ref{eq:Burns_3pt} computed directly using Wick contraction.

For $n = 4$, we have
\begin{equation}
	\mathcal{K}_{3}(t) = \frac{c_{13}}{2\ip{12}\ip{23}\ip{34}\ip{14}(1 - c_{12}t)(1-c_{23})(1-c_{13}t)^2(1-c_{34})(1-c_{14}t)}\,.
\end{equation}
After integration we find
\bea
	&\mc{L}\mc{A}_\text{tree}(1,2,3,4) \\
	&= \frac{\ip{1\xi}^4}{2(1-c_{12})(1-c_{23})(1 - c_{34})(1-c_{14})\ip{12}\ip{23}\ip{34}\ip{14}}\left(\frac{1}{1 - c_{13}} + \frac{1}{1 - c_{24}} - 1\right)\,.
\eea
	
We can also solve the one-loop differential equation in the same way. By rescaling $\widetilde{\lambda}_1$ by $t$, we obtain a differential equation that resemble \eqref{eq:diff_Burns_sep}. In fact, the homogeneous part of the equation is the same. Therefore we obtain the same factor $I_n(t)$. The solution becomes
\begin{equation}
	\mathcal{F}^{\text{loop}}_n(t) = I_n(t)\left(\int_{0}^tdsI_n(s)^{-1}\mathcal{K}^{\text{loop}}_{n-1}(t) + \frac{\ip{2n}}{\ip{12}\ip{1n}}\widehat{\mc{L}\mc{A}}_\text{loop}(2,\dots,n)\right)
\end{equation}
where
\bea\label{eq:rec_Burns_loop}
	&\mathcal{K}^{\text{loop}}_{n-1}(t) = \sum_{j = 3}^{n}2\ip{1j}^2c_{1j}\mathcal{F}_j(t)\mathcal{F}^{\text{loop}}_{(1,j,j+1,\dots,n)}(t) + \sum_{j = 2}^{n-1}2\ip{1j}^2c_{1j}\mathcal{F}^{\text{loop}}_j(t)\mathcal{F}_{(1,j,j+1,\dots,n)}(t)\\
	&+ \sum_{j = 2}^{n}\ip{1j}^2c_{1j}\mathcal{F}_{(1,j,j-1,\dots,1,j,j+1,\dots,n)}(t)\,.
\eea

\end{appendix}


\bibliographystyle{JHEP}
\bibliography{main}


\end{document}